\newcommand\irule[3]{\infer[\mbox{\footnotesize $#3$}]{#2}{#1}}
\newcommand\ket[1]{\ensuremath{|{#1}\rangle}}
\newcommand\abstr[1]{[#1]}
\newcommand\inl{\mbox{\it inl}}
\newcommand\inr{\mbox{\it inr}}
\newcommand\inlr{\mbox{\it inlr}}
\newcommand\inlrpair[2]{\inlr(#1, #2)}
\newcommand\elimtop{\delta_{\top}}
\newcommand\elimbot[1]{\delta^{(#1)}_{\bot}}
\newcommand\elimand{\delta_{\wedge}}
\newcommand\elimor{\delta_{\vee}}
\newcommand\elimorint{{\color{red} \delta_{\vee}}}
\newcommand\elimsup{\delta_{\odot}}
\DeclareRobustCommand{\plus}{%
  \mathbin{%
    \tikz[baseline={([yshift=-0.55ex]current bounding box.center)}]{
      \def\Size{1.4ex}%
      \def\Thick{0.36ex}%
      \fill (-\Thick/2, -\Size/2) rectangle (\Thick/2, \Size/2);
      \fill (-\Size/2, -\Thick/2) rectangle (\Size/2, \Thick/2);
    }%
  }%
}
\newcommand\plusr{\mathrel{\text{\normalfont\scalebox{0.7}{\color{red}$\plus$}}}}
\newcommand\pair[2]{\langle #1, #2 \rangle}
\newcommand\lra{\longrightarrow}
\newcommand\lla{\longleftarrow}
\newcommand\lras{\lra^*}
\newcommand\llas{\mathrel{{}^*{\longleftarrow}}}
\newcommand\Q{{\mathcal Q}}
\newcommand\B{{\mathcal B}}
\newcommand\boolzero{{\mathtt{false}}}
\newcommand\boolone{{\mathtt{true}}}
\newcommand\one{\ensuremath{\mathtt 1}}
\newcommand\elimone{\delta_{\one}}
\newcommand\elimplus{\delta_{\oplus}}
\begin{document}

\title{A new introduction rule for disjunction}
\titlecomment{{\lsuper\dagger}Gilles Dowek passed away in July~2025.  This paper represents one of his last contributions.}
\thanks{The authors want to thank Fr\'ed\'eric Blanqui, Nachum Dershowitz, and Jean-Pierre Jouannaud for their remarks on a previous version of this paper.}

\author[A.~Díaz-Caro]{Alejandro Díaz-Caro\lmcsorcid{0000-0002-5175-6882}}[a,b]
\author[G.~Dowek]{Gilles Dowek\lmcsorcid{0000-0001-6253-935X}}[c,\dagger]

\address{Université de Lorraine, CNRS, Inria, LORIA, France}	
\email{alejandro.diaz-caro@inria.fr} 
\address{Universidad Nacional de Quilmes, Argentina}
\address{Inria and ENS Paris-Saclay, France}

\begin{abstract}
We extend Natural Deduction for intuitionistic logic with a third introduction rule for the disjunction,
$\vee$-i3, with conclusion $\Gamma \vdash A \vee B$, but both premises $\Gamma
\vdash A$ and $\Gamma \vdash B$.  This rule is admissible in Natural Deduction.
This extension is interesting in several respects.  First, it permits to solve
a well-known problem in logics with interstitial rules (that is rules whose all
premises are equal to the conclusion) that have a weak introduction property:
closed cut-free proofs end with an introduction rule, except in the case of
disjunctions.  With this new introduction rule, we recover the strong
introduction property: closed cut-free proofs always end with an introduction.
Second, the termination proof of this proof system is simpler than that of the
usual propositional Natural Deduction with interstitial rules, as it does not
require the use of the so-called \emph{ultra-reduction} rules.  Third, this
proof system, in its linear version, has applications to quantum computing: the
$\vee$-i3 rule enables the expression of quantum measurement, without the cost
of introducing a new connective.  Finally, even in logics without interstitial
rules, the rule $\vee$-i3 is useful to reduce commuting cuts, although, in this
paper, we leave the termination of such reduction as an open problem.
\end{abstract}

\maketitle

\section{Introduction}

In Natural Deduction~\cite{Prawitz}, when we have two proofs $\pi_1$ of $\Gamma
\vdash A$ and $\pi_2$ of $\Gamma \vdash B$, and we want to prove
$\Gamma \vdash A \vee B$, we need to chose one introduction rule of
the disjunction,
\[
  \vcenter{
    \irule{\Gamma \vdash A}
    {\Gamma \vdash A \vee B}
  {\mbox{$\vee$-i1}}}
  \qquad\text{or}\qquad
  \vcenter{
    \irule{\Gamma \vdash B}
    {\Gamma \vdash A \vee B}
    {\mbox{$\vee$-i2}}        
  },
\]
use the corresponding proof, $\pi_1$ or $\pi_2$, and drop the
other. This implies an erasure of information and a break in the
symmetry between the proofs $\pi_1$ and $\pi_2$, just like Buridan's
donkey~\cite{wiki} needs to break the symmetry and chose one pile of
hay. 
To avoid this erasure of information and break in symmetry, we
extend Natural Deduction with a third introduction rule for the
disjunction
\[
  \irule{\Gamma \vdash A & \Gamma \vdash B}
  {\Gamma \vdash A \vee B}
  {\mbox{$\vee$-i3}}
\]
that is admissible in the usual calculus.
A similar rule appears in logics formalising the notion of \emph{proof-why} \cite{Poggiolesi2016,Genco2021},
where three rules permit to deduce that $A \vee B$
is true: one being our $\vee$-i3 rule, and the two others being related to,
but more complex than, $\vee$-i1 and $\vee$-i2.

This new introduction rule yields the new form of cut
\[
    \irule{\irule{\irule{\pi_1}
	{\Gamma \vdash A}
	{}
	&
	\irule{\pi_2}
	{\Gamma \vdash B}
	{}
      }
      {\Gamma \vdash A \vee B}
      {\mbox{$\vee$-i3}}
      &
      \irule{\pi_3}
      {\Gamma, A\vdash C}
      {}
      &
      \irule{\pi_4}
      {\Gamma, B\vdash C}
      {}
    }
    {\Gamma \vdash C}
    {\mbox{$\vee$-e}}
\]
that can then be reduced in various ways: either always to
$(\pi_1/A)\pi_3$
(that is the proof $\pi_3$ where the axiom rule with the proposition
$A$ is replaced with the proof $\pi_1$),
or always to $(\pi_2/B)\pi_4$, implying again an
erasure of information and a break in symmetry, or to one of these
proofs in a non-deterministic way, symmetry being preserved at the
cost of the loss of determinism. As we shall see, other options are
possible, where both proofs are kept.

This extension of Natural Deduction is interesting in several
respects.  First, it permits to solve a well-known problem in calculi
with an interstitial rule (that is a rule
whose all premises are equal to the conclusion, as the one below)
where the proof-term language is enriched with a $\plus$
constructor (sometimes written $\parallel$): 
\[
  \irule{\Gamma \vdash A &\Gamma \vdash A}
  {\Gamma \vdash A}
  {\mbox{sum.}}
\]

This constructor is sometimes seen as putting two
processes in parallel. The two proofs of the premises are then seen
as two processes of the same type, and the proof of the conclusion as
a process formed by putting these two processes in parallel
\cite{DanosKrivine}. The contexts of the two premises and the
conclusion are sometimes slightly different to model communication between
the processes 
\cite{Aschieri16,AschieriCG17,AschieriCG20}. Here, as in
\cite{ArrighiDiazcaroLMCS12,DiazcaroDowekMSCS24,LairdManzonettoMcCuskerPaganiLICS13,Vaux2009},
it is simply seen as an algebraic sum.

Such a rule can create an interstice between the introduction rules
and the elimination rule of a connective, introducing new forms of
commuting cuts, for example
\[
    \irule{\irule{\irule{\irule{\pi_1}
	  {\Gamma, A \vdash B}
	  {}
	}
	{\Gamma \vdash A \Rightarrow B}
	{\mbox{$\Rightarrow$-i}}
	&\irule{\irule{\pi_2}
	  {\Gamma, A \vdash B}
	  {}
	}
	{\Gamma \vdash A \Rightarrow B}
	{\mbox{$\Rightarrow$-i}}
      }
      {\Gamma \vdash A \Rightarrow B}
      {\mbox{sum}}
      & \irule{\pi_3}
      {\Gamma \vdash A}
      {}
    }
    {\Gamma \vdash B}
    {\mbox{$\Rightarrow$-e.}}
\]

One way to reduce such a cut is to drop one of the proofs of $\Gamma
\vdash A \Rightarrow B$, implying again an erasure of information and
a break in symmetry, yielding, for example, the proof
\[
  \irule{\irule{\irule{\pi_1}
      {\Gamma, A \vdash B}
      {}
    }
    {\Gamma \vdash A \Rightarrow B}
    {\mbox{$\Rightarrow$-i}}
    & \irule{\pi_3}
    {\Gamma \vdash A}
    {}
  }
  {\Gamma \vdash B}
  {\mbox{$\Rightarrow$-e}}
\]
that reduces with the usual reduction rules. To preserve information
and symmetry, we can, instead, commute the sum rule, either with
the introduction rules above,
\[
  \irule{\irule{\irule{\irule{\pi_1}
	{\Gamma, A \vdash B}
	{}
	&\irule{\pi_2}
	{\Gamma, A \vdash B}
	{}
      }
      {\Gamma, A \vdash B}
      {\mbox{sum}}
    }
    {\Gamma \vdash A \Rightarrow B}
    {\mbox{$\Rightarrow$-i}}
    & \irule{\pi_3}
    {\Gamma \vdash A}
    {}
  }
  {\Gamma \vdash B}
  {\mbox{$\Rightarrow$-e,}}
\]
or with the elimination rule below
\[
  \irule{\irule{\irule{\irule{\pi_1}
	{\Gamma, A \vdash B}
	{}
      }
      {\Gamma \vdash A \Rightarrow B}
      {\mbox{$\Rightarrow$-i}}
      &\irule{\pi_3}
      {\Gamma \vdash A}
      {}
    }
    {\Gamma \vdash B}
    {\mbox{$\Rightarrow$-e}}
    &\irule{\irule{\irule{\pi_2}
	{\Gamma, A \vdash B}
	{}
      }
      {\Gamma \vdash A \Rightarrow B}
      {\mbox{$\Rightarrow$-i}}
      &\irule{\pi_3}
      {\Gamma \vdash A}
      {}
    }
    {\Gamma \vdash B}
    {\mbox{$\Rightarrow$-e.}}
  }
  {\Gamma \vdash B}
  {\mbox{sum}}
\]
Both derivations reduce, under the usual reduction rules, to the proof
\[
  \irule{\irule{(\pi_3/A)\pi_1}
    {\Gamma \vdash B}
    {}
    &\irule{(\pi_3/A)\pi_2}
    {\Gamma \vdash B}
    {}
  }
  {\Gamma \vdash B}
  {\mbox{sum.}}
\]
          
Commuting the sum
rule with the introduction rules yields better properties of the
irreducible proofs, in particular a stronger introduction property: a
closed irreducible proof is always an introduction. It should
therefore be preferred.

But, in the usual Natural Deduction, such a commutation is not
possible for the disjunction, as there seems to be no way to commute
the sum rule with the introduction rules in the proof
\[
  \irule{\irule{\irule{\pi_1}
      {\Gamma \vdash A}
      {}
    }
    {\Gamma \vdash A \vee B}
    {\mbox{$\vee$-i1}}
    &\irule{\irule{\pi_2}
      {\Gamma \vdash B}
      {}
    }
    {\Gamma \vdash A \vee B}
    {\mbox{$\vee$-i2}}
  }
  {\Gamma \vdash A \vee B}
  {\mbox{sum.}}
\]

Thus, there seems to be no way to reduce the proof 
\[
  \irule{\irule{\irule{\irule{\pi_1}
	{\Gamma \vdash A}
	{}
      }
      {\Gamma \vdash A \vee B}
      {\mbox{$\vee$-i1}}
      &\irule{\irule{\pi_2}
	{\Gamma \vdash B}
	{}
      }
      {\Gamma \vdash A \vee B}
      {\mbox{$\vee$-i2}}
    }
    {\Gamma \vdash A \vee B}
    {\mbox{sum}}
    &\irule{\pi_3}{\Gamma, A \vdash C}{}
    &\irule{\pi_4}{\Gamma, B \vdash C}{}
  }
  {\Gamma \vdash C}
  {\mbox{$\vee$-e}}
\]
by commuting the sum rule with the introduction rules.  This has lead,
for example in \cite{DiazcaroDowekTCS23}, to a mixed system, where the
sum rule commutes with the introduction rules when possible, but only with
the elimination rule of the disjunction.
For instance, the proof above reduces to
\[
  \irule{\irule{\irule{\irule{\pi_1}
	{\Gamma \vdash A}
	{}
      }
      {\Gamma \vdash A \vee B}
      {\mbox{$\vee$-i1}}
      &\irule{\pi_3}{\Gamma, A \vdash C}{}
      &\irule{\pi_4}{\Gamma, B \vdash C}{}
    }
    {\Gamma \vdash C}
    {\mbox{$\vee$-e}}
    &
    \irule{\irule{\irule{\pi_2}
	{\Gamma \vdash B}
	{}
      }
      {\Gamma \vdash A \vee B}
      {\mbox{$\vee$-i2}}
      &\irule{\pi_3}{\Gamma, A \vdash C}{}
      &\irule{\pi_4}{\Gamma, B \vdash C}{}
    }
    {\Gamma \vdash C}
    {\mbox{$\vee$-e}}
  }
  {\Gamma \vdash C}
  {\mbox{sum}}
\]
and then, with the usual reduction rules, to
\[
  \irule{(\pi_1/A)\pi_3 & (\pi_2/B)\pi_4}
  {\Gamma \vdash C}
  {\mbox{sum.}}    
\]

The $\vee$-i3 rule makes it possible to obtain this same proof
by first commuting the sum rule with the introduction rules,
reducing this proof to
the proof 
\[
  \irule{\irule{\irule{\pi_1}
      {\Gamma \vdash A}
      {}
      &\irule{\pi_2}
      {\Gamma \vdash B}
      {}
    }
    {\Gamma \vdash A \vee B}
    {\mbox{$\vee$-i3}}
    &\irule{\pi_3}{\Gamma, A \vdash C}{}
    &\irule{\pi_4}{\Gamma, B \vdash C}{}
  }
  {\Gamma \vdash C}
  {\mbox{$\vee$-e}}
\]
where the sum rule has temporarily disappeared, and that can be
reduced.  If we decide to reduce this proof neither to
$(\pi_1/A)\pi_3$ nor to $(\pi_2/B)\pi_4$ but to
\[
  \irule{(\pi_1/A)\pi_3 & (\pi_2/B)\pi_4}
  {\Gamma \vdash C}
  {\mbox{sum,}}
\]
the sum rule reappears in the reduct. This way, we get a calculus with a
sum rule and a strong introduction property.

Second, to our surprise, the termination proof of this calculus is
simpler than that of the usual propositional Natural Deduction with
interstitial rules: since we never commute interstitial rules with
elimination rules, the termination proof does not need
to extend the reduction relation with rules reducing $t_1 \plus t_2$
to $t_1$ and $t_2$ (the 
so-called \emph{ultra-reduction} relation) in order to obtain the termination
of the original relation.

Third, several papers have introduced new connectives: $\Delta$
\cite{SatoshiIPSJ96}, $|$ \cite{TzouvarasIGPL17, TzouvarasIGPL19},
$\odot$ \cite{DiazcaroDowekTCS23}, to model non-determinism,
information-erasure, quantum-measurement, etc. Here, we model
non-determinism, information-erasure, and
quantum-measurement, without the cost of introducing a new connective.

Finally, even in calculi without interstitial rules, the rule
$\vee$-i3 shows a new way to reduce commuting cuts: instead of commuting the
blocking rules---which lie between the introduction and the elimination rule of
the commuting cut---with the elimination rules, they are commuted with the
introduction rules.

In this paper, we first present the in-left-right-+-calculus, the
proof-term language of the extension of Natural Deduction with the
rules $\vee$-i3 and sum, and in particular its termination proof
(\autoref{sec:inlrcalculus}).  We then discuss applications of
this calculus to quantum computing, by introducing the quantum in-left-right-calculus, building on
\cite{DiazcaroDowekTCS23,DiazcaroDowekMSCS24} (\autoref{sec:quantum}).  Finally, we introduce, the
in-left-right-calculus, a calculus with a $\vee$-i3 rule but no
interstitial rules, and discuss this idea of commuting the blocking
rules with the introduction rules rather than with the elimination
rules, in usual commuting cuts (\autoref{sec:commutingcuts}).

\subsection*{Related work}
In~\cite{SatoshiIPSJ96}, a new connective $\Delta$ is introduced to
express non-deter\-min\-ism in the context of multiplicative additive
linear logic and proof nets. This connective is self-dual which
suggests that it is a kind of additive disjunction and conjunction at
the same time. This produces non-determinism in the cut elimination.

An attempt to explore quantum superposition in logic is found in
Propositional Superposition Logic~\cite{TzouvarasIGPL17}. This logic
introduces the binary connective $|$ to capture the logical interplay
between conjunction and disjunction through the notion of
superposition. This framework was subsequently extended to First-Order
Superposition Logic~\cite{TzouvarasIGPL19}, which adapts these
ideas to a quantified logic setting.

In \cite{DiazcaroDowekTCS23} a new connective $\odot$ for
superposition is also introduced. It has the introduction rule of the
conjunction (similar to $\vee$-i3) and the elimination rule of the
disjunction ($\vee$-e), but no other introduction rules.  This $\odot$
connective is introduced in the context of quantum computing, and its
intuitionistic linear logic version discussed
in~\cite{DiazcaroDowekMSCS24}. The introduction of $\odot$ is the same
as that of $\with$ (or $\wedge$), the same as $\Delta$. However,
$\odot$ has an elimination rule which is the same as that of $\oplus$ (or $\vee$),
which renders it non-deterministic.

All these works introduce a new connective, whereas the logic in this paper
does not.

Among these papers \cite{SatoshiIPSJ96} and \cite{DiazcaroDowekTCS23}
have an explicit proof language  but not \cite{TzouvarasIGPL17} and
\cite{TzouvarasIGPL19}.
Among those that have an explicit proof language, 
\cite{DiazcaroDowekTCS23} introduces a sum rule, but not
\cite{SatoshiIPSJ96}.
In addition, these works do not address the problem of commuting cuts.

Independently, in~\cite{MogbilFOPARA09} a sum rule is added to multiplicative linear logic to model non-deterministic choices and introduce parallel reduction strategies for proof nets.  However, since its objectives are different, it does not includes additives and thus the relation of the rule sum with the additive disjunction is not treated.

The idea of having a disjunction with an introduction rule of the
conjunction seems natural in models of linear logic using monoidal
categories with biproducts~\cite{BarrMSCS91}, that is, where the
product (usually used to interpret the additive conjunction) and the
co-product (usually used to interpret the additive disjunction),
coincide. Indeed, we can stick together the diagrams of the universal
properties for product and co-product, and we have the mediating arrow
of the co-product with a ``product'' as domain.
\[
  \begin{tikzcd}[column sep=2cm,row sep=8mm]
    &  C\ar[d,"{\pair {f_1}{f_2}}"]\ar[dl,"{f_1}",sloped]\ar[dr,"{f_2}",sloped] \\
     A\ar[dr,"{g_1}"',sloped]\ar[r,"{i_1}"',bend right=10] & {A\oplus B}\ar[d,"{[g_1,g_2]}"]\ar[r,"{\pi_2}",bend left=10]\ar[l,"{\pi_1}"',bend right=10] &  B\ar[l,"{i_2}",bend left=10]\ar[dl,"{g_2}"',sloped]\\
    &  D
  \end{tikzcd}
\]
This has been recently further developed
in~\cite{DiazcaroMalherbeArxiv23}.

However, in the in-left-right-+-calculus (the framework presented in
\autoref{sec:inlrcalculus}), the
interpretation of disjunction with this third introduction rule is not as
straightforward, since we are not in a linear context.
A recent work~\cite{DiazcaroMalherbeFSTTCS25} builds on an early development
of our proposal to introduce a novel approach to interpreting
disjunction in the category Set. There, the disjunction $A \vee B$ is
interpreted not as $A\uplus B$ but as $A
\uplus B \uplus (A \times B)$. This idea extends to logics with a sum
rule, using magmas instead of sets.  Interestingly, this construction
does not yield a coproduct, unless arrows are homomorphisms---a
condition that is overly restrictive in this context, as not all
proofs can be interpreted. Thus, this framework offers a new
perspective on the interpretation of the disjunction.

\section{The in-left-right-+-calculus}
\label{sec:inlrcalculus}

\subsection{Proof-terms}

We consider propositional logic with the usual connectives
$\top$, $\bot$, $\Rightarrow$, $\wedge$, and $\vee$.
We introduce a term language, the in-left-right-+-calculus, for the proofs of
propositional Natural Deduction extended with the rules sum and
$\vee$-i3. Its syntax is
\begin{align*}
  t =~ x & \mid t \plus t
   \mid \star \mid \elimtop(t,t) \mid \elimbot{A}(t)
  \mid \lambda \abstr{x}t\mid t~t
  \\
  &\mid \pair{t}{t} \mid \elimand^1(t,\abstr{x}t)
  \mid \elimand^2(t,\abstr{x}t)
  \mid \inl(t) \mid \inr(t) \mid \inlr(t,t)
  \mid \elimor(t,\abstr{x}t,\abstr{x}t)
\end{align*}

The proofs of the form $\star$, $\lambda \abstr{x}t$, $\pair{t}{u}$,
$\inl(t)$, $\inr(t)$, and $\inlr(t,u)$ are called \emph{introductions},
and those of the form $\elimtop(t,u)$, $\elimbot{A}(t)$, $t~u$,
$\elimand^1(t,\abstr{x}u)$, $\elimand^2(t,\abstr{x}u)$, and
$\elimor(t,\abstr{x}u,\abstr{y}v)$ \emph{eliminations}.  The variables
and the proofs of the form $t \plus u$ are neither introductions nor
eliminations. As usual, we make the conclusion $A$ explicit in the
proof $\elimbot A(t)$.

The $\alpha$-equivalence relation and the free and bound variables of
a proof-term are defined as usual. Proof-terms are defined modulo
$\alpha$-equivalence.  A proof-term is closed if it contains no free
variables.  We write $(u/x)t$ for the substitution of $u$ for $x$ in
$t$.
If $FV(t) \subseteq \{x\}$, we may use the notation $t\{u\}$ for $(u/x)t$.

\begin{figure}[t]
  $$
    \irule{}
    {\Gamma \vdash x:A}
    {\mbox{axiom~$x:A \in \Gamma$}}
    \qquad
    \irule{\Gamma \vdash t:A & \Gamma \vdash u:A}
    {\Gamma \vdash t \plus u:A}
    {\mbox{sum}}
  $$
  $$
    \irule{}
    {\Gamma \vdash \star:\top}
    {\mbox{$\top$-i}}
    \qquad
    \irule{\Gamma \vdash t:\top & \Gamma \vdash u:C}
    {\Gamma \vdash \elimtop(t,u):C}
    {\mbox{$\top$-e}}
    \qquad
    \irule{\Gamma \vdash t:\bot}
    {\Gamma \vdash \elimbot{C}(t):C}
    {\mbox{$\bot$-e}}
  $$
  $$
    \irule{\Gamma, x:A \vdash t:B}
    {\Gamma \vdash \lambda \abstr{x}t:A \Rightarrow B}
    {\mbox{$\Rightarrow$-i}}
    \qquad
    \irule{\Gamma \vdash t:A\Rightarrow B & \Gamma \vdash u:A}
    {\Gamma \vdash t u:B}
    {\mbox{$\Rightarrow$-e}}
    \qquad
    \irule{\Gamma \vdash t:A & \Gamma \vdash u:B}
    {\Gamma \vdash \pair{t}{u}:A \wedge B}
    {\mbox{$\wedge$-i}}
  $$
  $$
    \irule{\Gamma \vdash t:A \wedge B & \Gamma, x:A \vdash u:C}
    {\Gamma \vdash \elimand^1(t,\abstr{x}u):C}
    {\mbox{$\wedge$-e1}}
    \qquad
    \irule{\Gamma \vdash t:A \wedge B & \Gamma, x:B \vdash u:C}
    {\Gamma \vdash \elimand^2(t,\abstr{x}u):C}
    {\mbox{$\wedge$-e2}}
  $$
  $$
    \irule{\Gamma \vdash t:A}
    {\Gamma \vdash \inl(t):A \vee B}
    {\mbox{$\vee$-i1}}
    \qquad
    \irule{\Gamma \vdash u:B}
    {\Gamma \vdash \inr(u):A \vee B}
    {\mbox{$\vee$-i2}}
    \qquad
    \irule{\Gamma \vdash t:A & \Gamma \vdash u:B}
    {\Gamma \vdash \inlr(t,u):A \vee B}
    {\mbox{$\vee$-i3}}
  $$
  $$
    \irule{\Gamma \vdash t:A \vee B & \Gamma, x:A \vdash u:C & \Gamma, y:B \vdash v:C}
    {\Gamma \vdash \elimor(t,\abstr{x}u,\abstr{y}v):C}
    {\mbox{$\vee$-e}}
  $$
  \caption{The typing rules of the in-left-right-+-calculus\label{figtypingrules}}
\end{figure}

\begin{figure}[t]
  \centering
  \parbox{0.45\textwidth}{
  \begin{align}
    \elimtop(\star, t) & \longrightarrow t \label{ruelimtop}\\
    (\lambda \abstr{x}t)~u & \longrightarrow  (u/x)t \label{rubeta}\\
    \elimand^1(\pair{t}{u}, \abstr{x}v) & \longrightarrow  (t/x)v \label{ruelimand1}
  \end{align}
}\quad \parbox{0.45\textwidth}{
  \begin{align}
    \elimand^2(\pair{t}{u}, \abstr{x}v) & \longrightarrow  (u/x)v \label{ruelimand2}\\
    \elimor(\inl(t),\abstr{x}v,\abstr{y}w) & \longrightarrow  (t/x)v \label{ruelimorinl}\\
    \elimor(\inr(u),\abstr{x}v,\abstr{y}w) & \longrightarrow (u/y)w \label{ruelimorinr}
  \end{align}}\vspace{-1\baselineskip}

  \parbox{0.6\textwidth}{
  \begin{align}
    \elimor(\inlr(t,u),\abstr{x}v,\abstr{y}w) & \longrightarrow  (t/x)v \plus (u/y)w 
    \label{ruelimorinlr1}
  \end{align}
}\vspace{-1\baselineskip}

  \hspace{-1cm}\parbox{0.5\textwidth}{
    \begin{align}
      \star \plus \star&\longrightarrow  \star \label{rusumstar}\\
      (\lambda \abstr{x}t) \plus (\lambda \abstr{x}u) & \longrightarrow  \lambda \abstr{x}(t \plus u) \label{rusumlam}\\
      \pair{t}{u} \plus \pair{v}{w} & \longrightarrow  \pair{t \plus v}{u \plus w} \label{rusumpair}\\
      \inl(t) \plus \inl(v) & \longrightarrow  \inl(t \plus v) \label{rusuminl}\\
      \inl(t) \plus \inr(w) & \longrightarrow  \inlr(t,w) \label{rusuminlinr} \\
      \inl(t) \plus \inlr(v,w) & \longrightarrow  \inlr(t \plus v,w) \label{rusuminlinlr}
    \end{align}
  }\parbox{0.56\textwidth}{
    \begin{align}
      \inr(u) \plus \inl(v) & \longrightarrow  \inlr(v,u) \label{rusuminrinl}\\
      \inr(u) \plus \inr(w) & \longrightarrow  \inr(u \plus w) \label{rusuminr}\\
      \inr(u) \plus \inlr(v,w) & \longrightarrow  \inlr(v,u \plus w) \label{rusuminrinlr}\\
      \inlr(t,u) \plus \inl(v) & \longrightarrow  \inlr(t \plus v,u) \label{rusuminlrinl}\\
      \inlr(t,u) \plus \inr(w) & \longrightarrow  \inlr(t,u \plus w) \label{rusuminlrinr}\\
      \inlr(t,u) \plus \inlr(v,w) & \longrightarrow  \inlr(t \plus v,u \plus w) \label{rusuminlr}
    \end{align}
  }
  \caption{The reduction rules of the in-left-right-+-calculus\label{figureductionrules}}
\end{figure}
The typing rules of the in-left-right-+-calculus are given in~\autoref{figtypingrules}.  The elimination rules for conjunction
are the so-called generalized elimination rules, but the usual
elimination rules could have been chosen instead. Indeed, 
$\pi_1(t)$ can be encoded as $\elimand^1(t,\abstr{x}x)$, and $\elimand^1(t,\abstr{x}u)$ can be encoded as $(\lambda x.u)\pi_1(t)$.
In the same way, we
have the generalized elimination rule for $\top$, but the usual rules,
with an introduction, but no elimination, could have been chosen
instead. However, we will reap the benefits of this choice in 
\autoref{sec:quantum}.
The elimination rule of implication, in contrast, is the simple
application rule, though the generalized elimination rule could also
have been chosen. In this paper, these choices are essentially
contingent. The reduction rules are given in
\autoref{figureductionrules}, they are the expression on
proof-terms of the reduction rules presented above.

The proof $\inl(t) \plus \inr(u)$, reduces with the rule \eqref{rusuminlinr} to $\inlr(t,u)$, that is,
\[
  \vcenter{
    \irule{\irule{\irule{t}
	{\Gamma \vdash A}
	{}
      }
      {\Gamma \vdash A \vee B}
      {\mbox{$\vee$-i1}}
      &\irule{\irule{u}
	{\Gamma \vdash B}
	{}
      }
      {\Gamma \vdash A \vee B}
      {\mbox{$\vee$-i2}}
    }
    {\Gamma \vdash A \vee B}
    {\mbox{sum}}
  }
  \qquad
  \lra
  \qquad
  \vcenter{
    \irule{\irule{t}
      {\Gamma \vdash A}
      {}
      &
      \irule{u}
      {\Gamma \vdash B}
      {}
    }
    {\Gamma \vdash A \vee B}
    {\mbox{$\vee$-i3.}}
  }
\]
And similar rules reduce the other sums of
introductions of the disjunction. For instance, with the rule
\eqref{rusuminlr}, the proof $\inlr(t_1,t_2) \plus \inlr(u_1,u_2)$
reduces to $\inlr(t_1 \plus u_1, t_2 \plus u_2)$.

With the rule \eqref{ruelimorinlr1}, the cut
$$
  \irule{\irule{\irule{t}
      {\Gamma \vdash A}
      {}
      &\irule{u}
      {\Gamma \vdash B}
      {}
    }
    {\Gamma \vdash A \vee B}
    {\mbox{$\vee$-i3}}
    &\irule{v}{\Gamma, A \vdash C}{}
    &\irule{w}{\Gamma, B \vdash C}{}
  }
  {\Gamma \vdash C}
  {\mbox{$\vee$-e}}
$$
that is 
$\elimor(\inlr(t,u), \abstr{x}v, \abstr{y}w)$,
reduces neither to $(t/x)v$ nor to $(u/y)w$, but to the sum of
these two proofs, preserving information and symmetry.

\begin{rem}
The in-left-right-+-calculus does not verify the so-called ``harmony
principle''
\cite{Gentzen,Prawitz,PrawitzEssay,Dummett,SchroederHeister,SchroederHeister2014,MillerPimentel,Read04,Read10,Read,MartinLof}. Indeed,
as we have three introduction rules
for disjuntion,
\[
  \irule{\Gamma \vdash t:A}
  {\Gamma \vdash \inl(t):A \vee B}
  {\mbox{$\vee$-i1,}}
  \qquad
  \irule{\Gamma \vdash u:B}
  {\Gamma \vdash \inr(u):A \vee B}
  {\mbox{$\vee$-i2, and}}
  \qquad
  \irule{\Gamma \vdash t:A & \Gamma \vdash u:B}
  {\Gamma \vdash \inlr(t,u):A \vee B}
  {\mbox{$\vee$-i3,}}
\]
a harmonious elimination rule would, for instance, have three minor premises
\[
  \irule{\Gamma \vdash t:A \vee B
    &
      \Gamma, x:A \vdash u:C &
      \Gamma, y:B \vdash v:C &
      \Gamma, x:A, y:B \vdash w:C
  }
  {\Gamma \vdash \elimor(t,\abstr{x}u,\abstr{y}v,\abstr{xy}w):C}
  {\mbox{$\vee$-e}}
\]
and the associated reduction rule would be
\(
  \elimor(\inlr(t_1,t_2),\abstr{x}u,\abstr{y}v,\abstr{xy} w) \longrightarrow
  (t_1/x,t_2/y)w
\).
Instead, we have the more specialized rule \eqref{ruelimorinlr1} where
the proof $w$ is $u \plus v$ by construction.  The choice of this
proof $w$ and its relation to the proofs $u$ and $v$ seems to be the
fine-tuning parameter, when designing a calculus with a $\vee$-i3
rule.
\end{rem}

\subsection{Properties}
\label{subsec:subrefd}

The in-left-right-+-calculus enjoys the subject reduction property (\autoref{thm:SR}), 
the introduction property (\autoref{introductions}), 
the termination property (\autoref{th:Term}), and
the confluence property (\autoref{th:Confluence})
(as it is left linear and has no critical pairs), 
as stated by the following theorems.  

\subsubsection{Subject reduction}

As usual, we start with a substitution lemma
(\autoref{prop:subst}) which is used in
the proof of subject reduction
(\autoref{thm:SR}).

The substitution lemma says that substituting a hypothesis in proof, by an
actual proof of that hypothesis, yields a valid proof.
\begin{prop}
  [Substitution]
  \label{prop:subst}
  Let $\Gamma,x:B\vdash t:A$ and $\Gamma\vdash u:B$.
  Then, $\Gamma\vdash (u/x)t:A$.
\end{prop}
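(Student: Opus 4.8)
The plan is to prove this by structural induction on the proof-term $t$---equivalently, on the derivation of $\Gamma, x:B \vdash t:A$---inverting the last typing rule in each case. Two structural facts about the type system are used freely, each holding by a routine induction on derivations and each justified by the fact that the axiom rule only requires the membership $x:A \in \Gamma$: weakening (if $\Gamma \vdash u:B$ and $\Gamma \subseteq \Gamma'$ then $\Gamma' \vdash u:B$) and the irrelevance of the order of hypotheses in a context. Weakening is needed whenever the substitution descends under a binder, since there the context is extended with a new hypothesis.

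For the base case, $t$ is a variable $y$. If $y = x$, then the derivation of $\Gamma, x:B \vdash x:A$ forces $A = B$, and since $(u/x)x = u$ the desired conclusion $\Gamma \vdash u:A$ is exactly the second hypothesis. If $y \neq x$, then $(u/x)y = y$, and from $y:A \in \Gamma, x:B$ with $y \neq x$ we get $y:A \in \Gamma$, so $\Gamma \vdash y:A$ by the axiom rule.

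For the inductive step, the head constructor of $t$ determines the last typing rule; I invert it to obtain typings of the immediate subterms, apply the induction hypothesis to each, and reapply the rule. For the binder-free constructors---$t_1 \plus t_2$, $\star$, $\elimtop(t_1,t_2)$, $\elimbot{A}(t_1)$, $t_1\,t_2$, $\pair{t_1}{t_2}$, $\inl(t_1)$, $\inr(t_1)$, and $\inlr(t_1,t_2)$---the substitution $(u/x)(-)$ commutes with the constructor and every premise has the same context $\Gamma, x:B$, so the induction hypothesis applies directly; for example, if $t = t_1\,t_2$ with $\Gamma, x:B \vdash t_1 : A' \Rightarrow A$ and $\Gamma, x:B \vdash t_2 : A'$, the induction hypothesis gives $\Gamma \vdash (u/x)t_1 : A' \Rightarrow A$ and $\Gamma \vdash (u/x)t_2 : A'$, and the rule $\Rightarrow$-e yields $\Gamma \vdash (u/x)(t_1\,t_2):A$. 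For the constructors with a binder---$\lambda\abstr{y}t_1$, $\elimand^1(t_1,\abstr{y}t_2)$, $\elimand^2(t_1,\abstr{y}t_2)$, and $\elimor(t_1,\abstr{y}t_2,\abstr{z}t_3)$---I first use $\alpha$-equivalence to rename the bound variables so that they are distinct from $x$ and do not occur free in $u$; then $(u/x)(-)$ commutes with the binder as well, e.g.\ $(u/x)(\lambda\abstr{y}t_1) = \lambda\abstr{y}(u/x)t_1$. A subderivation such as $\Gamma, x:B, y:A_1 \vdash t_1:A_2$ is, up to reordering of the context, a derivation of $\Gamma, y:A_1, x:B \vdash t_1:A_2$; weakening the hypothesis to $\Gamma, y:A_1 \vdash u:B$ and applying the induction hypothesis gives $\Gamma, y:A_1 \vdash (u/x)t_1:A_2$, and the rule $\Rightarrow$-i reassembles $\Gamma \vdash (u/x)(\lambda\abstr{y}t_1):A_1 \Rightarrow A_2$. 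The cases of $\elimand^1$, $\elimand^2$, and $\elimor$ are handled in the same way for each minor premise, the major premise $t_1$ being a binder-free subterm treated as above.

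The only delicate point is the management of bound variables: the $\alpha$-renaming must guarantee that each binder variable is neither $x$ nor free in $u$, so that $(u/x)(-)$ genuinely commutes with the binder and no free variable of $u$ is captured. Once this convention is in place, every case is a mechanical inversion followed by reassembly, and, the induction being structural, there is no termination concern.
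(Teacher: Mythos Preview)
Your proposal is correct and follows essentially the same approach as the paper: a structural induction on $t$, inverting the last typing rule in each case and reassembling after applying the induction hypothesis. The paper simply enumerates every constructor individually and leaves the weakening and $\alpha$-renaming steps implicit, whereas you make them explicit and group the cases by whether a binder is present; this is a presentational difference, not a mathematical one.
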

\begin{proof}
  By induction on the structure of $t$.
  \begin{itemize}
    \item Let $t=x$, thus $B=A$ and $(u/x)t = u$, and since $u$ is a proof of
      $B$ in context $\Gamma$, we have that $(u/x)t$ is a proof of $A$ in
      context $\Gamma$.
    \item Let $t=y$, thus $y:A\in\Gamma$ and $(u/x)t=y$, hence $(u/x)t$ is a
      proof of $A$ in context $\Gamma$.
    \item Let $t=v \plus w$, 
      thus $(u/x)t=(u/x)v\plus (u/x)w$ and
      $v$ and $w$ are proofs of $A$ in context $\Gamma,x:B$. Hence, by the
      induction hypothesis,
      $(u/x)v$ is a proof of $A$ in context $\Gamma$ and
      $(u/x)w$ is a proof of $A$ in context $\Gamma$. Therefore, 
      $(u/x)t$ is a proof of $A$ in context $\Gamma$.
    \item Let $t=\star$, thus $A=\top$ and $(u/x)t=\star$, hence $(u/x)t$ is a
      proof of $A$ in context $\Gamma$.
    \item Let $t= \elimtop(v,w)$,  
      thus $(u/x)t=\elimtop((u/x)v,(u/x)w)$ and
      $v$ is a proof of $\top$ in the context $\Gamma,x:B$ and $w$ is a proof of $A$ in context $\Gamma,x:B$. Hence, by the induction
      hypothesis,
      $(u/x)v$ is a proof of $\top$ in context $\Gamma$ and
      $(u/x)w$ is a proof of $A$ in context $\Gamma$. Therefore, 
      $(u/x)t$ is a proof of $A$ in context $\Gamma$.
    \item Let $t= \elimbot{A}(v)$, 
      thus $(u/x)t=\elimbot{A}((u/x)v)$ and
      $v$ is a proof of $\bot$ in context $\Gamma,x:B$.
      Hence, by the induction hypothesis,
      $(u/x)v$ is a proof of $\bot$ in context $\Gamma$. Therefore,
      $(u/x)t$ is a proof of $A$ in context $\Gamma$.
    \item Let $t= \lambda \abstr{y}v$,
      thus $(u/x)t=\lambda\abstr{y}((u/x)v)$, $A=C\Rightarrow D$, and
      $v$ is a proof of $D$ in context $\Gamma,x:B,y:C$.
      Hence, by the induction hypothesis,
      $(u/x)v$ is a proof of $D$ in context $\Gamma,y:C$. Therefore,
      $(u/x)t$ is a proof of $A$ in context $\Gamma$.
    \item Let $t=v~w$,
      thus $(u/x)t=(u/x)~v(u/x)w$ and
      $v$ is a proof of $C\Rightarrow A$ in the context $\Gamma,x:B$ and $w$ is a proof of $C$ in context $\Gamma,x:B$. Hence, by the induction
      hypothesis,
      $(u/x)v$ is a proof of $C\Rightarrow A$ in context $\Gamma$ and
      $(u/x)w$ is a proof of $C$ in context $\Gamma$. Therefore, 
      $(u/x)t$ is a proof of $A$ in context $\Gamma$.
    \item Let $t= \pair{v}{w}$,
      thus $(u/x)t=\pair{(u/x)v}{(u/x)w}$, $A=C\wedge D$, 
      $v$ is a proof of $C$ in the context $\Gamma,x:B$, 
      and $w$ is a proof of $D$ in context $\Gamma,x:B$. Hence, by the
      induction hypothesis,
      $(u/x)v$ is a proof of $C$ in context $\Gamma$ and
      $(u/x)w$ is a proof of $D$ in context $\Gamma$. Therefore, 
      $(u/x)t$ is a proof of $A$ in context $\Gamma$.
    \item Let $t= \elimand^1(v,\abstr{y}w)$,
      thus $(u/x)t=\elimand^1((u/x)v,\abstr{y}{(u/x)w})$,
      $v$ is a proof of $C\wedge D$ in the context $\Gamma,x:B$, 
      and $w$ is a proof of $A$ in context $\Gamma,x:B,y:C$. Hence, by the
      induction hypothesis,
      $(u/x)v$ is a proof of $C\wedge D$ in context $\Gamma$ and
      $(u/x)w$ is a proof of $A$ in context $\Gamma,y:C$. Therefore, 
      $(u/x)t$ is a proof of $A$ in context $\Gamma$.
    \item Let $t= \elimand^2(v,\abstr{y}w)$. This case is  analogous to the previous case.
    \item Let $t= \inl(v)$,
      thus $(u/x)t=\inl((u/x)v)$, $A=C\vee D$, and
      $v$ is a proof of $C$ in context $\Gamma,x:B$.
      Hence, by the induction hypothesis,
      $(u/x)v$ is a proof of $C$ in context $\Gamma,y:C$. Therefore,
      $(u/x)t$ is a proof of $A$ in context $\Gamma$.
    \item Let $t= \inr(v)$. This case is  analogous to the previous case.
    \item Let $t= \inlr(v,w)$,
      thus $(u/x)t=\inlr({(u/x)v},{(u/x)w})$, $A=C\vee D$, 
      $v$ is a proof of $C$ in the context $\Gamma,x:B$, 
      and $w$ is a proof of $D$ in context $\Gamma,x:B$. Hence, by the
      induction hypothesis,
      $(u/x)v$ is a proof of $C$ in context $\Gamma$ and
      $(u/x)w$ is a proof of $D$ in context $\Gamma$. Therefore, 
      $(u/x)t$ is a proof of $A$ in context $\Gamma$.
    \item Let $t= \elimor(v,\abstr{y_1}w_1,\abstr{y_2}w_2)$,
      thus $(u/x)t=\elimor({(u/x)v},\abstr{y_1}{(u/x)w_1},\abstr{y_2}w_2)$, 
      $v$ is a proof of $C\vee D$ in the context $\Gamma,x:B$, 
      $w_1$ is a proof of $A$ in context $\Gamma,x:B,y_1:C$, and
      $w_2$ is a proof of $A$ in context $\Gamma,x:B,y_2:D$.
      Hence, by the induction hypothesis,
      $(u/x)v$ is a proof of $C\vee D$ in context $\Gamma$,
      $(u/x)w_1$ is a proof of $A$ in context $\Gamma,y_1;C$, and
      $(u/x)w_2$ is a proof of $A$ in context $\Gamma,y_2;D$.
      Therefore, $(u/x)t$ is a proof of $A$ in context $\Gamma$.
      \qedhere
  \end{itemize}
\end{proof}
\begin{thm}[Subject reduction]
\label{thm:SR}
  Let $\Gamma\vdash t:A$ and $t\lra u$, then $\Gamma\vdash u:A$.
\end{thm}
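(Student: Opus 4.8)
The plan is to prove the theorem by induction on the derivation of $t \lra u$, that is, on the position of the contracted redex inside $t$, where $\lra$ is the closure of the rules of \autoref{figureductionrules} under all term constructors; the base cases are those rewrite rules applied at the root. The congruence cases are routine: every typing rule of \autoref{figtypingrules} is syntax-directed, so a derivation of a judgment $\Gamma \vdash t : A$ about a compound term $t$ ends with the unique rule matching the head constructor of $t$ and contains sub-derivations for the immediate subterms of $t$, each in a context extending $\Gamma$ by at most the variable bound by that constructor (for $\lambda\abstr{x}(\cdot)$, for $\abstr{x}(\cdot)$ in $\elimand^1,\elimand^2$, and for the two branches of $\elimor$). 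If the redex lies in such a subterm $s$, the induction hypothesis gives that its reduct $s'$ has the same type in the same context, and re-applying the head rule with $s'$ in place of $s$ yields a derivation of $\Gamma \vdash u : A$.

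For the base cases I would argue rule by rule, inverting the typing derivation of the left-hand side — which is immediate, again because the rules are syntax-directed and there is no subsumption rule — and then reassembling a derivation of the right-hand side. The detour rules \eqref{ruelimtop}, \eqref{rubeta}, \eqref{ruelimand1}, \eqref{ruelimand2}, \eqref{ruelimorinl}, \eqref{ruelimorinr} all reduce to an application of \autoref{prop:subst}: for instance a derivation of $\Gamma \vdash (\lambda\abstr{x}t)\,u : B$ factors through $\Gamma, x{:}A \vdash t:B$ and $\Gamma \vdash u:A$, and \autoref{prop:subst} gives $\Gamma \vdash (u/x)t : B$; the $\elimand^1$, $\elimand^2$, $\elimor(\inl(\cdot),\dots)$ and $\elimor(\inr(\cdot),\dots)$ cases have exactly the same shape, and \eqref{ruelimtop} needs no substitution at all.

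The two genuinely new groups of rules are handled in the same style. For \eqref{ruelimorinlr1}, a derivation of $\Gamma \vdash \elimor(\inlr(t,u),\abstr{x}v,\abstr{y}w) : C$ factors through $\Gamma \vdash \inlr(t,u):A\vee B$ — hence $\Gamma\vdash t:A$ and $\Gamma\vdash u:B$ —, $\Gamma, x{:}A\vdash v:C$, and $\Gamma, y{:}B\vdash w:C$; two applications of \autoref{prop:subst} give $\Gamma\vdash (t/x)v:C$ and $\Gamma\vdash (u/y)w:C$, and the sum rule gives $\Gamma\vdash (t/x)v \plus (u/y)w : C$. For each of the twelve sum-commutation rules \eqref{rusumstar}--\eqref{rusuminlr}, inversion of the sum rule forces both summands to share the type $D$ of the whole term; inverting the introduction rule heading each summand then pins down $D$ and exposes the typings of the components (e.g.\ from $\Gamma \vdash \inl(t)\plus\inr(w):D$ one reads off $D = A\vee B$ with $\Gamma\vdash t:A$ and $\Gamma\vdash w:B$), and re-applying the appropriate introduction rule — inserting a sum rule on the matching components in those reducts where $\plus$ survives under the constructor, such as $\inlr(t\plus v, u\plus w)$ — rebuilds a derivation of type $D$ for the right-hand side.

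I do not anticipate a real obstacle. The only points needing a little care are making the inversion step explicit (trivial here, as typing is syntax-directed) and, in the sum-commutation cases, checking that the common type imposed by the sum rule is consistent with what each summand's head constructor requires — which it is, precisely because the sum rule demands that its two premises and its conclusion all carry the very same type. Thus all the content beyond the standard subject-reduction argument for Natural Deduction is concentrated in rule \eqref{ruelimorinlr1} and the sum-commutation rules, and in every case it reduces to \autoref{prop:subst} together with a reassembly of typing rules.
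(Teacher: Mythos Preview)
Your proposal is correct and follows essentially the same approach as the paper: induction on the reduction relation, dismissing the congruence cases as routine because typing is syntax-directed, and handling each base rule by inversion plus \autoref{prop:subst} (with an extra application of the sum rule for \eqref{ruelimorinlr1} and the appropriate reassembly for the twelve sum-commutation rules). The paper merely unrolls the nineteen cases explicitly where you describe the common pattern; there is no substantive difference.
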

\begin{proof}
  By induction on the reduction relation. The inductive cases are trivial, so we
  only treat the basic cases corresponding to rules~\eqref{ruelimtop}
  to~\eqref{rusuminlr} from \autoref{figureductionrules}.
  \begin{enumerate}
    \item Let $t=\elimtop(\star, v)$ and $u=v$. Then, by inversion $u$ is a
      proof of $A$ in the context $\Gamma$.
    \item Let $t=(\lambda \abstr{x}v)~w$ and $u=(w/x)v$. Then $v$ is a proof
      of $A$ in the context $\Gamma,x:B$ and $w$ is a proof of $B$ in the
      context $\Gamma$. Thus, by \autoref{prop:subst}, we have that
      $(w/x)v$ is a proof of $A$ in the context $\Gamma$.
    \item Let $t=\elimand^1(\pair{v_1}{v_2}, \abstr{x}w)$ and $u=(v_1/x)w$.
      Then $v_1$ is a proof of $B$ in the context $\Gamma$ and $w$ is a proof
      of $A$ in the context $\Gamma,x:B$. Thus, by
      \autoref{prop:subst}, we have that $(v_1/x)w$ is a proof of $A$
      in the context $\Gamma$.
    \item Let $t=\elimand^2(\pair{v_1}{v_2}, \abstr{x}w)$ and $u=(v_2/x)w$.
      This case is analogous to the previous case.
    \item Let $t=\elimor(\inl(v),\abstr{x}w_1,\abstr{y}w_2)$ and $u=(v/x)w_1$.
      Then $v$ is a proof of $B$ in the context $\Gamma$ and $w_1$ is a proof
      of $A$ in the context $\Gamma,x:B$. Thus, by
      \autoref{prop:subst}, we have that $(v/x)w_1$ is a proof of $A$
      in the context $\Gamma$.
    \item Let $t=\elimor(\inr(v),\abstr{x}w_1,\abstr{y}w_2)$ and $u=(v/y)w_2$.
      This case is analogous to the previous case.
    \item Let $t=\elimor(\inlr(v_1,v_2),\abstr{x}w_1,\abstr{y}w_2)$ and $u=(v_1/x)w_1 \plus (v_2/y)w_2$.
      Then 
      $v_1$ is a proof of $B$ in the context $\Gamma$,
      $v_2$ is a proof of $C$ in the context $\Gamma$,
      $w_1$ is a proof of $A$ in the context $\Gamma,x:B$, and
      $w_2$ is a proof of $A$ in the context $\Gamma,y:C$.
      Thus, by \autoref{prop:subst}, we have that
      $(v_1/x)w_1$ is a proof of $A$ in the context $\Gamma$ and 
      $(v_2/x)w_2$ is a proof of $A$ in the context $\Gamma$.
      Thus, $(v_1/x)w_1 \plus (v_2/y)w_2$ is a proof of $A$ in the context $\Gamma$.
    \item Let $t=\star \plus \star$ and $u=\star$. Then $A=\top$, and we have
      that $\star$ is a proof of $\top$ in the context $\Gamma$.
    \item Let $t=(\lambda \abstr{x}v) \plus (\lambda \abstr{x}w)$ and $u=\lambda \abstr{x}(v \plus w)$.
      Then, $A=B\Rightarrow C$,
      $v$ is a proof of $C$ in the context $\Gamma,x:B$, and
      $w$ is a proof of $C$ in the context $\Gamma,x:B$.
      Therefore, $v\plus w$ is a proof of $C$ in the context $\Gamma,x:B$,
      and so $\lambda \abstr{x}(v \plus w)$ is a proof of $A$ in the context
      $\Gamma$.
    \item Let $t=\pair{v_1}{w_1} \plus \pair{v_2}{w_2}$ and $u=\pair{v_1\plus v_2}{w_1\plus w_2}$.
      Then, $A=B\wedge C$,
      $v_1$ is a proof of $B$ in the context $\Gamma$,
      $v_2$ is a proof of $B$ in the context $\Gamma$,
      $w_1$ is a proof of $C$ in the context $\Gamma$, and
      $w_2$ is a proof of $C$ in the context $\Gamma$.
      Thus, $\pair{v_1\plus v_2}{w_1\plus w_2}$ is a proof of $A$ in the context
      $\Gamma$.
    \item Let $t=\inl(v_1) \plus \inl(v_2)$ and $u=\inl(v_1\plus v_2)$
      Then, $A=B\vee C$,
      $v_1$ is a proof of $B$ in the context $\Gamma$,
      $v_2$ is a proof of $B$ in the context $\Gamma$,
      Thus, $\inl{v_1\plus v_2}$ is a proof of $A$ in the context $\Gamma$.
    \item Let $t=\inl(v) \plus \inr(w)$ and $u=\inlr(v,w) $
      Then, $A=B\vee C$,
      $v$ is a proof of $B$ in the context $\Gamma$, and
      $w$ is a proof of $C$ in the context $\Gamma$. 
      Thus, $\inlrpair{v}{w}$ is a proof of $A$ in the context $\Gamma$.
    \item Let $t=\inl(v_1) \plus \inlr(v_2,w)$ and $u=\inlr(v_1 \plus v_2,w) $
      Then, $A=B\vee C$,
      $v_1$ is a proof of $B$ in the context $\Gamma$, 
      $v_2$ is a proof of $B$ in the context $\Gamma$, and
      $w$ is a proof of $C$ in the context $\Gamma$. 
      Thus, $\inlrpair{v_1\plus v_2}{w}$ is a proof of $A$ in the context $\Gamma$.
    \item Let $t=\inr(w) \plus \inl(v)$ and $u=\inlr(v,w)$
      Then, $A=B\vee C$,
      $v$ is a proof of $B$ in the context $\Gamma$, and
      $w$ is a proof of $C$ in the context $\Gamma$.
      Thus, $\inlrpair{v}{w}$ is a proof of $A$ in the context $\Gamma$.
    \item Let $t=\inr(w_1) \plus \inr(w_2)$ and $u=\inr(w_1 \plus w_2) $
      Then, $A=B\vee C$,
      $w_1$ is a proof of $C$ in the context $\Gamma$, and
      $w_2$ is a proof of $C$ in the context $\Gamma$.
      Thus, $\inr{w_1\plus w_2}$ is a proof of $A$ in the context $\Gamma$.
    \item Let $t=\inr(w_1) \plus \inlr(v,w_2)$ and $u=\inlr(v,w_1 \plus w_2) $
      Then, $A=B\vee C$,
      $v$ is a proof of $B$ in the context $\Gamma$,
      $w_1$ is a proof of $C$ in the context $\Gamma$, and
      $w_2$ is a proof of $C$ in the context $\Gamma$.
      Thus, $\inlrpair{v}{w_1\plus w_2}$ is a proof of $A$ in the context $\Gamma$.
    \item Let $t=\inlr(v_1,w) \plus \inl(v_2) $ and $u= \inlr(v_1 \plus v_2,u) $
      Then, $A=B\vee C$,
      $v_1$ is a proof of $B$ in the context $\Gamma$,
      $v_2$ is a proof of $B$ in the context $\Gamma$, and
      $w$ is a proof of $C$ in the context $\Gamma$.
      Thus, $\inlrpair{v_1\plus v_2}{w}$ is a proof of $A$ in the context $\Gamma$.
    \item Let $t=\inlr(v,w_1) \plus \inr(w_2) $ and $u= \inlr(v,w_1 \plus w_2) $
      Then, $A=B\vee C$,
      $v$ is a proof of $B$ in the context $\Gamma$,
      $w_1$ is a proof of $C$ in the context $\Gamma$, and
      $w_2$ is a proof of $C$ in the context $\Gamma$.
      Thus, $\inlrpair{v}{w_1\plus w_2}$ is a proof of $A$ in the context $\Gamma$.
    \item Let $t=\inlr(v_1,w_1) \plus \inlr(v_2,w_2) $ and $u= \inlr(v_1 \plus v_2,w_1 \plus w_2)$
      Then, $A=B\vee C$,
      $v_1$ is a proof of $B$ in the context $\Gamma$,
      $v_2$ is a proof of $B$ in the context $\Gamma$,
      $w_1$ is a proof of $C$ in the context $\Gamma$, and
      $w_2$ is a proof of $C$ in the context $\Gamma$.
      Thus, $\inlrpair{v_1\plus v_2}{w_1\plus w_2}$ is a proof of $A$ in the context $\Gamma$.
      \qedhere
  \end{enumerate}
\end{proof}

\subsubsection{Introduction property}
\begin{thm}[Introduction]
\label{introductions}
A closed irreducible proof is an introduction.
\end{thm}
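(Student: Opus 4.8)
The plan is to proceed by induction on the structure of a closed, well-typed, irreducible proof-term $t$ and to show that the only shape such a $t$ can have is that of an introduction. The induction hypothesis will be invoked only on immediate subterms that are again closed, well-typed, and irreducible: this is sound because a subterm of an irreducible term is irreducible (a redex occurring in a subterm would be a redex in the whole term) and a subterm of a well-typed term is well-typed (by inversion on the typing rules of \autoref{figtypingrules}).

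The first ingredient is a canonical-forms observation, read off directly from \autoref{figtypingrules}: if $v$ is an introduction and $\Gamma \vdash v : A$, then the head constructor of $v$ pins down the principal connective of $A$. That is, $A = \top$ forces $v = \star$; $A = C \Rightarrow D$ forces $v = \lambda\abstr{x}v'$; $A = C \wedge D$ forces $v = \pair{v_1}{v_2}$; $A = C \vee D$ forces $v \in \{\inl(v'),\,\inr(v'),\,\inlr(v_1,v_2)\}$; and no introduction has type $\bot$.

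Then I would run through the possible shapes of $t$. It is not a variable, since $t$ is closed. It is not a sum $t_1 \plus t_2$: by the typing rule \emph{sum}, $t_1$ and $t_2$ are closed, well-typed, irreducible proofs of the same type $A$, so by the induction hypothesis both are introductions, and by the canonical-forms observation $A$ must be $\top$, an implication, a conjunction, or a disjunction; in each case $t_1 \plus t_2$ is the left-hand side of one of the rules \eqref{rusumstar}--\eqref{rusuminlr} (in the disjunctive case, all nine pairings of $\inl$, $\inr$, $\inlr$ are covered by \eqref{rusuminl}--\eqref{rusuminlr}), contradicting irreducibility. And $t$ is not an elimination: in each elimination form the major subterm---$t_1$ in $\elimtop(t_1,t_2)$, in $\elimbot{A}(t_1)$, in $t_1\,t_2$, in $\elimand^1(t_1,\abstr{x}t_2)$ and $\elimand^2(t_1,\abstr{x}t_2)$, and in $\elimor(t_1,\abstr{x}t_2,\abstr{y}t_3)$---is closed, well-typed, irreducible, and strictly smaller, hence an introduction by the induction hypothesis; for $\elimbot{A}(t_1)$ this already contradicts the canonical-forms observation, and in every other case the canonical form of $t_1$ exposes a redex for one of \eqref{ruelimtop}, \eqref{rubeta}, \eqref{ruelimand1}, \eqref{ruelimand2}, \eqref{ruelimorinl}, \eqref{ruelimorinr}, \eqref{ruelimorinlr1}, again contradicting irreducibility. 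Hence $t$ is an introduction.

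I do not expect a real obstacle: this is a routine ``progress''-style case analysis, structurally the same as the standard argument that a normal proof ends with an introduction, the only new features being the $\plus$ constructor and the rule $\vee$-i3. The single point needing care is the exhaustiveness of the sum-reduction rules---one must check that \emph{every} well-typed sum of two introductions of the same type is a redex, in particular all nine pairings drawn from $\{\inl,\inr,\inlr\}$, which is precisely what \eqref{rusuminl}--\eqref{rusuminlr} provide (up to $\alpha$-renaming in the implication case \eqref{rusumlam}). It is also worth noting that the statement does not claim that a closed irreducible proof of a given type exists---for $\bot$, none does---it only constrains the form of those that do.
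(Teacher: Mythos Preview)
Your proof is correct and follows essentially the same approach as the paper: structural induction on $t$, ruling out variables by closedness, sums by showing both summands are introductions of the same connective (hence a redex by \eqref{rusumstar}--\eqref{rusuminlr}), and eliminations by showing the major premise is an introduction (hence a redex, or an impossible introduction of $\bot$). Your version is simply a bit more explicit about canonical forms and the exhaustiveness of the nine disjunction-sum cases.
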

\begin{proof}
  Let $t$ be a closed irreducible proof of some proposition $A$. We prove,
  by induction on the structure of $t$ that $t$ is an introduction.

  As the proof $t$ is closed, it is not a variable.

  It cannot be a sum $u \plus v$, as if it were $u$ and $v$ would be
  closed irreducible proofs of the same proposition, hence, by induction
  hypothesis, they would either be both introductions of $\top$, both
  introductions of $\Rightarrow$, both introductions of $\wedge$, or
  both introductions of $\vee$, and the proof $t$ would be reducible.

  It cannot be an elimination as if it were of the form $\elimtop(u,v)$,
  $u~v$, $\elimand^1(u,\abstr{x}v)$, $\elimand^2(u,\abstr{x}v)$, or
  $\elimor(u,\abstr{x}v,\abstr{y}w)$, then $u$ would a closed
  irreducible proof, hence, by induction hypothesis, it would an
  introduction and the proof $t$ would be reducible.  If it were
  $\elimbot{A}(u)$, then $u$ would be a closed irreducible proof of $\bot$,
  by induction hypothesis, it would be an introduction of $\bot$ and no
  such introduction rule exists.

  Hence, it is an introduction. \qedhere
\end{proof}

\begin{rem}
In calculi where the sum rule commutes with the elimination rule of
the disjunction, rather than with its introduction rules
\cite{DiazcaroDowekTCS23,DiazcaroDowekMSCS24}, when $t$ and $u$ are
closed irreducible proofs, so is the proof $\inl(t) \plus
\inr(u)$. Hence the introduction theorem is weaker. In the
in-left-right-+-calculus, such a proof reduces to $\inlr(t,u)$.
\end{rem}

\subsubsection{Termination}
The termination proof proceeds, as usual, by defining, for each
proposition $A$, a set of strongly terminating proofs $\llbracket A
\rrbracket$ and then proving, by induction over proof structure, that
all proofs of $A$ are in the set $\llbracket A \rrbracket$.

\begin{rem}[Avoiding ultra-reduction]
In this proof, we need to prove that the proofs of the form
$\elimor(t,\abstr{x} u,\abstr{y} v)$, where $t$ is a proof of $B \vee
C$ and $u$ and $v$ are proofs of $A$, are in $\llbracket A \rrbracket$,
using the induction hypothesis that $t$ is in $\llbracket B \vee C
\rrbracket$ and (instances of) $u$ and $v$ are in $\llbracket A
\rrbracket$.  To do so, we prove that all the one-step reducts of this
proof are in $\llbracket A \rrbracket$.

When we have a rule commuting the sum with the elimination 
of the disjunction \cite{DiazcaroDowekTCS23,DiazcaroDowekMSCS24}, that is, reducing
  $\elimor(t_1 \plus t_2,\abstr{x} u,\abstr{y} v)$ to $
  \elimor(t_1,\abstr{x} u,\abstr{y} v) \plus \elimor(t_2,\abstr{x}
  u,\abstr{y} v)$,
we need to prove that the proofs $t_1$ and $t_2$ are in $\llbracket B
\vee C \rrbracket$, when $t = t_1 \plus t_2$ is, so that we can
conclude that $\elimor(t_1,\abstr{x} u,\abstr{y} v)$ and
$\elimor(t_2,\abstr{x} u,\abstr{y} v)$ and then $\elimor(t_1,\abstr{x}
u,\abstr{y} v) \plus \elimor(t_2,\abstr{x} u,\abstr{y} v)$ are in
$\llbracket A \rrbracket$.

But, deducing that $t_1$ and $t_2$ are in $\llbracket B \vee C
\rrbracket$, from the fact that $t_1 \plus t_2$ is, is not easy. A
usual way to deal with this issue is to add two reduction rules: $t_1 \plus t_2 \lra t_1$ and $t_1 \plus t_2 \lra t_2$
and prove the whole termination theorem for this so-called \emph{ultra-reduction} relation \cite{Girard,DW,DiazcaroDowekTCS23}.

Here, as we commute the sum rule with the introduction rules of the
disjunction rather than with its elimination rule, we do not need to
extend the reduction relation.
\end{rem}

\begin{defi}[Length of reduction]
If $t$ is a strongly terminating proof, we write $|t|$ for the
maximum length of a reduction sequence issued from $t$.
\end{defi}

\begin{prop}[Termination of a sum]
\label{terminationsum}
If $t$ and $u$ strongly terminate, then so does $t \plus u$. 
\end{prop}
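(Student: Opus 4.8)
The plan is to prove, by well-founded induction, the slightly more general statement: for all strongly terminating proofs $t$ and $u$, the proof $t \plus u$ is strongly terminating. The induction is on the pair $\mu(t,u) = (|t| + |u|,\ \mathrm{size}(t) + \mathrm{size}(u))$, ordered lexicographically, where $\mathrm{size}$ counts the symbols of a proof-term. Since $\mu$ takes values in a well-founded order, it is enough to check that every one-step reduct of $t \plus u$ is strongly terminating; strong termination of $t \plus u$ then follows, because an infinite reduction issued from $t \plus u$ would start with a step into one of those reducts. Two elementary facts will be used throughout: if $s$ is strongly terminating and $s \to s'$, then $s'$ is strongly terminating and $|s'| < |s|$; and every subterm of a strongly terminating proof is strongly terminating.

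I would then split the one-step reducts of $t \plus u$ into three kinds. The first is a contextual step in the left summand, $t \to t'$, giving $t' \plus u$: here $|t'| < |t|$ lowers the first component of $\mu$, and $t'$, $u$ are strongly terminating, so the induction hypothesis applies. The second is the symmetric case $u \to u'$. The third is a root step by one of the rules \eqref{rusumstar}--\eqref{rusuminlr}, and this is where the real work lies: in each such case $t$ and $u$ are introductions, say $t = \kappa(a_1,\dots)$ and $u = \kappa'(b_1,\dots)$, and the reduct is an introduction constructor applied to proof-terms that are either immediate subterms $a_i$, $b_j$ of $t$, $u$ (hence strongly terminating) or sums $a_i \plus b_j$ of such subterms.

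The observation that makes the root cases go through is that no introduction constructor ($\star$, $\lambda\abstr{x}{\cdot}$, $\pair{\cdot}{\cdot}$, $\inl$, $\inr$, $\inlr$) occurs as the left-hand side of a root reduction rule — the $\plus$-rules fire only when a $\plus$ stands between two introductions — so a proof-term of the form $\kappa(s_1,\dots,s_n)$, with $\kappa$ an introduction, is strongly terminating as soon as all the $s_i$ are. Hence for the rules whose reduct contains no new $\plus$, namely \eqref{rusumstar}, \eqref{rusuminlinr}, \eqref{rusuminrinl} (reducts $\star$ and two of the form $\inlr(\cdot,\cdot)$ whose arguments are immediate subterms of $t$ and of $u$), the reduct is strongly terminating with no appeal to the induction hypothesis. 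For each of the remaining rules, every freshly created sum $a_i \plus b_j$ satisfies $|a_i| + |b_j| \le |t| + |u|$ (wrapping a proof in an introduction constructor adds no reductions) and $\mathrm{size}(a_i) + \mathrm{size}(b_j) < \mathrm{size}(t) + \mathrm{size}(u)$ (since $a_i$ is a proper subterm of $t$ and $b_j$ a proper subterm of $u$), so $\mu(a_i,b_j)$ is lexicographically below $\mu(t,u)$; the induction hypothesis then makes $a_i \plus b_j$ strongly terminating, and therefore so is the whole reduct.

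The main obstacle is precisely this root-step analysis, and it is also the reason a plain induction on $|t| + |u|$ does not suffice: a root $\plus$-step can create brand new $\plus$-redexes sitting under a constructor — for instance $\pair{a_1}{a_2} \plus \pair{b_1}{b_2} \to \pair{a_1 \plus b_1}{a_2 \plus b_2}$, or $\inlr(a_1,a_2) \plus \inlr(b_1,b_2) \to \inlr(a_1 \plus b_1, a_2 \plus b_2)$ — and these new sums need not have a strictly smaller maximal reduction length. The secondary measure on term size repairs this, because the new sums are always assembled from proper subterms of $t$ and of $u$: the size component then drops strictly while the length component does not rise, which is exactly what the lexicographic induction requires.
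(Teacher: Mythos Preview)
Your proof is correct and follows essentially the same approach as the paper: both prove the claim by showing every one-step reduct of $t \plus u$ is strongly terminating, using a lexicographic induction with primary component $|t|+|u|$ and a size-based secondary component to handle the root $\plus$-rules, whose new sums may not decrease the primary component. The only cosmetic difference is that the paper takes the secondary measure to be $\mathrm{size}(t)$ alone rather than $\mathrm{size}(t)+\mathrm{size}(u)$; either choice works, since in every root rule the left argument of each freshly created sum is a proper subterm of $t$.
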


\begin{proof}
We prove that all the one-step reducts of $t \plus u$ strongly
 terminate, by induction first on $|t| + |u|$ and then on the size of
$t$.

If the reduction takes place in $t$ or in $u$ we apply the induction
hypothesis.
Otherwise, the reduction is at the root and the rule used is one of
the rules
\eqref{rusumstar} to \eqref{rusuminlr}.

In the case \eqref{rusumstar}, the proof $\star$ is irreducible,
hence it strongly terminates.

In the case \eqref{rusumlam}, $t = \lambda \abstr{x} t_1$, $u =
\lambda \abstr{x} u_1$, by induction hypothesis, the proof $t_1 \plus
u_1$ strongly terminates, thus so does the proof $\lambda
\abstr{x}(t_1 \plus u_1)$.  The proof is similar for the cases
\eqref{rusumpair} to \eqref{rusuminlr}.  \qedhere
\end{proof}

\begin{defi}
\label{def:interpretation}
We define, by induction on the proposition $A$, a set of proofs
$\llbracket A \rrbracket$:
\begin{itemize}
\item $t \in \llbracket \top \rrbracket$ if $t$ strongly terminates,

\item $t \in \llbracket \bot \rrbracket$ if $t$ strongly terminates,

\item $t \in \llbracket A \Rightarrow B \rrbracket$ if $t$ strongly
  terminates and whenever it reduces to a proof of the form $\lambda
  \abstr{x}u$, then for every $v \in \llbracket A \rrbracket$, $(v/x)u \in
  \llbracket B \rrbracket$,

\item $t \in \llbracket A \wedge B \rrbracket$ if $t$ strongly
  terminates and whenever it reduces to a proof of the form
  $\pair{u}{v}$, then $u \in \llbracket A \rrbracket$ and $v \in \llbracket
  B \rrbracket$,

\item $t \in \llbracket A \vee B \rrbracket$ if $t$ strongly
  terminates and whenever it reduces to a proof of the form
  $\inl(u)$ (resp. $\inr(v)$, $\inlr(u,v)$), then $u \in
  \llbracket A \rrbracket$ (resp. $v \in \llbracket B \rrbracket$,
$u \in \llbracket A \rrbracket$ and $v \in \llbracket B \rrbracket$).
\end{itemize}
\end{defi}

\begin{prop}[Variables]
\label{Var}
For any $A$, the set $\llbracket A \rrbracket$ contains all the variables.
\end{prop}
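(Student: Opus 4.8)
The plan is to prove, by induction on the structure of the proposition $A$, that every variable $x$ belongs to $\llbracket A \rrbracket$. A variable is irreducible (there is no reduction rule with a variable on the left, and a variable has no subterms), hence it strongly terminates, so the common ``strongly terminates'' clause of \autoref{def:interpretation} is immediately satisfied in every case. What remains is to check, case by case on the shape of $A$, the extra closure condition attached to $\llbracket A \rrbracket$, and in each case this condition is \emph{vacuously} true because a variable never reduces to an introduction.

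Concretely, I would argue as follows. If $A$ is $\top$ or $\bot$, there is nothing more to prove. If $A = B \Rightarrow C$, then since $x$ is irreducible the only proof it reduces to is $x$ itself, which is not of the form $\lambda\abstr{y}u$; hence the implication in the definition of $\llbracket B \Rightarrow C \rrbracket$ holds vacuously, and $x \in \llbracket B \Rightarrow C \rrbracket$. If $A = B \wedge C$, likewise $x$ never reduces to a pair $\pair{u}{v}$, so $x \in \llbracket B \wedge C \rrbracket$. If $A = B \vee C$, then $x$ never reduces to any of $\inl(u)$, $\inr(v)$, or $\inlr(u,v)$, so again the condition holds vacuously and $x \in \llbracket B \vee C \rrbracket$. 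This exhausts all connectives, so the induction — which is in fact not even needed, since no case appeals to the induction hypothesis — goes through, and $\llbracket A \rrbracket$ contains all variables for every $A$.

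There is essentially no obstacle here: the only thing to observe is that variables are normal forms (so the termination clause is free) and that they are neither introductions nor eliminations (so the structural clauses are satisfied vacuously). The one point worth stating explicitly, to make the vacuity arguments airtight, is that if $x \lras w$ then $w = x$, which follows from the fact that no left-hand side of a reduction rule in \autoref{figureductionrules} is a variable and that $x$ has no proper subterms in which a reduction could occur. I would include that one-line remark and then dispatch the four cases as above.
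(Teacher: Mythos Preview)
Your proposal is correct and matches the paper's proof exactly: the paper simply observes that a variable is irreducible (hence strongly terminates) and never reduces to an introduction, so all the extra clauses in \autoref{def:interpretation} hold vacuously. Your case analysis on $A$ and the remark that the induction hypothesis is never invoked are just an explicit unfolding of that same two-line argument.
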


\begin{proof}
A variable is irreducible, hence it strongly terminates. Moreover, it
never reduces to an introduction.
\qedhere
\end{proof}   

\begin{prop}[Closure by reduction]
\label{closure}
If $t \in \llbracket A \rrbracket$ and $t \longrightarrow^* t'$, then 
$t' \in \llbracket A \rrbracket$.
\end{prop}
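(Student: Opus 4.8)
The plan is to argue by a direct case analysis on the shape of the proposition $A$, unfolding \autoref{def:interpretation} in each case. The single auxiliary fact I need is that strong termination is preserved by reduction: if $t$ strongly terminates and $t \lras t'$, then $t'$ strongly terminates, since any infinite reduction sequence issued from $t'$, prefixed by $t \lras t'$, would yield an infinite reduction sequence issued from $t$. I will also use, silently, the transitivity of $\lras$.

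First I would dispose of the atomic cases: if $A$ is $\top$ or $\bot$, then $\llbracket A \rrbracket$ is by definition the set of strongly terminating proofs, so the claim is exactly the preservation fact above. Next, for $A = B \Rightarrow C$: assuming $t \in \llbracket B \Rightarrow C \rrbracket$ and $t \lras t'$, the proof $t'$ strongly terminates by the preservation fact, and whenever $t' \lras \lambda\abstr{x}u$ we also have $t \lras \lambda\abstr{x}u$ by transitivity, whence $(v/x)u \in \llbracket C \rrbracket$ for every $v \in \llbracket B \rrbracket$ by the hypothesis on $t$; so $t' \in \llbracket B \Rightarrow C \rrbracket$. The cases $A = B \wedge C$ and $A = B \vee C$ are handled in exactly the same way, using reducts of the form $\pair{u}{v}$ in the first case and reducts of the forms $\inl(u)$, $\inr(v)$, $\inlr(u,v)$ in the second.

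The point worth noting is that no induction on $A$ is actually required: each clause of \autoref{def:interpretation} constrains $t$ only through the proofs to which $t$ reduces, and transitivity of $\lras$ propagates any such constraint from $t$ to every reduct $t'$. Consequently there is no real obstacle in this proof; the only step that deserves an explicit mention is the preservation of strong termination under $\lras$, which underlies every case and which I would state and justify before entering the case analysis.
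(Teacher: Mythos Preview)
Your proposal is correct and follows essentially the same approach as the paper: a direct case analysis on the shape of $A$, using preservation of strong termination under $\lras$ for the base condition and transitivity of $\lras$ to pull back any reduct of $t'$ to a reduct of $t$. Your explicit remark that no induction on $A$ is needed is accurate and makes explicit what the paper leaves implicit.
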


\begin{proof}
If $t \longrightarrow^* t'$ and $t$ strongly terminates, then $t'$
strongly terminates.

Furthermore, if $A$ has the form $B \Rightarrow C$ and $t'$ reduces to
$\lambda \abstr{x}u$, then so does $t$, hence for every $v \in \llbracket B
\rrbracket$, $(v/x)u \in \llbracket C \rrbracket$.

If $A$ has the form $B \wedge C$ and $t'$ reduces to $\pair{u}{v}$,
then so does $t$, hence $u \in \llbracket B \rrbracket$ and $v \in
\llbracket C \rrbracket$.

If $A$ has the form $B \vee C$ and $t'$ reduces to $\inl(u)$
(resp. $\inr(v)$, $\inlr(u,v)$), then so does $t$, hence
$u \in   \llbracket A \rrbracket$ (resp. $v \in \llbracket B \rrbracket$,
$u \in   \llbracket A \rrbracket$ and $v \in \llbracket B \rrbracket$).
\qedhere
\end{proof}

\begin{prop}[Girard's lemma]
\label{CR3}
Let $t$ be a proof that is not an introduction, 
such that all the one-step reducts of $t$
are in $\llbracket A \rrbracket$. Then, $t \in \llbracket A \rrbracket$.
\end{prop}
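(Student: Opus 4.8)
The statement is the standard CR3/(CR$_\text{iii}$) condition from the reducibility-candidates method: a non-introduction all of whose one-step reducts are reducible is itself reducible. The plan is to argue by cases on the shape of the proposition $A$, using in each case the fact that membership in $\llbracket A\rrbracket$ is determined by (i) strong termination and (ii) a condition on the reducts of $t$ that happen to be \emph{introductions} of the outermost connective of $A$.

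First I would establish strong termination of $t$. Every one-step reduct $t'$ of $t$ lies in $\llbracket A\rrbracket$ and hence, by \autoref{def:interpretation}, strongly terminates; since any infinite reduction sequence from $t$ would have to begin with some such $t'$, the proof $t$ itself strongly terminates. (If $t$ is already irreducible this is immediate, the quantification over one-step reducts being vacuous.) This handles the cases $A=\top$ and $A=\bot$ outright, since for those propositions $\llbracket A\rrbracket$ asks for nothing more than strong termination.

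Next I would check the extra clause for the three non-atomic cases. Here the key observation is that $t$ is \emph{not} an introduction, so $t$ itself is not of the form $\lambda\abstr{x}u$, $\pair{u}{v}$, $\inl(u)$, $\inr(u)$, or $\inlr(u,v)$. Therefore any reduction $t\lras w$ with $w$ an introduction of the relevant shape factors as $t\lra t'\lras w$ through some one-step reduct $t'$ of $t$. By hypothesis $t'\in\llbracket A\rrbracket$, and by \autoref{closure} (closure by reduction) every further reduct of $t'$, in particular $w$, lies in $\llbracket A\rrbracket$ as well; unfolding \autoref{def:interpretation} at $w$ then gives exactly the component conditions required ($(v/x)u'\in\llbracket C\rrbracket$ for all $v\in\llbracket B\rrbracket$ when $A=B\Rightarrow C$ and $w=\lambda\abstr{x}u'$; $u'\in\llbracket B\rrbracket$, $v'\in\llbracket C\rrbracket$ when $A=B\wedge C$ and $w=\pair{u'}{v'}$; and the analogous disjunctions of clauses for $A=B\vee C$ and $w$ one of $\inl$, $\inr$, $\inlr$). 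Combining this with strong termination yields $t\in\llbracket A\rrbracket$.

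There is no real obstacle here; the only point requiring a moment's care is precisely the one the hypothesis is tailored for—using that $t$ is not an introduction to guarantee that every reduction to an introduction passes through a one-step reduct, so that \autoref{closure} can be invoked. This is exactly where the standard reducibility argument would fail for arbitrary $t$, and it is the reason the hypothesis is stated as it is.
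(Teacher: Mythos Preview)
Your proposal is correct and follows essentially the same approach as the paper: first deduce strong termination from the hypothesis on one-step reducts, then use that $t$ is not an introduction to factor any reduction $t\lras w$ (with $w$ an introduction) through a one-step reduct $t'\in\llbracket A\rrbracket$. The only cosmetic difference is that you invoke \autoref{closure} to conclude $w\in\llbracket A\rrbracket$ and then unfold the definition at $w$, whereas the paper applies the definition of $\llbracket A\rrbracket$ directly to $t'$ (since $t'\lras w$); both routes yield the same conclusion.
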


\begin{proof}
Let $t, t_2, \ldots$ be a reduction sequence issued from $t$. If it has a
single element, it is finite. Otherwise, we have $t \longrightarrow
t_2$. As $t_2 \in \llbracket A \rrbracket$, it strongly terminates and
the reduction sequence is finite. Thus, $t$ strongly terminates.

Furthermore, if $A$ has the form $B \Rightarrow C$ and $t
\longrightarrow^* \lambda \abstr{x}u$, then let $t , t_2, \ldots, t_n$ be a
reduction sequence from $t$ to $\lambda \abstr{x}u$.  As $t_n$ is an
introduction and $t$ is not, $n \geq 2$. Thus, $t \longrightarrow t_2
\longrightarrow^* t_n$. We have $t_2 \in \llbracket A \rrbracket$,
thus for all $v \in \llbracket B \rrbracket$, $(v/x)u \in \llbracket C
\rrbracket$.

If $A$ has the form $B \wedge C$ and $t \longrightarrow^* \pair{u}{v}$, then let $t , t_2, \ldots, t_n$ be a reduction sequence
from $t$ to $\pair{u}{v}$.  As $t_n$ is an introduction and
$t$ is not, $n \geq 2$. Thus, $t \longrightarrow t_2 \longrightarrow^*
t_n$. We have $t_2 \in \llbracket A \rrbracket$, thus $u \in
\llbracket B \rrbracket$ and $v \in \llbracket C \rrbracket$.

And if $A$ has the form $B \vee C$ and $t \longrightarrow^* 
\inl(u)$ (resp. $\inr(v)$, $\inlr(u,v)$)
then
let $t , t_2, \ldots, t_n$ be a reduction sequence from $t$ to 
$\inl(u)$ (resp. $\inr(v)$,  $\inlr(u,v)$).
As $t_n$ is an introduction and $t$ is not, $n \geq 2$. Thus, $t
\longrightarrow t_2 \longrightarrow^* t_n$. We have $t_2 \in
\llbracket A \rrbracket$, thus 
$u \in \llbracket B \rrbracket$ (resp. $v \in \llbracket C \rrbracket$, 
$u \in \llbracket B \rrbracket$ and $v \in \llbracket C \rrbracket$).
\qedhere
\end{proof}

In Propositions~\ref{sum} to~\ref{elimor}, we prove the adequacy of each proof constructor.

\begin{prop}[Adequacy of $\plus$]
\label{sum}
If $t_1 \in \llbracket A \rrbracket$ and $t_2 \in \llbracket A
\rrbracket$, then $t_1 \plus t_2 \in \llbracket A \rrbracket$.
\end{prop}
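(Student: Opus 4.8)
The plan is to invoke Girard's lemma (\autoref{CR3}): since $t_1,t_2\in\llbracket A\rrbracket$ both strongly terminate, \autoref{terminationsum} gives that $t_1\plus t_2$ strongly terminates, and $t_1\plus t_2$ is not an introduction, so it suffices to show that every one-step reduct of $t_1\plus t_2$ lies in $\llbracket A\rrbracket$. I would prove this, uniformly in $A$ and in $t_1,t_2\in\llbracket A\rrbracket$, by induction first on the proposition $A$ and then on $|t_1|+|t_2|$.

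If the reduction is internal to $t_1$ (the case of $t_2$ being symmetric), it produces $t_1'\plus t_2$ with $t_1\lra t_1'$; then $t_1'\in\llbracket A\rrbracket$ by \autoref{closure}, $|t_1'|<|t_1|$, and the inner induction hypothesis gives $t_1'\plus t_2\in\llbracket A\rrbracket$. Otherwise the reduction is at the root, by one of \eqref{rusumstar}--\eqref{rusuminlr}, so $t_1$ and $t_2$ are the two introductions occurring on its left-hand side and the reduct is again an introduction. If $A$ is $\top$ or $\bot$, or more generally if the top connective of the reduct is not that of $A$, then membership in $\llbracket A\rrbracket$ only requires strong termination, which holds because the immediate subterms of the reduct---being subterms of $t_1,t_2$ or sums of such---strongly terminate (using \autoref{terminationsum}). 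In the remaining cases the reduct is an introduction whose connective is that of $A$; since an introduction reduces only inside its immediate subterms, it is enough to place those subterms in the interpretations of the corresponding subpropositions of $A$.

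For this I would use that $t_1$ and $t_2$, being introductions of $A$ matching the constructors on the left-hand side, satisfy the defining clause of \autoref{def:interpretation} at themselves. For instance, for \eqref{rusuminlrinr} with $A=B\vee C$, $t_1=\inlr(v_1,w_1)$ and $t_2=\inr(w_2)$, we get $v_1\in\llbracket B\rrbracket$, $w_1\in\llbracket C\rrbracket$, and $w_2\in\llbracket C\rrbracket$, so the reduct $\inlr(v_1,w_1\plus w_2)$ needs only $w_1\plus w_2\in\llbracket C\rrbracket$, which the outer induction hypothesis supplies on the strict subproposition $C$; \autoref{closure} then lifts this to every $\inlr$-shaped reduct of $\inlr(v_1,w_1\plus w_2)$. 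All the $\wedge$- and $\vee$-cases are the same: each immediate subterm of the reduct is one of the subterms of $t_1,t_2$ (already in the right interpretation) or a sum of two such (handled by the outer induction hypothesis on a strict subproposition), after which \autoref{closure} finishes the job.

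The step I expect to be the real obstacle is \eqref{rusumlam}, with $t_1=\lambda\abstr{x}v_1$, $t_2=\lambda\abstr{x}v_2$, $A=B\Rightarrow C$, and reduct $\lambda\abstr{x}(v_1\plus v_2)$: here \autoref{def:interpretation} for $t_1\in\llbracket B\Rightarrow C\rrbracket$ does not give $v_1\in\llbracket C\rrbracket$, only $(w/x)v_1\in\llbracket C\rrbracket$ for $w\in\llbracket B\rrbracket$. So I would fix $w\in\llbracket B\rrbracket$; since $\lambda\abstr{x}(v_1\plus v_2)$ reduces only to $\lambda$-abstractions $\lambda\abstr{x}u$ with $v_1\plus v_2\lras u$, it suffices to show $(w/x)u\in\llbracket C\rrbracket$ for each such $u$. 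I would note $(w/x)v_1,(w/x)v_2\in\llbracket C\rrbracket$, deduce $(w/x)(v_1\plus v_2)=(w/x)v_1\plus(w/x)v_2\in\llbracket C\rrbracket$ by the outer induction hypothesis on the strict subproposition $C$, and finally, since $\lra$ is preserved by substitution (so $v_1\plus v_2\lras u$ gives $(w/x)(v_1\plus v_2)\lras(w/x)u$), conclude $(w/x)u\in\llbracket C\rrbracket$ by \autoref{closure}. This detour---through substitution-stability of reduction and closure under reduction, to make up for the weaker hypothesis available about $v_i$ in the implication case---is the crux; the rest is bookkeeping over the rules of \autoref{figureductionrules}.
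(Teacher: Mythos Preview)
Your proof is correct but takes a genuinely different route from the paper. The paper works by a single induction on $A$ and directly verifies the clauses of \autoref{def:interpretation}: whenever $t_1\plus t_2\lras I$ for an introduction $I$ of the relevant shape, it inverts the reduction sequence to obtain $t_1\lras I_1$, $t_2\lras I_2$ of the same shape, with the components of $I$ reachable from sums of the corresponding components of $I_1,I_2$; then it applies the induction hypothesis on the subformulas and closes under reduction. You instead invoke \autoref{CR3} and add an inner induction on $|t_1|+|t_2|$, so you only ever analyze one-step reducts. This trades the inversion argument (e.g.\ ``if $t_1\plus t_2\lras\lambda\abstr{x}v$ then $t_1\lras\lambda\abstr{x}u_1$, $t_2\lras\lambda\abstr{x}u_2$, and $u_1\plus u_2\lras v$'') for the extra nested induction; either device does the bookkeeping. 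The $\Rightarrow$ case you flag as the crux is the crux in the paper as well: it also uses the induction hypothesis at $C$ on $(w/x)u_1,(w/x)u_2$ and then \autoref{closure} via $(w/x)(u_1\plus u_2)\lras(w/x)v$, which is exactly the substitution-stability step you make explicit.
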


\begin{proof}

  By induction on $A$.

  The proofs $t_1$ and $t_2$ strongly terminate.  Thus, by
  \autoref{terminationsum}, the proof $t_1 \plus t_2$ strongly
  terminates.

  Furthermore:
  \begin{itemize}
  \item If the proposition $A$ has the form $\top$ or $\bot$, then
    the strong termination of $t_1 \plus t_2$ is sufficient to conclude
    that $t_1 \plus t_2 \in \llbracket A \rrbracket$.
    
\item If the proposition $A$ has the form $B \Rightarrow C$, and $t_1
  \plus t_2 \lra^* \lambda \abstr{x} v$ then $t_1 \lra^* \lambda
  \abstr{x} u_1$, $t_2 \lra^* \lambda \abstr{x} u_2$, and $u_1 \plus
  u_2 \lra^* v$.

  As $t_1 \in \llbracket B \Rightarrow C \rrbracket$ and $t_1 \lra^*
  \lambda \abstr{x} u_1$, we have, for every $w$ in $\llbracket B
  \rrbracket$, $(w/x)u_1 \in \llbracket C \rrbracket$.  In the same
  way, for every $w$ in $\llbracket B \rrbracket$, $(w/x)u_2 \in
  \llbracket C \rrbracket$.  By induction hypothesis, $(w/x)(u_1 \plus
  u_2) = (w/x)u_1 \plus (w/x)u_2 \in \llbracket C \rrbracket$ and by
  \autoref{closure}, $(w/x)v \in \llbracket C \rrbracket$.

\item If the proposition $A$ has the form $B \wedge C$, and $t_1
  \plus t_2 \lra^* \pair{v}{v'}$ then
  $t_1 \lra^* \pair{u_1}{u'_1}$,
  $t_2 \lra^* \pair{u_2}{u'_2}$, $u_1 \plus u_2 \lra^* v$, and $u'_1
  \plus u'_2 \lra^* v'$.

  As $t_1 \in \llbracket B \wedge C \rrbracket$ and $t_1 \lra^*
  \pair{u_1}{u'_1}$, we have $u_1 \in \llbracket B \rrbracket$
  and $u'_1 \in \llbracket C \rrbracket$. In the same way, 
we have $u_2 \in \llbracket B \rrbracket$
  and $u'_2 \in \llbracket C \rrbracket$. 
  By induction hypothesis, $u_1 \plus u_2 \in \llbracket B \rrbracket$
and $u'_1 \plus u'_2 \in \llbracket C \rrbracket$, and, by
\autoref{closure}, $v \in \llbracket B \rrbracket$ and $v' \in
\llbracket C \rrbracket$.

\item If the proposition $A$ has the form $B \vee C$, and $t_1 \plus
  t_2 \lra^* \inl(v)$ then $t_1 \lra^* \inl(u_1)$, $t_2 \lra^*
  \inl(u_2)$, and $u_1 \plus u_2 \lra^* v$.  As $t_1 \in \llbracket B
  \vee C \rrbracket$ and $t_1 \lra^* \inl(u_1)$, we have $u_1 \in
  \llbracket B \rrbracket$. In the same way, $u_2 \in \llbracket B
  \rrbracket$.  By induction hypothesis, $u_1 \plus u_2 \in \llbracket
  B \rrbracket$ and, by \autoref{closure}, $v \in \llbracket B
  \rrbracket$.

\item If the proposition $A$ has the form $B \vee C$, and $t_1 \plus
  t_2 \lra^* \inr(v)$ the proof is similar.

\item
    If the proposition $A$ has the form $B \vee C$, and $t_1 \plus
    t_2 \lra^* \inlr(v,v')$ then, either
    \begin{itemize}
      \item
	$t_1 \lra^* \inl(u_1)$, $t_2 \lra^* \inr(u'_2)$, and
	$u_1 \lra^* v$ and  $u'_2 \lra^* v'$;

      \item
	$t_1 \lra^* \inl(u_1)$, $t_2 \lra^* \inlr(u_2,u'_2)$, and
	$u_1 \plus u_2 \lra^* v$ and  $u'_2 \lra^* v'$;

      \item
	$t_1 \lra^* \inr(u'_1)$, $t_2 \lra^* \inl(u_2)$, and
	$u_2 \lra^* v$ and  $u'_1 \lra^* v'$;

      \item
	$t_1 \lra^* \inr(u'_1)$, $t_2 \lra^* \inlr(u_2,u'_2)$, and
	$u_2 \lra^* v$ and  $u'_1 \plus u'_2 \lra^* v'$;

      \item
	$t_1 \lra^* \inlr(u_1,u'_1)$, $t_2 \lra^* \inl(u_2)$, and
	$u_1 \plus u_2 \lra^* v$ and  $u'_1 \lra^* v'$;

      \item
	$t_1 \lra^* \inlr(u_1,u'_1)$, $t_2 \lra^* \inr(u'_2)$, and
	$u_1 \lra^* v$ and  $u'_1 \plus u'_2 \lra^* v'$;

      \item
	or $t_1 \lra^* \inlr(u_1,u'_1)$, $t_2 \lra^* \inlr(u_2,u'_2)$, and
	$u_1 \plus u_2 \lra^* v$ and  $u'_1 \plus u'_2 \lra^* v'$.
    \end{itemize}

    As these seven cases are similar, we consider only the last one.

    As $t_1 \in \llbracket B \vee C \rrbracket$ and $t_1 \lra^*
    \inlr(u_1,u'_1)$, we have 
    $u_1 \in \llbracket B \rrbracket$
    and
    $u'_1 \in \llbracket C \rrbracket$.
    In the same way, 
    $u_2 \in \llbracket B \rrbracket$
    and
    $u'_2 \in \llbracket C \rrbracket$.
    By induction hypothesis,
    $u_1 \plus u_2 \in \llbracket B \rrbracket$
    and
    $u'_1 \plus u'_2 \in \llbracket C \rrbracket$.
    As $u_1 \plus u_2 \lra^* v$ and $u'_1\plus u'_2\to v'$,
    we get by  \autoref{closure}, 
    $v \in \llbracket B \rrbracket$ and
    $v' \in \llbracket C \rrbracket$.
    \qedhere
\end{itemize}
\end{proof}

\begin{prop}[Adequacy of $\star$]
\label{star}
We have $\star \in \llbracket \top \rrbracket$.
\end{prop}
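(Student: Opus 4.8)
The plan is to simply unfold the definition of $\llbracket \top \rrbracket$ from \autoref{def:interpretation}, which states that $t \in \llbracket \top \rrbracket$ precisely when $t$ strongly terminates. Hence the only thing to verify is that $\star$ strongly terminates.

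First I would observe that $\star$ is irreducible: inspecting the reduction rules of \autoref{figureductionrules}, no rule has a left-hand side of the form $\star$ --- the only rule mentioning $\star$ at the head is~\eqref{rusumstar}, whose left-hand side is $\star \plus \star$, not $\star$ itself --- and $\star$ has no proper subterms in which a reduction could take place. Consequently the only reduction sequence issued from $\star$ is the one-element sequence, so $\star$ strongly terminates.

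Since $\star$ strongly terminates, the $\top$-clause of \autoref{def:interpretation} gives $\star \in \llbracket \top \rrbracket$ immediately. There is no genuine obstacle here: this proposition is one of the trivial base cases of the adequacy argument (Propositions~\ref{sum} to~\ref{elimor}), recorded only so that the case analysis over proof constructors is complete.
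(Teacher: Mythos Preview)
Your proof is correct and follows exactly the same approach as the paper's own proof, which simply notes that $\star$ is irreducible, hence strongly terminates, hence lies in $\llbracket \top \rrbracket$. Your version merely spells out the irreducibility check in slightly more detail.
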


\begin{proof}
As $\star$ is irreducible, it strongly terminates, hence
$\star \in \llbracket \top \rrbracket$. 
\qedhere
\end{proof}

\begin{prop}[Adequacy of $\lambda$]
\label{abstraction}
If, for all $u \in \llbracket A \rrbracket$, $(u/x)t \in \llbracket B
\rrbracket$, then $\lambda \abstr{x}t \in \llbracket A \Rightarrow B
\rrbracket$.
\end{prop}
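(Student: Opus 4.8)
The plan is to unfold the definition of $\llbracket A \Rightarrow B \rrbracket$ (\autoref{def:interpretation}) and verify its two requirements for the proof $\lambda\abstr{x}t$: first, that $\lambda\abstr{x}t$ strongly terminates; second, that whenever $\lambda\abstr{x}t \lras \lambda\abstr{x}u$, then $(v/x)u \in \llbracket B \rrbracket$ for every $v \in \llbracket A \rrbracket$. The key preliminary observation is that no reduction rule of \autoref{figureductionrules} has a left-hand side that is an abstraction (the only rule mentioning $\lambda$, namely \eqref{rusumlam}, requires a $\plus$ at the root), so every reduction step issued from $\lambda\abstr{x}t$ takes place inside the body: $\lambda\abstr{x}t \lras s$ holds iff $s = \lambda\abstr{x}t'$ with $t \lras t'$.

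For strong termination, I would invoke \autoref{Var}: the variable $x$ belongs to $\llbracket A \rrbracket$, so the hypothesis applied to $u := x$ gives $t = (x/x)t \in \llbracket B \rrbracket$, and hence $t$ strongly terminates. By the observation above, each reduction sequence issued from $\lambda\abstr{x}t$ has the same length as the corresponding sequence issued from $t$, so $\lambda\abstr{x}t$ strongly terminates as well.

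For the second requirement, suppose $\lambda\abstr{x}t \lras \lambda\abstr{x}u$; by the observation this forces $t \lras u$. Let $v \in \llbracket A \rrbracket$. By hypothesis $(v/x)t \in \llbracket B \rrbracket$, and since reduction is stable under substitution we have $(v/x)t \lras (v/x)u$. \autoref{closure} then yields $(v/x)u \in \llbracket B \rrbracket$, which is exactly what is needed. Combining the two parts, $\lambda\abstr{x}t \in \llbracket A \Rightarrow B \rrbracket$.

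This is the standard reducibility argument for $\lambda$-abstraction, so there is no genuine obstacle; the only points deserving (routine) care are the two facts used tacitly above, both easy inductions on term structure: that an abstraction can only be reduced inside its body, and that $t \lras u$ implies $(v/x)t \lras (v/x)u$ — the latter because substituting $v$ for $x$ into any redex of \autoref{figureductionrules} again produces a redex of the same shape.
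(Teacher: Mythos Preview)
Your proof is correct and follows essentially the same approach as the paper: use \autoref{Var} with $u := x$ to get $t \in \llbracket B\rrbracket$ and hence strong termination of $\lambda\abstr{x}t$, then for $\lambda\abstr{x}t \lras \lambda\abstr{x}t'$ deduce $t \lras t'$, whence $(v/x)t \lras (v/x)t'$ and conclude by \autoref{closure}. You are simply more explicit than the paper about the two routine facts (no rule fires at an abstraction root, and reduction is stable under substitution).
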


\begin{proof}
By \autoref{Var}, $x \in \llbracket A \rrbracket$, thus
$t = (x/x)t \in \llbracket B \rrbracket$. Hence, $t$ strongly
terminates.  Consider a reduction sequence issued from $\lambda
\abstr{x}t$.  This sequence can only reduce $t$ hence it is finite. Thus,
$\lambda \abstr{x}t$ strongly terminates.

Furthermore, if $\lambda \abstr{x}t \longrightarrow^* \lambda \abstr{x}t'$, then
$t \lra^* t'$.  Let $u \in \llbracket A \rrbracket$,
$(u/x)t \lra^* (u/x)t'$. 
By \autoref{closure}, $(u/x)t' \in \llbracket B \rrbracket$.
\qedhere
\end{proof}

\begin{prop}[Adequacy of $\pair{}{}$]
\label{pair}
If $t_1 \in \llbracket A \rrbracket$ and $t_2 \in \llbracket B
\rrbracket$, then $\pair{t_1}{t_2} \in \llbracket A \wedge B
\rrbracket$.
\end{prop}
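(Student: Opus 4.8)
The plan is to reproduce, in a simpler form, the pattern of the preceding adequacy propositions. First I would note that $t_1 \in \llbracket A \rrbracket$ and $t_2 \in \llbracket B \rrbracket$ entail, by \autoref{def:interpretation}, that $t_1$ and $t_2$ strongly terminate; this is all that the hypotheses will actually be used for until the very last step.

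Next I would establish strong termination of $\pair{t_1}{t_2}$. The key observation is a shape-preservation fact: inspecting \autoref{figureductionrules}, no reduction rule has a left-hand side that is a bare pair $\pair{\cdot}{\cdot}$ — the only rule whose redex involves a pair at the head, \eqref{rusumpair}, requires that pair to be an immediate argument of a $\plus$. Hence every one-step reduct of $\pair{t_1}{t_2}$ is again of the form $\pair{t_1'}{t_2}$ with $t_1 \lra t_1'$, or $\pair{t_1}{t_2'}$ with $t_2 \lra t_2'$. Consequently any reduction sequence issued from $\pair{t_1}{t_2}$ projects onto a pair of interleaved reduction sequences of $t_1$ and of $t_2$, and therefore has length at most $|t_1| + |t_2|$; so $\pair{t_1}{t_2}$ strongly terminates.

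Finally I would check the second clause of the definition of $\llbracket A \wedge B \rrbracket$. Suppose $\pair{t_1}{t_2} \lras \pair{u}{v}$. By the shape-preservation fact above, this reduction sequence decomposes so that $t_1 \lras u$ and $t_2 \lras v$. Since $t_1 \in \llbracket A \rrbracket$, \autoref{closure} gives $u \in \llbracket A \rrbracket$; likewise $v \in \llbracket B \rrbracket$. Together with strong termination, this yields $\pair{t_1}{t_2} \in \llbracket A \wedge B \rrbracket$.

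I expect no real obstacle here: unlike \autoref{sum}, this proof needs no induction on $A$, because the constructor $\pair{}{}$ already pins down which clause of \autoref{def:interpretation} is relevant. The only point requiring a word of justification is the shape-preservation claim — that $\pair{t_1}{t_2}$ admits no root-level redex and that its reducts remain pairs — and this is immediate from reading off the reduction rules.
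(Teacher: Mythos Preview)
Your proposal is correct and matches the paper's proof almost exactly: the paper also argues that a reduction sequence from $\pair{t_1}{t_2}$ can only reduce $t_1$ and $t_2$, hence is finite, and that $\pair{t_1}{t_2} \lras \pair{t_1'}{t_2'}$ implies $t_1 \lras t_1'$ and $t_2 \lras t_2'$, concluding by \autoref{closure}. You supply a bit more justification for the shape-preservation claim and an explicit length bound, but the structure is identical.
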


\begin{proof}
The proofs $t_1$ and $t_2$ strongly terminate. Consider a reduction
sequence issued from $\pair{t_1}{t_2}$.  This sequence can only
reduce $t_1$
and $t_2$, hence it is finite.  Thus, $\pair{t_1}{t_2}$
strongly terminates.

Furthermore, if $\pair{t_1}{t_2} \longrightarrow^* \pair
{t'_1}{t'_2}$, then $t_1 \lra^* t'_1$ and $t_2 \lra^* t'_2$.  By
\autoref{closure}, $t'_1 \in \llbracket A \rrbracket$ and
$t'_2 \in \llbracket B \rrbracket$.
\qedhere
\end{proof}

\begin{prop}[Adequacy of $\inl$]
\label{inl}
If $t \in \llbracket A \rrbracket$, then $\inl(t) \in \llbracket A
\vee B \rrbracket$.
\end{prop}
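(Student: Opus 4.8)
The plan is to check the two clauses of \autoref{def:interpretation} that define membership in $\llbracket A \vee B \rrbracket$ for a term of the form $\inl(t)$: that $\inl(t)$ strongly terminates, and that every introduction to which $\inl(t)$ reduces has the prescribed shape. The argument will be entirely parallel to that of \autoref{pair} (Adequacy of $\pair{}{}$).

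First I would note that $t \in \llbracket A \rrbracket$ gives strong termination of $t$, and that $\inl(t)$ is never itself a redex. Indeed, among the rules of \autoref{figureductionrules} the only ones whose left-hand side has an $\inl$ at the head are \eqref{rusuminl}, \eqref{rusuminlinr} and \eqref{rusuminlinlr}, and in each of them the $\inl$ occurs as an immediate argument of a $\plus$; so none of them applies to $\inl(t)$ on its own. Consequently every reduction step issued from $\inl(t)$ takes place inside $t$, and any reduction sequence from $\inl(t)$ has the form $\inl(t) \lra \inl(t_2) \lra \cdots$ for some reduction sequence $t \lra t_2 \lra \cdots$ issued from $t$. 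As the latter is finite, so is the former, hence $\inl(t)$ strongly terminates.

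Second, suppose $\inl(t) \lras s$ with $s$ an introduction of the disjunction, i.e.\ $s$ of the form $\inl(u)$, $\inr(v)$, or $\inlr(u,v)$. By the observation above the reduction stays within $t$, so necessarily $s = \inl(u)$ with $t \lras u$, while the cases $\inr(v)$ and $\inlr(u,v)$ cannot occur. By \autoref{closure} we get $u \in \llbracket A \rrbracket$, which is precisely what \autoref{def:interpretation} demands in the $\inl$ subcase. Hence $\inl(t) \in \llbracket A \vee B \rrbracket$.

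There is no genuine obstacle here; the single point worth spelling out is the verification that $\inl(t)$ is never at the root of a redex, which is what confines reduction to $t$ and thereby completely determines the shape of the reducts of $\inl(t)$.
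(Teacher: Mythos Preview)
Your proof is correct and follows essentially the same approach as the paper: strong termination because reduction can only take place inside $t$, and then any reduct introduction must be of the form $\inl(t')$ with $t \lras t'$, so $t' \in \llbracket A \rrbracket$ by \autoref{closure}. You merely spell out more explicitly why $\inl(t)$ is never a root redex, which the paper leaves implicit.
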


\begin{proof}
The proof $t$ strongly terminates. Consider a reduction
sequence issued from $\inl(t)$.  This sequence can only
reduce $t$, hence it is finite.  Thus, $\inl(t)$
strongly terminates.

Furthermore, if $\inl(t) \longrightarrow^* \inl
(t')$, then $t \lra^* t'$. By 
\autoref{closure}, $t' \in \llbracket A \rrbracket$.
And the proof $\inl(t)$ never reduces to a proof of the form $\inr(t'_2)$ or
$\inlr(t'_1,t'_2)$.
\qedhere
\end{proof}

\begin{prop}[Adequacy of $\inr$]
\label{inr}
If $t \in \llbracket B
\rrbracket$, then $\inr(t) \in \llbracket A \vee B
\rrbracket$.
\end{prop}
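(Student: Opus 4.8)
The statement is the mirror image of the preceding \autoref{inl}, so the plan is to run exactly the same argument with the roles of the left and right immediate subproofs exchanged. Assume $t \in \llbracket B \rrbracket$. By \autoref{def:interpretation}, $t$ strongly terminates; this is the only hypothesis we need.

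\textbf{Strong termination of $\inr(t)$.} First I would observe that $\inr(t)$ is neither a redex nor a sum, so no reduction can take place at its root: every rule in \autoref{figureductionrules} whose left-hand side produces an $\inr$ form at the root has a sum $t_1 \plus t_2$ on the left, and $\inr(t)$ has no such shape. Consequently any reduction sequence issued from $\inr(t)$ only reduces the subterm $t$. Since $t$ strongly terminates, every such sequence is finite, hence $\inr(t)$ strongly terminates.

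\textbf{The interpretation clause.} It remains to check the condition in the $\vee$-clause of \autoref{def:interpretation} for $\inr(t)$ as a proof of $A \vee B$. Because, as just noted, $\inr(t)$ admits no root reduction, if $\inr(t) \lras s$ with $s$ an introduction of the disjunction, then $s$ must again be of the form $\inr(t')$ with $t \lras t'$; in particular $\inr(t)$ never reduces to a proof of the form $\inl(t'_1)$ or $\inlr(t'_1,t'_2)$, so those sub-clauses are vacuous. For the remaining case, from $t \lras t'$ and $t \in \llbracket B \rrbracket$ we get $t' \in \llbracket B \rrbracket$ by \autoref{closure}. Hence $\inr(t) \in \llbracket A \vee B \rrbracket$.

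\textbf{Main obstacle.} There is essentially no difficulty here: the only point worth stating explicitly is that $\inr(\cdot)$ is not a redex and is not a sum, so all reduction is internal to $t$ and no $\inl$ or $\inlr$ reduct can ever appear. Everything else is a direct appeal to \autoref{closure}, exactly as in the proof of \autoref{inl}.
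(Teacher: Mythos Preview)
Your proof is correct and follows exactly the same approach as the paper, which simply says the proof is similar to that of \autoref{inl}. Your argument is in fact a faithful unfolding of that proof with left and right swapped: strong termination because all reduction is internal to $t$, then \autoref{closure} for the $\inr$-reduct case, with the $\inl$ and $\inlr$ cases vacuous.
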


\begin{proof}
Similar to that of \autoref{inl}.
\qedhere
\end{proof}

\begin{prop}[Adequacy of $\inlr$]
\label{inlr}
If $t_1 \in \llbracket A \rrbracket$ and $t_2 \in \llbracket B
\rrbracket$, then $\inlr(t_1,t_2) \in \llbracket A \vee B
\rrbracket$.
\end{prop}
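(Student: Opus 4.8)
The plan is to mirror exactly the structure of the earlier adequacy lemmas for the introductions, namely \autoref{pair}, \autoref{inl}, and \autoref{inr}. There are two things to establish, following \autoref{def:interpretation}: that $\inlr(t_1,t_2)$ strongly terminates, and that whenever it reduces to an introduction of the disjunction, the components of that introduction lie in the appropriate interpretation sets.

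For strong termination, I would argue as follows. Since $t_1 \in \llbracket A \rrbracket$ and $t_2 \in \llbracket B \rrbracket$, both $t_1$ and $t_2$ strongly terminate. Now inspect the reduction rules of \autoref{figureductionrules}: no rule has a proof-term of the form $\inlr(v,w)$ as its left-hand side at the root (the rules that mention $\inlr$ on the left put it either inside an $\elimor$, as in \eqref{ruelimorinlr1}, or as an argument of a $\plus$, as in \eqref{rusuminlinlr}--\eqref{rusuminlr}). Hence no reduction issued from $\inlr(t_1,t_2)$ can take place at the root: every such reduction is internal to $t_1$ or to $t_2$. Therefore any reduction sequence issued from $\inlr(t_1,t_2)$ has length at most $|t_1| + |t_2|$, so it is finite and $\inlr(t_1,t_2)$ strongly terminates.

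For the second clause, I would observe that, since reduction never acts at the root of $\inlr(t_1,t_2)$, if $\inlr(t_1,t_2) \lra^* r$ then $r$ must itself have the form $\inlr(t'_1,t'_2)$ with $t_1 \lra^* t'_1$ and $t_2 \lra^* t'_2$; in particular $\inlr(t_1,t_2)$ never reduces to a proof of the form $\inl(u)$ or $\inr(v)$, so the corresponding disjuncts of the definition are vacuously satisfied. For the $\inlr$ case, by \autoref{closure} applied to $t_1 \in \llbracket A \rrbracket$ and to $t_2 \in \llbracket B \rrbracket$ we get $t'_1 \in \llbracket A \rrbracket$ and $t'_2 \in \llbracket B \rrbracket$, which is exactly what \autoref{def:interpretation} requires for membership in $\llbracket A \vee B \rrbracket$. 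This completes the proof.

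The only step that requires any care — and it is routine — is the case analysis on the reduction rules establishing that $\inlr(t_1,t_2)$ is not itself a redex and that reduction cannot change the outermost constructor; everything else is an immediate application of \autoref{closure} and of the definition of $|\cdot|$. There is no real obstacle here: this lemma is strictly simpler than \autoref{sum}, whose seven-way case split on the shape of $t_1 \plus t_2 \lra^* \inlr(v,v')$ is exactly the place where the combinatorics of the $\plus$-on-disjunctions rules \eqref{rusuminlinr}--\eqref{rusuminlr} are absorbed once and for all.
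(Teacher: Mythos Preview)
Your proposal is correct and follows essentially the same approach as the paper's own proof: show that reductions from $\inlr(t_1,t_2)$ can only take place inside $t_1$ or $t_2$, deduce strong termination from that of $t_1$ and $t_2$, observe that the outermost constructor is preserved so the $\inl$/$\inr$ cases are vacuous, and conclude the $\inlr$ case via \autoref{closure}. Your version is slightly more explicit about why no rule fires at the root, but the structure and the ingredients are identical.
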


\begin{proof}
    The proofs $t_1$ and $t_2$ strongly terminate. Consider a reduction
    sequence issued from $\inlr(t_1,t_2)$.  This sequence can only
    reduce $t_1$ and $t_2$, hence it is finite.  Thus, $\inlr(t_1,t_2)$
    strongly terminates.

    Furthermore, if $\inlr(t_1,t_2) \longrightarrow^*
    \inlr(t'_1,t'_2)$, then $t_1 \lra^* t'_1$ and $t_2\lra^* t'_2$. By
    \autoref{closure}, $t'_1 \in \llbracket A \rrbracket$ and
    $t'_2 \in \llbracket B \rrbracket$.  And the proof $\inlr(t)$
    never reduces to a proof of the form $\inl(t'_1)$ or $\inr(t'_2)$.
      \qedhere
\end{proof}

\begin{prop}[Adequacy of $\elimtop$]
\label{elimtop}
If $t_1 \in \llbracket \top \rrbracket$ and $t_2 \in \llbracket C \rrbracket$, 
then $\elimtop(t_1,t_2) \in \llbracket C \rrbracket$.
\end{prop}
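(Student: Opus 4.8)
The plan is to invoke Girard's lemma (\autoref{CR3}): the proof $\elimtop(t_1,t_2)$ is an elimination, hence not an introduction, so it is enough to check that every one-step reduct of $\elimtop(t_1,t_2)$ lies in $\llbracket C \rrbracket$. To establish this I would argue by induction on $|t_1| + |t_2|$, which is well-defined since $t_1 \in \llbracket \top \rrbracket$ and $t_2 \in \llbracket C \rrbracket$ both strongly terminate; this mirrors the set-up already used in \autoref{terminationsum}.

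A one-step reduct of $\elimtop(t_1,t_2)$ has one of three forms. If it is $\elimtop(t_1',t_2)$ with $t_1 \lra t_1'$, then by \autoref{closure} we have $t_1' \in \llbracket \top \rrbracket$, and since $|t_1'| < |t_1|$ the induction hypothesis gives $\elimtop(t_1',t_2) \in \llbracket C \rrbracket$. Symmetrically, if it is $\elimtop(t_1,t_2')$ with $t_2 \lra t_2'$, then $t_2' \in \llbracket C \rrbracket$ by \autoref{closure} and the induction hypothesis applies again. Finally, the reduction may occur at the root: the only rule whose left-hand side matches $\elimtop(\cdot,\cdot)$ is~\eqref{ruelimtop}, which forces $t_1 = \star$ and produces the reduct $t_2$, which is in $\llbracket C \rrbracket$ by hypothesis. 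In every case the reduct lies in $\llbracket C \rrbracket$, so \autoref{CR3} yields $\elimtop(t_1,t_2) \in \llbracket C \rrbracket$.

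I do not expect a genuine obstacle: this is one of the routine adequacy lemmas, and the fact that the $\top$-elimination rule is the generalized one (with a second premise $t_2$) changes nothing essential, since the only root reduction still requires the first argument to be $\star$. The one point to be careful about is to run the induction on the sum of the reduction lengths $|t_1| + |t_2|$, rather than on the size of $\elimtop(t_1,t_2)$ alone, so that reductions happening inside $t_1$ or $t_2$ can be absorbed by the induction hypothesis.
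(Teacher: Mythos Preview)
Your proof is correct and follows the same argument as the paper: induction on $|t_1|+|t_2|$, appeal to \autoref{CR3} since $\elimtop(t_1,t_2)$ is not an introduction, handle internal reductions via \autoref{closure} plus the induction hypothesis, and note that the only root reduction is rule~\eqref{ruelimtop}, which forces $t_1=\star$ and yields $t_2\in\llbracket C\rrbracket$.
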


\begin{proof}
The proofs $t_1$ and $t_2$ strongly terminate.  We prove, by
induction on $|t_1| + |t_2|$, that $\elimtop(t_1,t_2)
\in \llbracket C \rrbracket$.  Using \autoref{CR3}, we only
need to prove that every of its one step reducts is in $\llbracket C
\rrbracket$.  If the reduction takes place in $t_1$ or $t_2$, then we
apply \autoref{closure} and the induction hypothesis.

Otherwise, the proof $t_1$ is $\star$ and the
reduct is $t_2$. 
\qedhere
\end{proof}

\begin{prop}[Adequacy of $\elimbot{C}$]
\label{elimbot}
If $t \in \llbracket \bot \rrbracket$, 
then $\elimbot{C}(t) \in \llbracket C \rrbracket$.
\end{prop}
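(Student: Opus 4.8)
The plan is to reuse the template of \autoref{elimtop}, noting that the present case is actually simpler: the proposition $\bot$ has no introduction rule, and no reduction rule of \autoref{figureductionrules} fires at the root of a term of the form $\elimbot{C}(\cdot)$.

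First, I would record that, by \autoref{def:interpretation}, the hypothesis $t \in \llbracket \bot \rrbracket$ says precisely that $t$ strongly terminates. The proof then goes by induction on $|t|$. Since $\elimbot{C}(t)$ is an elimination, and in particular not an introduction, \autoref{CR3} reduces the goal $\elimbot{C}(t) \in \llbracket C \rrbracket$ to showing that every one-step reduct of $\elimbot{C}(t)$ lies in $\llbracket C \rrbracket$. Because there is no root reduction, each such reduct has the form $\elimbot{C}(t')$ with $t \lra t'$; by \autoref{closure} we get $t' \in \llbracket \bot \rrbracket$, and since $|t'| < |t|$ the induction hypothesis gives $\elimbot{C}(t') \in \llbracket C \rrbracket$. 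The base case $|t| = 0$ is the instance where $\elimbot{C}(t)$ is irreducible, so the hypothesis of \autoref{CR3} holds vacuously.

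I do not expect a genuine obstacle here. The only point worth stating explicitly is that the various clauses of \autoref{def:interpretation} defining $\llbracket C \rrbracket$ (for $C$ an implication, a conjunction, or a disjunction) are vacuous for $\elimbot{C}(t)$, since no reduct of $\elimbot{C}(t)$ is ever an introduction; invoking \autoref{CR3} handles both this observation and the strong-termination requirement in one stroke, uniformly in $C$.
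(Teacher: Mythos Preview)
Your argument is correct, but the paper takes a shorter, more direct route. Rather than invoking \autoref{CR3} and running an induction on $|t|$, the paper simply observes that any reduction sequence issued from $\elimbot{C}(t)$ can only reduce inside $t$; since $t$ strongly terminates, so does $\elimbot{C}(t)$. It then notes that $\elimbot{C}(t)$ never reduces to an introduction, so the extra clauses of \autoref{def:interpretation} for $\Rightarrow$, $\wedge$, $\vee$ hold vacuously, and membership in $\llbracket C \rrbracket$ follows immediately. No induction and no appeal to Girard's lemma are needed.

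Your approach, modeled on \autoref{elimtop}, is perfectly sound; the difference is that $\elimtop$ genuinely has a root reduction (rule \eqref{ruelimtop}), which is what forces the $|t_1|+|t_2|$ induction there, whereas $\elimbot{C}$ has none, so the machinery is unnecessary. The paper's version is shorter; yours has the virtue of uniformity with the other elimination cases. You even anticipate the paper's shortcut in your final remark about the vacuity of the introduction clauses---that observation \emph{is} the whole proof in the paper's treatment.
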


\begin{proof}
The proof $t$ strongly terminates.  Consider a reduction sequence
issued from $\elimbot{C}(t)$.  This sequence can only reduce $t$, hence it
is finite.  Thus, $\elimbot{C}(t)$ strongly terminates.  Moreover, it
never reduces to an introduction.
\qedhere
\end{proof}

\begin{prop}[Adequacy of application]
\label{application}
If $t_1 \in \llbracket A \Rightarrow B \rrbracket$ and $t_2 \in
\llbracket A \rrbracket$, then $t_1~t_2 \in \llbracket B
\rrbracket$.
\end{prop}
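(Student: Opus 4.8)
The plan is to follow the same reducibility pattern as the preceding adequacy propositions, combining \autoref{CR3} with an induction on the lengths of the two sub-proofs. Since $t_1 \in \llbracket A \Rightarrow B \rrbracket$ and $t_2 \in \llbracket A \rrbracket$, both $t_1$ and $t_2$ strongly terminate, so $|t_1|$ and $|t_2|$ are defined, and I would prove, by induction on $|t_1| + |t_2|$, that $t_1~t_2 \in \llbracket B \rrbracket$. As $t_1~t_2$ is an elimination, it is not an introduction, so by \autoref{CR3} it suffices to show that every one-step reduct of $t_1~t_2$ lies in $\llbracket B \rrbracket$.

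There are only three ways $t_1~t_2$ can reduce in one step. If the reduction takes place in $t_1$, say $t_1 \lra t_1'$, then $t_1' \in \llbracket A \Rightarrow B \rrbracket$ by \autoref{closure}, and $|t_1'| < |t_1|$, so the induction hypothesis gives $t_1'~t_2 \in \llbracket B \rrbracket$. The case where the reduction takes place in $t_2$ is symmetric, using \autoref{closure} and the induction hypothesis with $|t_2'| < |t_2|$. Otherwise, the reduction is at the root; the only root rule applicable to an application is \eqref{rubeta}, so $t_1 = \lambda \abstr{x} u$ for some $u$ and the reduct is $(t_2/x)u$. Since $t_1 \lra^* t_1$ and $t_1 = \lambda \abstr{x} u$, and $t_1 \in \llbracket A \Rightarrow B \rrbracket$, the defining clause of $\llbracket A \Rightarrow B \rrbracket$ applied to $t_2 \in \llbracket A \rrbracket$ yields $(t_2/x)u \in \llbracket B \rrbracket$. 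This exhausts the cases, so all one-step reducts are in $\llbracket B \rrbracket$, and \autoref{CR3} concludes.

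I do not expect a genuine obstacle here: this is the standard \autoref{CR3}-style step. The only points needing a little care are that the root-reduction case is indeed covered by the definition of the arrow interpretation — which it is, because $\lra^*$ is reflexive, so $t_1$ itself counts as a reduct of $t_1$ — and that the measure $|t_1| + |t_2|$ strictly decreases in the two in-context cases, which follows from the strong termination of $t_1$ and $t_2$. It is also worth noting that, unlike in the case of $\vee$-e, there is no reduction rule commuting a sum with an application, so no further subcases (and, a fortiori, no ultra-reduction) are needed.
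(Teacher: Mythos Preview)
Your proof is correct and follows essentially the same approach as the paper: induction on $|t_1|+|t_2|$, appeal to \autoref{CR3}, handle the in-context reductions via \autoref{closure} and the induction hypothesis, and handle the root $\beta$-step via the defining clause of $\llbracket A \Rightarrow B \rrbracket$. Your additional remarks (reflexivity of $\lra^*$, absence of a sum-commutation rule for application) are accurate and harmless elaborations of exactly what the paper does more tersely.
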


\begin{proof}
The proofs $t_1$ and $t_2$ strongly terminate. We prove, by induction
on $|t_1| + |t_2|$, that $t_1~t_2 \in \llbracket B \rrbracket$. Using
\autoref{CR3}, we only need to prove that every of its one
step reducts is in $\llbracket B \rrbracket$.  If the reduction takes
place in $t_1$ or in $t_2$, then we apply \autoref{closure}
and the induction hypothesis.

Otherwise, the proof $t_1$ has the form $\lambda \abstr{x}u$ and the reduct
is $(t_2/x)u$.  As $\lambda \abstr{x}u \in \llbracket A \Rightarrow B
\rrbracket$, we have $(t_2/x)u \in \llbracket B \rrbracket$.
\qedhere
\end{proof}

\begin{prop}[Adequacy of $\elimand^1$]
\label{elimand1}
If $t_1 \in \llbracket A \wedge B \rrbracket$ and,
for all $u$ in $\llbracket A \rrbracket$,
$(u/x)t_2 \in \llbracket C \rrbracket $, 
then $\elimand^1(t_1, \abstr{x}t_2) \in \llbracket C \rrbracket$.
\end{prop}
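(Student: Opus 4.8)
The plan is to reuse verbatim the template of the earlier adequacy lemmas for eliminations, in particular \autoref{application} (adequacy of application) and \autoref{elimtop} (adequacy of $\elimtop$). First, by \autoref{Var} the variable $x$ belongs to $\llbracket A \rrbracket$, so the hypothesis gives $t_2 = (x/x)t_2 \in \llbracket C \rrbracket$, and in particular $t_2$ strongly terminates; since $t_1 \in \llbracket A \wedge B \rrbracket$ also strongly terminates, the quantity $|t_1| + |t_2|$ is well-defined, and I would run the argument by induction on it. The proof-term $\elimand^1(t_1,\abstr{x}t_2)$ is not an introduction, so by \autoref{CR3} it suffices to show that each of its one-step reducts lies in $\llbracket C \rrbracket$.

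There are exactly three kinds of one-step reducts, since the only root rewrite applicable to an $\elimand^1$-redex is rule~\eqref{ruelimand1}. If the step is inside $t_1$, say $t_1 \lra t_1'$, then $t_1' \in \llbracket A \wedge B \rrbracket$ by \autoref{closure}, the hypothesis on $t_2$ is unchanged, and $|t_1'| + |t_2| < |t_1| + |t_2|$, so the induction hypothesis gives $\elimand^1(t_1',\abstr{x}t_2) \in \llbracket C \rrbracket$. If the step is inside $t_2$, say $t_2 \lra t_2'$, then for every $u \in \llbracket A \rrbracket$ we have $(u/x)t_2 \lra (u/x)t_2'$, so \autoref{closure} applied to $(u/x)t_2 \in \llbracket C \rrbracket$ yields $(u/x)t_2' \in \llbracket C \rrbracket$; since $|t_1| + |t_2'| < |t_1| + |t_2|$, the induction hypothesis gives $\elimand^1(t_1,\abstr{x}t_2') \in \llbracket C \rrbracket$. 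Finally, if the step is at the root, then $t_1$ must be of the form $\pair{v_1}{v_2}$ and the reduct is $(v_1/x)t_2$; as $\pair{v_1}{v_2}$ reduces in zero steps to itself and lies in $\llbracket A \wedge B \rrbracket$, the definition of $\llbracket A \wedge B \rrbracket$ gives $v_1 \in \llbracket A \rrbracket$, hence $(v_1/x)t_2 \in \llbracket C \rrbracket$ by the hypothesis on $t_2$.

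Since all one-step reducts are in $\llbracket C \rrbracket$, \autoref{CR3} concludes that $\elimand^1(t_1,\abstr{x}t_2) \in \llbracket C \rrbracket$. I do not expect a genuine obstacle here: the only points needing a little care are the very ones used silently elsewhere in this section, namely that one must invoke \autoref{Var} first to make $|t_2|$ meaningful, and that a reduction step inside $t_2$ lifts along substitution of $u$ for $x$ (which holds because no left-hand side of a rewrite rule is a bare variable, so substituting for $x$ destroys no redex). The companion statement for $\elimand^2$ is fully analogous, with rule~\eqref{ruelimand2} in place of~\eqref{ruelimand1} and $v_2 \in \llbracket B \rrbracket$ in the root case.
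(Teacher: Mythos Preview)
Your proof is correct and follows essentially the same approach as the paper's own proof: invoke \autoref{Var} to get $t_2\in\llbracket C\rrbracket$, induct on $|t_1|+|t_2|$, and use \autoref{CR3} to reduce to checking one-step reducts, handling the internal reductions via \autoref{closure} and the root reduction via the definition of $\llbracket A\wedge B\rrbracket$. Your write-up is in fact more explicit than the paper's, which compresses the two internal cases into a single sentence and does not spell out the substitution-lifting observation you make for the $t_2$ case.
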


\begin{proof}
By \autoref{Var}, $x \in \llbracket A \rrbracket$
thus $t_2 = (x/x)t_2 \in \llbracket C
\rrbracket$.  Hence, $t_1$ and $t_2$ strongly terminate.  We prove, by
induction on $|t_1| + |t_2|$, that $\elimand^1(t_1, \abstr{x}t_2)
\in \llbracket C \rrbracket$.  Using \autoref{CR3}, we only
need to prove that every of its one step reducts is in $\llbracket C
\rrbracket$.  If the reduction takes place in $t_1$ or $t_2$, then we
apply \autoref{closure} and the induction hypothesis.

Otherwise, the proof $t_1$ has the form $\pair{u}{v}$ and the
reduct is $(u/x)t_2$.  As $\pair{u}{v} \in \llbracket A
\wedge B \rrbracket$, we have $u \in \llbracket A \rrbracket$.
Hence, $(u/x)t_2 \in \llbracket C \rrbracket$.
\qedhere
\end{proof}

\begin{prop}[Adequacy of $\elimand^2$]
\label{elimand2}
If $t_1 \in \llbracket A \wedge B \rrbracket$ and,
for all $u$ in $\llbracket B \rrbracket$, 
$(u/x)t_2 \in \llbracket C \rrbracket $, 
then $\elimand^2(t_1,\abstr{x}t_2) \in \llbracket C \rrbracket$.
\end{prop}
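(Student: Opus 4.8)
The plan is to mirror, almost verbatim, the proof of \autoref{elimand1}, exchanging the roles of the two components of the conjunction (the second projection replacing the first). First I would invoke \autoref{Var} to get $x \in \llbracket B \rrbracket$, so that the hypothesis ``for all $u \in \llbracket B \rrbracket$, $(u/x)t_2 \in \llbracket C \rrbracket$'' applied to $u = x$ yields $t_2 = (x/x)t_2 \in \llbracket C \rrbracket$; in particular $t_2$ strongly terminates. Likewise $t_1 \in \llbracket A \wedge B \rrbracket$ gives that $t_1$ strongly terminates. Then I would prove $\elimand^2(t_1,\abstr{x}t_2) \in \llbracket C \rrbracket$ by induction on $|t_1| + |t_2|$.

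Since $\elimand^2(t_1,\abstr{x}t_2)$ is an elimination, it is not an introduction, so by Girard's lemma (\autoref{CR3}) it suffices to show that every one-step reduct lies in $\llbracket C \rrbracket$. A one-step reduct either arises from a reduction inside $t_1$ or inside $t_2$, or from a root reduction. In the first case, \autoref{closure} keeps the reduced subterm in the appropriate interpretation ($\llbracket A \wedge B \rrbracket$ or, for each instance $(u/x)t_2$, in $\llbracket C \rrbracket$), while $|t_1| + |t_2|$ strictly decreases, so the induction hypothesis applies. In the second case, the only root rule whose left-hand side is headed by $\elimand^2$ is rule~\eqref{ruelimand2}, which forces $t_1 = \pair{u}{v}$ and produces the reduct $(v/x)t_2$. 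From $\pair{u}{v} \in \llbracket A \wedge B \rrbracket$, the definition of $\llbracket A \wedge B \rrbracket$ gives $v \in \llbracket B \rrbracket$, and then the hypothesis gives $(v/x)t_2 \in \llbracket C \rrbracket$, which closes the argument.

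I do not expect any genuine obstacle here: the proof is the exact analogue of that of \autoref{elimand1}, and the statement even notes in passing (in the subject-reduction section) that the $\elimand^2$ case is analogous to the $\elimand^1$ case. The only minor points of care are that, when $t_1$ reduces within itself, the induction hypothesis is still available because the measure $|t_1|+|t_2|$ drops strictly even though the redex may only appear after further reduction, and that the root case analysis is short because $\elimand^2$-headed terms match only rule~\eqref{ruelimand2}.
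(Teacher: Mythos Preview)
Your proposal is correct and follows exactly the paper's approach: the paper's own proof simply reads ``Similar to that of \autoref{elimand1}'', and your argument is precisely that mirror image, swapping the roles of the two components of the pair and invoking \autoref{Var}, \autoref{closure}, and \autoref{CR3} in the same places.
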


\begin{proof}
Similar to that of \autoref{elimand1}.
\qedhere
\end{proof}

\begin{prop}[Adequacy of $\elimor$]
\label{elimor}
If $t_1 \in \llbracket A \vee B \rrbracket$, for all $u$ in $\llbracket A
\rrbracket$, $(u/x)t_2 \in \llbracket C \rrbracket $, and, for all $v$
in $\llbracket B \rrbracket$, $(v/y)t_3 \in \llbracket C \rrbracket $,
then $\elimor(t_1, \abstr{x}t_2, \abstr{y}t_3) \in \llbracket C \rrbracket$.
\end{prop}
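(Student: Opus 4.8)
The plan is to follow the same pattern as the previous adequacy lemmas: I use \autoref{Var} to replace the bound variables $x$ and $y$ by themselves, obtaining $t_2 = (x/x)t_2 \in \llbracket C \rrbracket$ and $t_3 = (y/y)t_3 \in \llbracket C \rrbracket$, so that $t_1$, $t_2$, and $t_3$ all strongly terminate. Then I prove, by induction on $|t_1| + |t_2| + |t_3|$, that $\elimor(t_1,\abstr{x}t_2,\abstr{y}t_3) \in \llbracket C \rrbracket$. Since $\elimor(t_1,\abstr{x}t_2,\abstr{y}t_3)$ is not an introduction, by \autoref{CR3} it suffices to show every one-step reduct lies in $\llbracket C \rrbracket$.

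The easy cases are when the reduction takes place inside $t_1$, $t_2$, or $t_3$: there the measure $|t_1|+|t_2|+|t_3|$ strictly decreases, the hypotheses are preserved under reduction by \autoref{closure}, and the induction hypothesis applies. The interesting cases are the root reductions, which are exactly rules \eqref{ruelimorinl}, \eqref{ruelimorinr}, and \eqref{ruelimorinlr1}, according to the shape to which $t_1$ has reduced. If $t_1 = \inl(u)$, the reduct is $(u/x)t_2$; since $t_1 \in \llbracket A \vee B \rrbracket$ and reduces to $\inl(u)$, we get $u \in \llbracket A \rrbracket$, hence $(u/x)t_2 \in \llbracket C \rrbracket$ by hypothesis. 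The case $t_1 = \inr(v)$ is symmetric, using $v \in \llbracket B \rrbracket$ and $(v/y)t_3 \in \llbracket C \rrbracket$.

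The key case — and the one the \autoref{introductions} discussion and \autoref{CR3} are set up to handle smoothly — is $t_1 = \inlr(u,v)$, where the reduct is $(u/x)t_2 \plus (v/y)t_3$. Here, since $t_1 \in \llbracket A \vee B \rrbracket$ reduces to $\inlr(u,v)$, \autoref{def:interpretation} gives $u \in \llbracket A \rrbracket$ and $v \in \llbracket B \rrbracket$; the hypotheses then yield $(u/x)t_2 \in \llbracket C \rrbracket$ and $(v/y)t_3 \in \llbracket C \rrbracket$, and \autoref{sum} (adequacy of $\plus$) gives $(u/x)t_2 \plus (v/y)t_3 \in \llbracket C \rrbracket$. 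This last step is precisely where the design choice pays off: because the sum rule commutes with the \emph{introductions} of the disjunction, the reduct of the $\vee$-e cut is a sum of terms each already known to be adequate, so no ultra-reduction is needed. I expect no real obstacle here; the only point requiring a little care is ensuring that $t_1$ is not itself an introduction when we are in an internal-reduction case — but $\elimor$ of anything is never an introduction, so \autoref{CR3} applies to the whole term regardless, and adequacy of the reducts is all that must be checked.
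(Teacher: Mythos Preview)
Your proposal is correct and follows essentially the same approach as the paper's proof: induction on $|t_1|+|t_2|+|t_3|$ after using \autoref{Var} to show $t_2,t_3\in\llbracket C\rrbracket$, then \autoref{CR3} to reduce to checking one-step reducts, with \autoref{closure} handling internal reductions and the three root cases \eqref{ruelimorinl}, \eqref{ruelimorinr}, \eqref{ruelimorinlr1} handled via the definition of $\llbracket A\vee B\rrbracket$ and, for the $\inlr$ case, \autoref{sum}. The structure, the key lemmas invoked, and the treatment of the $\inlr$ case are all the same as in the paper.
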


\begin{proof}
By \autoref{Var}, $x \in \llbracket A \rrbracket$, thus $t_2 =
(x/x)t_2 \in \llbracket C \rrbracket$. In the same way, $t_3 \in
\llbracket C \rrbracket$.  Hence, $t_1$, $t_2$, and $t_3$ strongly
terminate.  We prove, by induction on $|t_1| + |t_2| + |t_3|$, that
$\elimor(t_1, \abstr{x}t_2, \abstr{y}t_3) \in \llbracket C
\rrbracket$.  Using \autoref{CR3}, we only need to prove that
every of its one step reducts is in $\llbracket C \rrbracket$.  If the
reduction takes place in $t_1$, $t_2$, or $t_3$, then we apply
\autoref{closure} and the induction hypothesis.

Otherwise, the proof $t_1$ has the form $\inlr(w_2,w_3)$
and the reduct is $(w_2/x)t_2 \plus (w_3/x)t_3$,
or the proof $t_1$ has the form
$\inl(w_2)$ and the reduct is $(w_2/x)t_2$, or the proof $t_1$ has the form
$\inr(w_3)$ and the reduct is $(w_3/x)t_3$.

  In the first case, as $\inlr(w_2,w_3) \in \llbracket A \vee B \rrbracket$ we
  have $w_2 \in \llbracket A \rrbracket$ and $w_3 \in \llbracket B \rrbracket$.
  Hence, $(w_2/x)t_2 \in \llbracket C \rrbracket$ and $(w_3/y)t_3 \in
  \llbracket C \rrbracket$.  Moreover, by \autoref{sum}, $(w_2/x)t_2
  \plus (w_3/x)t_3 \in \llbracket C \rrbracket$.

In the second, as $\inl(w_2) \in \llbracket A \vee B \rrbracket$, we
have $w_2 \in \llbracket A \rrbracket$.  Hence, $(w_2/x)t_2 \in
\llbracket C \rrbracket$.

In the third, as $\inr(w_3) \in \llbracket A \vee B \rrbracket$, we
have $w_3 \in \llbracket B \rrbracket$.  Hence, $(w_3/x)t_3 \in
\llbracket C \rrbracket$.
\qedhere
\end{proof}

\begin{prop}[Adequacy]
\label{prop:adequacy}
Let $t$ be a proof of $A$ in a context $\Gamma = x_1:A_1, \ldots, x_n:A_n$ and
$\sigma$ be a substitution mapping each variable $x_i$ to an element
of $\llbracket A_i \rrbracket$, then $\sigma t \in \llbracket A
\rrbracket$.
\end{prop}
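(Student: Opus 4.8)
The plan is to prove the statement by induction on the structure of the proof-term $t$ (equivalently, on its typing derivation $\Gamma \vdash t : A$). In each case, $\sigma t$ is obtained by applying a single proof constructor to the terms $\sigma u$ for the immediate subterms $u$ of $t$, so the argument reduces to combining the induction hypotheses with the adequacy proposition for that constructor, \autoref{sum} through \autoref{elimor}, together with \autoref{Var} for the variable case. Reading off from \autoref{figtypingrules} which interpretations the subterms are required to land in is the only bookkeeping involved.

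First, the two cases in which $t$ is not built from a constructor. If $t = x_i$, then necessarily $A = A_i$ (the axiom rule is the only one typing a variable), and $\sigma t = \sigma(x_i) \in \llbracket A_i \rrbracket = \llbracket A \rrbracket$ directly from the hypothesis on $\sigma$. If $t = u \plus v$, then $u$ and $v$ are proofs of $A$ in $\Gamma$, so by the induction hypothesis $\sigma u \in \llbracket A \rrbracket$ and $\sigma v \in \llbracket A \rrbracket$, whence $\sigma t = \sigma u \plus \sigma v \in \llbracket A \rrbracket$ by \autoref{sum}. For the constructors that bind no variable the induction is equally immediate: $\sigma\star = \star \in \llbracket \top \rrbracket$ by \autoref{star}; and for $t$ of the form $\pair{u}{v}$, $\inl(u)$, $\inr(u)$, $\inlr(u,v)$, $\elimbot{A}(u)$, $\elimtop(u,v)$, or $u\,v$, one applies the induction hypothesis to the subterms and concludes by \autoref{pair}, \autoref{inl}, \autoref{inr}, \autoref{inlr}, \autoref{elimbot}, \autoref{elimtop}, and \autoref{application} respectively.

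The cases with a binder — $t = \lambda\abstr{x}u$, $t = \elimand^1(u,\abstr{x}v)$, $t = \elimand^2(u,\abstr{x}v)$, and $t = \elimor(u,\abstr{x}v,\abstr{y}w)$ — are the only place where care is needed, and I expect this to be the main (though routine) obstacle. Working modulo $\alpha$-equivalence we may assume each bound variable is fresh, i.e.\ not in the domain of $\sigma$ and not free in any proof in its range, so that $\sigma$ commutes with the abstraction and $(r/x)(\sigma u) = \sigma' u$ where $\sigma' = \sigma \cup \{x \mapsto r\}$. Then, given any $r \in \llbracket D \rrbracket$ with $D$ the type declared for $x$, the substitution $\sigma'$ maps every variable of the extended context to an element of the corresponding interpretation, so the induction hypothesis applied to $u$ in that context yields $\sigma' u = (r/x)(\sigma u) \in \llbracket \cdot \rrbracket$. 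For $t = \lambda\abstr{x}u$ this is exactly the hypothesis needed to invoke \autoref{abstraction}; for the $\elimand^1$, $\elimand^2$, and $\elimor$ cases it supplies the universally quantified hypotheses of \autoref{elimand1}, \autoref{elimand2}, and \autoref{elimor}, while the remaining hypothesis (that $\sigma$ applied to the first subterm lies in the interpretation of its type) comes from the induction hypothesis on that subterm. This exhausts all forms of $t$ and completes the induction. Finally, taking $\sigma$ to be the identity substitution — legitimate since $x_i \in \llbracket A_i \rrbracket$ by \autoref{Var} — one obtains that every well-typed proof is in the interpretation of its type, hence strongly terminates, which is \autoref{th:Term}.
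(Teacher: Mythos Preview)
Your proof is correct and follows exactly the same approach as the paper's: induction on the structure of $t$, with the variable case handled directly and each constructor case discharged by the corresponding adequacy lemma (\autoref{sum} through \autoref{elimor}). The paper's own proof is a two-line summary of precisely this argument; you have simply spelled out the details, including the standard $\alpha$-renaming manoeuvre in the binder cases. Your final sentence about the identity substitution is not part of the proof of \autoref{prop:adequacy} itself but rather the (separate) derivation of \autoref{th:Term} from it, which the paper also records.
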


\begin{proof}
By induction on the structure of $t$.

If $t$ is a variable, then, by definition of $\sigma$, $\sigma t \in
\llbracket A \rrbracket$.  For the other proof constructors, we use
the Propositions~\ref{sum} to~\ref{elimor}.
\qedhere
\end{proof}

\begin{thm}[Termination]
\label{th:Term} 
Let $\Gamma\vdash t:A$, then $t$ strongly terminates.
\end{thm}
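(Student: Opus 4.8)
The plan is to derive \autoref{th:Term} as an immediate corollary of \autoref{prop:adequacy}, instantiating the substitution $\sigma$ with the identity. Concretely, I would proceed as follows.

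First, given a derivation $\Gamma \vdash t:A$ with $\Gamma = x_1:A_1, \ldots, x_n:A_n$, I take $\sigma$ to be the substitution mapping each variable $x_i$ to itself. This is a legitimate choice for \autoref{prop:adequacy} because, by \autoref{Var}, each variable $x_i$ belongs to $\llbracket A_i \rrbracket$, so $\sigma$ indeed maps each $x_i$ to an element of $\llbracket A_i \rrbracket$. Applying \autoref{prop:adequacy} then yields $\sigma t \in \llbracket A \rrbracket$.

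Second, I observe that $\sigma t = t$, since $\sigma$ replaces every (free) variable by itself and the proof-term language is taken modulo $\alpha$-equivalence; hence $t \in \llbracket A \rrbracket$. Finally, by inspection of \autoref{def:interpretation}, membership in $\llbracket A \rrbracket$ entails strong termination for every proposition $A$ (in each clause the defining condition begins with ``$t$ strongly terminates''). Therefore $t$ strongly terminates, which is the desired conclusion.

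There is essentially no obstacle left at this stage: all the substantive work---\autoref{terminationsum}, the definition of the interpretation, \autoref{closure}, \autoref{CR3} (Girard's lemma), the adequacy lemmas \autoref{sum} through \autoref{elimor}, and their assembly in \autoref{prop:adequacy}---has already been carried out. The only point worth stating explicitly is the passage from $\Gamma \vdash t:A$ to an instance of \autoref{prop:adequacy} via the identity substitution justified by \autoref{Var}; everything else is unwinding definitions.
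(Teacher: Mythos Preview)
Your proposal is correct and follows exactly the same approach as the paper: instantiate \autoref{prop:adequacy} with the identity substitution, which is admissible by \autoref{Var}, and conclude strong termination directly from \autoref{def:interpretation}. There is nothing to add.
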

\begin{proof}
  Let $\sigma$ be the substitution mapping each variable $x_i:A_i$ of
  $\Gamma$ to
  itself. Note that, by \autoref{Var}, this variable is an
  element of $\llbracket A_i \rrbracket$.  Then, $t = \sigma t$ is an
  element of $\llbracket A \rrbracket$. Hence, it strongly terminates.
\qedhere
\end{proof}

\subsubsection{Confluence}
\begin{thm}[Confluence]
\label{th:Confluence}
The in-left-right-+-calculus is confluent, that is whenever $u \llas t
\lras v$, then there exists a proof $w$, such that $u \lras w \llas v$.
\end{thm}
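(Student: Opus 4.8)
The plan is to invoke the standard result that a left-linear term rewriting system without critical pairs (i.e., a weakly orthogonal, in fact orthogonal, system) is confluent. This is precisely the strategy hinted at by the parenthetical remark in the text (``as it is left linear and has no critical pairs''). First I would observe that all reduction rules in \autoref{figureductionrules} are left-linear: in each rule, no variable of the rewriting system (the metavariables $t, u, v, w, \ldots$ ranging over proof-terms) occurs twice on the left-hand side. One should check this rule by rule; the only rules that warrant a second glance are \eqref{rusumstar}, where the left-hand side $\star \plus \star$ contains no metavariables at all (hence is trivially linear), and the congruence closure, which preserves left-linearity. Second, I would argue there are no critical pairs: a critical pair arises from two left-hand sides overlapping at a non-variable position, and one checks that the left-hand sides of our rules — $\elimtop(\star,t)$, $(\lambda\abstr{x}t)u$, $\elimand^i(\pair{t}{u},\abstr{x}v)$, $\elimor(\iota(\ldots),\ldots)$ for $\iota \in \{\inl,\inr,\inlr\}$, and the twelve sum-rules $\kappa_1 \plus \kappa_2$ with $\kappa_i$ a head constructor — pairwise share no common instance at a non-variable position. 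The elimination rules have distinct head symbols from each other and from the sum rules, so no overlap there; among the twelve sum rules, the left-hand sides $\kappa_1 \plus \kappa_2$ are determined by the pair of head constructors $(\kappa_1,\kappa_2)$, and these pairs are pairwise distinct, so again no two left-hand sides unify; finally no left-hand side of a sum rule overlaps the inside of an elimination-rule pattern or vice versa, since e.g. $\star \plus \star$ is not of the form required to sit at the argument position of any elimination, and the argument of an elimination rule is always a single constructor application, never a $\plus$.

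Having established left-linearity and absence of critical pairs, I would conclude by citing the classical theorem (due to Rosen, see also Huet, Klop, or Terese): every orthogonal term rewriting system is confluent. Concretely one can reason via the Tait--Martin-L\"of parallel-reduction technique: define a parallel reduction $\Rrightarrow$ that contracts any set of (non-overlapping, hence by orthogonality arbitrary) redexes simultaneously, show $\Rrightarrow$ satisfies the diamond property by induction on the structure of the term performing the parallel step — left-linearity is exactly what is needed to make the substitution lemma for $\Rrightarrow$ go through, and absence of critical pairs is what guarantees that two overlapping contractions can always be joined — and then note that $\lra \subseteq \Rrightarrow \subseteq \lras$, so confluence of $\lra$ follows from the diamond property of $\Rrightarrow$ by a straightforward tiling argument.

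The main obstacle, to the extent there is one, is the bookkeeping in verifying the ``no critical pairs'' claim: one must be systematic about the twelve sum-rules, since their left-hand sides are all rooted at the same binary symbol $\plus$, and convince oneself that no two of them (and no one of them with itself, under renaming) have unifiable left-hand sides — this holds because each is pinned down by an unordered-looking but in fact ordered pair of head constructors on the two immediate subterms, and the list in \autoref{figureductionrules} exhausts each such pair exactly once. Given the paper's own confidence in the parenthetical, I expect the intended proof to be essentially a one-line appeal to orthogonality, and I would keep the write-up correspondingly short, perhaps spelling out only the left-linearity observation and the critical-pair check and then citing the standard theorem.
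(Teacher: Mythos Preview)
Your proposal is correct and takes essentially the same approach as the paper, whose proof is a two-line appeal to left-linearity, absence of critical pairs, and a standard confluence theorem. The paper cites Nipkow's result for higher-order rewrite systems specifically, which is the technically appropriate reference given the binders in $\lambda\abstr{x}t$, $\elimand^i(t,\abstr{x}u)$, and $\elimor(t,\abstr{x}u,\abstr{y}v)$; your citation of Rosen's first-order theorem should be upgraded accordingly.
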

\begin{proof}
  The reduction rules of \autoref{figureductionrules} is left linear
  and has no critical pairs~\cite[Section 6]{Nipkow}.  By \cite[Theorem
  6.8]{Nipkow}, it is confluent.  \qedhere
\end{proof}

\section{Application to quantum computing}
\label{sec:quantum}

We show how the $\inlr$ symbol can be used to build a
programming language for quantum computing, similar to the
language of \cite{DiazcaroDowekTCS23,DiazcaroDowekMSCS24}, but without
the extra $\odot$ connective used there.

\subsection{The quantum in-left-right-calculus}
We extend the in-left-right-+-calculus into a quantum computing
language, in three steps: we first introduce a linear version of the
system of \autoref{sec:inlrcalculus} with the symbols $\multimap$,
$\one$, and $\oplus$, with the extra introduction rule for the symbol
$\oplus$ 
that is derivable in linear logic:
\[
  \irule{\Gamma \vdash A & \Gamma \vdash B}
  {\Gamma \vdash A \oplus B}
  {\mbox{$\oplus$-i3.}}
\]

Second, besides the rule sum, we add another interstitial rule,
\[
  \irule{\Gamma \vdash A} {\Gamma \vdash A} {\mbox{prod,}} 
\]
and scalars: we consider the field $\langle {\mathbb C}, +, \times\rangle$
of scalars, take a different rule $\one$-i for each scalar and a
different rule prod for each scalar.  The proof-term for each $\one$-i
axiom is $a.\star$, with $a\in {\mathbb C}$, and the proof-term for
each prod rule is $a\bullet t$, with $a\in {\mathbb C}$ and $t$ a
proof of $A$.  The rule prod and the scalars will enable us to build
linear combinations of proofs
\cite{Lineal,Vaux2009}. This
will be useful when we express vectors and matrices as proofs in this
logic.

We adapt the reduction rule \eqref{ruelimtop} of
\autoref{figureductionrules} to make the proof $\elimone(a.\star,
t)$ reduce to $a \bullet t$, and the rule \eqref{rusumstar} so that
the proof $a.\star \plus b.\star$ reduces to $(a + b).\star$, with
scalar addition. Moreover, we introduce rules that commute the product
with all introduction rules. The proof $a \bullet b.\star$ reduces to
$(a \times b).\star$, with rule \eqref{ll:rubulletstar}, where the scalars are multiplied.

Third, as quantum measurement implies
information-erasure, we must include reduction rules that differ from
\eqref{ruelimorinlr1}.  Thus, we add an extra elimination symbol
$\elimplus^{nd}$ for the disjunction
\[
  \irule{\Gamma \vdash t:A \oplus B & \Delta, x:A \vdash u:C & \Delta,
  y:B \vdash v:C}
  {\Gamma, \Delta \vdash \elimplus^{nd}(t,\abstr{x}u,\abstr{y}v):C}
  {\mbox{$\oplus$-e$^{nd}$}}
\]
and non-deterministic reduction rules
\begin{align*}
  \elimplus^{nd}(\inl(t),\abstr{x}v,\abstr{y}w) & \longrightarrow (t/x)v
  \\
  \elimplus^{nd}(\inlr(t,u),\abstr{x}v,\abstr{y}w) & \longrightarrow (t/x)v
  \\
  \elimplus^{nd}(\inr(u),\abstr{x}v,\abstr{y}w) & \longrightarrow (u/y)w
  \\
  \elimplus^{nd}(\inlr(t,u),\abstr{x}v,\abstr{y}w) & \longrightarrow (u/y)w 
\end{align*}

Because of these two last rules, this calculus is another
non-deterministic extension of the $\lambda$-calculus. Note that
non-determinism is handled in at least two ways in the $\lambda$-calculus
\cite{deLiguoroPiperno}. One approach is to define a deterministic calculus
that returns the set of possible values (see, for instance,
\cite{deGroote}). Another is to introduce a random choice operator, such
that $choice(t,u)$ reduces non-deterministically to either $t$ or $u$ (see, for instance,
\cite{Ong93,deLiguoroPiperno}).

Quantum physics---and our calculus---involves
both approaches: superposition (and our $\plus$ operator) can be seen as
a way to collect all possible values of a physical quantity, while
measurement is a random operation returning one of them. In our
calculus, however, it is not the case that the proof $\inlr(t,u)$
reduces randomly to $\inl(t)$ or $\inr(u)$, it is only when reducing
the redex $\elimplus^{nd}(\inlr(t,u),\abstr{x}v,\abstr{y}w)$ that it
behaves randomly like $\inl(t)$ or $\inr(u)$. As in quantum physics,
randomness is controlled and occurs only in measurement operations.

\begin{figure}[t]
  $$
    \irule{}
    {x:A \vdash x:A}
    {\mbox{ax}}
    \quad
    \irule{\Gamma \vdash t:A & \Gamma \vdash u:A}
    {\Gamma \vdash t \plus u:A}
    {\mbox{sum}}
    \quad
    \irule{\Gamma \vdash t:A}
    {\Gamma \vdash a \bullet t:A}
    {\mbox{prod}(a)}
    \quad
    \irule{}
    {\vdash a.\star:\one}
    {\mbox{$\one$-i}(a)}
  $$
  $$
    \irule{\Gamma \vdash t:\one & \Delta \vdash u:A}
    {\Gamma, \Delta \vdash \elimone(t,u):A}
    {\mbox{$\one$-e}}
    \qquad
    \irule{\Gamma, x:A \vdash t:B}
    {\Gamma \vdash \lambda \abstr{x}t:A \multimap B}
    {\mbox{$\multimap$-i}}
    \qquad
    \irule{\Gamma \vdash t:A\multimap B & \Delta \vdash u:A}
    {\Gamma, \Delta \vdash t~u:B}
    {\mbox{$\multimap$-e}}
  $$
  $$
    \irule{\Gamma \vdash t:A}
    {\Gamma \vdash \inl(t):A \oplus B}
    {\mbox{$\oplus$-i1}}
    \qquad
    \irule{\Gamma \vdash t:B}
    {\Gamma \vdash \inr(t):A \oplus B}
    {\mbox{$\oplus$-i2}}
    \qquad
    \irule{\Gamma \vdash t:A & \Gamma \vdash u:B}
    {\Gamma \vdash \inlr(t,u):A \oplus B}
    {\mbox{$\oplus$-i3}}
  $$
  $$
    \irule{\Gamma \vdash t:A \oplus B & \Delta, x:A \vdash u:C & \Delta, y:B \vdash v:C}
    {\Gamma, \Delta \vdash \elimplus(t,\abstr{x}u,\abstr{y}v):C}
    {\mbox{$\oplus$-e}}
  $$
  $$
    \irule{\Gamma \vdash t:A \oplus B & \Delta, x:A \vdash u:C & \Delta,
    y:B \vdash v:C} {\Gamma, \Delta \vdash
    \elimplus^{nd}(t,\abstr{x}u,\abstr{y}v):C} {\mbox{$\oplus$-e$^{nd}$}}
  $$
  \caption{The typing rules of the quantum in-left-right-calculus\label{linear}}
\end{figure}

\begin{figure}[t]
  \centering
    \parbox{0.6\textwidth}{
    \begin{align}
      \elimone(a.\star,t) & \longrightarrow  a \bullet t
    \label{ll:ruelimone}\\
      (\lambda \abstr{x}t)~u & \longrightarrow  (u/x)t
    \label{ll:rubeta}\\
      \elimplus(\inl(t),\abstr{x}v,\abstr{y}w) & \longrightarrow  (t/x)v
    \label{ll:ruelimoplusinl}\\
      \elimplus(\inr(u),\abstr{x}v,\abstr{y}w) & \longrightarrow  (u/y)w
    \label{ll:ruelimoplusinr}\\
      \elimplus(\inlr(t,u),\abstr{x}v,\abstr{y}w) & \longrightarrow  (t/x)v \plus (u/y)w
    \label{ll:ruelimoplusinlr}\\
    \elimplus^{nd}(\inl(t),\abstr{x}v,\abstr{y}w) & \longrightarrow  (t/x)v 
    \label{ll:ruelimoplusndinl}\\
    \elimplus^{nd}(\inr(u),\abstr{x}v,\abstr{y}w) & \longrightarrow
    (u/y)w 
    \label{ll:ruelimoplusndinr}\\
    \elimplus^{nd}(\inlr(t,u),\abstr{x}v,\abstr{y}w) & \longrightarrow  (t/x)v 
    \label{ll:ruelimoplusndinlr1}\\
    \elimplus^{nd}(\inlr(t,u),\abstr{x}v,\abstr{y}w) & \longrightarrow
    (u/y)w 
    \label{ll:ruelimoplusndinlr2}
    \end{align}}
    \vspace{-1\baselineskip}

    \parbox{0.56\textwidth}{
      \begin{align}
	{a.\star} \plus b.\star&\longrightarrow  (a+b).\star
	\label{ll:rusumstar}\\
	(\lambda \abstr{x}t) \plus (\lambda \abstr{x}u) & \longrightarrow  \lambda \abstr{x}(t \plus u)
	\label{ll:rusumlam}\\
	\inl(t) \plus \inl(v) & \longrightarrow  \inl(t \plus v)
	\label{ll:rusuminl}\\
	\inl(t) \plus \inr(w) & \longrightarrow  \inlr(t,w)
	\label{ll:rusuminlinr} \\
	\inl(t) \plus \inlr(v,w) & \longrightarrow  \inlr(t \plus v,w)
	\label{ll:rusuminlinlr}\\
	\inr(u) \plus \inl(v) & \longrightarrow  \inlr(v,u)
	\label{ll:rusuminrinl}\\
	\inr(u) \plus \inr(w) & \longrightarrow  \inr(u \plus w) 
	\label{ll:rusuminr}\\
	\inr(u) \plus \inlr(v,w) & \longrightarrow  \inlr(v,u \plus w)
	\label{ll:rusuminrinlr}\\
	\inlr(t,u) \plus \inl(v) & \longrightarrow  \inlr(t \plus v,u)
	\label{ll:rusuminlrinl}\\
	\inlr(t,u) \plus \inr(w) & \longrightarrow  \inlr(t,u \plus w)
	\label{ll:rusuminlrinr}\\
	\inlr(t,u) \plus \inlr(v,w) & \longrightarrow  \inlr(t \plus v,u \plus w)
	\label{ll:rusuminlr}
      \end{align}
    }\hspace{-0.02\textwidth}
    \parbox{0.45\textwidth}{
      \begin{align}
	a \bullet b.\star&\longrightarrow  (a \times b).\star
	\label{ll:rubulletstar}\\
	a \bullet \lambda \abstr{x} t &\longrightarrow  \lambda \abstr{x} a \bullet t
	\label{ll:rubulletlam}\\
	a \bullet \inl(t) & \longrightarrow  \inl(a \bullet t)
	\label{ll:rubulletinl}\\
	a \bullet \inr(t) & \longrightarrow  \inr(a \bullet t)
	\label{ll:rubulletinr}\\
	a \bullet \inlr(t,u)  & \longrightarrow  \inlr(a \bullet t,a \bullet u)
	\label{ll:rubulletinlr}
      \end{align}
    }
  \caption{The reduction rules of the quantum in-left-right-calculus\label{linearreductionrules}}
\end{figure}

This leads to a logic with the connectives $\one$, $\multimap$, and $\oplus$. 
The syntax of proof-terms is
\begin{align*}
  t = ~ & x  \mid t \plus t \mid a \bullet t
  \mid a.\star \mid \elimone(t,t)
  \mid \lambda \abstr{x}t \mid t~t\\
  & \mid \inl(t) \mid \inr(t) \mid \inlr(t,t)
  \mid \elimplus(t,\abstr{x}t,\abstr{x}t)
  \mid \elimplus^{nd}(t,\abstr{x}t,\abstr{x}t),
\end{align*}
the typing rules are those of \autoref{linear}, and the reduction
rules those of \autoref{linearreductionrules}.

\begin{rem}
  A reduction rule always reducing the proof 
$\elimplus(\inlr(t,u),\abstr{x}v,\abstr{y}w)$ to $(t/x)v$ or to
$(u/y)w$ would imply a loss of information and a break in symmetry.  In the
same way, the solutions to the Buridan's donkey dilemma where the
donkey always chooses the left pile of hay or always the right pile of
hay introduces a loss of hay and a break in symmetry.

The non-deterministic reduction rules of $\elimplus^{nd}$, just
like quantum measurement, imply an erasure of information, but
preserve symmetry, at the cost of a loss of determinism.  In the same
way, \emph{the non-deterministic solution to the Buridan's donkey
dilemma} where the donkey randomly chooses the left pile of hay or
the right pile of hay preserves symmetry, at the cost of a loss of
determinism.  
Instead of always evolving to a state $\ket{\Phi}$ or to a state $\ket{\Psi}$, the donkey randomly evolves to the state $\ket{\Phi}$ or to the state $\ket{\Psi}$.

In the rule reducing $\elimplus(\inlr(t,u),\abstr{x}v,\abstr{y}w)$ to
$(t/x)v \plus (u/y)w$, the random choice is avoided by keeping both
proofs.  This can be called \emph{the quantum solution to the Buridan's
  donkey dilemma}: instead of evolving to the state
$\ket{\Phi}$ or to the state $\ket{\Psi}$, the donkey
evolves to the state $\ket{\Phi} + \ket{\Psi}$, without introducing a
loss of information or a breaking in the symmetry.  But the problems
of information preservation, symmetry, and determimism are delayed to
measurement.
\end{rem}

\begin{rem}
The quantum in-left-right-calculus bears some resemblance to the
$\odot$-calculus \cite{DiazcaroDowekTCS23,DiazcaroDowekMSCS24},
without the need to introduce the new connective ``sup'' $\odot$.
Indeed, ignoring the scalars that are treated in a different way, the
fragment formed with the symbols $\plus$, $\inlr$, $\elimplus$, and
$\elimplus^{nd}$, without $\inl$ and $\inr$, contains the same symbols
and the same rules as those, in the first version of the
$\odot$-calculus \cite{odotICTAC}, of the connective $\odot$, that has
the introduction rule of the conjunction and the elimination rule of
the disjunction, written there $\|$, $+$, $\elimsup^{\parallel}$, and
$\elimsup$. The rules \ref{ll:ruelimoplusinlr},
\ref{ll:ruelimoplusndinlr1}, and \ref{ll:ruelimoplusndinlr2} were
written
\begin{align*}
  \elimsup^\|(t + u,\abstr{x}v,\abstr{y}w) &\longrightarrow  (t/x)v~\|~(u/y)w\\
  \elimsup(t+u,\abstr{x}v,\abstr{y}w) &\longrightarrow  (t/x)v \\
  \elimsup(t+u,\abstr{x}v,\abstr{y}w) &\longrightarrow  (u/y)w
\end{align*}
there.  The careful reader might notice that in later versions of
this calculus \cite{DiazcaroDowekTCS23},
the symbol $\elimsup^{\parallel}$
was replaced, for a better clarity,
with $\elimsup^1$ and
$\elimsup^2$ that permitted to access to the proofs $t$ and $u$, in
the proof $t + u$, that is $\inlr(t,u)$, written there $[t,u]$.  For
instance, the proof $\elimsup^1([t,u],\abstr{x} v)$ reduced to
$(t/x)v$.  Now that we have the constructors $\inl$ and $\inr$, we
cannot have these proof constructors $\elimsup^1$ and $\elimsup^2$ any more, as there would be no
way to reduce the proof $\elimsup^1(\inr(u),\abstr{x} v)$, so we keep
the symbol $\elimplus$, that corresponds to $\elimsup^{\parallel}$.
\end{rem}

\subsection{Properties}\label{subsec:properties}

The subject reduction theorem (\autoref{thm:SR}),
the introduction theorem (\autoref{introductions}), and the termination
theorem (\autoref{th:Term}) generalize.
The confluence theorem (\autoref{th:Confluence}) does
not, because we now have non-deterministic rules.  If we remove the
symbol $\elimplus^{nd}$ and the four associated reduction rules, then
the obtained calculus is left linear and has no critical pairs, hence
it is confluent.  

The proofs of subject reduction (\autoref{thm:SRlin}) and the introduction property (\autoref{introductionslinear}) are omitted here and given in \autoref{app:properties}.
\begin{thm}[Subject reduction]\label{thm:SRlin}
  Let $\Gamma\vdash t:A$ and $t\lra u$, then $\Gamma\vdash u:A$.
  \qed
\end{thm}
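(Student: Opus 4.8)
The plan is to replay the proof of \autoref{thm:SR}, adapting it to the linear setting. First I would state and prove a linear substitution lemma: if $\Gamma, x:B \vdash t:A$ and $\Delta \vdash u:B$, where $\Gamma$ and $\Delta$ share no variables, then $\Gamma, \Delta \vdash (u/x)t:A$. As for \autoref{prop:subst}, the argument is by induction on the structure of $t$; the only real difference from the non-linear case is that in the multiplicative elimination rules ($\one$-e, $\multimap$-e, $\oplus$-e, $\oplus$-e$^{nd}$) the context of the redex is split between the two sub-derivations, and since $x$ is used linearly it occurs in exactly one of the premises. One then plugs $u$ (together with its context $\Delta$) into that premise and leaves the other untouched, so the merged context is again a legal splitting. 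The cases $t = a\bullet v$ (prod rule) and $t = a.\star$ ($\one$-i, which is closed) are immediate.

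Next, subject reduction itself is proved by induction on $t \lra u$. The congruence cases are trivial: the typing rules, including sum and prod, are compatible with reduction in a subterm, so the induction hypothesis applies directly. For the base cases I would go through the rules of \autoref{linearreductionrules}. Rule~\eqref{ll:ruelimone} turns $\elimone(a.\star,t)$ into $a\bullet t$; by inversion $t$ has the conclusion type in its context, the $\one$ premise contributes no hypotheses, so the prod rule reassigns the same type in the same context. Rule~\eqref{ll:rubeta} and the $\elimplus$ and $\elimplus^{nd}$ rules \eqref{ll:ruelimoplusinl}--\eqref{ll:ruelimoplusndinlr2} are handled by inversion followed by the linear substitution lemma, exactly as cases 2 and 5--7 of \autoref{thm:SR}; one checks that the substituted proof lands in the merged context $\Gamma,\Delta$, and for \eqref{ll:ruelimoplusinlr} one additionally applies the sum rule. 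The sum-commutation rules \eqref{ll:rusumstar}--\eqref{ll:rusuminlr} reproduce the bookkeeping of cases 8--19 of \autoref{thm:SR}, with \eqref{ll:rusumstar} now using the $\one$-i rule for the scalar $a+b$ instead of $\star\plus\star\lra\star$. Finally, the bullet-commutation rules \eqref{ll:rubulletstar}--\eqref{ll:rubulletinlr} are treated by inversion on the prod rule and on the introduction rule immediately below it, then reapplying that introduction rule to the $a\bullet(\cdot)$ of the relevant immediate subterms; \eqref{ll:rubulletstar} uses the $\one$-i rule for $a\times b$.

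The step I expect to need the most care is the interaction between the linear substitution lemma and the base cases for the elimination rules: one must verify that replacing a hypothesis by a proof typed in a context $\Delta$ yields a proof typed in the correctly merged context, and that linearity—each hypothesis used exactly once—is preserved, which rests on the fact that the replaced variable occurs in exactly one premise of each multiplicative elimination. Once that is in place, every remaining case is a routine inversion-and-reassembly mirroring \autoref{thm:SR}, with the scalar rules adding only trivial extra cases.
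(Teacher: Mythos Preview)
Your proposal is correct and matches the paper's own proof essentially line for line: the paper first proves the linear substitution lemma (with the same context-splitting analysis in the multiplicative eliminations you describe) and then checks each rule of \autoref{linearreductionrules} by inversion plus substitution, with the sum- and bullet-commutation rules handled exactly as you outline.
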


\begin{thm}[Introduction]
  \label{introductionslinear}
A closed irreducible proof is an introduction.
\qed
\end{thm}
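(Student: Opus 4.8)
The plan is to replay the proof of \autoref{introductions} almost verbatim, adapting it to the new proof constructors of the quantum in-left-right-calculus, namely the scalar introductions $a.\star$, the products $a \bullet t$, and the eliminations $\elimone(t,u)$, $\elimplus(t,\abstr{x}u,\abstr{y}v)$, and $\elimplus^{nd}(t,\abstr{x}u,\abstr{y}v)$ (there is no $\bot$, $\top$, or $\wedge$ in this fragment). So let $t$ be a closed irreducible proof of a proposition $A$, and argue by induction on the structure of $t$ that $t$ is an introduction, i.e.\ of the form $a.\star$, $\lambda\abstr{x}u$, $\inl(u)$, $\inr(u)$, or $\inlr(u,v)$.

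Since $t$ is closed it is not a variable. It cannot be a sum $u \plus v$: then $u$ and $v$ would be closed irreducible proofs of the same proposition $A$, hence by the induction hypothesis both introductions, so either both of the form $a.\star$ (when $A = \one$), both of the form $\lambda\abstr{x}w$ (when $A$ is an implication, taking representatives modulo $\alpha$-equivalence so that the bound variables agree), or both among $\inl(\cdot)$, $\inr(\cdot)$, $\inlr(\cdot,\cdot)$ (when $A$ is a sum type); in every one of these cases one of the rules \eqref{ll:rusumstar} to \eqref{ll:rusuminlr} applies, contradicting irreducibility. Likewise $t$ cannot be a product $a \bullet u$: then $u$ would be a closed irreducible proof of $A$, hence by the induction hypothesis an introduction, and one of the commutation rules \eqref{ll:rubulletstar} to \eqref{ll:rubulletinlr} would make $t$ reducible. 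Finally, $t$ cannot be an elimination: if it were $\elimone(u,v)$, $u~v$, $\elimplus(u,\abstr{x}v,\abstr{y}w)$, or $\elimplus^{nd}(u,\abstr{x}v,\abstr{y}w)$, then its major premise $u$ would be a closed irreducible proof, hence by the induction hypothesis an introduction — of $\one$, of an implication, or of a sum type, respectively — so $u$ would be $a.\star$, $\lambda\abstr{x}u'$, or one of $\inl(\cdot)$, $\inr(\cdot)$, $\inlr(\cdot,\cdot)$, and then $t$ would be a redex for \eqref{ll:ruelimone}, \eqref{ll:rubeta}, one of \eqref{ll:ruelimoplusinl} to \eqref{ll:ruelimoplusinlr}, or one of \eqref{ll:ruelimoplusndinl} to \eqref{ll:ruelimoplusndinlr2}. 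Hence $t$ is an introduction.

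I do not expect any genuine obstacle. The only points needing care are the exhaustive check that every pairing of two introductions of the same type is a $\plus$-redex and that $a \bullet$ commutes with each of the introductions, plus the minor bookkeeping that there are now two elimination symbols for $\oplus$ to dispatch; none of this is difficult, since the relevant rule is always available and we need only the \emph{existence} of a one-step reduct, so the non-determinism of $\elimplus^{nd}$ plays no role. It is worth noting that, compared with \autoref{introductions}, the degenerate ``no introduction of $\bot$'' sub-case disappears here because $\bot$ is absent from this calculus, which if anything makes the argument slightly shorter.
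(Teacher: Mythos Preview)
Your proof is correct and follows exactly the same structural-induction approach as the paper's own proof in the appendix. In fact your version is slightly more careful: the paper's appendix proof neglects to mention the cases $t = a \bullet u$ and $t = \elimplus^{nd}(u,\abstr{x}v,\abstr{y}w)$ explicitly, whereas you dispatch both (via rules \eqref{ll:rubulletstar}--\eqref{ll:rubulletinlr} and \eqref{ll:ruelimoplusndinl}--\eqref{ll:ruelimoplusndinlr2} respectively), so nothing is missing on your side.
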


The proof of termination (\autoref{thm:termination}) could be adapted from that
of \autoref{th:Term}, taking into account scalars, the product
rule, and the new elimination symbol $\elimplus^{nd}$.  But, we prefer
to give a simpler proof, reaping the benefit of linearity, using a
measure.

Our goal is to build a measure function such that if $t$ is proof of
$B$ in a context $\Gamma, x:A$ and $u$ is a proof of $A$, then
$\mu((u/x)t) = \mu(t) + \mu(u)$. This would be the case, for the usual
notion of size, if $x$ had exactly one occurrence in $t$. But, due to
additive connectives, it may have several occurrences.  For example,
if $u \plus v$ is proof of $B$ in a context $\Gamma, x:A$, then $x$
may occur both in $u$ and in $v$.  Thus, we modify the definition of
the measure and take $\mu(t \plus u) = 1 + \max(\mu(t), \mu(u))$,
instead of $\mu(t \plus u) = 1 + \mu(t) + \mu(u)$, making the function
$\mu$ a mix between a size function and a depth function.  This leads
to the definition of the measure $\mu$ below.

\begin{defi}[The measure $\mu$]
\label{measureofaproof}~
  \begin{align*}
    \mu(x) &= 0 \\
    \mu(t \plus u) &= 1 + \max(\mu(t), \mu(u)) \\
    \mu(a \bullet t) &= 1 + \mu(t) \\
    \mu(a.\star) &= 1 \\
    \mu(\elimone(t,u)) &= 1 + \mu(t) + \mu(u) \\
    \mu(\lambda \abstr{x} t) &= 1 + \mu(t) \\
    \mu(t~u) &= 1 + \mu(t) + \mu(u) \\
    \mu(\inl(t)) &= 1 + \mu(t) \\
    \mu(\inr(t)) &= 1 + \mu(t) \\
    \mu(\inlr(t,u)) &= 1 + \max(\mu(t), \mu(u)) \\
    \mu(\elimplus(t,\abstr{y} u,\abstr{z} v)) &= 1 + \mu(t) + \max(\mu(u), \mu(v)) \\
    \mu(\elimplus^{nd}(t,\abstr{y} u,\abstr{z} v)) &= 1 + \mu(t) + \max(\mu(u), \mu(v))
  \end{align*}
\end{defi}

With this measure we have $\mu((u/x)t) = \mu(t)+\mu(u)$.

\begin{lem}\label{lem:msubst}
If $\Gamma, x:A \vdash t:B$ and $\Delta \vdash u:A$ then
  $\mu((u/x)t) = \mu(t)+\mu(u)$.
\end{lem}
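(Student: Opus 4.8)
The plan is to prove the identity by induction on the structure of $t$, using at each step inversion of the typing rules to locate the free occurrences of $x$. The crucial point is that the calculus is linear: there is no weakening, so $x$ actually occurs in $t$, and the typing rules split into two families. In the \emph{context-sharing} rules --- sum, $\oplus$-i3, and the two minor premises of $\oplus$-e and $\oplus$-e$^{nd}$ --- the same context is handed to several subderivations, so $x$ occurs in each of the corresponding subterms; in the \emph{context-splitting} rules --- $\one$-e, $\multimap$-e, and the major-versus-minor split of $\oplus$-e and $\oplus$-e$^{nd}$ --- the context is partitioned, so $x$ occurs in exactly one of the corresponding subterms. The measure $\mu$ of \autoref{measureofaproof} is designed so that $\max$ appears exactly at the context-sharing constructors ($\plus$, $\inlr$, and over the two branches of $\elimplus$ and $\elimplus^{nd}$) while $+$ appears at the context-splitting ones; this makes the single ``additive copy'' of $x$ contribute $\mu(u)$ exactly once after substitution.

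\textbf{Base and vacuous cases; context-sharing cases.} If $t$ is a variable, it must be $x$, so $(u/x)t = u$ and, since $\mu(x) = 0$, we get $\mu((u/x)t) = \mu(u) = \mu(t) + \mu(u)$. The introduction rules $\mathrm{ax}$ for a variable distinct from $x$ and $\one$-i cannot be the last rule, since their context cannot contain $x:A$; so $t = a.\star$ is impossible. For a context-sharing constructor, consider $t = \inlr(t_1, t_2)$, where both $t_i$ are typed in $\Gamma, x:A$; by the induction hypothesis $\mu((u/x)t_i) = \mu(t_i) + \mu(u)$, and since $(u/x)t = \inlr((u/x)t_1, (u/x)t_2)$,
\[
  \mu((u/x)t) = 1 + \max\bigl(\mu(t_1) + \mu(u),\ \mu(t_2) + \mu(u)\bigr) = \bigl(1 + \max(\mu(t_1), \mu(t_2))\bigr) + \mu(u) = \mu(t) + \mu(u),
\]
using $\max(a+c, b+c) = \max(a,b) + c$. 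The cases $t = t_1 \plus t_2$, $t = a \bullet t_1$, $t = \lambda\abstr{y}t_1$, $t = \inl(t_1)$, $t = \inr(t_1)$ are analogous (the unary ones being even simpler, since $\mu$ only adds $1$), and the cases $t = \elimplus(t_1, \abstr{y}t_2, \abstr{z}t_3)$ and $t = \elimplus^{nd}(t_1, \abstr{y}t_2, \abstr{z}t_3)$ with $x$ in the shared context of the two branches are handled the same way, the major premise $t_1$ being left untouched by the substitution.

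\textbf{Context-splitting cases.} Consider $t = t_1\, t_2$: by inversion, exactly one of $t_1, t_2$ contains $x$. If it is $t_1$, then $(u/x)t = ((u/x)t_1)\, t_2$ and by the induction hypothesis $\mu((u/x)t) = 1 + (\mu(t_1) + \mu(u)) + \mu(t_2) = \mu(t) + \mu(u)$; symmetrically if it is $t_2$. The cases $t = \elimone(t_1, t_2)$ and $t = \elimplus(t_1, \abstr{y}t_2, \abstr{z}t_3)$ (resp.\ $\elimplus^{nd}$) with $x$ in the context of the major premise $t_1$ are identical, the subterm(s) not containing $x$ being copied unchanged.

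\textbf{Main obstacle.} The only genuine work is making precise the ``exactly one of / each of'' claims, i.e.\ reading off from the typing derivation where $x$ may occur. Doing the induction together with inversion of the last typing rule makes this routine, since the partition or sharing of the context is read directly off that rule; an equivalent formulation is to induct on the typing derivation itself. A minor point to keep in mind is that the $\alpha$-renaming of bound variables (the $y$ in $\lambda\abstr{y}t_1$ and in the branches of $\elimplus$, $\elimplus^{nd}$) must avoid capturing free variables of $u$; as proof-terms are taken modulo $\alpha$-equivalence, this is harmless and does not affect $\mu$. Collecting all cases yields $\mu((u/x)t) = \mu(t) + \mu(u)$.
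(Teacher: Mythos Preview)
Your proposal is correct and follows essentially the same approach as the paper: induction on the structure of $t$, using inversion of the last typing rule to determine in which subterm(s) $x$ occurs, together with the identity $\max(a+c,b+c)=\max(a,b)+c$ for the additive constructors. Your organization into \emph{context-sharing} versus \emph{context-splitting} rules is a pleasant conceptual refinement of the same case analysis.
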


\begin{proof}
  By induction on $t$.
  \begin{itemize}
    \item If $t$ is a variable, then $\Gamma$ is empty, $t = x$, 
      $(u/x)t = u$ and $\mu(t) = 0$.
      Thus, $\mu((u/x)t) = \mu(u) = \mu(t)+\mu(u)$.

    \item If $t = t_1 \plus t_2$, then $\Gamma, x:A \vdash t_1:B$,
      $\Gamma, x:A \vdash t_2:B$.  Using the induction hypothesis, we get 
      $\mu((u/x)t)
      = 1 + 
      \max(\mu((u/x)t_1),\mu((u/x)t_2))$ $= 1 + \max(\mu(t_1) + \mu(u),
      \mu(t_2) + \mu(u))
      = \mu(t) + \mu(u)$.

    \item If $t = a \bullet t_1$, then $\Gamma, x:A \vdash t_1:B$.
      Using the induction hypothesis, we get 
      $\mu((u/x)t)
      = 1 + \mu((u/x)t_1)
      = 1 + \mu(t_1) + \mu(u) = \mu(t) + \mu(u)$.

    \item The proof $t$ cannot be of the form
      $a.\star$, that is not a proof in $\Gamma, x:A$.

      \item If $t = \elimone(t_1,t_2)$, then
      $\Gamma = \Gamma_1, \Gamma_2$ and there are two cases.
      \begin{itemize}
	\item 
	  If $\Gamma_1, x:A \vdash t_1:\one$ and $\Gamma_2 \vdash t_2:B$,
	  then, using the induction hypothesis, we get
	  $\mu((u/x)t)
	  = 1 + \mu((u/x)t_1) + \mu(t_2)
	  = 1 + \mu(t_1) + \mu(u) + \mu(t_2)
	  = \mu(t) + \mu(u)$.

	\item 
	  If $\Gamma_1 \vdash t_1:\one$ and $\Gamma_2, x:A \vdash t_2:B$,
	  then, using the induction hypothesis, we get
	  $\mu((u/x)t)
	  = 1 + \mu(t_1) + \mu((u/x)t_2)
	  = 1 + \mu(t_1) + \mu(t_2) + \mu(u)
	  = \mu(t) + \mu(u)$.
      \end{itemize}

    \item If $t = \lambda \abstr{y} t_1$,
we apply the same method as for the case $t = a \bullet t_1$.

    \item If $t = t_1~t_2$, we apply the same method as for the
      case $t = \elimone(t_1,t_2)$.

    \item If $t = \inlr(t_1,t_2)$, we apply the same method as for the
      case $t = t_1 \plus t_2$.

    \item If $t = \inl(t_1)$ or $t = \inr(t_1)$,
     we apply the same method as for the case $t = a \bullet t_1$.

    \item   
      If $t = \elimplus(t_1,\abstr{y}t_2,\abstr{z}t_3)$ then
      $\Gamma = \Gamma_1, \Gamma_2$ and there are two cases.
      \begin{itemize}
	\item 
	  If $\Gamma_1, x:A \vdash t_1:C_1 \oplus C_2$,
	  $\Gamma_2, y:C_1 \vdash t_2:A$,
	  $\Gamma_2, z:C_2 \vdash t_3:A$, then using the induction hypothesis,
	  we get
	  $\mu((u/x)t)
	  = 1 + \mu((u/x)t_1) + \max(\mu(t_2),\mu(t_3))
          = 1 + \mu(t_1) + \mu(u) + \max(\mu(t_2),\mu(t_3))
	  = \mu(t) + \mu(u)$.

	\item 
	  If $\Gamma_1 \vdash t_1:C_1 \oplus C_2$,
	  $\Gamma_2, y:C_1, x:A \vdash t_2:A$,
	  $\Gamma_2, z:C_2, x:A \vdash t_3:A$, then using the induction hypothesis,
	  we get
	  $\mu((u/x)t)
	  = 1 + \mu(t_1) + \max(\mu((u/x)t_2),\mu((u/x)t_3))
	 =1 + \mu(t_1) + \max(\mu(t_2) + \mu(u),\mu(t_3) + \mu(u))
	 =1 + \mu(t_1) + \max(\mu(t_2),\mu(t_3)) + \mu(u)
	  = \mu(t) + \mu(u)$.
      \end{itemize}
      If $t = \elimplus^{nd}(t_1,\abstr{y}t_2,\abstr{z}t_3)$ the proof is
      similar.
	  \qedhere
  \end{itemize}
\end{proof}

And we can prove that if $t$ reduces to $u$ at the root with one of the rules
\eqref{ll:ruelimone} to \eqref{ll:ruelimoplusndinlr2}, then $\mu(t) >
\mu(u)$.

\begin{lem} \label{lem:mured1}
For the rules \eqref{ll:ruelimone} to \eqref{ll:ruelimoplusndinlr2},
if $\Gamma \vdash t:A$ and $t$ reduces to $u$ by a reduction step at
the root, then $\mu(t) > \mu(u)$.
\end{lem}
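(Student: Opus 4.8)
The plan is to argue by a direct case analysis on which of the nine root rules \eqref{ll:ruelimone}--\eqref{ll:ruelimoplusndinlr2} is used, computing $\mu$ of the redex and of the contractum from \autoref{measureofaproof} and comparing them. The only non-combinatorial ingredient is \autoref{lem:msubst}, applied to the substitutions that appear in the contracta; to invoke it one first uses the generation (inversion) lemma for the typing rules of \autoref{linear} on the hypothesis $\Gamma \vdash t:A$, which produces exactly the typing judgements that \autoref{lem:msubst} requires for its two arguments. Besides this, the proof only uses two elementary facts about $\max$ on $\mathbb{N}$: monotonicity ($a \le a'$ implies $\max(a,b) \le \max(a',b)$) and subadditivity ($\max(a+c,\,b+d) \le \max(a,b) + \max(c,d)$).

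For the rules whose contractum contains no fresh interstitial constructor, the inequality is immediate. For \eqref{ll:ruelimone} we get $\mu(\elimone(a.\star,t)) = 2 + \mu(t)$ and $\mu(a \bullet t) = 1 + \mu(t)$. For \eqref{ll:rubeta} we get $\mu((\lambda\abstr{x}t)\,u) = 2 + \mu(t) + \mu(u)$, while \autoref{lem:msubst} gives $\mu((u/x)t) = \mu(t) + \mu(u)$, a drop of $2$. For each of \eqref{ll:ruelimoplusinl}, \eqref{ll:ruelimoplusinr}, \eqref{ll:ruelimoplusndinl}, \eqref{ll:ruelimoplusndinr}, \eqref{ll:ruelimoplusndinlr1}, \eqref{ll:ruelimoplusndinlr2} the redex has measure $2 + m + \max(\mu(v),\mu(w))$, where $m$ is the measure of the argument inside the $\inl$, $\inr$, or $\inlr$, while the contractum has measure $\mu(v) + \mu(t)$ (or $\mu(w) + \mu(u)$) by \autoref{lem:msubst}; since $m \ge \mu(t)$ (resp.\ $m \ge \mu(u)$), using monotonicity of $\max$ when the argument is an $\inlr$, and $\max(\mu(v),\mu(w)) \ge \mu(v)$ (resp.\ $\ge \mu(w)$), the redex measure exceeds the contractum measure by at least $2$.

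The only case that genuinely needs subadditivity of $\max$ is \eqref{ll:ruelimoplusinlr}. There $\mu(\elimplus(\inlr(t,u),\abstr{x}v,\abstr{y}w)) = 2 + \max(\mu(t),\mu(u)) + \max(\mu(v),\mu(w))$, whereas by \autoref{lem:msubst} $\mu((t/x)v \plus (u/y)w) = 1 + \max(\mu(v)+\mu(t),\,\mu(w)+\mu(u))$; applying subadditivity with $a = \mu(v)$, $b = \mu(w)$, $c = \mu(t)$, $d = \mu(u)$ bounds the contractum by $1 + \max(\mu(v),\mu(w)) + \max(\mu(t),\mu(u))$, which is strictly below the redex measure. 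I do not anticipate any real obstacle here: the point of defining $\mu$ with $\max$ (rather than $+$) on sums and on $\inlr$ is precisely to make this last inequality hold while keeping \autoref{lem:msubst} an equality, so that all the $\beta$-like cases collapse to trivial arithmetic.
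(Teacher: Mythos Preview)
Your proposal is correct and follows essentially the same route as the paper: a rule-by-rule case analysis using the definition of $\mu$ and \autoref{lem:msubst}, with the subadditivity inequality $\max(a,b)+\max(c,d)\ge\max(a+c,b+d)$ handling the only nontrivial case \eqref{ll:ruelimoplusinlr}. Your explicit mention of inversion on $\Gamma\vdash t:A$ to obtain the typing premises needed by \autoref{lem:msubst} is a detail the paper leaves implicit but is indeed required.
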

\begin{proof}
  We check the rules one by one, using \autoref{lem:msubst}.

  \begin{itemize}
    \item \eqref{ll:ruelimone} $\mu(\elimone(a.\star,t)) = 2 + \mu(t) > 1
      + \mu(t) = \mu(a \bullet t)$.

    \item \eqref{ll:rubeta} $\mu((\lambda \abstr{x}t)~u) = 2 + \mu(t) +
      \mu(u) > \mu(t) + \mu(u) = \mu((u/x)t)$.

    \item \eqref{ll:ruelimoplusinl}
      $\mu(\elimplus(\inl(t),\abstr{x}v,\abstr{y}w)) = 2 + \mu(t) +
      \max(\mu(v), \mu(w)) > \mu(t) + \mu(v) = \mu((t/x)v)$.

    \item \eqref{ll:ruelimoplusinr}
      $\mu(\elimplus(\inr(u),\abstr{x}v,\abstr{y}w)) = 2 + \mu(u) +
      \max(\mu(v),\mu(w)) > \mu(u) + \mu(w) = \mu((u/y)w)$.

    \item \eqref{ll:ruelimoplusinlr}
      $\mu(\elimplus(\inlr(t,u),\abstr{x}v,\abstr{y}w)) = 2 +
      \max(\mu(t),\mu(u)) + \max(\mu(v),\mu(w)) > 1 + \max(\mu(t),\mu(u))
      + \max(\mu(v),\mu(w)) \geq 1 + \max(\mu(t) + \mu(v),\mu(u) + \mu(w))
      = 1 + \max(\mu((t/x)v), \mu((u/y)w)) = \mu((t/x)v) \plus
      \mu((u/y)w)$.

    \item \eqref{ll:ruelimoplusndinl}
      $\mu(\elimplus^{nd}(\inl(t),\abstr{x}v,\abstr{y}w)) = 2 + 2 \mu(t) +
      2 \max(\mu(v), \mu(w)) > \mu(t) + \mu(v) = \mu((t/x)v)$.

    \item \eqref{ll:ruelimoplusndinr}
      $\mu(\elimplus^{nd}(\inr(u),\abstr{x}v,\abstr{y}w)) = 2 + \mu(u) +
      \max(\mu(v),\mu(w)) > \mu(u) + \mu(w) = \mu((u/y)w)$.

    \item \eqref{ll:ruelimoplusndinlr1}
      $\mu(\elimplus^{nd}(\inlr(t,u),\abstr{x}v,\abstr{y}w)) = 2 +
      \max(\mu(t),\mu(u)) + \max(\mu(v),\mu(w)) > \mu(t) + \mu(v) =
      \mu((t/x)v)$.

    \item \eqref{ll:ruelimoplusndinlr2}
      $\mu(\elimplus^{nd}(\inlr(t,u),\abstr{x}v,\abstr{y}w)) = 2 +
      \max(\mu(t),\mu(u)) + \max(\mu(v),\mu(w)) > \mu(u) + \mu(w) =
      \mu((u/y)w)$. \qedhere
  \end{itemize}
\end{proof}

But for the rules \eqref{ll:rusumstar} to
\eqref{ll:rubulletinlr}, we do not have this property. For example
$(\lambda \abstr{x}x) \plus (\lambda \abstr{x}x)$ reduces to $\lambda
\abstr{x}(x \plus x)$, but $\mu((\lambda \abstr{x}x) \plus (\lambda
\abstr{x}x)) = 2 = \mu(\lambda \abstr{x}(x \plus x))$. We only have a
similar property with the non strict order.

\begin{lem} \label{lem:mured2}
  For the rules \eqref{ll:rusumstar} to \eqref{ll:rubulletinlr}, if $t$
  reduces to $u$ by a reduction step at the root, then $\mu(t) \geq
  \mu(u)$.
\end{lem}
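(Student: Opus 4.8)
The plan is to prove this rule by rule, unfolding the definition of $\mu$ (\autoref{measureofaproof}) on the left- and right-hand sides of each of the twelve reduction rules \eqref{ll:rusumstar} to \eqref{ll:rubulletinlr}. Unlike in \autoref{lem:mured1}, none of these rules performs a substitution, so \autoref{lem:msubst} plays no role here; every case reduces to comparing two arithmetic expressions built from $+$, $\max$, and small constants, applied to the measures of the immediate subterms.

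Most cases in fact give equality. For the rules that commute a $\plus$ through identical head constructors --- \eqref{ll:rusumlam}, \eqref{ll:rusuminl}, \eqref{ll:rusuminr}, \eqref{ll:rusuminlr} --- both sides evaluate to $2 + \max$ of the measures of the components; for instance, both $(\lambda\abstr{x}t)\plus(\lambda\abstr{x}u)$ and $\lambda\abstr{x}(t\plus u)$ have measure $2 + \max(\mu(t),\mu(u))$. Likewise, for the product rules \eqref{ll:rubulletlam}, \eqref{ll:rubulletinl}, \eqref{ll:rubulletinr}, \eqref{ll:rubulletinlr}, since $\mu(a\bullet s) = 1+\mu(s)$ and the constructor commuted through only adds a $1$ or a $\max$, both sides coincide. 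For \eqref{ll:rusumstar} and \eqref{ll:rubulletstar} the measure even strictly drops, from $2$ to $1$, and for the rules \eqref{ll:rusuminlinr} and \eqref{ll:rusuminrinl} that merge an $\inl$ and an $\inr$ into an $\inlr$ it drops by exactly one, from $2+\max(\mu(t),\mu(w))$ to $1+\max(\mu(t),\mu(w))$.

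The only cases requiring a moment's thought are the four rules \eqref{ll:rusuminlinlr}, \eqref{ll:rusuminrinlr}, \eqref{ll:rusuminlrinl}, \eqref{ll:rusuminlrinr}, where an $\inl$ or $\inr$ is merged with an $\inlr$, so that both the outer $\plus$ and the inner $\inlr$ contribute a $\max$. For \eqref{ll:rusuminlinlr} one computes $\mu(\inl(t)\plus\inlr(v,w)) = 2 + \max(\mu(t),\mu(v),\mu(w))$ on the left and $\mu(\inlr(t\plus v,w)) = 1 + \max(1+\max(\mu(t),\mu(v)),\mu(w))$ on the right, and these are related by the elementary inequality $1 + \max(a,b,c) \geq \max(1+\max(a,b),c)$, which holds for all naturals (split on whether $c \leq 1+\max(a,b)$). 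The other three rules are symmetric. This last family is the main --- and rather mild --- obstacle: one must only check that the two independent sources of $\max$ combine correctly, after which $\mu(t) \geq \mu(u)$ follows; everything else is an immediate equality or an obvious strict decrease, and no auxiliary lemma beyond the one-line arithmetic fact above is needed.
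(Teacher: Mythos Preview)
Your proposal is correct and follows essentially the same approach as the paper: a rule-by-rule unfolding of $\mu$ on both sides, reducing each case to an elementary arithmetic comparison. Your grouping of the cases (identical head constructors, product rules, star rules, $\inl$/$\inr$ merges, and the four mixed $\inl$/$\inr$-with-$\inlr$ rules) is tidier than the paper's exhaustive list, but the content is the same. One cosmetic slip: there are sixteen rules in the range \eqref{ll:rusumstar}--\eqref{ll:rubulletinlr}, not twelve; your five groups in fact cover all sixteen, so nothing is missing.
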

\begin{proof}
  We check the rules one by one.
  \begin{itemize}
    \item \eqref{ll:rusumstar} $\mu({a.\star} \plus b.\star) = 2 > 1 =
      \mu((a+b).\star)$

    \item \eqref{ll:rusumlam} $\mu((\lambda \abstr{x}t) \plus (\lambda
      \abstr{x}u)) = 1 + \max(1 + \mu(t), 1 + \mu(u)) = 1 + 1 +
      \max(\mu(t),\mu(u)) = \mu(\lambda \abstr{x}(t \plus u))$

    \item \eqref{ll:rusuminl} $\mu(\inl(t) \plus \inl(v)) = 1 + \max (1 +
      \mu(t), 1 + \mu(v)) = 1 + 1 + \max (\mu(t), \mu(v)) = \mu(\inl(t
      \plus v))$

    \item \eqref{ll:rusuminlinr} $\mu(\inl(t) \plus \inr(w)) = 1 + \max (1
      + \mu(t), 1 + \mu(w)) = 1 + 1 + \max (\mu(t), \mu(w)) > 1 + \max
      (\mu(t),\mu(w)) = \mu(\inlr(t,w))$

    \item \eqref{ll:rusuminlinlr} $\mu(\inl(t) \plus \inlr(v,w)) = 1 +
      \max (1 + \mu(t), 1 + \max(\mu(v),\mu(w))) = \max (2 + \mu(t),2 +
    \mu(v),2 + \mu(w)) \geq \max(2 + \mu(t), 2 + \mu(v)), 1 + \mu(w)) =
    1 + \max(1+ \max(\mu(t),\mu(v)), \mu(w)) = \mu(\inlr(t \plus v,w))$

  \item \eqref{ll:rusuminrinl} $\mu(\inr(u) \plus \inl(v)) = 1 + \max (1
    + \mu(u), 1 + \mu(w)) > 1 + \max(\mu(v), \mu(u)) = \mu(\inlr(v,u))$

  \item \eqref{ll:rusuminr} $\mu(\inr(u) \plus \inr(w)) = 1 + \max (1 +
    \mu(u), 1 + \mu(w)) = 1 + 1 + \max (\mu(u), \mu(w)) = \mu(\inr(u
    \plus w))$

  \item \eqref{ll:rusuminrinlr} $\mu(\inr(u) \plus \inlr(v,w)) = 1 +
    \max (1 + \mu(u), 1 + \max(\mu(v),\mu(w)))$\\ $ = \max (2 + \mu(u),
    2 + \mu(v), 2 + \mu(w)) \geq \max(1 + \mu(v),2 + \mu(u), 2 + \mu(w))
    = 1 + \max(\mu(v), 1 + \max(\mu(u), \mu(w))) = \mu(\inlr(v,u \plus
    w))$

  \item \eqref{ll:rusuminlrinl} $\mu(\inlr(t,u) \plus \inl(v)) = 1 +
  \max (1 + \max(\mu(t),\mu(u)), 1 + \mu(v)))$\\ $ = \max(2 + \mu(t),2
  + \mu(u),2 + \mu(v)) \geq \max(2 + \mu(t),2 + \mu(v), 1 + \mu(u)) =
  1 + \max(1+ \max(\mu(t),\mu(v)), \mu(u)) = \mu(\inlr(t \plus v,u))$

\item \eqref{ll:rusuminlrinr} $\mu(\inlr(t,u) \plus \inr(w)) = 1 +
\max (1 + \max(\mu(t),\mu(u)), 1 + \mu(w)))$\\ $ = \max(2 + \mu(t),2
+ \mu(u),2 + \mu(w)) \geq \max(1 + \mu(t),2 + \mu(u), 2 + \mu(w)) =
  1 + \max(\mu(t), 1+ \max(\mu(u),\mu(w)))) = \mu(\inlr(t, u \plus
  w))$

\item \eqref{ll:rusuminlr} $\mu(\inlr(t,u) \plus \inlr(v,w)) = 1 +
  \max (1 + \max(\mu(t),\mu(u)), 1 + \max(\mu(v),\mu(w)))$\\ $= 2 +
  \max (\mu(t),\mu(u), \mu(v),\mu(w))$\\ $= 1 + \max(1+
  \max(\mu(t),\mu(v)), 1 + \max(\mu(u),\mu(w))) = \mu(\inlr(t \plus
  v,u \plus w))$

\item \eqref{ll:rubulletstar} $\mu(a \bullet b.\star) = 2 > 1 = \mu((a
  \times b).\star)$

\item \eqref{ll:rubulletlam} $\mu(a \bullet \lambda \abstr{x} t) = 2 +
  \mu(t) =\mu(\lambda \abstr{x} a \bullet t)$

\item \eqref{ll:rubulletinl} $\mu(a \bullet \inl(t)) = 2 + \mu(t)
  =\mu(\inl(a \bullet t))$

\item \eqref{ll:rubulletinr} $\mu(a \bullet \inr(t)) = 2 + \mu(t) =
  \mu(\inr(a \bullet t))$

\item \eqref{ll:rubulletinlr} $\mu(a \bullet \inlr(t,u)) = 2 +
  \max(\mu(t),\mu(u)) = \mu(\inlr(a \bullet t,a \bullet u))$
  \qedhere
  \end{itemize} 
\end{proof}

Thus, we
introduce a second measure $\nu$ that gives more weight to the lower
$\lambda$ that to the lower $\plus$, so that $\nu((\lambda \abstr{x}x)
\plus (\lambda \abstr{x}x)) = 3$ and $\nu(\lambda \abstr{x}(x \plus
x)) = 2$.

\begin{defi}[The measure $\nu$]
  \label{measureofaproofNu}~
  \begin{align*}
    \nu(x) &= 0 \\
    \nu(t \plus u) &= 1 + 2 \max(\nu(t), \nu(u)) \\
    \nu(a \bullet t) &= 1 + 2 \nu(t) \\
    \nu(a.\star) &= 1 \\
    \nu(\elimone(t,u)) &= 1 \\
    \nu(\lambda \abstr{x} t) &= 1 + \nu(t) \\
    \nu(t~u) &= 1 \\
    \nu(\inl(t)) &= 1 + \nu(t) \\
    \nu(\inr(t)) &= 1 + \nu(t) \\
    \nu(\inlr(t,u)) &= 1 + \max(\nu(t), \nu(u)) \\
    \nu(\elimplus(t,\abstr{y} u,\abstr{z} v)) &= 1 \\
    \nu(\elimplus^{nd}(t,\abstr{y} u,\abstr{z}v)) &= 1
  \end{align*}
\end{defi}

And we prove that, for the rules \eqref{ll:rusumstar} to
\eqref{ll:rubulletinlr}, if $t$ reduces to $u$ by a reduction step
at the root, then $\nu(t) > \nu(u)$.

  \begin{lem}
  \label{lem:mured3}
  For the rules \eqref{ll:rusumstar} to \eqref{ll:rubulletinlr}, if $t$
  reduces to $u$ by a reduction step at the root, then $\nu(t) >
  \nu(u)$.
\end{lem}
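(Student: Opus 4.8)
The plan is to verify $\nu(t) > \nu(u)$ rule by rule, exactly as in the proof of \autoref{lem:mured2}, by unfolding the definition of $\nu$ from \autoref{measureofaproofNu} on both sides of each reduction. The measure $\nu$ is tailored precisely for this: whenever a $\plus$ or a $\bullet$ is commuted past an introduction symbol $\lambda$, $\inl$, $\inr$, or $\inlr$, the immediate subterms---previously weighted by the factor $2$ coming from the clauses $\nu(t \plus u) = 1 + 2\max(\nu(t),\nu(u))$ and $\nu(a \bullet t) = 1 + 2\nu(t)$---become weighted by the factor $1$ coming from clauses such as $\nu(\lambda \abstr{x}t) = 1 + \nu(t)$, $\nu(\inl(t)) = 1 + \nu(t)$. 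This loss of weight is exactly what yields the strict decrease.

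Concretely, the recurring pattern is the following. The left-hand side has the shape $c_1 \plus c_2$ (or $a \bullet c$) with $c_1, c_2$ introductions built from the same constructor, so unfolding gives $\nu = 1 + 2\max(1 + m_1, 1 + m_2) = 3 + 2\max(m_1, m_2)$ (or $3 + 2m$), where $m_i$ is the relevant $\max$ of the $\nu$-measures of the immediate subterms of $c_i$. The right-hand side is a single introduction whose body is the $\plus$ (or $\bullet$) of those subterms; unfolding gives $\nu = 2 + 2\max(\ldots)$ when the outer constructor is $\lambda$, $\inl$, or $\inr$, and $\nu \le 2 + 2\max(\ldots)$ when it is $\inlr$. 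In every instance the right-hand value is strictly below the left-hand value $3 + 2\max(\ldots)$. For the two boundary rules \eqref{ll:rusumstar} and \eqref{ll:rubulletstar} one simply reads off $\nu = 3$ on the left and $\nu = 1$ on the right, since $\nu(a.\star) = 1$ and $a.\star$ carries no subterm.

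The only bookkeeping subtlety arises in the mixed rules where an $\inlr$-layer sits nested inside the body of another introduction on the right---rules \eqref{ll:rusuminlinlr}, \eqref{ll:rusuminrinlr}, \eqref{ll:rusuminlrinl}, and \eqref{ll:rusuminlrinr}---because of the nesting of $\max$ operators. To sidestep a case analysis on which argument dominates, I would use once and for all the elementary inequality $\max(1 + 2p, q) \le 1 + 2\max(p,q)$, valid since $q \le \max(p,q) \le 1 + 2\max(p,q)$; it collapses an inner $\inlr$-layer into the global maximum of the subterm measures and bounds the right-hand side uniformly by $2$ plus twice that maximum, which is still strictly below $3$ plus twice that maximum. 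I expect no genuine difficulty: the statement is a finite computation, and the substantive content lies in the \emph{choice} of $\nu$, already motivated in the surrounding text, rather than in checking it.
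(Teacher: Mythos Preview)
Your proposal is correct and follows essentially the same approach as the paper: both verify the strict decrease rule by rule by unfolding the definition of $\nu$ on each side. Your presentation is slightly more streamlined in that you abstract the recurring pattern $3 + 2M$ versus at most $2 + 2M$ and use the helper inequality $\max(1+2p,\,q) \le 1 + 2\max(p,q)$ to treat the mixed $\inlr$ cases uniformly, whereas the paper carries out the raw arithmetic separately for each of the sixteen rules; but this is a cosmetic difference, not a different method.
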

\begin{proof}
  We check the rules one by one.
  \begin{itemize}

    \item \eqref{ll:rusumstar} $\nu({a.\star} \plus b.\star) = 3 > 1 =
      \nu((a+b).\star)$

    \item \eqref{ll:rusumlam} $\nu((\lambda \abstr{x}t) \plus (\lambda
      \abstr{x}u)) = 1 + 2 \max(1 + \nu(t), 1 + \nu(u)) = 3 + 2
      \max(\nu(t),\nu(u)) > 1 + 1 + 2 \max(\nu(t),\nu(u)) = \nu(\lambda
      \abstr{x}(t \plus u))$

    \item \eqref{ll:rusuminl} $\nu(\inl(t) \plus \inl(v)) = 1 +2 \max (1 +
      \nu(t), 1 + \nu(v))$\\ $= \max (3 + 2 \nu(t), 3 + 2 \nu(v)) > \max
      (2 + 2 \nu(t), 2 + 2 \nu(v)) = 1 + 1 + 2 \max (\nu(t), \nu(v)) =
      \nu(\inl(t \plus v))$

    \item \eqref{ll:rusuminlinr} $\nu(\inl(t) \plus \inr(w)) = 1 +2 \max
      (1 + \nu(t), 1 + \nu(w))$\\ $ = \max (3 + 2 \nu(t), 3 + 2 \nu(w)) >
      \max (1 + \nu(t),1 + \nu(w)) = 1 + \max (\nu(t),\nu(w)) =
      \nu(\inlr(t,w))$

    \item \eqref{ll:rusuminlinlr} $\nu(\inl(t) \plus \inlr(v,w)) = 1 + 2
      \max (1 + \nu(t), 1 + \max(\nu(v),\nu(w)))$\\ $= \max (3 + 2
      \nu(t),3 + 2 \nu(v),3 + 2 \nu(w)) > \max(2 + 2 \nu(t), 2 + 2\nu(v)),
    1 + \nu(w)) = 1 + \max(1+ 2 \max(\nu(t),\nu(v)), \nu(w)) =
    \nu(\inlr(t \plus v,w))$

  \item \eqref{ll:rusuminrinl} $\nu(\inr(u) \plus \inl(v)) = 1 + 2 \max
    (1 + \nu(u), 1 + \nu(w))$\\ $ = \max(3 + 2 \nu(u),3 + 2 \nu(v)) >
    \max(1 + \nu(v), 1 + \nu(u)) = 1 + \max(\nu(v), \nu(u)) =
    \nu(\inlr(v,u))$

  \item \eqref{ll:rusuminr} $\nu(\inr(u) \plus \inr(w)) = 1 +2 \max (1 +
    \nu(u), 1 + \nu(w))$\\ $= \max (3 + 2 \nu(u), 3 + 2 \nu(w)) > \max
    (2 + 2 \nu(u), 2 + 2 \nu(w)) = 1 + 1 + 2 \max (\nu(u), \nu(w)) =
    \nu(\inr(u \plus w))$

  \item \eqref{ll:rusuminrinlr} $\nu(\inr(u) \plus \inlr(v,w)) = 1 +2
    \max (1 + \nu(u), 1 + \max(\nu(v),\nu(w)))$\\ $ = \max (3 + 2
    \nu(u), 3 + 2 \nu(v), 3 + 2 \nu(w)) > \max(1 + \nu(v),2 + 2 \nu(u),
    2 + 2 \nu(w)) = 1 + \max(\nu(v), 1 + 2 \max(\nu(u), \nu(w))) =
    \nu(\inlr(v,u \plus w))$

  \item \eqref{ll:rusuminlrinl} $\nu(\inlr(t,u) \plus \inl(v)) = 1 +2
  \max (1 + \max(\nu(t),\nu(u)), 1 + \nu(v)))$\\ $ = \max(3 + 2
  \nu(t),3 + 2 \nu(u),3 + 2 \nu(v)) > \max(2 + 2 \nu(t),2 + 2 \nu(v),
  1 + \nu(u)) = 1 + \max(1+ 2 \max(\nu(t),\nu(v)), \nu(u)) =
  \nu(\inlr(t \plus v,u))$

\item \eqref{ll:rusuminlrinr} $\nu(\inlr(t,u) \plus \inr(w)) = 1 +2 \max (1
+ \max(\nu(t),\nu(u)), 1 + \nu(w)))$\\ $ = \max(3 + 2 \nu(t),3 + 2
\nu(u),3 + 2 \nu(w)) > \max(1 + \nu(t),2 + 2 \nu(u), 2 + 2 \nu(w)) =
 1 + \max(\nu(t), 1+ 2 \max(\nu(u),\nu(w)))) = \nu(\inlr(t, u \plus
 w))$

\item \eqref{ll:rusuminlr} $\nu(\inlr(t,u) \plus \inlr(v,w)) = 1 +2
  \max (1 + \max(\nu(t),\nu(u)), 1 + \max(\nu(v),\nu(w)))$\\ $ =
  \max(3 + 2 \nu(t), 3 + 2 \nu(u),3 + 2 \nu(v), 3 + 2 \nu(w)) > \max(2
  + 2 \nu(t),2 + 2 \nu(v),2 + 2 \nu(u), 2 + 2 \nu(w)) = 1 + \max(1+ 2
  \max(\nu(t),\nu(v)), 1 + 2 \max(\nu(u),\nu(w))) = \nu(\inlr(t \plus
  v,u \plus w))$

\item \eqref{ll:rubulletstar} $\nu(a \bullet b.\star) = 3 > 1 = \nu((a
  \times b).\star)$

\item \eqref{ll:rubulletlam} $\nu(a \bullet \lambda \abstr{x} t) = 3 +
  2 \nu(t) > 2 + 2 \nu(t) = \nu(\lambda \abstr{x} a \bullet t)$

\item \eqref{ll:rubulletinl} $\nu(a \bullet \inl(t)) = 3 + 2 \nu(t) >
  2 + 2 \nu(t) = \nu(\inl(a \bullet t))$

\item \eqref{ll:rubulletinr} $\nu(a \bullet \inr(t)) = 3 + 2 \nu(t) >
  2 + 2 \nu(t) = \nu(\inr(a \bullet t))$

\item \eqref{ll:rubulletinlr} $\nu(a \bullet \inlr(t,u)) = 3 + 2
  \max(\nu(t),\nu(u)) > 2 + 2 \max(\nu(t),\nu(u)) = \nu(\inlr(a
  \bullet t,a \bullet u))$ \qedhere
  \end{itemize} 
\end{proof}

Finally, we consider the lexicographic order $\succ$ on $\mu$ and $\nu$.
\medskip

\begin{defi}[Lexicographic order]~
\begin{itemize}
\item
The preorder relation $\succcurlyeq$ is defined as follows: $t
\succcurlyeq u$ if either $\mu(t) > \mu(u)$ or ($\mu(t) = \mu(u)$ and
$\nu(t) \geq \nu(u)$),
\item
and the well-founded strict order $\succ$ as follows: $t \succ u$ if
either $\mu(t) > \mu(u)$ or ($\mu(t) = \mu(u)$ and $\nu(t) > \nu(u)$).
\end{itemize}
\end{defi}

It is easy to prove that if $t$ reduces to $u$ by a reduction step at
the root, then $t \succ u$.  This is sufficient to prove that all
sequences $t_0, t_1, \ldots$ such that $t_i$ reduces to $t_{i+1}$ by a
reduction step at the root are finite. To generalize this result to
all reduction sequences, we need to prove that the relation
$\succcurlyeq$ is a simplification pre-order, that is a pre-order
verifying the monotony and sub-term properties \cite[Section 3,
Definition 3]{Nachum}. Then, we can conclude with \cite[Second
Termination Theorem]{Nachum}.  Yet, as the relation $\succ$ is
well-founded, we also have a direct proof, using a multi-set ordering
on the multi-sets of terms.

\begin{lem}\label{lex}
If $t$
reduces to $u$ by a reduction step at the root, then
$t \succ u$.
\end{lem}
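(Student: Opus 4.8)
The plan is to proceed by a straightforward case analysis on the reduction rule applied at the root, organised around the two families of rules already isolated in Lemmas~\ref{lem:mured1}, \ref{lem:mured2}, and~\ref{lem:mured3}. Recall that, by definition of $\succ$, it suffices in each case to exhibit either $\mu(t) > \mu(u)$, or else $\mu(t) = \mu(u)$ together with $\nu(t) > \nu(u)$. So for each rule we only need to locate one of these two situations.

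If $t$ reduces to $u$ by one of the rules \eqref{ll:ruelimone} to \eqref{ll:ruelimoplusndinlr2}, then \autoref{lem:mured1} gives $\mu(t) > \mu(u)$ directly, and hence $t \succ u$ by the first disjunct in the definition of $\succ$; no information about $\nu$ is needed here. If instead $t$ reduces to $u$ by one of the rules \eqref{ll:rusumstar} to \eqref{ll:rubulletinlr}, then \autoref{lem:mured2} gives $\mu(t) \geq \mu(u)$, so there are two subcases. If the inequality is strict, $\mu(t) > \mu(u)$, and we conclude as before. If $\mu(t) = \mu(u)$, then \autoref{lem:mured3} supplies the strict drop $\nu(t) > \nu(u)$, and the second disjunct in the definition of $\succ$ yields $t \succ u$. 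Since these two families exhaust all the reduction rules, this completes the argument.

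There is essentially no remaining obstacle in this lemma: the real work was already carried out in the three auxiliary measure lemmas, where $\mu$ and $\nu$ were tuned precisely so that $\mu$ never increases and $\nu$ strictly decreases exactly on those commutation rules for which $\mu$ may only stay constant. The only point requiring mild care is that the case split is genuinely exhaustive and that the two measures are combined in the correct lexicographic order---$\mu$ as the primary component and $\nu$ as the tie-breaker---so that the weak decrease of $\mu$ on the $\plus$- and $\bullet$-commutation rules is compensated by the strict decrease of $\nu$. Once this is observed, the conclusion $t \succ u$ follows immediately in every case.
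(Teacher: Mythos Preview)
Your proposal is correct and follows exactly the same approach as the paper, which simply cites Lemmas~\ref{lem:mured1}, \ref{lem:mured2}, and~\ref{lem:mured3}. Your version merely spells out the easy case distinction that the paper leaves implicit.
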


\begin{proof}
From Lemmas~\ref{lem:mured1}, \ref{lem:mured2}, and
\ref{lem:mured3}. \qedhere
\end{proof}

We write $t_{|p}$ the sub-term of $t$ at position $p$, and $[u/p]t$ for the
grafting of $u$ in $t$ at the position $p$.

\begin{lem}[Simplification]
  \label{lem:simplification}~
  \begin{enumerate}
    \item For all positions $p$ of $t$, $u\succcurlyeq u'$ implies
      $[u/p]t \succcurlyeq [u'/p]t$ (monotony).
    \item For all positions $p$ of $t$, $t \succcurlyeq t_{|p}$ (sub-term).
  \end{enumerate}
\end{lem}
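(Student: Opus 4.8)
The plan is to prove both statements by induction on the position $p$ (equivalently, by structural induction on $t$), exploiting the shape of \autoref{measureofaproof} and \autoref{measureofaproofNu}: for a compound proof $f(t_1,\dots,t_k)$, each of its two measures is obtained from the measures of the immediate sub-terms $t_1,\dots,t_k$ by combining them with operations---addition, $\max$, $n\mapsto 1+n$, $n\mapsto 1+2n$, and constants---all of which are monotone non-decreasing in every argument, and $\mu$ never refers to $\nu$ nor $\nu$ to $\mu$. I would isolate this monotonicity remark once and reuse it throughout.

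For the sub-term property, I would induct on $p$. If $p$ is the root position, then $t_{|p}=t$ and $t\succcurlyeq t$ holds reflexively. Otherwise $t=f(t_1,\dots,t_k)$ is compound and $t_{|p}$ is a sub-term of some immediate sub-term $t_i$; by the induction hypothesis $t_i\succcurlyeq t_{|p}$, hence $\mu(t_i)\ge\mu(t_{|p})$. Inspecting the clauses of \autoref{measureofaproof} shows that every constructor strictly increases $\mu$ over each of its immediate arguments: in each case $\mu(f(t_1,\dots,t_k))\ge 1+\mu(t_i)>\mu(t_i)$. Therefore $\mu(t)>\mu(t_i)\ge\mu(t_{|p})$, so $t\succ t_{|p}$, and hence $t\succcurlyeq t_{|p}$.

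For monotony, fix $p$ and assume $u\succcurlyeq u'$. If $p$ is the root position the conclusion is exactly the hypothesis, so suppose $t=f(t_1,\dots,t_k)$ with $p$ running through one immediate sub-term $t_i$. I would establish simultaneously, by induction on $p$, three auxiliary facts: (i) $\mu(u)\ge\mu(u')$ implies $\mu([u/p]t)\ge\mu([u'/p]t)$; (ii) $\mu(u)=\mu(u')$ implies $\mu([u/p]t)=\mu([u'/p]t)$; (iii) $\mu(u)=\mu(u')$ and $\nu(u)\ge\nu(u')$ imply $\nu([u/p]t)\ge\nu([u'/p]t)$. Each follows from the monotonicity remark, since replacing $u$ by $u'$ changes only the sub-term $t_i$ through which $p$ passes, whose measures evolve as dictated by the induction hypothesis. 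Unfolding the definition of $\succcurlyeq$: if $\mu(u)=\mu(u')$ and $\nu(u)\ge\nu(u')$, then (ii) and (iii) yield $[u/p]t\succcurlyeq[u'/p]t$; if $\mu(u)>\mu(u')$, then (i) gives $\mu([u/p]t)\ge\mu([u'/p]t)$, and we are done whenever this inequality is strict.

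The step I expect to be the main obstacle is the leftover sub-case of monotony: $\mu(u)>\mu(u')$ strictly, but the gap is absorbed by a $\max$ occurring above $p$---inside a $\plus$ or an $\inlr$---so that $\mu([u/p]t)=\mu([u'/p]t)$ and one is forced to compare the $\nu$-measures. Here the hypothesis $u\succcurlyeq u'$ gives no information relating $\nu(u)$ and $\nu(u')$, because the lexicographic order only constrains $\nu$ when the $\mu$-values coincide. I would expect closing this case to require either an additional invariant tying $\mu$ and $\nu$ together on the proofs that actually arise, or restricting attention to the concrete reducts produced by the rules of \autoref{linearreductionrules} and re-using the sharp estimates of \autoref{lem:mured1}, \autoref{lem:mured2}, \autoref{lem:mured3}, and \autoref{lex}; this is where the real content of the lemma lies.
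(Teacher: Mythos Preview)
Your approach---proving monotonicity of $\mu$ and of $\nu$ separately by induction on the depth of $p$, then assembling---is exactly what the paper does: it proves $\mu(u)\ge\mu(u')\Rightarrow\mu([u/p]t)\ge\mu([u'/p]t)$, $\mu(t)\ge\mu(t_{|p})$, and the analogous two facts for $\nu$, and then concludes in one line ``Thus, the preorder $\succcurlyeq$ also verifies the monotony and sub-term properties.''

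Your worry about the leftover sub-case is justified: monotony of $\succcurlyeq$ does \emph{not} follow from separate monotonicity of $\mu$ and $\nu$, and the monotony claim is in fact false as stated. Take $u=(\lambda\abstr{x}\,a\bullet x)\,(b.\star)$ and $u'=a\bullet b.\star$, both closed proofs of $\one$; then $\mu(u)=4>2=\mu(u')$, so $u\succcurlyeq u'$, yet $\nu(u)=1<3=\nu(u')$. Grafting into the left summand of $\_\plus v$, where $v$ is any closed proof of $\one$ with $\mu(v)=4$ and $\nu(v)=1$ (e.g.\ a renamed copy of $u$), yields $\mu(u\plus v)=\mu(u'\plus v)=5$ but $\nu(u\plus v)=3<7=\nu(u'\plus v)$, hence $u\plus v\not\succcurlyeq u'\plus v$. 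The paper's ``Thus'' step therefore does not go through; the same example also undermines the multiset step in the proof of \autoref{thm:termination}, since the ancestor $u\plus v$ of the redex $u$ is replaced by the strictly $\succ$-larger $u'\plus v$ and no removed subterm dominates it. The gap you identified is genuine and is shared by the paper.
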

\begin{proof}
  We first prove by induction on the depth of the position $p$ that 
  \begin{itemize}
    \item For all positions $p$ of $t$, $\mu(u) \geq \mu(u')$ implies
      $\mu([u/p]t) \geq \mu([u'/p]t)$.

    \item For all positions $p$ of $t$, $\mu(t) \geq \mu(t_{|p})$. 

    \item For all positions $p$ of $t$, $\nu(u) \geq \nu(u')$ implies
      $\nu([u/p]t) \geq \nu([u'/p]t)$.

    \item For all positions $p$ of $t$, $\nu(t) \geq \nu(t_{|p})$. 
  \end{itemize}

  As all cases are trivial, we give only one: in the proof of the first
  property, if the term $t$ is $\inlr(t_1,t_2)$ and the position $p$ in $t$
  is the position $p'$ in $t_1$ (that is $p = 1 \cdot p'$), by induction
  hypothesis, we get
  $\mu([u/p']t_1) \geq \mu([u'/p']t_1)$.
  Thus,
  $\mu([u/p]t) = 1 + \max(\mu([u/p']t_1),\mu(t_2)) \geq
  1 + \max(\mu([u'/p']t_1),\mu(t_2))
  = \mu([u'/p]t)$.

  Thus, the preorder $\succcurlyeq$ also verifies the monotony and sub-term
  properties. \qedhere
\end{proof}

\begin{rem}
  Note that in a higher-order language, that is in a language containing
  bound variables, the notion of sub-term at position $p$ is not
  completely well-defined.

  Indeed, $\lambda \abstr{x} (f x)$ is the same term as $\lambda
  \abstr{y} (f y)$, but the sub-term at the position $[0]$ of the first
  is $f x$ and the sub-term at the same position of the second is $f
  y$.  In this case this is not an issue here as bound and free variable
  names are immaterial to the measures $\mu$ and $\nu$, hence to the
  relations $\succ$ and $\succcurlyeq$.

  A rigorous formulation \cite{IwamiToyama}, would use a translation that
  neutralizes the bound variables by adding a constant $c_A$ for each
  proposition $A$ and replacing the bound variables of type $A$ with
  $c_A$. Then the names of bound variables can be omitted and a proof
  such as $\lambda \abstr{x} (f x)$ is transformed into the proof
  $\lambda (f c)$.  More generally the translation of a term is defined
  as follows:
  \begin{align*}
    \| x \| &= x &
    \| t \plus u \| &= \| t \| \plus \| u \|\\
    \| a \bullet t \| &= a \bullet \| t \| &
    \| a.\star \| &= a . \star\\
    \| \elimone(t,u) \| &= \elimone (\| t \|, \| u \|) &
    \| \lambda \abstr{x} t \| &= \lambda \| (c/x)t \|\\
    \| t~u \| &= \| t \| \| u \| &
    \| \inl(t) \| &= \inl(\| t\|)\\
    \| \inr(t) \| &= \inr(\| t\|)&
    \| \inlr(t,u) \| &= \inlr(\| t\|,\|u\|)\\
    \| \elimplus(t,\abstr{y}u,\abstr{z}v) \| &= \elimplus(\| t \|,\| (c/y)u \|,\| (c/z)v \|)&
    \| \elimplus^{nd}(t,\abstr{y}u,\abstr{z}v) \| &= \elimplus^{nd}(\| t \|,\| (c/y)u \|,\| (c/z)v \|)
  \end{align*}
  And \autoref{lem:simplification} can then be rigorously formulated on
  translated terms.
\end{rem}

\begin{thm}[Termination]  
\label{thm:termination}
Let $\Gamma\vdash t:A$, then $t$ strongly terminates.
\end{thm}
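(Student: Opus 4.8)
The plan is to derive strong termination from subject reduction together with the well-founded order $\succ$, following the standard recipe for termination by a well-founded simplification pre-order: root reductions already strictly decrease $\succ$, and the non-root case is recovered from monotony and the sub-term property.

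First I would record two facts about the order. The relation $\succ$ is well-founded: it is the lexicographic product of $>$ on $\mathbb N$ (read through $\mu$) and $>$ on $\mathbb N$ (read through $\nu$), both well-founded; and $\succ$ and $\succcurlyeq$ are compatible, in the sense that $t \succcurlyeq u \succ v$ or $t \succ u \succcurlyeq v$ implies $t \succ v$, directly from the definitions. By \autoref{lex}, every reduction step performed at the root of a typable proof strictly decreases $\succ$, and by \autoref{lem:simplification} the pre-order $\succcurlyeq$ is monotonic and enjoys the sub-term property.

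The core of the argument is to promote ``root steps decrease $\succ$'' to an honest well-founded measure on \emph{all} reduction steps. This is not automatic, because $\succ$ itself is not stable under contexts: a strict decrease occurring deep inside a proof can be absorbed by one of the $\max$ operators in the definitions of $\mu$ and $\nu$, so only the weaker relation $\succcurlyeq$ is guaranteed to propagate upwards. I would discharge this either by invoking the Second Termination Theorem of~\cite{Nachum} with the simplification pre-order $\succcurlyeq$ and the well-founded order $\succ$, or, self-containedly, by the following multiset argument. Associate to each proof $t$ the finite multiset $\mathcal M(t)$ of all its sub-terms $t_{|q}$, one copy for each position $q$ of $t$, and compare such multisets by the multiset extension $\succ_{\mathrm{mul}}$ of $\succ$, which is well-founded since $\succ$ is. Suppose $t \lra t'$ by contracting the redex $t_{|p}$ at position $p$, with contractum $s$. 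The sub-terms at positions incomparable with $p$ are untouched and cancel. For each strict prefix $q$ of $p$, writing $q\cdot p' = p$, the sub-term $t'_{|q}$ equals $[s/p']t_{|q}$ while $(t_{|q})_{|p'} = t_{|p} \succcurlyeq s$ by \autoref{lex}, so $t_{|q} \succcurlyeq t'_{|q}$ by monotony (\autoref{lem:simplification}). Finally the redex $t_{|p}$, together with all its proper sub-terms, is replaced by $s$ together with all proper sub-terms of $s$; but each proper sub-term $c'$ of $s$ satisfies $c' \preccurlyeq s$ by the sub-term property and $s \prec t_{|p}$ by \autoref{lex}, hence $c' \prec t_{|p}$, so this whole block is dominated in $\succ_{\mathrm{mul}}$ by the single element $t_{|p}$. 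Combining the three parts by routine multiset bookkeeping gives $\mathcal M(t) \succ_{\mathrm{mul}} \mathcal M(t')$.

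To finish, let $\Gamma \vdash t:A$ and consider any reduction sequence $t = t_0 \lra t_1 \lra \cdots$. By subject reduction (\autoref{thm:SRlin}) every $t_i$ is typable, hence so is every sub-term of it in the appropriate context, so \autoref{lex}---whose proof runs through \autoref{lem:mured1} and \autoref{lem:msubst} and therefore relies on linearity, that is, on typing---applies at each contracted redex. Then $\mathcal M(t_0) \succ_{\mathrm{mul}} \mathcal M(t_1) \succ_{\mathrm{mul}} \cdots$ would be an infinite strictly descending chain in the well-founded order $\succ_{\mathrm{mul}}$, which is impossible; the sequence is finite, so $t$ strongly terminates. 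The main obstacle, as flagged above, is precisely this middle step: turning the root-only decrease of the non-monotonic order $\succ$ into a termination argument valid for reductions under contexts, which is exactly what the simplification-pre-order theorem of~\cite{Nachum}, or the multiset reformulation above, is designed to overcome.
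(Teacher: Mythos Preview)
Your proposal is correct and follows essentially the same route as the paper: associate to a proof the multiset of its sub-terms, use \autoref{lex} for the strict decrease at the redex position, the sub-term property for the newly introduced sub-terms of the contractum, and monotony of $\succcurlyeq$ for the ancestor positions, concluding by well-foundedness of the multiset extension of $\succ$. The paper additionally remarks that, because of bound variables, ``sub-term at position $p$'' is only well-defined up to renaming, and so works with a translation that neutralises binders; this is harmless here since $\mu$ and $\nu$ ignore variable names, but it is the one technical point you might want to acknowledge.
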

\begin{proof}
  To each term $t$, we associate the multi-set $sub(t)$ of the sub-terms of
  (the translation of) $t$.

  If $t$ reduces to $u$ then there exists a position $p$ and a rewrite
  rule $l \lra r$, such that $t_{|p} = \sigma l$ and $u = [\sigma r
  /p]t$. Thus, from \autoref{lex}, $t_{|p} = \sigma l \succ \sigma r$.

  The multi-set $sub(u)$ is obtained from the multi-set $sub(t)$ in the
  following way:
  \begin{itemize}
    \item the sub-term $t_{|p}$ of $t$ whose position is $p$ 
      is replaced by $\sigma r$ and $t_{|p} \succ \sigma r$, 

    \item  all the strict sub-terms $v$ of $\sigma r$ are added,
      for such sub-terms, by
      the sub-term property (\autoref{lem:simplification}),
      $ t_{|p} \succ \sigma r \succcurlyeq v $, hence
      $ t_{|p} \succ v$,

    \item the sub-terms of $t$ whose position is below $p$ 
      are removed,

    \item the sub-terms of $t$ whose position is not comparable to $p$
      remain in $sub(u)$,

    \item the sub-terms $t_{|p_1}$ of $t$ such that $p_1$ is above $p$, that is
      $p = p_1 \cdot p_2$ 
      are such that ${t_{|p_1}}_{|p_2} = t_{|p}$ 
      and they are replaced with a sub-term of the form
      $[\sigma r/p_2]t_{|p_1}$, we have $
      t_{|p_1}
      = 
      [t_{|p}/p_2]t_{|p_1}
      \succcurlyeq
      [\sigma r/p_2]t_{|p_1}$ by monotony
      (\autoref{lem:simplification}).
  \end{itemize}

  To sum up:
  the term $t_{|p}$ is replaced with terms that are strictly smaller, 
  some terms are removed,
  some terms are kept,
  and
  some terms are replaced with terms
  that are smaller or equal.
  Thus,
  $sub(u)$ is strictly smaller than $sub(t)$ for the multi-set ordering.

  As the multi-set ordering associated to $\succ$ is well-founded the
  sequence is finite. \qedhere
\end{proof}

\subsection{Vectors and matrices}\label{subsec:vectorsandmatrices}

This section shows how vectors and matrices can be
expressed as proofs in the fragment of the quantum in-left-right-calculus without the non-deterministic symbol $\elimplus^{nd}$. It is
kept short as it is just a transposition of results presented in
slightly different contexts in
\cite{DiazcaroDowekTCS23,DiazcaroDowekMSCS24}.

\begin{defi}[Vector propositions]
The set $\mathcal{V}$ is inductively defined as follows: $\one \in
\mathcal{V}$, and if $A$ and $B$ are in $\mathcal{V}$, then so is $A
\oplus B$.
\end{defi}

\begin{defi}[Dimension of a vector propositon in $\mathcal{V}$]
To each proposition $A \in \mathcal{V}$, we associate a positive
natural number $d(A)$, which is the number of occurrences of the
symbol $\one$ in A: $d(\one) = 1$ and $d(B \oplus C) = d(B) + d(C)$.
\end{defi}

If $A \in \mathcal{V}$ and $d(A) = n$, then the closed irreducible
proofs of $A$ and the vectors of ${\mathbb C}^n$ are in
correspondence: to each closed irreducible proof $t$ of $A$ we
associate a vector $\underline{t}$ of ${\mathbb C}^n$ and to each
vector {\bf u} of ${\mathbb C}^n$, we associate a closed irreducible
proof $\overline{\bf u}^A$ of $A$.

\begin{defi}[Correspondance]
Let $A \in \mathcal{V}$ and $d(A) = n$.  To each closed irreducible
proof $t$ of $A$ we associate a vector $\underline{t}$ of ${\mathbb
  C}^n$ as follows.
\begin{itemize}
\item If $A = \one$, then $t = a.\star$. We let $\underline{t} =
\left(\begin{smallmatrix} a \end{smallmatrix}\right)$.

\item If $A = A_1 \oplus A_2$, then

    --- if $t = \inlr(u,v)$, we let $\underline{t}$ be the vector
      with two blocks $\underline{u}$ and $\underline{v}$:
      $\underline{t} = \left(\begin{smallmatrix}
        \underline{u}\\\underline{v} \end{smallmatrix}\right)$,

    --- if $t = \inl(u)$, we let $\underline{t}$ be the vector with
      two blocks $\underline{u}$ and ${\bf 0}$: $\underline{t} =
      \left(\begin{smallmatrix} \underline{u}\\{\bf 0} \end{smallmatrix}\right)$,

    --- if $t = \inr(v)$, we let $\underline{t}$ be the vector with
      two blocks ${\bf 0}$ and $\underline{v}$: $\underline{t} =
      \left(\begin{smallmatrix} {\bf 0}\\\underline{v} \end{smallmatrix}\right)$.
\end{itemize}
To each vector ${\bf u} \in {\mathbb C}^n$, we associate a closed irreducible
proof $\overline{\bf u}^A$ without $\inl$ or $\inr$.
\begin{itemize}
\item If $n = 1$, then ${\bf u} =
  \left(\begin{smallmatrix} a \end{smallmatrix}\right)$.
  We let $\overline{\bf u}^{\one} = a.\star$.

\item If $n > 1$, then $A = A_1 \oplus A_2$, let $n_1$ and $n_2$ be
  the dimensions of $A_1$ and $A_2$.  Let ${\bf u}_1$ and ${\bf u}_2$
  be the two blocks of ${\bf u}$ of $n_1$ and $n_2$ lines, so ${\bf u}
  = \left(\begin{smallmatrix} {\bf u}_1\\ {\bf
      u}_2\end{smallmatrix}\right)$.  We let $\overline{\bf u}^{A_1
      \oplus A_2} = \inlr(\overline{{\bf u}_1}^{A_1}, \overline{{\bf
        u}_2}^{A_2})$.
\end{itemize}
\end{defi}

Let $A \in \mathcal{V}$.  We extend the definition of $\underline{t}$
to any closed proof of $A$, $\underline{t}$ is by definition
$\underline{t'}$, where $t'$ is the irreducible form of $t$.

\begin{rem}
This correspondence is not one-to-one as the proofs $\inl(1.\star)$
and $\inlr(1.\star,0.\star)$ both represent the vector
$\left(\begin{smallmatrix} 1 \\ 0 \end{smallmatrix}\right)$.

More generally, if $d(A) = n$ and $d(B) = p$, the closed proofs of $A$
are in correspondence with the vectors of ${\mathbb C}^n$, those of
$B$ with the vectors of ${\mathbb C}^p$, and those of $A \oplus B$
with the vectors of ${\mathbb C}^{n+p} = {\mathbb C}^n \times {\mathbb
  C}^p$.  The proofs of the form $\inl(u)$ are in correspondence with
the vectors of ${\mathbb C}^n \times {\mathbb C}^p$ that are in the
image of the first injection of ${\mathbb C}^n$ into ${\mathbb C}^n +
{\mathbb C}^p\simeq{\mathbb C}^n \times {\mathbb C}^p$, that is the
vectors of the form $\left(\begin{smallmatrix} {\bf u} \\ {\bf
    0} \end{smallmatrix}\right)$.  In the same way, the proofs of the
form $\inr(v)$ are in correspondence with the vectors of ${\mathbb
  C}^n \times {\mathbb C}^p$ that are in the image of the second
injection of ${\mathbb C}^p$ into ${\mathbb C}^n + {\mathbb
  C}^p\simeq{\mathbb C}^n \times {\mathbb C}^p$, that is the vectors
of the form $\left(\begin{smallmatrix} {\bf 0} \\ {\bf
    u} \end{smallmatrix}\right)$.

Having two representations of the vector $\left(\begin{smallmatrix} 1
  \\ 0 \end{smallmatrix}\right)$ may be a way to distinguish between
the proof $\inl(1.\star)$ that represents the classical bit
$\boolzero$, for example the result of a measure, and the proof
$\inlr(1.\star,0.\star)$ that represents Qbit $1. \ket{0} +
0. \ket{1} = \ket{0}$, for instance the result of a computation, that
happens to have a zero coordinate.
\end{rem}

The next proposition shows that the symbol $\plus$ expresses the sum
of vectors and the symbol $\bullet$, the product of a vector by a
scalar.

\begin{prop}
\label{parallelsum}
\label{parallelprod}
Let $A \in \mathcal{V}$ and $d(A) = n$.  Let $u$ and $v$ be two closed
proofs of $A$ and $a$ an scalar.  Then, 
  $\underline{u \plus v} = \underline{u} + \underline{v}$.
  and
  $\underline{a \bullet u} = a \underline{u}$.
\end{prop}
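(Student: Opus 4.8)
\emph{Proof plan.} The plan is to prove both equalities simultaneously by induction on the structure of the proposition $A \in \mathcal{V}$. The key background facts are that the fragment of the quantum in-left-right-calculus without $\elimplus^{nd}$ is confluent (as noted in \autoref{subsec:properties}) and terminating (\autoref{thm:termination}), so every closed proof has a \emph{unique} irreducible form---this is what makes the extension of $\underline{\cdot}$ to closed proofs well defined---and the introduction property (\autoref{introductionslinear}), which tells us that a closed irreducible proof of $\one$ is of the form $a.\star$ and a closed irreducible proof of $A_1 \oplus A_2$ is of the form $\inl(\cdot)$, $\inr(\cdot)$, or $\inlr(\cdot,\cdot)$, whose immediate subterms are again closed irreducible proofs of $A_1$, resp.\ $A_2$. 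For the base case $A = \one$, the irreducible forms of $u$ and $v$ are $a_1.\star$ and $a_2.\star$, so $\underline{u} = (a_1)$ and $\underline{v} = (a_2)$; then $u \plus v \lras a_1.\star \plus a_2.\star \lra (a_1+a_2).\star$ by rule \eqref{ll:rusumstar}, and since the right-hand side is irreducible it is \emph{the} irreducible form of $u \plus v$, giving $\underline{u \plus v} = (a_1+a_2) = \underline{u} + \underline{v}$. Similarly $a \bullet u \lras a \bullet a_1.\star \lra (a \times a_1).\star$ by rule \eqref{ll:rubulletstar}, so $\underline{a \bullet u} = (a \times a_1) = a\,\underline{u}$.

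For the inductive case $A = A_1 \oplus A_2$, let $u'$ and $v'$ be the irreducible forms of $u$ and $v$. By the introduction property, $u'$ and $v'$ each have one of the three shapes above, which gives nine cases for the sum, one matching each of the rules \eqref{ll:rusuminl}--\eqref{ll:rusuminlr}; I would spell out the representative case $u' = \inlr(u_1,u_2)$, $v' = \inlr(v_1,v_2)$, a mixed case such as $u' = \inl(u_1)$, $v' = \inr(v_2)$, and an intermediate case such as $u' = \inl(u_1)$, $v' = \inlr(v_1,v_2)$, the others being analogous. In the first, $u \plus v \lras \inlr(u_1,u_2) \plus \inlr(v_1,v_2) \lra \inlr(u_1 \plus v_1,\, u_2 \plus v_2)$ by \eqref{ll:rusuminlr}; reducing each component $u_i \plus v_i$ to its irreducible form $w_i$ yields that $\inlr(w_1,w_2)$ is the irreducible form of $u \plus v$, so $\underline{u \plus v}$ is the block vector with blocks $\underline{w_1}$ and $\underline{w_2}$. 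By the definition of $\underline{\cdot}$ on closed proofs, $\underline{w_i} = \underline{u_i \plus v_i}$, which by the induction hypothesis applied to the closed proofs $u_i \plus v_i$ of the smaller proposition $A_i$ equals $\underline{u_i} + \underline{v_i}$; since $\underline{u}$ and $\underline{v}$ have blocks $\underline{u_1},\underline{u_2}$ and $\underline{v_1},\underline{v_2}$ respectively, the conclusion $\underline{u \plus v} = \underline{u} + \underline{v}$ follows. In the mixed case, $u' \plus v' \lra \inlr(u_1,v_2)$ by \eqref{ll:rusuminlinr}, already irreducible, and the block computation goes through because the second block of $\underline{\inl(u_1)}$ and the first block of $\underline{\inr(v_2)}$ are $\mathbf{0}$. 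The intermediate case is identical but uses rule \eqref{ll:rusuminlinlr} and the induction hypothesis on $u_1 \plus v_1$.

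The scalar statement is proved in the same induction: for $A = A_1 \oplus A_2$ and, say, $u' = \inl(u_1)$, one has $a \bullet u \lras a \bullet \inl(u_1) \lra \inl(a \bullet u_1)$ by \eqref{ll:rubulletinl}, whence $\underline{a \bullet u}$ has blocks $\underline{a \bullet u_1}$ and $\mathbf{0}$; the induction hypothesis gives $\underline{a \bullet u_1} = a\,\underline{u_1}$, and $a\,\mathbf{0} = \mathbf{0}$, so $\underline{a \bullet u} = a\,\underline{u}$; the cases $u' = \inr(u_1)$ and $u' = \inlr(u_1,u_2)$ use rules \eqref{ll:rubulletinr} and \eqref{ll:rubulletinlr}. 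There is no genuine obstacle here: the proof is a finite case analysis driven by the commutation rules of $\plus$ and $\bullet$ with the introductions of $\oplus$. The only points needing care are (i) appealing to confluence and termination so that the displayed reduction sequences actually reach the unique irreducible form used to compute $\underline{\cdot}$, and (ii) keeping the block decompositions aligned in the mixed $\inl$/$\inr$/$\inlr$ cases, where the zero blocks introduced by $\underline{\inl(\cdot)}$ and $\underline{\inr(\cdot)}$ must be matched against the corresponding components of the other summand.
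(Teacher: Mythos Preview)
Your proposal is correct and follows essentially the same approach as the paper's proof: induction on $A$ (equivalently on $n=d(A)$), with the base case handled by rules \eqref{ll:rusumstar} and \eqref{ll:rubulletstar}, and the inductive case split into the nine sum cases \eqref{ll:rusuminl}--\eqref{ll:rusuminlr} and the three product cases \eqref{ll:rubulletinl}--\eqref{ll:rubulletinlr}. Your explicit invocation of confluence and termination of the $\elimplus^{nd}$-free fragment to justify uniqueness of normal forms (and hence well-definedness of $\underline{\cdot}$ on arbitrary closed proofs) is a point the paper leaves implicit, but it is exactly the right justification.
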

\begin{proof}
  ~
  \begin{enumerate}
    \item 
      By induction on $n$. 

      \begin{itemize}
	\item 
	  If $n = 1$, then $u \lra^* a.\star$, $v \lra^* b.\star$, $\underline{u} =
	  \left(\begin{smallmatrix} a \end{smallmatrix}\right)$, $\underline{v}
	  = \left(\begin{smallmatrix} b \end{smallmatrix}\right)$.  Thus,
	  $\underline{u \plus v} = \underline{{a.\star} \plus b.\star} =
	  \underline{(a + b).\star} = \left(\begin{smallmatrix} a +
	  b \end{smallmatrix}\right) = \left(\begin{smallmatrix}
	  a \end{smallmatrix}\right) + \left(\begin{smallmatrix}
	  b \end{smallmatrix}\right) = \underline {u} + \underline{v}$.

	\item 
	  If $n > 1$, then $A = A_1 \oplus A_2$, let $n_1$ and $n_2$ be the
	  dimensions of $A_1$ and $A_2$.

	  We have nine cases to consider.
	  \begin{itemize}
	    \item 
	      If  $u \lra^* \inlr(u_1,u_2)$, $v \lra^* \inlr(v_1,v_2)$,
	      $\underline{u} = \left(\begin{smallmatrix}
		  \underline{u_1} \\ \underline{u_2}
	      \end{smallmatrix}\right)$ and 
	      $\underline{v} = \left(\begin{smallmatrix} \underline{v_1} \\ \underline{v_2} 
	      \end{smallmatrix}\right)$, using the induction hypothesis,
	      and the rule \eqref{ll:rusuminlr},
	      $\underline{u \plus v}
	      = \underline{\inlr(u_1,u_2) \plus \inlr(v_1,v_2)}
	      = \underline{\inlr(u_1 \plus v_1,u_2 \plus v_2)}
	      = \left(\begin{smallmatrix} \underline{u_1 \plus v_1} \\
		  \underline{u_2 \plus v_2}
	      \end{smallmatrix}\right)
	      = \left(\begin{smallmatrix} \underline{u_1} + \underline{v_1} \\
		  \underline{u_2} + \underline{v_2}
	      \end{smallmatrix}\right)
	      = \left(\begin{smallmatrix} \underline{u_1} \\
		  \underline{u_2}
	      \end{smallmatrix}\right)
	      +
	      \left(\begin{smallmatrix} \underline{v_1} \\
		  \underline{v_2}
	      \end{smallmatrix}\right)
	      =  \underline{u} + \underline{v}$.

	    \item
	      If  $u \lra^* \inl(u_1)$, $v \lra^* \inr(v_2)$,
	      $\underline{u} = \left(\begin{smallmatrix}
		  \underline{u_1} \\ {\bf 0}
	      \end{smallmatrix}\right)$ and 
	      $\underline{v} = \left(\begin{smallmatrix} {\bf 0} \\ \underline{v_2} 
	      \end{smallmatrix}\right)$, using 
	      the rule \eqref{ll:rusuminlinr},
	      $\underline{u \plus v}
	      = \underline{\inl(u_1) \plus \inr(v_2)}
	      = \underline{\inlr(u_1, v_2)}
	      = \left(\begin{smallmatrix} \underline{u_1} \\
		  \underline{v_2}
	      \end{smallmatrix}\right)
	      = \left(\begin{smallmatrix} \underline{u_1} \\
		  {\bf 0}
	      \end{smallmatrix}\right)
	      +
	      \left(\begin{smallmatrix} {\bf 0} \\
		  \underline{v_2}
	      \end{smallmatrix}\right)
	      =  \underline{u} + \underline{v}$.

	    \item The other cases are similar using
	      rules \eqref{ll:rusuminl} to \eqref{ll:rusuminlr}.
	  \end{itemize}
      \end{itemize}
    \item
      By induction on $n$.

      \begin{itemize}
	\item 
	  If $n = 1$, then $u \lra^* b.\star$, $\underline{u} =
	  \left(\begin{smallmatrix} b \end{smallmatrix}\right)$.  Thus,
	  $\underline{a \bullet u} = \underline{a \bullet b.\star} =
	  \underline{(a \times b).\star} = \left(\begin{smallmatrix} a \times
	  b \end{smallmatrix}\right) = a \left(\begin{smallmatrix}
	  b \end{smallmatrix}\right) = a \underline {u}$.

	\item 
	  If $n > 1$, then $A = A_1 \oplus A_2$, let $n_1$ and $n_2$ be the
	  dimensions of $A_1$ and $A_2$. We have
	  three cases to consider. If 
	  $u \lra^* \inlr(u_1,u_2)$,
	  $\underline{u} = \left(\begin{smallmatrix} \underline{u_1}
	      \\ \underline{u_2}
	  \end{smallmatrix}\right)$,
	  then, using the induction hypothesis and rule \eqref{ll:rubulletinlr},
	  $\underline{a \bullet u} = 
	  \underline{a \bullet \inlr(u_1,u_2)}
	  =
	  \underline{\inlr(a \bullet u_1,a \bullet u_2)}
	  =
	  \left(\begin{smallmatrix} \underline{a \bullet u_1} \\
	      \underline{a \bullet u_2}
	  \end{smallmatrix}\right)
	  = 
	  \left(\begin{smallmatrix} a \underline{u_1}  \\
	      a \underline{u_2} 
	  \end{smallmatrix}\right)
	  =
	  a \left(\begin{smallmatrix} \underline{u_1} \\
	      \underline{u_2}
	  \end{smallmatrix}\right)
	  = a \underline{u}$.

	  The other cases are similar using rules \eqref{ll:rubulletinl} to
	  \eqref{ll:rubulletinlr}. 
	  \qedhere
      \end{itemize}
  \end{enumerate}
\end{proof}

\begin{thm}[Matrices]
\label{matrices}
Let $A, B \in \mathcal{V}$ with $d(A) = m$ and $d(B) = n$ and let $M$
be a matrix with $m$ columns and $n$ lines, then there exists a closed
proof $t$ of $A \multimap B$ such that, for all vectors ${\bf u} \in
{\mathbb C}^m$, $\underline{t~\overline{\bf u}^A} = M {\bf u}$.
\end{thm}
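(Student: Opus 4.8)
The plan is to argue by induction on the structure of the vector proposition $A \in \mathcal{V}$, with $B$ kept fixed, constructing the required proof $t$ explicitly so that $t~\overline{\bf u}^A$ reduces, in two steps, to a proof whose associated vector can be computed using Propositions~\ref{parallelsum} and~\ref{parallelprod}. Throughout, the proof $t$ produced will stay inside the $\elimplus^{nd}$-free (hence confluent) fragment, so $\underline{\cdot}$ is well-defined on all closed proofs and invariant under reduction; one also uses the routine fact that $\underline{\overline{\bf u}^A} = {\bf u}$, established by a short induction on $A$ directly from the definition of the correspondence.

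For the base case $A = \one$ we have $m = 1$, so $M$ is a single column, i.e.\ a vector ${\bf v} \in {\mathbb C}^n$, and I take $t = \lambda \abstr{x}\elimone(x, \overline{\bf v}^B)$, which is a closed proof of $\one \multimap B$. For ${\bf u} = \left(\begin{smallmatrix} a \end{smallmatrix}\right) \in {\mathbb C}^1$ we have $\overline{\bf u}^\one = a.\star$, and $t~(a.\star) \lra \elimone(a.\star, \overline{\bf v}^B) \lra a \bullet \overline{\bf v}^B$ by rules~\eqref{ll:rubeta} and~\eqref{ll:ruelimone}; then, by Proposition~\ref{parallelprod}, $\underline{a \bullet \overline{\bf v}^B} = a\,\underline{\overline{\bf v}^B} = a\,{\bf v} = M {\bf u}$, and reduction invariance of $\underline{\cdot}$ gives $\underline{t~\overline{\bf u}^A} = M{\bf u}$.

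For the inductive step $A = A_1 \oplus A_2$, put $m_1 = d(A_1)$ and $m_2 = d(A_2)$, so $m = m_1 + m_2$, and split $M$ by columns as $M = [\,M^{(1)} \mid M^{(2)}\,]$ with $M^{(i)}$ an $n \times m_i$ matrix. The induction hypothesis provides closed proofs $t_1$ of $A_1 \multimap B$ and $t_2$ of $A_2 \multimap B$ with $\underline{t_i~\overline{\bf w}^{A_i}} = M^{(i)} {\bf w}$ for all ${\bf w} \in {\mathbb C}^{m_i}$. I then take $t = \lambda \abstr{x}\elimplus(x, \abstr{y_1}(t_1~y_1), \abstr{y_2}(t_2~y_2))$; since $t_1$ and $t_2$ are closed, the linear typing rules split the contexts so that $t$ is a closed proof of $(A_1 \oplus A_2) \multimap B$. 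Given ${\bf u} \in {\mathbb C}^m$, write it in block form ${\bf u} = \left(\begin{smallmatrix} {\bf u}_1 \\ {\bf u}_2 \end{smallmatrix}\right)$ with ${\bf u}_i \in {\mathbb C}^{m_i}$; then $\overline{\bf u}^A = \inlr(\overline{{\bf u}_1}^{A_1}, \overline{{\bf u}_2}^{A_2})$, and $t~\overline{\bf u}^A$ reduces, by rules~\eqref{ll:rubeta} and~\eqref{ll:ruelimoplusinlr}, to $(t_1~\overline{{\bf u}_1}^{A_1}) \plus (t_2~\overline{{\bf u}_2}^{A_2})$. Applying Proposition~\ref{parallelsum} to the closed proofs $t_1~\overline{{\bf u}_1}^{A_1}$ and $t_2~\overline{{\bf u}_2}^{A_2}$ of $B$, together with the induction hypothesis, yields $\underline{(t_1~\overline{{\bf u}_1}^{A_1}) \plus (t_2~\overline{{\bf u}_2}^{A_2})} = M^{(1)}{\bf u}_1 + M^{(2)}{\bf u}_2 = M{\bf u}$, and hence $\underline{t~\overline{\bf u}^A} = M{\bf u}$ by reduction invariance of $\underline{\cdot}$.

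I do not expect a genuine obstacle: the heart of the argument is just to mirror the block decomposition of $M$ along the inductive structure of $A$ and to check that the two reduction steps fire. The only points needing a little care are the side facts that $\underline{\overline{\bf u}^A} = {\bf u}$ and that $\underline{\cdot}$ is well-defined and reduction-invariant on arbitrary closed proofs (both consequences of confluence of the $\elimplus^{nd}$-free fragment recalled in Section~\ref{subsec:properties}), and the verification in the inductive step that the linear context management indeed assembles $\lambda \abstr{x}\elimplus(x, \abstr{y_1}(t_1~y_1), \abstr{y_2}(t_2~y_2))$ into a proof of $(A_1 \oplus A_2) \multimap B$, which works precisely because $t_1$ and $t_2$ are closed.
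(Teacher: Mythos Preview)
Your proposal is correct and follows essentially the same approach as the paper: induction on $A$, with the base case $A=\one$ handled by $\lambda\abstr{x}\elimone(x,\overline{M}^B)$ and the inductive case $A=A_1\oplus A_2$ by $\lambda\abstr{x}\elimplus(x,\abstr{y_1}(t_1~y_1),\abstr{y_2}(t_2~y_2))$, concluding via Propositions~\ref{parallelsum} and~\ref{parallelprod}. Your additional remarks on confluence of the $\elimplus^{nd}$-free fragment, reduction invariance of $\underline{\cdot}$, and the linear typing of the inductive construction are all sound and simply make explicit what the paper leaves implicit.
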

\begin{proof}
  By induction on $A$.
  \begin{itemize}

    \item If $A = \one$, then $M$ is a matrix of one column and
      $n$ lines. Hence, it is also a vector of $n$ lines.
      We take
      $$
	t = \lambda \abstr{x} \elimone(x,\overline{M}^B)
      $$
      Let ${\bf u} \in {\mathbb C}^1$, ${\bf u}$ has the form
      $\left(\begin{smallmatrix}  a \end{smallmatrix}\right)$ and 
      $\overline{\bf u}^A = a.\star$.

      Then, using
      \autoref{parallelprod}, we have
      \[
	\underline{t~\overline{\bf u}^A} 
	= \underline{\elimone(\overline{\bf u}^A,\overline{M}^B)}
	= \underline{\elimone(a.\star,\overline{M}^B)}
	= \underline{a \bullet \overline{M}^B}
	= a \underline{\overline{M}^B} 
	= a M = M
	\left(\begin{smallmatrix}  a\end{smallmatrix}\right) =
	M {\bf u}
      \]

    \item If $A = A_1 \oplus A_2$, then let $m_1 = d(A_1)$ and $m_2 =
      d(A_2)$.  Let $M_1$ and $M_2$ be the two blocks of $M$ of $m_1$ and
      $m_2$ columns, so $M = \left(\begin{smallmatrix} M_1 &
      M_2\end{smallmatrix}\right)$.

      By induction hypothesis, there exist closed proofs $t_1$ and $t_2$ of
      the proposition
      $A_1 \multimap B$ and $A_2 \multimap B$
      such that, for all
      vectors ${\bf u}_1 \in {\mathbb C}^{m_1}$ and 
      ${\bf u}_2 \in {\mathbb C}^{m_2}$,
      we have
      $\underline{t_1~\overline{{\bf u}_1}^{A_1}} = M_1 {\bf u}_1$ and
      $\underline{t_2~\overline{{\bf u}_2}^{A_2}} = M_2 {\bf u}_2$.  We take
      $$
	t = \lambda \abstr{x} (\elimplus(x, \abstr{y} (t_1~y),
	\abstr{z} (t_2~z)))
      $$
      Let ${\bf u} \in {\mathbb C}^m$, and ${\bf u}_1$ and ${\bf
      u}_2$ be the two blocks of $m_1$ and $m_2$ lines of ${\bf u}$, so ${\bf
      u} = \left(\begin{smallmatrix} {\bf u}_1 \\ {\bf
      u}_2 \end{smallmatrix}\right)$, and $\overline{\bf u}^A =
      \inlr(\overline{{\bf u}_1}^{A_1},\overline{{\bf u}_2}^{A_2})$.

      Then, using \autoref{parallelsum}, 
      \begin{align*}
	\underline{t~\overline{\bf u}^A}
	&= \underline{\elimplus(\inlr(\overline{{\bf u}_1}^{A_1},\overline{{\bf u}_2}^{A_2}), \abstr{y} (t_1~y), \abstr{z} (t_2~z))}
	\\
	&= \underline{t_1~\overline{{\bf u}_1}^{A_1} \plus t_2~\overline{{\bf u}_2}^{A_2}}
	= \underline{t_1~\overline{{\bf u}_1}^{A_1}} + \underline{t_2~\overline{{\bf u}_2}^{A_2}}
	\\
	&= M_1 {\bf u}_1 + M_2 {\bf u}_2
	= \left(\begin{smallmatrix} M_1 & M_2 \end{smallmatrix}\right) \left(\begin{smallmatrix} {\bf u}_1 \\ {\bf u}_2  \end{smallmatrix}\right)
	= M {\bf u}
      \qedhere
      \end{align*}
  \end{itemize}
\end{proof}

\subsection{Linearity}
\label{seclinearity}

\autoref{matrices} shows that all linear maps can be
expressed as proofs in the quantum in-left-right-calculus.
We now prove the converse: that every proof in the fragment of the quantum in-left-right-calculus without $\elimplus^{nd}$ expresses a linear map.
  This proof is an
  adaptation to the quantum in-left-right-calculus of the proof given
  in \cite{DiazcaroDowekMSCS24}.
  The omitted proofs are given in \autoref{app:linearity}.

\begin{thm}\label{thm:converse}
  Let $A, B \in {\mathcal V}$, such that $d(A) = m$ and $d(B) = n$, 
  and  $t$ be a closed proof of $A \multimap B$.
  Then the function $F$ from ${\mathbb C}^m$ to ${\mathbb C}^n$,
  defined as
  $F({\bf u}) = \underline{t~\overline{\bf u}^A}$ is linear.
  \qed
\end{thm}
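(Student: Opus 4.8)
I would equip the fragment without $\elimplus^{nd}$ (which, as noted above, is confluent and, by \autoref{thm:termination}, terminating) with a denotational semantics valued in finite-dimensional $\mathbb{C}$-vector spaces and linear maps, prove it invariant under reduction, and check that on the propositions of $\mathcal{V}$ it coincides with the concrete reading $\underline{\cdot}$; linearity of $F$ then follows at once, since $F$ will be exhibited as the denotation of $t$. First, reduce to the case of an abstraction: $t$ is closed and typable, hence strongly terminating with a unique normal form $t'$; by \autoref{thm:SRlin} $t'$ has type $A\multimap B$, and by \autoref{introductionslinear} it is an introduction, so necessarily $t' = \lambda\abstr{x}v$ with $x:A\vdash v:B$. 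Since $t~\overline{\bf u}^A \lras (\lambda\abstr{x}v)~\overline{\bf u}^A \lra (\overline{\bf u}^A/x)v$ and $\underline{\cdot}$ depends only on the normal form, we get $F({\bf u}) = \underline{(\overline{\bf u}^A/x)v}$ for all ${\bf u}\in\mathbb{C}^m$.

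Next, define a denotation $\mathcal{D}$ on propositions by $\mathcal{D}(\one) = \mathbb{C}$, $\mathcal{D}(A\oplus B) = \mathcal{D}(A)\oplus\mathcal{D}(B)$, and $\mathcal{D}(A\multimap B)$ the vector space of linear maps from $\mathcal{D}(A)$ to $\mathcal{D}(B)$, extended to contexts by $\mathcal{D}(x_1:A_1,\dots,x_k:A_k) = \mathcal{D}(A_1)\otimes\cdots\otimes\mathcal{D}(A_k)$. By induction on a typing derivation of $\Gamma\vdash t:C$ I associate a linear map $\mathcal{D}(t)\colon\mathcal{D}(\Gamma)\to\mathcal{D}(C)$: the axiom to the identity, $\lambda$ to currying and application to evaluation, $\inl$, $\inr$, $\inlr$ to the two injections and to the pairing into the direct sum, $\elimone$ and $\elimplus$ to the evident maps out of a tensor with $\mathbb{C}$, resp.\ with a direct sum, $a\bullet t$ to $a\cdot\mathcal{D}(t)$, and $t\plus u$ to $\mathcal{D}(t)+\mathcal{D}(u)$. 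This is well defined and independent of the derivation chosen: the rules sum and prod send linear maps to linear maps, the context splittings in $\multimap$-e, $\one$-e and $\oplus$-e match the tensor, and the sharing of contexts in the sum rule and across the two branches of $\oplus$-e is absorbed by the diagonal and codiagonal of vector spaces.

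The crux is soundness: first a substitution lemma $\mathcal{D}((u/x)t) = \mathcal{D}(t)\circ(\mathcal{D}(u)\otimes\mathrm{id})$, then the verification that $t\lra u$ implies $\mathcal{D}(t)=\mathcal{D}(u)$, checked rule by rule over \autoref{linearreductionrules}. Rule \eqref{ll:rubeta} and the rules \eqref{ll:ruelimone}--\eqref{ll:ruelimoplusinlr} are the standard categorical equations, while the eleven sum rules \eqref{ll:rusumstar}--\eqref{ll:rusuminlr} and the five product rules \eqref{ll:rubulletstar}--\eqref{ll:rubulletinlr} hold because $\mathcal{D}$ is additive and $\mathbb{C}$-linear in the components concerned --- e.g.\ $\mathcal{D}(\inl(t)\plus\inr(w))$ and $\mathcal{D}(\inlr(t,w))$ are both the element $(\mathcal{D}(t),\mathcal{D}(w))$ of $\mathcal{D}(A)\oplus\mathcal{D}(B)$. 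This bookkeeping over the sum rules is the step I expect to be the main obstacle: it is routine but long, and one must be careful that the many cases of \eqref{ll:rusumstar}--\eqref{ll:rusuminlr} are all instances of the same additivity fact under the direct-sum decomposition.

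It remains to reconcile $\mathcal{D}$ with $\underline{\cdot}$ on $\mathcal{V}$. Fixing for each $A\in\mathcal{V}$ the canonical isomorphism $\mathcal{D}(A)\cong\mathbb{C}^{d(A)}$, a short induction on $A$ gives $\mathcal{D}(\overline{\bf u}^A)={\bf u}$, and an induction on a closed normal proof $w$ of $A$ --- which by \autoref{introductionslinear} is $a.\star$, $\inl(\cdot)$, $\inr(\cdot)$ or $\inlr(\cdot,\cdot)$ --- gives $\mathcal{D}(w)=\underline{w}$; by soundness this extends to every closed proof of every proposition of $\mathcal{V}$. Putting the pieces together, $F({\bf u}) = \underline{(\overline{\bf u}^A/x)v} = \mathcal{D}\big((\overline{\bf u}^A/x)v\big) = \mathcal{D}(\lambda\abstr{x}v)\big(\mathcal{D}(\overline{\bf u}^A)\big) = \mathcal{D}(t)({\bf u})$, and $\mathcal{D}(t)$ is a linear map $\mathbb{C}^m\to\mathbb{C}^n$ by construction, so $F$ is linear.
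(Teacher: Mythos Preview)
Your approach is correct and genuinely different from the paper's. The paper proceeds syntactically: it introduces \emph{elimination contexts}, decomposes any irreducible proof $t$ with $x:A\vdash t:B$ as $K\{t'\}$ with $t'$ a variable, introduction, sum, or product, and proves by induction on the measure $\mu(t)$ that $t\{u_1\plus u_2\}\equiv t\{u_1\}\plus t\{u_2\}$ and $t\{a\bullet u_1\}\equiv a\bullet t\{u_1\}$ as \emph{convertibilities of proof-terms} (\autoref{linearity}). From this, together with \autoref{parallelsum}, linearity of $F$ follows. Your semantic route is more modular and conceptually cleaner: once the biproduct structure of finite-dimensional vector spaces absorbs the additive context-sharing in sum, prod, $\inlr$, and the two branches of $\elimplus$, soundness for the rules \eqref{ll:rusumstar}--\eqref{ll:rubulletinlr} really is a single additivity observation repeated. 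What the paper's syntactic argument buys, however, is more than \autoref{thm:converse} needs: the convertibility $t~(u_1\plus u_2)\equiv t~u_1\plus t~u_2$ (\autoref{cor:cor1a}) is reused to establish the observational-equivalence result \autoref{corollary2} for arbitrary $B$, which your semantic interpretation does not deliver directly, since equal denotations do not in general imply $\equiv$. So your proof suffices for the theorem at hand, but the paper's longer road serves a second purpose later in \autoref{seclinearity}.
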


We could attempt to generalize this statement and prove that these
properties hold for all proofs, whatever the proved proposition. But
this generalization is too strong, as in the introduction rule of
$\multimap$, the subproof $t$ may contain more free variables than the
proof $\lambda \abstr{x} t$.  Thus, if, for example, $A = \one$, $B =
(\one \multimap \one) \multimap \one$, and $t = \lambda \abstr{x}
\lambda \abstr{y}y~x$, we have
\(
    t~({1.\star} \plus 2.\star) \lra^* \lambda \abstr{y} y~3.{\star},
  \) and
  \(
    (t~1.\star) \plus (t~2.\star) \lra^* \lambda \abstr{y} {(y~1.\star)} \plus {(y~2.\star)}
  \)
and these two irreducible proofs are different.

However, while the proofs $\lambda \abstr{y} y~3.\star$ and $\lambda
\abstr{y} {(y~1.\star)} \plus {(y~2.\star)}$ are different, if we put
them in the context $\_~\lambda \abstr{z} z$, then both proofs
$(\lambda \abstr{y} y~3.\star)~\lambda \abstr{z} z$ and $(\lambda
\abstr{y} {(y~1.\star)} \plus {(y~2.\star)})~\lambda \abstr{z} z$
reduce to $3.\star$.  This leads us to introduce a notion of
observational equivalence ($\sim$) and prove the general theorem.

\begin{defi}[Observational equivalence]\label{def:obseq}
Two proofs $t_1$ and $t_2$ of a proposition $B$ are \emph{observationally equivalent}, $t_1 \sim t_2$, if for all propositions
$C$ in $\mathcal{V}$ and for all proofs $c$ such that $\_:B \vdash
c:C$, we have $(t_1/\_)c \equiv (t_2/\_)c$.
\end{defi}

\begin{thm}
\label{corollary2}
Let $A$ and $B$ be propositions, $t$ a closed proof of $A \multimap
B$, and $u_1$ and $u_2$ two closed proofs of $A$.  Then
\(
  t~(u_1 \plus u_2) \sim t~u_1 \plus t~u_2
\) and
\(
  t~(a\bullet u_1) \sim a\bullet t~u_1
\).
\qed
\end{thm}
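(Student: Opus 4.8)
The plan is to unfold \autoref{def:obseq} and reduce both equivalences to a single substitution lemma, which is then established by the realizability method already used for \autoref{thm:converse}.

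We work in the $\elimplus^{nd}$-free fragment, which is confluent and terminating, so that $\equiv$ coincides with equality of normal forms. Fix $C \in \mathcal V$ and a proof $c$ with $\_:B \vdash c:C$; by \autoref{def:obseq} it suffices to show $(t~(u_1 \plus u_2)/\_)c \equiv ((t~u_1 \plus t~u_2)/\_)c$ and, likewise, $(t~(a \bullet u_1)/\_)c \equiv ((a \bullet t~u_1)/\_)c$. By the introduction property (\autoref{introductionslinear}), the closed proof $t$ of $A \multimap B$ reduces to some $\lambda \abstr{x}t_0$ with $x:A \vdash t_0:B$, hence $t~(u_1 \plus u_2) \lra^* ((u_1 \plus u_2)/x)t_0$ while $t~u_1 \plus t~u_2 \lra^* (u_1/x)t_0 \plus (u_2/x)t_0$ (and similarly for the scalar case). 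Choosing $x$ fresh for $c$ and writing $c' = (t_0/\_)c$, so that $x:A \vdash c':C$, the problem becomes
\[
  ((u_1 \plus u_2)/x)c' \equiv (u_1/x)c' \plus (u_2/x)c'
  \quad\text{and}\quad
  ((a \bullet u_1)/x)c' \equiv a \bullet ((u_1/x)c').
\]

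These are instances of the key lemma: \emph{for every $B$, every $C \in \mathcal V$, every $c$ with $x:B \vdash c:C$, all closed proofs $p,q$ of $B$ and every scalar $a$, one has $((p \plus q)/x)c \equiv (p/x)c \plus (q/x)c$ and $((a \bullet p)/x)c \equiv a \bullet ((p/x)c)$.} Granting this, the first theorem follows: applying the lemma to $c'$ gives $((u_1 \plus u_2)/x)c' \equiv (u_1/x)c' \plus (u_2/x)c'$; now $(u_i/x)c' = ((u_i/x)t_0/\_)c$ (substitutions commute since $x$ is fresh for $c$ and $u_i$ is closed), and, applying the lemma once more with the closed proofs $(u_1/x)t_0$ and $(u_2/x)t_0$ of $B$ plugged into $c$, we get $(((u_1/x)t_0 \plus (u_2/x)t_0)/\_)c \equiv ((u_1/x)t_0/\_)c \plus ((u_2/x)t_0/\_)c$; since moreover $t~u_1 \plus t~u_2 \lra^* (u_1/x)t_0 \plus (u_2/x)t_0$ and $t~(u_1 \plus u_2) \lra^* ((u_1 \plus u_2)/x)t_0$, chaining the two conversions yields $(t~(u_1 \plus u_2)/\_)c \equiv ((t~u_1 \plus t~u_2)/\_)c$. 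The scalar equivalence is obtained identically, using the $\bullet$-reduction rules in place of the $\plus$-rules.

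It remains to prove the key lemma, and this is the main obstacle. A naive induction on the structure of $c$ does not close, because $c$ may be an elimination that exposes the substituted sum only after several reduction steps; one must instead carry out the realizability argument behind \autoref{thm:converse} (adapted, as in \cite{DiazcaroDowekMSCS24}), strengthening the statement to a logical relation, compatible with reduction, asserting that for proofs with arbitrarily many free variables and of arbitrary type, substituting a sum (resp.\ a scalar multiple) for one variable yields, up to the relation, the sum (resp.\ scalar multiple) of the two instances. The inductive steps go through precisely because $\plus$ and $\bullet$ commute with all the introductions — the rules \eqref{ll:rusumlam} and \eqref{ll:rubulletlam} through $\lambda$, the analogous rules through $\inl,\inr,\inlr$, and, with field distributivity, the rules \eqref{ll:rusumstar} and \eqref{ll:rubulletstar} on $a.\star$ — and one reads off the $\mathcal V$-typed conclusion using Proposition~\ref{parallelsum}, which identifies $\plus$ and $\bullet$ on $\mathcal V$ with vector addition and scalar multiplication. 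The subtle point inside this argument is that a closed proof of a proposition $A_1 \oplus A_2 \in \mathcal V$ whose normal form is an $\inl$ (resp.\ an $\inr$) must be treated, by the relation, exactly as the corresponding $\inlr$ with a zero second (resp.\ first) component; handling this identification uniformly is what makes the proof delicate.
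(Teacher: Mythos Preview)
Your reduction to the ``key lemma'' is exactly right and matches the paper: the paper also applies this lemma twice (once to $c\{t_0\}$, once to $c$) to pass from $\equiv$ at type $C\in\mathcal V$ to $\sim$ at arbitrary $B$; this is Lemma~\ref{cor:cor1} in the appendix, and your key lemma is precisely Theorem~\ref{linearity}.

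Where your proposal diverges is in the proof of the key lemma itself. You invoke ``the realizability argument behind \autoref{thm:converse}'', but in the paper Theorem~\ref{thm:converse} is \emph{derived from} the key lemma (via Corollary~\ref{cor:cor1a}), so that reference is circular. The paper does not use a logical relation here at all; instead it proves Theorem~\ref{linearity} by induction on the measure $\mu(t)$ (Definition~\ref{measureofaproof}), after decomposing the irreducible $t$ as $K\{t'\}$ with $K$ an elimination context and $t'$ either $x$, an introduction, a sum, or a product (Lemmas~\ref{elim} and~\ref{horrible}). The case $t'=x$ is handled by peeling off the innermost elimination in $K$, reducing the closed $u_i$ to introductions, and applying the algebraic identities of Lemma~\ref{vecstructure} together with the induction hypothesis on strictly $\mu$-smaller proofs (Lemmas~\ref{lem:msubst}, \ref{lem:mured1}, \ref{lem:mured2} guarantee the decrease). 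The $\inl$/$\inr$/$\inlr$ interaction you flag is not resolved by identifying $\inl(u)$ with $\inlr(u,0)$, but by a direct nine-case (resp.\ three-case) analysis.

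A logical-relation proof may well be possible, but your sketch does not supply one: you would need to define the relation at every type (including $\multimap$), prove it closed under all term formers, and show it implies $\equiv$ at $\mathcal V$-types. As written, the last paragraph is a plan rather than a proof, and the concrete mechanism that makes the induction go through---the measure $\mu$ and the elimination-context decomposition---is missing.
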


\subsection{Measurement}
\label{sec:quantumcomputing}

  Now that we have linear maps, we almost have a quantum programming
language, but we still need measurement operators. We see, in this
Section, how such operators can be defined, using
$\elimplus^{nd}$.
We consider specific vector
propositions $\Q_n$, inductively defined as $\Q_0 = \one$ and
$\Q_{n+1} = \Q_n \oplus \Q_n$, that represent vectors of ${\mathbb
  C}^{2^n}$.

The expressions $\ket 0$ and $\ket 1$ are notations for 
$\inlr(1.\star,0.\star)$ and $\inlr(0.\star,1.\star)$, proofs of the
proposition $\Q_1 = \one \oplus \one$, that represent the vectors
$\left(\begin{smallmatrix}1\\0\end{smallmatrix}\right)$ and
$\left(\begin{smallmatrix}0\\1\end{smallmatrix}\right)$
respectively.  Note that the proof $a \bullet \ket 0 \plus b \bullet
\ket 1$ reduces to $\inlr(a.\star,b.\star)$ that represent the vector
$\left(\begin{smallmatrix}a\\b\end{smallmatrix}\right)$.

We define ${\bf 0}_n$ as ${\bf 0}_0 = 0.\star$ and ${\bf 0}_{n+1} =
\inlr({\bf 0}_n,{\bf 0}_n)$.

Booleans, as usual, can be encoded by the proposition
$\B=\one\oplus\one$, which coincides with $\Q_1$.  Note that, unlike
in \cite{DiazcaroDowekTCS23,DiazcaroDowekMSCS24}, where the Booleans
are proofs of $\one \oplus \one$ and the Qbits of $\one \odot \one$,
introducing an artificial distinction between classical and quantum
objects, the Boolean and the Qbits here are proofs of the same
proposition.

The Boolean $\boolzero$ and $\boolone$ could be expressed as the Qbits
$\ket 0$ and $\ket 1$, that is $\inlr(1.\star, 0.\star)$ and
$\inlr(0.\star,1.\star)$, respectively, but we prefer to use the
richness of the proof language to express them as the proofs
$\inl(1.\star)$ and $\inr(1.\star)$, respectively.

\begin{defi}[Norm]
Let $t$ be a closed irreducible proof of $\Q_n$, then
we define the norm of $t$ by induction on $n$.
If $n = 0$, then $t = a.\star$ and we let $\|t\|^2 = |a|^2$.
If $n > 0$ and $t = \inlr(u,v)$, we let $\|t\|^2 = \|u\|^2 +
  \|v\|^2$. In the same way, if $t = \inl(u)$, we let $\|t\|^2 =
  \|u\|^2$ and if $t = \inr(v)$, we let $\|t\|^2 = \|v\|^2$.
\end{defi}

This notion of norm can be used to assign probabilities to the
non-deterministic reduction rules~\eqref{ll:ruelimoplusndinlr1} and
\eqref{ll:ruelimoplusndinlr2}. If $t$ and $u$ are closed irreducible
proofs of propositions in $\Q_n$ and $\|t\|^2+\|u\|^2\neq 0$, we
assign the probabilities $\tfrac{\|t\|^2}{\|t\|^2 + \|u\|^2}$ and
$\tfrac{\|u\|^2}{\|t\|^2 + \|u\|^2}$ to the rules
\[
  \elimplus^{nd}(\inlr(t,u),\abstr{x}v,\abstr{y}w)  \longrightarrow  (t/x)v
  \qquad
  \mbox{and}
  \qquad
  \elimplus^{nd}(\inlr(t,u),\abstr{x}v,\abstr{y}w)  \longrightarrow
  (u/y)w.
\]

\noindent
respectively. Otherwise, the proof cannot be reduced.

We take the convention that any closed irreducible proof $u$ of
$\Q_n$, expressing a non-zero vector $\underline{u} \in {\mathbb
C}^{2^n}$, is an alternative expression of the vector or norm $1$
$\tfrac{u}{\|u\|}$. For example, the vector
$\frac 1{\sqrt 2}\left(\begin{smallmatrix}1\\ 
1\end{smallmatrix}\right)$
is expressed as the proof
$\inlr(\nicefrac{1}{\sqrt{2}}.\star,\nicefrac{1}{\sqrt{2}}.\star)$, but also
as the proof $\inlr(1.\star,1.\star)$.

The information erasing and non-deterministic proof
constructor $\elimplus^{nd}$ permits to define the measurement
operators as follows, one returning a Boolean value and
the other the resulting state after measuring the first Qbit.
\begin{align*}
  \pi_n(t)  &=\elimplus^{nd}(t, \abstr{x} \delta^{\Q_{n-1}}({x},{\boolzero}),
  \abstr{y}\delta^{\Q_{n-1}}(y,{\boolone}))\\
  \pi'_n(t)  &=\elimplus^{nd}(t, \abstr{x} \inlr(x,{\bf 0}_n), 
  \abstr{y}\inlr({\bf 0}_n,y))
\end{align*}
where
\begin{align*}
  \delta^{\Q_0}(x,{\bf b}) &= \elimone(x,{\bf b})\\
  \delta^{\Q_{n+1}}(x,{\bf b}) &= \elimplus^{nd}(x, \abstr{y} \delta^{\Q_{n}}(y,{\bf b}), \abstr{z} \delta^{\Q_{n}}(z,{\bf b})).
\end{align*}

If $t$ is an irreducible proof of $\Q_1$ of the form
$\inlr(a.\star,b.\star)$, where $a$ and $b$ are not both $0$, then the
proof $\pi_1(t)$ of $\B$ reduces, with probabilities
$\tfrac{|a|^2}{|a|^2 + |b|^2}$ to $a\bullet\boolzero$, which, under our
convention, represent the same vector as $\boolzero$, as $a \neq 0$
(because the reduction rule has a non-zero probability) and with
probability $\tfrac{|b|^2}{|a|^2 + |b|^2}$, to $b\bullet\boolone$, which,
under our convention, represent the same vector as $\boolone$.  It is
the result of the measurement.

The proof $\pi'_1(t)$ of $\Q_1$ reduces, with the same
probabilities as above, to $\inlr(a . \star,0 . \star)$ and to
$\inlr(0 . \star,b . \star)$, which are the states after the measure.

In the same way, if $t$ is an irreducible proof of $\Q_n$ of the form
$\inlr(u,v)$ where $\|u\|$ and $\|v\|$ are not both $0$, then the
proof $\pi_n(t)$ of $\B$ reduces, with probabilities
$\tfrac{\|u\|^2}{\|u\|^2 + \|v\|^2}$ and $\tfrac{\|v\|^2}{\|u\|^2 +
  \|v\|^2}$ to proofs representing the same vectors as $\boolzero$ and
to $\boolone$ respectively.  It is the result of the partial
measurement of the first Qbit.  The proof $\pi'_n(t)$ of 
$\Q_n$ reduces, with the same probabilities as above, to
proofs representing the same vectors as $\inlr(u,{\bf 0}_{n-1})$ and
$\inlr({\bf 0}_{n-1},v)$.  These are the states after the partial
measure of the first Qbit.

With matrices and measurement operators, we can express all the quantum
algorithms in the quantum in-left-right-calculus.

\section{Application to commuting cuts}\label{sec:commutingcuts}

In calculi with no interstitial rules, there are also commuting cuts,
where an introduction and an elimination rule of some connective
are separated with a blocking rule.  To eliminate those cuts, the
blocking rule is often commuted with the elimination rule below, when
we have the rule $\vee$-i3, it can also be commuted with the
introduction rule above. However, this situation is more complex than with
the sum rule.

\subsection{Blocking rules}
  
When an introduction and an elimination rule of some connective $c$
are separated with an elimination rule 
blocking the reduction, we have a commuting cut.
For example
\[
  \irule{\irule{\irule{\pi_1}{\Gamma \vdash A \wedge B}{}
      &
      \irule{\irule{\pi_2}
	{\Gamma, A, C \vdash D}
	{}
      }
      {\Gamma, A \vdash C \Rightarrow D}
      {\mbox{$\Rightarrow$-i}}
    }
    {\Gamma \vdash C \Rightarrow D}
    {\mbox{\color{red} $\wedge$-e1}}
    &
    \irule{\pi_3}
    {\Gamma \vdash C}
    {}
  }
  {\Gamma \vdash D}
  {\mbox{$\Rightarrow$-e}}
\]

Eliminating such commuting cuts is not needed to obtain the
introduction property. Indeed, if $\Gamma$ is empty, the proof $\pi_1$
reduces to a proof ending with a $\wedge$-i rule, and reducing the cut
formed with this rule and the blocking rule yields an
ordinary cut between the $\Rightarrow$-i and the $\Rightarrow$-e
rules.  But it is needed to prove the sub-formula property for the
(open) irreducible proofs and to relate the irreducible proofs in
Natural Deduction and the cut free proofs in Sequent Calculus.
Remind that the sub-formula property expresses that all the
propositions in an irreducible proof are sub-formulas of the
conclusion, this is not the case in the example above. For example,
when $\Gamma = A\wedge B,C,D$, as the proposition $C\Rightarrow D$ is
not a sub-formula of the sequent $A\wedge B,C,D\vdash D$

Commuting the blocking rule with the introduction rule above is not too
difficult when, like in this case, the blocking rule has only one minor
premise. For example, the proof
\[
  \irule{\irule{\pi_1}{\Gamma \vdash A \wedge B}{}
    &
    \irule{\irule{\pi_2}
      {\Gamma, A, C \vdash D}
      {}
    }
    {\Gamma, A \vdash C \Rightarrow D}
    {\mbox{$\Rightarrow$-i}}
  }
  {\Gamma \vdash C \Rightarrow D}
  {\mbox{\color{red} $\wedge$-e1}}
\]
reduces to
\[
  \irule{\irule{\irule{\pi_1}{\Gamma \vdash A \wedge B}{}
      &
      \irule{\pi_2}
      {\Gamma, A, C \vdash D}
      {}
    }
    {\Gamma, C \vdash D}
    {\mbox{$\wedge$-e1}}
  }
  {\Gamma \vdash C \Rightarrow D}
  {\mbox{\color{red} $\Rightarrow$-i}}
\]
leading to the rule 
$\elimand^1(t,\abstr{x} \lambda \abstr{y} u)\lra 
\lambda \abstr{y} \elimand^1(t,\abstr{x} u)$
(rule \eqref{nnnruleelimandlambda} of
\autoref{nnnfigureductionrules-two}).

The only difficult case is when the connective $c$ has two
introduction rules and the blocking rule two premises, that is when
$c$ is the disjunction and the blocking rule the elimination rule of
the disjunction, that is when the proof has the form 
  $$
    \irule{\irule{\irule{t}{\Gamma \vdash A_1 \vee A_2}{}
	& \irule{\irule{u_1}{\Gamma, A_1 \vdash B_1}{}}
	{\Gamma, A_1 \vdash B_1 \vee B_2}
	{\mbox{$\vee$-i1}}
	& \irule{\irule{u_2}{\Gamma, A_2 \vdash B_2}{}}
	{\Gamma, A_2 \vdash B_1 \vee B_2}
	{\mbox{$\vee$-i2}}
      }
      {\Gamma \vdash B_1 \vee B_2}
      {\mbox{\color{red} $\vee$-e}}
      & \irule{v_1}{\Gamma, B_1 \vdash C}{}
      & \irule{v_2}{\Gamma, B_2 \vdash C}{}
    }
    {\Gamma \vdash C}
    {\mbox{$\vee$-e}}
  $$
  that is,
\(
  \elimor(\elimorint(t,\abstr{x_1} \inl(u_1), \abstr{x_2} \inr(u_2)),
  \abstr{y_1} v_1, \abstr{y_2} v_2)
\).

\begin{figure*}[!ht]
  \centering
  \begin{align}
    \elimtop(\star, t) & \longrightarrow t \label{nnnruelimtop}\\
    (\lambda \abstr{x}t)~u & \longrightarrow  (u/x)t \label{nnnrubeta}\\
    \elimand^1(\pair{t}{u}, \abstr{x}v) & \longrightarrow  (t/x)v \label{nnnruelimand1}\\
    \elimand^2(\pair{t}{u}, \abstr{x}v) & \longrightarrow  (u/x)v \label{nnnruelimand2}\\
    \elimor(\inl(t),\abstr{x}v,\abstr{y}w) & \longrightarrow  (t/x)v 
    \label{nnnruelimorinl}\\
    \elimor(\inr(u),\abstr{x}v,\abstr{y}w) & \longrightarrow  (u/y)w
    \label{nnnruelimorinr}\\
\elimor(\inlr(t,\abstr{x_1} u_1,\abstr{x_2} u_2),
\abstr{y_1} v_1, \abstr{y_2} v_2)
&\lra \elimor(t,\abstr{x_1}(u_1/y_1)v_1,\abstr{x_2}(u_2/y_2)v_2)
\label{nnnruelimorinlr}
  \end{align}
  \caption{The reduction rules of the in-left-right-calculus (I)\label{nnnfigureductionrules-one}}
\end{figure*}

\begin{figure*}[!ht]
  \centering
  \begin{align}
    \elimbot{\top}(t) &\lra \star \label{nnnruleelimbotstar} \\
    \elimbot{A\Rightarrow B}(t) &\lra\lambda \abstr{x}\elimbot{B}(t) \label{nnnruleelimbotlambda}\\
    \elimbot{A\wedge B}(t) &\lra\pair{\elimbot{A}(t)}{\elimbot{B}(t)} \label{nnnruleelimbotpair}\\
    \elimbot{A\vee B}(t) &\lra \inl(\elimbot{A}(t)) \label{nnnruleelimbotinl}\\
    \elimbot{A\vee B}(t) &\lra \inr(\elimbot{B}(t)) \label{nnnruleelimbotinr}\\
    \elimtop(t, \star) &\lra \star\label{nnnruleelimtopstar}\\
    \elimtop(t, \lambda \abstr{x} u) &\lra \lambda \abstr{x} \elimtop(t, u)\\
    \elimtop(t, \pair{u_1}{u_2}) &\lra\pair{\elimtop(t, u_1)}{\elimtop(t,u_2)}\\
    \elimtop(t, \inl(u)) &\lra \inl(\elimtop(t,u))\\
    \elimtop(t, \inr(u)) &\lra \inr(\elimtop(t,u))\\
    \elimtop(t, \inlr(u,\abstr{x_1} v_1,
    \abstr{x_2} v_2))
    &\lra \inlr(u,\abstr{x_1} \elimtop(t,v_1),
    \abstr{x_2} \elimtop(t,v_2))
    \label{nnnruleelimtopinlr} \\
    \elimand^i(t,\abstr{x}\star) &\lra \star\label{nnnruleelimandstar}\\
    \elimand^i(t,\abstr{x} \lambda \abstr{y} u) &\lra \lambda \abstr{y} \elimand^i(t,\abstr{x} u)\label{nnnruleelimandlambda}\\
    \elimand^i(t,\abstr{x} \pair{u_1}{u_2}) &\lra\pair{\elimand^i(t,\abstr{x} u_1)}{\elimand^i(t,\abstr{x}u_2)}\\
    \elimand^i(t,\abstr{x} \inl(u)) &\lra \inl(\elimand^i(t,\abstr{x}u))\\
    \elimand^i(t,\abstr{x} \inr(u)) &\lra \inr(\elimand^i(t,\abstr{x}u))\\
    \elimand^i(t,\abstr{x} \inlr(u,\abstr{y_1} v_1,\abstr{y_2} v_2))
    &\lra
    \inlr(u, \abstr{y_1} \elimand^i(t,\abstr{x}v_1),
  \abstr{y_2} \elimand^i(t,\abstr{x}v_2))
\label{nnnruleelimandinlr}
  \end{align}
  \caption{The reduction rules of the in-left-right-calculus (II)\label{nnnfigureductionrules-two}}
\end{figure*}

\begin{figure*}[!ht]
  \centering
  \begin{align}
    \elimor(t,\abstr{x_1} \star, \abstr{x_2} \star) &\lra \star \label{nnnruleelimorstar}
    \\
    \elimor(t,\abstr{x_1} \lambda \abstr{y} u_1, \abstr{x_2} \lambda \abstr{y} u_2) &\lra \lambda \abstr{y} \elimor(t,\abstr{x_1} u_1, \abstr{x_2} u_2)
    \\
    \elimor(t,\abstr{x_1} \pair{u_1}{v_1}, \abstr{x_2} \pair{u_2}{v_2})&\label{nnnruleelimorpair} \\
    \omit\rlap{\hspace{2cm}$\lra \pair{\elimor(t,\abstr{x_1} u_1, \abstr{x_2} u_2)} {\elimor(t,\abstr{x_1} v_1, \abstr{x_2} v_2)}$}\nonumber
    \\
    \elimor(t,\abstr{x_1} \inl(u_1), \abstr{x_2} \inl(u_3)) &\lra \inl (\elimor(t, \abstr{x_1} u_1, \abstr{x_2} u_3)) \label{nnnrusuminlinl} 
    \\
    \elimor(t,\abstr{x_1} \inl(u_1), \abstr{x_2} \inr(u_4)) &\lra \inlr (t, \abstr{x_1} u_1, \abstr{x_2} u_4)\label{nnnrusuminlinr}
    \\
    \elimor(t,\abstr{x_1} \inl(u_1), \abstr{x_2} \inlr(t_2, \abstr{y_3}u_3,\abstr{y_4} u_4))\label{nnnrusuminlinlr} \\
    \omit\rlap{\hspace{2cm}$\lra \inlr(\pi_{\ref{nnnrusuminlinlr}}, \abstr{z_1}\elimor(z_1, \abstr{x_1} u_1, \abstr{w_2} (w_2/\pair{x_2}{y_3})u_3), \abstr{z_2} (z_2/\pair{x_2}{y_4})u_4)$}\nonumber
    \\
    \elimor(t,\abstr{x_1} \inr(u_2), \abstr{x_2} \inl(u_3)) &\lra \inlr (\pi_{\ref{nnnrusuminrinl}}, \abstr{x_2} u_3, \abstr{x_1} u_2)\label{nnnrusuminrinl}
    \\
    \elimor(t,\abstr{x_1} \inr(u_2), \abstr{x_2} \inr(u_4)) &\lra \inr (\elimor(t, \abstr{x_1} u_2, \abstr{x_2} u_4))\label{nnnrusuminrinr}
    \\
    \elimor(t,\abstr{x_1} \inr(u_2), \abstr{x_2} \inlr(t_2, \abstr{y_3}u_3,\abstr{y_4} u_4)) \label{nnnrusuminrinlr}\\
    \omit\rlap{\hspace{2cm}$\lra \inlr ( \pi_{\ref{nnnrusuminrinlr}}, \abstr{z_1} (z_1/\pair{x_2}{y_3})u_3, \abstr{z_2}\elimor(z_2, \abstr{x_1} u_2, \abstr{w_2} (w_2/\pair{x_2}{y_4})u_4))$}\nonumber
    \\
    \elimor(t, \abstr{x_1} \inlr(t_1, \abstr{y_1}u_1,\abstr{y_2} u_2), \abstr{x_2} \inl(u_3))  \label{nnnrusuminlrinl}\\
    \omit\rlap{\hspace{2cm}$\lra \inlr ( \pi_{\ref{nnnrusuminlrinl}}, \abstr{z_1}\elimor(z_1, \abstr{w_1} (w_1/\langle x_1, y_1\rangle)u_1, \abstr{x_2} u_3), \abstr{z_2} (z_2/\langle x_1, y_2 \rangle)u_2)$}\nonumber
    \\
    \elimor(t, \abstr{x_1} \inlr(t_1, \abstr{y_1}u_1,\abstr{y_2} u_2), \abstr{x_2} \inr(u_4)) \label{nnnrusuminlrinr}\\
    \omit\rlap{\hspace{2cm}$\lra \inlr ( \pi_{\ref{nnnrusuminlrinr}}, \abstr{z_1} (z_1/\langle x_1, y_1 \rangle)u_1, \abstr{z_2}\elimor(z_2, \abstr{w_1} (w_1/\langle x_1, y_2\rangle)u_2, \abstr{x_2} u_4))$}\nonumber
    \\
    \omit\rlap{$\elimor(t,\abstr{x_1} \inlr(t_1, \abstr{y_1}u_1,\abstr{y_2} u_2), \abstr{x_2} \inlr(t_2, \abstr{y_3}u_3,\abstr{y_4} u_4))$} \label{nnnrusuminlr}\\
    \omit\rlap{\hspace{2cm}$\lra \inlr (\pi_{\ref{nnnrusuminlr}}, \abstr{z_1}\elimor(z_1, \abstr{w_1} (w_1/\pair{x_1}{y_1})u_1, \abstr{w_2} (w_2/\pair{x_2}{y_3})u_3),$}\nonumber\\
    \omit\rlap{\hspace{4cm}$\abstr{z_2}\elimor(z_2, \abstr{w_1} (w_1/\pair{x_1}{y_2})u_2, \abstr{w_2} (w_2/\pair{x_2}{y_4})u_4))$} \nonumber
  \end{align}
  \caption{The reduction rules of the in-left-right-calculus (III)\label{nnnfigureductionrules-three}}
\end{figure*}

\subsection{The \texorpdfstring{$\inlr$}{inlr} symbol}

With a sum rule, there were many possibilities to reduce the proof
$$
  \irule{\irule{\irule{\irule{u_1}{\Gamma \vdash B_1}{}}
      {\Gamma \vdash B_1 \vee B_2}
      {\mbox{$\vee$-i1}}
      & \irule{\irule{u_2}{\Gamma \vdash B_2}{}}
      {\Gamma \vdash B_1 \vee B_2}
      {\mbox{$\vee$-i2}}
    }
    {\Gamma \vdash B_1 \vee B_2}
    {\mbox{\color{red} sum}}
    & \irule{v_1}{\Gamma, B_1 \vdash C}{}
    & \irule{v_2}{\Gamma, B_2 \vdash C}{}
  }
  {\Gamma \vdash C}
  {\mbox{$\vee$-e}}
$$
that is
\(
  \elimor(\inl(u_1) \plusr \inr(u_2) ,\abstr{y_1} v_1, \abstr{y_2} v_2)
\).

First, we could decide to drop the proof $\inr(u_2)$ and transform it
into \linebreak $\elimor(\inl(u_1),\abstr{y_1} v_1, \abstr{y_2} v_2)$
and then reduce it to $(u_1/y_1)v_1$. We could also drop the proof
$\inl(u_1)$ and transform it into $\elimor(\inr(u_2),\abstr{y_1} v_1,$
$\abstr{y_2} v_2)$ and then reduce it to $(u_2/y_2)v_2$. We could
also, as we did in \autoref{sec:inlrcalculus}, introduce a symbol
$\inlr$ and transform this proof into $\elimor(\inlr(u_1,u_2),\abstr{y_1} v_1,
\abstr{y_2} v_2)$.  We could finally decide to reduce this proof
always to $(u_1/y_1)v_1$ or $(u_2/y_2)v_2$, either to $(u_1/y_1)v_1$
or $(u_2/y_2)v_2$ in a non-deterministic way, or to $(u_1/y_1)v_1
\plus (u_2/y_2)v_2$, as we did with the rule \eqref{ruelimorinlr1}.

But, with the proof of the previous section, 
\(
  \elimor(\elimorint(t,\abstr{x_1} \inl(u_1), \abstr{x_2} \inr(u_2)),
  \abstr{y_1} v_1, \abstr{y_2} v_2)
\)
(where the $\plusr$ is replaced with the blocking rule
{\color{red} $\vee$-e}),
we have much fewer possibilities.

Indeed, in the first case, the proofs $\inl(u_1)$, $\inr(u_2)$, and
$\inl(u_1) \plusr \inr(u_2)$ are all proofs of $B_1 \vee B_2$ in the
context $\Gamma$.  In the second, the proofs $\inl(u_1)$, $\inr(u_2)$,
and $\elimorint(t,\abstr{x_1} \inl(u_1), \abstr{x_2} \inr(u_2))$ are
also proofs of $B_1 \vee B_2$ but in different contexts: $\inl(u_1)$
in the context $\Gamma, A_1$, $\inr(u_2)$ in the
context $\Gamma, A_2$, and $\elimorint(t,\abstr{x_1} \inl(u_1),
\abstr{x_2} \inr(u_2))$ in the context $\Gamma$.

Thus, if we want
to transform the proof $\elimorint(t,\abstr{x_1}$
$\inl(u_1),\abstr{x_2}$ $\inr(u_2))$ with a new introduction 
rule
$\inlr$ we must keep record of the proof $t$, and we must bind the
variables $x_1$ in $u_1$ and $x_2$ in $u_2$, leading to the admissible
rule
\[
  \irule{\Gamma \vdash t:A_1 \vee A_2 & \Gamma, x_1:A_1 \vdash u_1:B_1 &
  \Gamma, x_2:A_2 \vdash u_2:B_2}
  {\Gamma \vdash \inlr(t,\abstr{x_1} u_1,\abstr{x_2} u_2):B_1 \vee B_2}
  {}
\]

\subsection{Reducing ordinary cuts}

Adding the new introduction rule $\inlr$ produces new ordinary cuts
for the disjunction: $\elimor(\inlr(t,\abstr{x_1} u_1,\abstr{x_2}
u_2), \abstr{y_1} v_1, \abstr{y_2} v_2)$.  To reduce this proof, the
proof $(u_1/y_1)v_1$ is indeed a proof of $C$, but in the context
$\Gamma, x_1:A_1$, the proof $(u_2/y_2)v_2$ is indeed a proof of $C$,
but in the context $\Gamma, x_2:A_2$. Thus, the only option is to
reduce this proof to $\elimor(t,\abstr{x_1} (u_1/y_1)v_1,
\abstr{x_2}(u_2/y_2)v_2)$
(rule \eqref{nnnruelimorinlr} of \autoref{nnnfigureductionrules-one}), analogous to the rule
\eqref{ruelimorinlr1}. More generally, rules 
  \eqref{ruelimtop} to \eqref{ruelimorinlr1}, reducing ordinary cuts, yields
  the rules of \autoref{nnnfigureductionrules-one}.

  \subsection{Commutation rules}

The rules~\eqref{nnnruleelimbotstar} to~\eqref{nnnruleelimandinlr} are
the easy commutation rules, where the blocking rule has zero or one
minor premise: the rules~\eqref{nnnruleelimbotstar}
to~\eqref{nnnruleelimbotinr} refine the rules~\eqref{rusumstar}
to~\eqref{rusuminlr} when the blocking rule is the rule $\bot$-e.  In
particular, rules~\eqref{nnnruleelimbotinl}
and~\eqref{nnnruleelimbotinr} introduce non-determinism to avoid
breaking the symmetry.  The rules~\eqref{nnnruleelimtopstar}
to~\eqref{nnnruleelimtopinlr} refine the same rules when the blocking
rule is the rule $\top$-e.  The rules~\eqref{nnnruleelimandstar}
to~\eqref{nnnruleelimandinlr} refine these rules when the
blocking rule is the rule $\wedge$-e.

The rules~\eqref{nnnruleelimorstar} to~\eqref{nnnruleelimorpair} are
the easy commutation rules where the blocking rule is the rule
$\vee$-e, but the connective $c$ has only one introduction rule: they
refine the rules~\eqref{rusumstar} to~\eqref{rusumpair}.

Let us now turn to the only difficult case, where the blocking rule is
the elimination of the disjunction and the blocked connective is the
disjunction. We must refine the rules \eqref{rusuminl} to
\eqref{rusuminlr}, for example, we must refine the rule
\eqref{rusuminlr}:
\[
  \inlr(u_1,u_2) \plus \inlr(u_3,u_4)
  \longrightarrow \inlr(u_1 \plus u_3,u_2 \plus u_4)
\]
into a rule reducing the proof
\[
  \elimor(t,\abstr{x_1} \inlr(t_1, \abstr{y_1}u_1,\abstr{y_2} u_2),
  \abstr{x_2} \inlr(t_2, \abstr{y_3}u_3,\abstr{y_4} u_4)) 
\]
to a proof of the form 
\[
  \inlr
  (..., 
    \abstr{z_1}\elimor(..., \abstr{w_1} u_1, \abstr{w_2} u_3),
  \abstr{z_2}\elimor(..., \abstr{w_3} u_2, \abstr{w_3} u_4)),
\]
where
\begin{align*}
  & \Gamma \vdash t:A_1 \vee A_2\\
  & \Gamma, x_1:A_1 \vdash t_1:B_1 \vee B_2\\
  & \Gamma, x_2:A_2 \vdash t_2:B_3 \vee B_4\\
  & \Gamma, x_1:A_1, y_1:B_1 \vdash u_1:C\\
  & \Gamma, x_1:A_1, y_2:B_2 \vdash u_2:D\\
  & \Gamma, x_2:A_2, y_3:B_3 \vdash u_3:C\\
  & \Gamma, x_2:A_2, y_4:B_4 \vdash u_4:D
\end{align*}

Note that the proofs $u_1$ and $u_3$ are in different contexts:
besides those of $\Gamma$, the free variables of $u_1$ are $x_1$ and
$y_1$ while those of $u_3$ are $x_2$ a $y_3$. So there is no way to
chose $z_1$, $w_1$, and $w_2$ among $x_1$, $x_2$, $y_1$, and $y_3$ to
obtain a well-scoped proof. Instead, we take $w_1$ to be a proof of
$A_1 \wedge B_1$ so that, in $u_1$, the first and the second component
of $w_1$ may be substituted for the variables $x_1$ and $y_1$. In the
same way, $w_2$ is taken to be a proof of $A_2 \wedge B_3$.

Thus, the first argument of the first $\elimor$ symbol must be a proof
of $(A_1 \wedge B_1) \vee (A_2 \wedge B_3)$ and the first argument of
the second $\elimor$ symbol must be a proof of $(A_1 \wedge B_2) \vee
(A_2 \wedge B_4)$. A possibility is to assign these types to the
variables $z_1$ and $z_2$ and reduce the proof to a proof of the form
$
\inlr
(
  ...,
  \abstr{z_1}\elimor(z_1,
    \abstr{w_1} (w_1/\pair{x_1}{y_1})u_1, 
  \abstr{w_2} (w_2/\pair{x_2}{y_3})u_3),
  \abstr{z_2}\elimor(z_2,
    \abstr{w_1} (w_1/\pair{x_1}{y_2})u_2,
  \abstr{w_2}$ $(w_2/\pair{x_2}{y_4})u_4)
)$
where $(u/\pair{x}{y})t$ is a notation for
$(\elimand^1(u,\abstr{z}z)/x,
  \elimand^2(u,\abstr{z}z)/y)t$.

Finally, the first argument of the $\inlr$ symbol must be a proof of
$((A_1 \wedge B_1) \vee (A_2 \wedge B_3)) \vee ((A_1 \wedge B_2) \vee
(A_2 \wedge B_4))$. Note that as we have proofs $t$, $t_1$, and $t_2$,
either $B_1$, $B_2$, $B_3$, or $B_4$ is true, where $A_1$ is true in
the two first cases and $A_2$ is true in the last two.  Thus, we can
build a proof $\pi$ of this proposition.
This leads to the rule
$$\begin{array}{r@{\,}l}
  &\elimor(t,\abstr{x_1} \inlr(t_1, \abstr{y_1}u_1,\abstr{y_2} u_2), \abstr{x_2} \inlr(t_2, \abstr{y_3}u_3,\abstr{y_4} u_4)) \\
  &\lra
  \inlr
  \begin{aligned}[t]
    (
      \pi,
      &
      \abstr{z_1}\elimor(z_1,
	\abstr{w_1} (w_1/\pair{x_1}{y_1})u_1, 
      \abstr{w_2} (w_2/\pair{x_2}{y_3})u_3),\\
      &
      \abstr{z_2}\elimor(z_2,
	\abstr{w_1} (w_1/\pair{x_1}{y_2})u_2, 
      \abstr{w_2} (w_2/\pair{x_2}{y_4})u_4)
    )
  \end{aligned}
\end{array}$$
where $\pi$ is the proof of $((A_1 \wedge B_1) \vee (A_2 \wedge B_3))
\vee ((A_1 \wedge B_2) \vee (A_2 \wedge B_4))$
$$\begin{array}{r@{\,}l}
  \elimor(t,
&[x_1]\elimor(t_1,[y_1]\inl(\inl(\pair{x_1}{y_1})), [y_2]\inr(\inl(\pair{x_1}{y_2}))),\\
&[x_2]\elimor(t_2,[y_3]\inl(\inr(\pair{x_2}{y_3})), [y_4]\inr(\inr(\pair{x_2}{y_4}))))
\end{array}$$

This way the commuting cut 
  $\elimor(
      \elimorint(t,\abstr{x_1} \inlr(t_1, \abstr{y_1}u_1,\abstr{y_2} u_2), \abstr{x_2} \inlr(t_2, \abstr{y_3}u_3,\abstr{y_4} u_4)),$
    $\abstr{\alpha_1}s_1, \abstr{\alpha_2}s_2)$
reduces to
$\begin{aligned}[t]
  \elimor(
    \begin{aligned}[t]
      &\inlr
      \begin{aligned}[t]
	(
	  \pi,
	  &
	  \abstr{z_1}\elimor(z_1,
	    \abstr{w_1} (w_1/\pair{x_1}{y_1})u_1, 
	  \abstr{w_2} (w_2/\pair{x_2}{y_3})u_3),\\
	  &
	  \abstr{z_2}\elimor(z_2,
	    \abstr{w_1} (w_1/\pair{x_1}{y_2})u_2, 
	  \abstr{w_2} (w_2/\pair{x_2}{y_4})u_4)
	),\\
      &\abstr{\alpha_1}s_1,\abstr{\alpha_2}s_2)
      \end{aligned}
  \end{aligned}
\end{aligned}$

\noindent
where $\pi$ is as above, and then, by the rule explained in the previous section, to 
$$\begin{array}{r@{\,}l}
\elimor(\pi,
&\abstr{z_1}
(\elimor(z_1,
	  \abstr{w_1} (w_1/\pair{x_1}{y_1})u_1, 
	\abstr{w_2} (w_2/\pair{x_2}{y_3})u_3)/\alpha_1)s_1,\\
&\abstr{z_2}(\elimor(z_2,
	  \abstr{w_1} (w_1/\pair{x_1}{y_2})u_2, 
	\abstr{w_2} (w_2/\pair{x_2}{y_4})u_4)/\alpha_2)s_2)
\end{array}$$
 leading to the rules \eqref{nnnrusuminlinl}
to \eqref{nnnrusuminlr}.  where the proof $\pi_{\ref{nnnrusuminlr}}$
is the proof $\pi$ above and the proofs $\pi_{\ref{nnnrusuminlinlr}}$
to $\pi_{\ref{nnnrusuminlrinr}}$, given in \autoref{nnnfigurepi},
are analogous.  These rules form a rewrite system eliminating both
ordinary and commuting cuts.

We leave the termination of this system as an open problem.

\begin{figure*}[!ht]
  \centering
  \columnwidth=\linewidth
  $$
    \begin{array}{l}
      \vdash\pi_{\ref{nnnrusuminlinlr}}:
      (A_1 \vee (A_2 \wedge B_3)) \vee (A_2 \wedge B_4)\\
      \pi_{\ref{nnnrusuminlinlr}} = 
	\elimor(t,
	  \abstr{x_1}\inl(\inl(x_1)),
	\abstr{x_2}\elimor(t_2,\abstr{y_3}\inl(\inr(\pair{x_2}{y_3})), \abstr{y_4}\inr(\pair{x_2}{y_4})))
	\\[1.5ex]
      \vdash\pi_{\ref{nnnrusuminrinl}}:A_2 \vee A_1\\
      \pi_{\ref{nnnrusuminrinl}}=
	\elimor(t,\abstr{x_1} \inr(x_1),
	\abstr{x_2} \inl(x_2))
	\\[1.5ex]
      \vdash\pi_{\ref{nnnrusuminrinlr}}:(A_2 \wedge B_3) \vee (A_1 \vee (A_2 \wedge B_4))\\
      \pi_{\ref{nnnrusuminrinlr}}= 
	\elimor(t,
	  \abstr{x_1}\inr(\inl(x_1)),
	\abstr{x_2}\elimor(t_2,\abstr{y_3}\inl(\pair{x_2}{y_3}), \abstr{y_4}\inr(\inr(\pair{x_2}{y_4}))))
      \\[1.5ex]
      \vdash\pi_{\ref{nnnrusuminlrinl}}:((A_1 \wedge B_1)\vee A_2) \vee (A_1 \wedge B_2)\\
      \pi_{\ref{nnnrusuminlrinl}}=
	\elimor(t,
	  \abstr{x_1}\elimor(t_1,\abstr{y_1}\inl(\inl(\pair{x_1}{y_1})), \abstr{y_2}\inr(\pair{x_1}{y_2})),
	\abstr{x_2}\inl(\inr(x_2)))
      \\[1.5ex]
      \vdash\pi_{\ref{nnnrusuminlrinr}}:(A_1 \wedge B_1) \vee ((A_1 \wedge B_2) \vee A_2)\\
      \pi_{\ref{nnnrusuminlrinr}}= 
	\elimor(t,
	  \abstr{x_1}\elimor(t_1,\abstr{y_1}\inl(\pair{x_1}{y_1}),\abstr{y_2}\inr(\inl(\pair{x_1}{y_2}))),
	\abstr{x_2}\inr(\inr(x_2)))
      \\[1.5ex]
      \vdash\pi_{\ref{nnnrusuminlr}}:((A_1 \wedge B_1) \vee (A_2 \wedge B_3)) \vee ((A_1 \wedge B_2) \vee (A_2 \wedge B_4))\\
      \pi_{\ref{nnnrusuminlr}}= 
      \begin{aligned}[t]
	\elimor(t,&
	  [x_1]\elimor(t_1,[y_1]\inl(\inl(\pair{x_1}{y_1})), [y_2]\inr(\inl(\pair{x_1}{y_2}))),
	  \\
	  &[x_2]\elimor(t_2,[y_3]\inl(\inr(\pair{x_2}{y_3}), [y_4]\inr(\inr(\pair{x_2}{y_4}))))
	\end{aligned}
      \end{array}
    $$
    \caption{Definition of the $\pi$ terms of \autoref{nnnfigureductionrules-three}\label{nnnfigurepi}}
\end{figure*}

\subsection{An example of application to program optimization}
\label{sec:optimization}

\newcommand\tletp[4]{\mathsf{let}\ {#2}=\pi_{#1}{#3}\ \mathsf{in}\ {#4}}
\newcommand\tmatch[5]{\mathsf{match}\ {#1}\ \mathsf{in}\ \{{#2}\mapsto{#3},{#4}\mapsto{#5}\}}
To discuss programs, we can give a more familiar name to the elimination proof of the conjunction. In this section, we write $\tletp 1 x t u$ instead of $\elimand^1(t,\abstr{x}u)$, and we assume we have natural numbers and arithmetic operations.

We can see that the commuting cuts allow optimizing programs. For example, the following program, where $x$ is a parameter yet to be defined
\(
  (\tletp 1 yx{\lambda\abstr{z}(z+y)})~u
\)
is commuted, by rule \eqref{nnnruleelimandlambda}, into
\(
  \lambda\abstr{z}(\tletp 1 yx{(z+y)})~u
\)
and thus reduced, by rule \eqref{nnnrubeta}, to
\(
  \tletp 1 yx{(u+y)}
\).
Yet, the same result would be obtained if we had chosen to commute
with the elimination rule, yielding first
\(
  \tletp 1 yx{((\lambda\abstr{z}(z+y))~u)}
\)
and finally
\(
  \tletp 1 yx{(u+y)}
\).

However, the main difference is that even the program
\(
  \tletp 1 yx{\lambda\abstr{z}(z+y)}
\)
before it is applied to $u$, can be commuted in our case into
\(
  \lambda\abstr{z}(\tletp 1 yx{(z+y)})
\)
while it is irreducible if the commutation is made with the
elimination rule, since we need first to have an application
(elimination) in order to commute.

Now suppose this program is used in a bigger one, such as
$
  (\lambda\abstr{f}(f~0+f~1+\cdots+f~999))~{(\tletp 1 yx{(\lambda\abstr{z}(z+y)}))}
  $.
In our case, this term can be reduced first by rule \eqref{nnnruleelimandlambda} and then 
by rule \eqref{nnnrubeta} 
and finally, by doing $1000$ reductions with rule \eqref{nnnrubeta}, to 
$
  {\tletp 1 yx{(0+y)}}
  +
  {\tletp 1 yx{(1+y)}}
  +\cdots+
  {\tletp 1 yx{(999+y)}}
$
while commuting with the elimination, we can only reduce by rule \eqref{nnnrubeta} once to get
$
  (\tletp 1 yx{\lambda\abstr{z}(z+y)})~0
  +
  (\tletp 1 yx{\lambda\abstr{z}(z+y)})~1
  +\cdots+(\tletp 1 yx{\lambda\abstr{z}(z+y)})~999$
and then it would require $1000$ commutations, before the $1000$ reductions by rule \eqref{nnnrubeta} to get the same result.

\section{Conclusion}

In \cite{DiazcaroDowekTCS23}, a non-harmonious connective $\odot$ was
introduced. This connective was characterized by the absence of
symmetry between its introduction and elimination rules. Specifically,
this connective had the introduction rule of the conjunction and the
elimination rule of the disjunction.  In contrast, in this paper, we
have merely added the introduction rule of the conjunction as an extra
rule of the disjunction, the rule $\vee$-i3, and obtained a calculus
with similar properties.  As this rule is admissible, we do not obtain
a new connective, but it remains the disjunction. The cuts formed with
this new introduction rule can be reduced (rule \ref{ruelimorinlr1}).
The termination proof of the calculus with the rule $\vee$-i3 is also
simpler than that of the $\odot$-calculus: as we commute the sum rule
with the introduction rules of the disjunction rather that with its
elimination rules, we do not need to add the so-called \emph{ultra-reduction} rules. We also get a stronger introduction
property: closed irreducible proofs end with an introduction rule and
are not sums, or linear combinations, of proofs ending with an
introduction rule.

This calculus, in its linear version, has applications to quantum
computing, where this new introduction rule $\vee$-i3 can be seen as a
way to commute the sum rule with the introduction rule of the
disjunction in the proof
\[
  \irule{\irule{\irule{\pi_1}
      {\Gamma \vdash A}
      {}
    }
    {\Gamma \vdash A \oplus B}
    {\mbox{$\oplus$-i1}}
    &\irule{\irule{\pi_2}
      {\Gamma \vdash B}
      {}
    }
    {\Gamma \vdash A \oplus B}
    {\mbox{$\oplus$-i2}}
  }
  {\Gamma \vdash A \oplus B}
  {\mbox{sum}}
\]

But, even in calculi without interstitial rules, the rule $\vee$-i3 is
useful to reduce commuting cuts, although, in this paper, we leave the
termination of such reduction as an open problem. This reduction of
commuting cuts itself has applications to program optimization.

\bibliographystyle{alphaurl}
\bibliography{inlr}

\begin{thebibliography}{LMMP13}

\bibitem[ACG17]{AschieriCG17}
Federico Aschieri, Agata Ciabattoni, and Francesco~A. Genco.
\newblock G{\"{o}}del logic: From natural deduction to parallel computation.
\newblock In {\em 32nd Annual {ACM/IEEE} Symposium on Logic in Computer
  Science, {LICS} 2017, Reykjavik, Iceland, June 20-23, 2017}, pages 1--12.
  {IEEE} Computer Society, 2017.
\newblock \href {https://doi.org/10.1109/LICS.2017.8005076}
  {\path{doi:10.1109/LICS.2017.8005076}}.

\bibitem[ACG20]{AschieriCG20}
Federico Aschieri, Agata Ciabattoni, and Francesco~A. Genco.
\newblock On the concurrent computational content of intermediate logics.
\newblock {\em Theor. Comput. Sci.}, 813:375--409, 2020.
\newblock \href {https://doi.org/10.1016/J.TCS.2020.01.022}
  {\path{doi:10.1016/J.TCS.2020.01.022}}.

\bibitem[AD17]{Lineal}
Pablo Arrighi and Gilles Dowek.
\newblock Lineal: A linear-algebraic lambda-calculus.
\newblock {\em Logical Methods in Computer Science}, 13(1), 2017.

\bibitem[ADC12]{ArrighiDiazcaroLMCS12}
Pablo Arrighi and Alejandro D{\'\i}az-Caro.
\newblock A {S}ystem {F} accounting for scalars.
\newblock {\em Logical Methods in Computer Science}, 8(1:11), 2012.

\bibitem[Asc16]{Aschieri16}
Federico Aschieri.
\newblock On natural deduction for herbrand constructive logics {I:}
  curry-howard correspondence for dummett's logic {LC}.
\newblock {\em Log. Methods Comput. Sci.}, 12(3), 2016.
\newblock \href {https://doi.org/10.2168/LMCS-12(3:13)2016}
  {\path{doi:10.2168/LMCS-12(3:13)2016}}.

\bibitem[Bar91]{BarrMSCS91}
Michael Barr.
\newblock *-{A}utonomous categories and linear logic.
\newblock {\em Mathematical Structures in Computer Science}, 1(2):159--178,
  1991.

\bibitem[DCD21]{odotICTAC}
Alejandro D{\'{\i}}az-Caro and Gilles Dowek.
\newblock A new connective in natural deduction, and its application to quantum
  computing.
\newblock In A.~Cerone and P.~Csaba {\"{O}}lveczky, editors, {\em Prooceedings
  of the International Colloquium on Theoretical Aspects of Computing}, volume
  12819 of {\em Lecture Notes in Computer Science}, pages 175--193. Springer,
  2021.

\bibitem[DCD23]{DiazcaroDowekTCS23}
Alejandro D\'iaz-Caro and Gilles Dowek.
\newblock A new connective in natural deduction, and its application to quantum
  computing.
\newblock {\em Theoretical Computer Science}, 957:113840, 2023.

\bibitem[DCD24]{DiazcaroDowekMSCS24}
Alejandro D\'{\i}az-Caro and Gilles Dowek.
\newblock A linear linear lambda-calculus.
\newblock {\em Mathematical Structures in Computer Science}, 34:1103--1137,
  2024.

\bibitem[DCM25a]{DiazcaroMalherbeFSTTCS25}
Alejandro D{\'\i}az-Caro and Octavio Malherbe.
\newblock Beyond monads and biproducts: A uniform interpretation of parallelism
  in intuitionistic logic.
\newblock In C.~Aiswarya, Ruta Mehta, and Subhajit Roy, editors, {\em 41st
  Foundations of Software Technology and Theoretical Computer Science (FSTTCS
  2025)}, volume 320 of {\em Leibniz International Proceedings in Informatics
  (LIPIcs)}, pages 20:1--20:17. Schloss Dagstuhl -- Leibniz-Zentrum für
  Informatik, Germany, 2025.
\newblock In press.

\bibitem[DCM25b]{DiazcaroMalherbeArxiv23}
Alejandro D{\'\i}az-Caro and Octavio Malherbe.
\newblock The sup connective in {IMALL}: {A} categorical semantics.
\newblock Draft at {\tt arXiv:2205.02142}, 2025.

\bibitem[Der79]{Nachum}
Nachum Dershowitz.
\newblock Orderings for term-rewriting systems.
\newblock In {\em Proceedings of the 20th Annual Symposium on Foundations of
  Computer Science}, SFCS '79, page 123–131. IEEE Computer Society, 1979.
\newblock \href {https://doi.org/10.1109/SFCS.1979.32}
  {\path{doi:10.1109/SFCS.1979.32}}.

\bibitem[DK00]{DanosKrivine}
Vincent Danos and Jean{-}Louis Krivine.
\newblock Disjunctive tautologies as synchronisation schemes.
\newblock In Peter Clote and Helmut Schwichtenberg, editors, {\em Computer
  Science Logic, 14th Annual Conference of the EACSL, Fischbachau, Germany,
  August 21-26, 2000, Proceedings}, volume 1862 of {\em Lecture Notes in
  Computer Science}, pages 292--301. Springer, 2000.
\newblock \href {https://doi.org/10.1007/3-540-44622-2\_19}
  {\path{doi:10.1007/3-540-44622-2\_19}}.

\bibitem[dP95]{deLiguoroPiperno}
Ugo de'Liguoro and Adolfo Piperno.
\newblock Non deterministic extensions of untyped lambda-calculus.
\newblock {\em Inf. Comput.}, 122:149--177, 1995.
\newblock URL: \url{https://api.semanticscholar.org/CorpusID:34828977}.

\bibitem[Dum91]{Dummett}
Michael Dummett.
\newblock {\em The Logical basis of metaphysics}.
\newblock Duckworth, 1991.

\bibitem[DW03]{DW}
Gilles Dowek and Benjamin Werner.
\newblock Proof normalization modulo.
\newblock {\em J. Symb. Log.}, 68(4):1289--1316, 2003.
\newblock \href {https://doi.org/10.2178/jsl/1067620188}
  {\path{doi:10.2178/jsl/1067620188}}.

\bibitem[ES09]{MogbilFOPARA09}
Marko Eekelen and Olha Shkaravska, editors.
\newblock {\em Non-deterministic boolean proof nets}, volume 6324 of {\em
  Lecture Notes in Computer Science}. Springer, 2009.

\bibitem[Gen69]{Gentzen}
Gerhard Gentzen.
\newblock Untersuchungen \"uber das logische {S}chliessen.
\newblock In M.~Szabo, editor, {\em The Collected Papers of Gerhard Gentzen},
  pages 68--131. North-Holland, 1969.

\bibitem[Gen21]{Genco2021}
Francesco~A. Genco.
\newblock Formal explanations as logical derivations.
\newblock {\em Journal of Applied Non-Classical Logics}, 31(3-4):279--342,
  2021.
\newblock \href {https://doi.org/10.1080/11663081.2021.2010435}
  {\path{doi:10.1080/11663081.2021.2010435}}.

\bibitem[Gir72]{Girard}
Jean-Yves Girard.
\newblock {\em Interpr\'etation fonctionnelle et \'elimination des coupures
  dans l'arithm\'etique d'ordre sup\'erieur}.
\newblock PhD thesis, Université de Paris 7, 1972.

\bibitem[Gro94]{deGroote}
Philippe~de Groote.
\newblock Strong normalization in a non-deterministic typed lambda-calculus.
\newblock In {\em Proceedings of the Third International Symposium on Logical
  Foundations of Computer Science}, LFCS '94, page 142–152, Berlin,
  Heidelberg, 1994. Springer-Verlag.

\bibitem[IT99]{IwamiToyama}
Munehiro Iwami and Yoshihito Toyama.
\newblock Simplification ordering for higher-order rewrite systems.
\newblock {\em IPSJ Transactions on Programming}, 40(4):1--10, 1999.

\bibitem[LMMP13]{LairdManzonettoMcCuskerPaganiLICS13}
Jim Laird, Giulio Manzonetto, Guy McCusker, and Michele Pagani.
\newblock Weighted relational models of typed lambda-calculi.
\newblock In {\em Proceedings of the 28rd Annual ACM/IEEE Symposium on Logic in
  Computer Science (LICS 2013)}, pages 301--310. ACM, 2013.

\bibitem[Mat96]{SatoshiIPSJ96}
Satoshi Matsuoka.
\newblock Nondeterministic linear logic.
\newblock {\em IPSJ SIGNotes PROgramming}, 12:1--6, 1996.

\bibitem[ML71]{MartinLof}
Per Matin-L{\"o}f.
\newblock Hauptsatz for the intuitionistic theory of iterated inductive
  definitions.
\newblock {\em Studies in Logic and the Foundations of Mathematics},
  63:179--216, 1971.

\bibitem[MN98]{Nipkow}
R.~Mayr and T.~Nipkow.
\newblock Higher-order rewrite systems and their confluence.
\newblock {\em Theoretical Computer Science}, 192(1):3--29, 1998.
\newblock \href {https://doi.org/10.1016/S0304-3975(97)00143-6}
  {\path{doi:10.1016/S0304-3975(97)00143-6}}.

\bibitem[MP13]{MillerPimentel}
Dale Miller and Elaine Pimentel.
\newblock A formal framework for specifying sequent calculus proof systems.
\newblock {\em Theoretical Computer Science}, 474:98--116, 2013.

\bibitem[Ong93]{Ong93}
C.-H.~Luke Ong.
\newblock Non-determinism in a functional setting.
\newblock In {\em Proceedings of the Eighth Annual IEEE Symposium on Logic in
  Computer Science (LICS 1993)}, pages 275--286. IEEE Computer Society Press,
  June 1993.

\bibitem[Pog16]{Poggiolesi2016}
Francesca Poggiolesi.
\newblock On defining the notion of complete and immediate formal grounding.
\newblock {\em Synthese}, 193(10), 2016.
\newblock \href {https://doi.org/10.1007/s11229-015-0923-x}
  {\path{doi:10.1007/s11229-015-0923-x}}.

\bibitem[Pra65]{Prawitz}
Dag Prawitz.
\newblock {\em Natural deduction. A proof-theoretical study}.
\newblock Almqvist \& Wiksell, 1965.

\bibitem[Pra76]{PrawitzEssay}
Dag Prawitz.
\newblock Proofs and the meaning and completeness of the logical constants.
\newblock In {\em Essays on Mathematical and Philosophical Logic}, volume 122
  of {\em Synthese Library}, pages 25--40. Springer, 1976.

\bibitem[Rea04]{Read04}
Stephen Read.
\newblock Identity and harmony.
\newblock {\em Analysis}, 64:113--119, 2004.

\bibitem[Rea10]{Read10}
Stephen Read.
\newblock General-elimination harmony and the meaning of the logical constants.
\newblock {\em Journal of Philosophical Logic}, 39:557--576, 2010.

\bibitem[Rea14]{Read}
Stephen Read.
\newblock Identity and harmony revisited.
\newblock \url{https://www.st-andrews.ac.uk/~slr/identity_revisited.pdf}, 2014.

\bibitem[SH84]{SchroederHeister}
Peter Schroeder-Heister.
\newblock A natural extension of {N}atural deduction.
\newblock {\em The Journal of Symbolic Logic}, 49(4):1284--1300, 1984.

\bibitem[SH14]{SchroederHeister2014}
Peter Schroeder-Heister.
\newblock The calculus of higher-level rules, propositional quantification, and
  the foundational approach to proof-theoretic harmony.
\newblock {\em Studia Logica}, 102:1185--1216, 2014.

\bibitem[Tzo17]{TzouvarasIGPL17}
Athanassios Tzouvaras.
\newblock Propositional superposition logic.
\newblock {\em Logic Journal of the IGPL}, 26(1):149--190, 11 2017.

\bibitem[Tzo19]{TzouvarasIGPL19}
Athanassios Tzouvaras.
\newblock Semantics for first-order superposition logic.
\newblock {\em Logic Journal of the IGPL}, 27(4):570--595, 05 2019.
\newblock \href {https://doi.org/10.1093/jigpal/jzz019}
  {\path{doi:10.1093/jigpal/jzz019}}.

\bibitem[Vau09]{Vaux2009}
Lionel Vaux.
\newblock The algebraic lambda calculus.
\newblock {\em Mathematical Structures in Computer Science}, 19(5):1029–1059,
  2009.

\bibitem[{Wik}25]{wiki}
{Wikipedia contributors}.
\newblock Buridan's donkey --- wikipedia{,} the free encyclopedia.
\newblock \url{https://en.wikipedia.org/wiki/Buridan%27s_donkey}, Accessed:
  2025.

\bibitem[WZ82]{WoottersZurekNature82}
William~K. Wootters and Wojciech~H. Zurek.
\newblock A single quantum cannot be cloned.
\newblock {\em Nature}, 299:802--803, 1982.

\end{thebibliography}

\appendix

\section{Omitted proofs in \autoref{subsec:properties}}\label{app:properties}
\subsection{Proof of \autoref{thm:SRlin}}
\begin{prop}
  [Substitution]\label{lem:substitutionLin}
  Let $t$ be a proof of $A$ in context $\Gamma,x:B$ and $u$ be a proof of $B$ in context $\Delta$. Then, $(u/x)t$ is a proof of $A$ in context $\Gamma,\Delta$.
\end{prop}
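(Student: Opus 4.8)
The plan is to proceed by induction on the structure of $t$, mirroring the proof of \autoref{prop:subst}, the only additional work being the bookkeeping of the linear contexts. First I would record the pattern dictated by linearity: whenever the typing rule applied at the root of $t$ has several premises, the declaration $x:B$ occurs in the context of \emph{exactly one} premise for the multiplicative rules ($\one$-e and $\multimap$-e), and in the context of \emph{every} premise sharing that context for the additive rules (sum, prod, $\inlr$, and the two minor premises of $\oplus$-e and $\oplus$-e$^{nd}$).

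Then I would run through the cases. The base case $t=x$ is immediate: by the ax rule $\Gamma$ is empty and $A=B$, so $(u/x)t=u$ is a proof of $A$ in $\Delta = \Gamma,\Delta$. The proof $t = a.\star$ cannot occur, since $\one$-i forces the empty context, which does not contain $x:B$. For the ``additive or unary'' constructors $t_1\plus t_2$, $a\bullet t_1$, $\inl(t_1)$, $\inr(t_1)$, $\inlr(t_1,t_2)$, the context is the same in all premises and in the conclusion, so I would apply the induction hypothesis to each sub-proof (each a proof in context $\Gamma,x:B$), obtaining proofs in $\Gamma,\Delta$, and reapply the same rule. For $t=\lambda\abstr{y}t_1$, after $\alpha$-renaming $y$ away from $u$ and $\Delta$, I would apply the induction hypothesis to $t_1$, which is a proof of $D$ in context $(\Gamma,y:C),x:B$, get a proof of $D$ in $(\Gamma,\Delta),y:C$, and reapply $\multimap$-i.

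The only cases with a genuine context split are $\elimone(t_1,t_2)$, $t_1\,t_2$, $\elimplus(t_1,\abstr{y}t_2,\abstr{z}t_3)$ and $\elimplus^{nd}(t_1,\abstr{y}t_2,\abstr{z}t_3)$; each is handled by a case distinction on which premise carries $x:B$. For $\elimone$ and application, writing $\Gamma,x:B = \Gamma_1,\Gamma_2$ with the premises typed in $\Gamma_1$ and $\Gamma_2$, the declaration $x:B$ lies in exactly one $\Gamma_i$; I would apply the induction hypothesis to that premise (replacing the $x:B$ part of $\Gamma_i$ by $\Delta$), leave the other premise unchanged, and reapply the rule, so that the conclusion ends up typed in $\Gamma,\Delta$. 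For $\oplus$-e and $\oplus$-e$^{nd}$, writing $\Gamma,x:B = \Gamma_1,\Gamma_2$ with $t_1$ typed in $\Gamma_1$ and $t_2,t_3$ sharing the context $\Gamma_2$, there are two subcases: if $x:B\in\Gamma_1$, apply the induction hypothesis to $t_1$ alone; if $x:B\in\Gamma_2$, apply it to \emph{both} $t_2$ and $t_3$, since $x$ may occur in each, adding $\Delta$ once to their shared context; in either subcase reapply the rule. The main obstacle is entirely administrative: respecting the additive/multiplicative discipline when deciding where $\Delta$ must be grafted --- in particular not duplicating $\Delta$ when $x$ occurs in both minor premises of an $\oplus$-elimination --- but no new idea beyond \autoref{prop:subst} is required.
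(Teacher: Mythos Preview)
Your proposal is correct and follows essentially the same approach as the paper's own proof: structural induction on $t$, with the impossible cases ($a.\star$, and implicitly a variable $y\neq x$) ruled out by the shape of the linear axiom and $\one$-i rules, the additive/unary constructors handled uniformly, and the multiplicative eliminations handled by a case split on which sub-context contains $x:B$. The only cosmetic difference is that the paper lists the case $t=y\neq x$ explicitly as impossible, whereas you absorb it into the observation that the ax rule forces the context to be exactly $x:B$.
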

\begin{proof}
  By induction on the structure of $t$.
  \begin{itemize}
      \item Let $t=x$, thus $B=A$, $\Gamma$ is empty, and $(u/x)t = u$, and since $u$ is a proof of $B$ in context $\Delta$, we have that $(u/x)t$ is a proof of $A$ in context $\Delta$.

      \item Let $t=y\neq x$, this case is impossible since the context $\Gamma,x:B$ cannot be equal to $y:A$.

     \item Let $t=t_1\plus t_2$. By induction hypothesis, $(u/x)t_1$ and $(u/x)t_2$ are proofs of $A$ in context $\Gamma,\Delta$. Thus, $(u/x)t$ is a proof of $A$ in context $\Gamma,\Delta$.
     
     \item Let $t=a\bullet t_1$. By induction hypothesis, $(u/x)t_1$ is a proof of $A$ in context $\Gamma,\Delta$. Thus, $(u/x)t$ is a proof of $A$ in context $\Gamma,\Delta$.
     
     \item Let $t=a.\star$, this case is impossible since the context $\Gamma,x:B$ is not empty.
     
     \item Let $t=\elimone(t_1,t_2)$, then $\Gamma,x:B=\Gamma_1,\Gamma_2$, $t_1$ is a proof of $\one$ in the context $\Gamma_1$, and $t_2$ is a proof of $A$ in the context $\Gamma_2$. Let $x:B\in\Gamma_1$, then by the induction hypothesis, $(u/x)t_1$ is a proof of $\one$ in the context $\Gamma_1,\Delta$. Thus, $(u/x)t$ is a proof of $A$ in the context $\Gamma_1,\Delta,\Gamma_2=\Gamma,\Delta$.
     Let $x:B\in\Gamma_2$, then by the induction hypothesis, $(u/x)t_2$ is a proof of $A$ in the context $\Gamma_2,\Delta$. Thus, $(u/x)t$ is a proof of $A$ in the context $\Gamma_1,\Gamma_2,\Delta=\Gamma,\Delta$.

     \item Let $t=\lambda\abstr{y}t_1$, then $A=A_1\multimap A_2$ and $t_1$ is a proof of $A_2$ in the context $\Gamma,y:A_1,x:B$. Then, by the induction hypothesis, $(u/x)t_1$ is a proof of $A_2$ in the context $\Gamma,y:A_1,\Delta$. Thus, $(u/x)t$ is a proof of $A_1\multimap A_2$ in the context $\Gamma,\Delta$. 

     \item Let $t=t_1\,t_2$, then $\Gamma,x:B = \Gamma_1,\Gamma_2$, $t_1$ is a proof of $A_1\multimap A$ in the context $\Gamma_1$, and $t_2$ is a proof of $A_1$ in the context $\Gamma_2$. Let $x:B\in\Gamma_1$, then by the induction hypothesis, $(u/x)t_1$ is a proof of $A_1\multimap A$ in the context $\Gamma_1,\Delta$. Thus, $(u/x)t$ is a proof of $A$ in the context $\Gamma_1,\Delta,\Gamma_2=\Gamma,\Delta$.
     Let $x:B\in\Gamma_2$, then by the induction hypothesis, $(u/x)t_2$ is a proof of $A_1$ in the context $\Gamma_2,\Delta$. Thus, $(u/x)t$ is a proof of $A$ in the context $\Gamma_1,\Gamma_2,\Delta=\Gamma,\Delta$.

     \item Let $t=\inl(t_1)$, then $A=A_1\oplus A_2$ and $t_1$ is a proof of $A_1$ in the context $\Gamma,x:B$. By the induction hypothesis, $(u/x)t_1$ is a proof of $A_1$ in the context $\Gamma,\Delta$. Thus, $(u/x)t$ is a proof of $A_1\oplus A_2$ in the context $\Gamma,\Delta$.
     
     \item Let $t=\inr(t_1)$, this case is analogous to the previous one.
     
     \item Let $t=\inlr(t_1,t_2)$, then $A=A_1\oplus A_2$, $t_1$ is a proof of $A_1$ in the context $\Gamma,x:B$, and $t_2$ is a proof of $A_2$ in the context $\Gamma,x:B$. By the induction hypothesis, $(u/x)t_1$ is a proof of $A_1$ in the context $\Gamma,\Delta$, and $(u/x)t_2$ is a proof of $A_2$ in the context $\Gamma,\Delta$. Thus, $(u/x)t$ is a proof of $A_1\oplus A_2$ in the context $\Gamma,\Delta$.

     \item Let $t=\elimplus(t_1,\abstr{y}v,\abstr{z}w)$, then $\Gamma,x:B=\Gamma_1,\Gamma_2$, $t_1$ is a proof of $C_1\oplus C_2$ in the context $\Gamma_1$, $v$ is a proof of $A$ in the context $\Gamma_2,y:C_1$, and $w$ is a proof of $A$ in the context $\Gamma_2,z:C_2$. Let $x:B\in\Gamma_1$, then by the induction hypothesis, $(u/x)t_1$ is a proof of $C_1\oplus C_2$ in the context $\Gamma_1,\Delta$. Thus, $(u/x)t$ is a proof of $A$ in the context $\Gamma_1,\Delta,\Gamma_2=\Gamma,\Delta$.
     Let $x:B\in\Gamma_2$, then by the induction hypothesis, $(u/x)v$ is a proof of $A$ in the context $\Gamma_2,\Delta$ and $(u/x)w$ is a proof of $A$ in the context $\Gamma_2,\Delta$. Thus, $(u/x)t$ is a proof of $A$ in the context $\Gamma_1,\Gamma_2,\Delta=\Gamma,\Delta$.

     \item Let $t=\elimplus^{nd}(t_1,\abstr{y}v,\abstr{z}w)$, this case is analogous to the previous one.
     \qedhere
  \end{itemize}
\end{proof}

\begin{proof}[Proof of \autoref{thm:SRlin}]
  By induction on the reduction relation. The inductive cases are trivial, so we
  only treat the basic cases corresponding to rules \eqref{ll:ruelimone} to
  \eqref{ll:rubulletinlr} of \autoref{linearreductionrules}.
  \begin{enumerate}
    \item Let $t=\elimone(a.\star,v)$ and $u=v$. Then, by inversion, $u$ is a
    proof of $A$ in context $\Gamma$.

    \item Let $t=(\lambda \abstr{x}v)~w$ and $(v/x)w$. Then
    $\Gamma=\Gamma_1,\Gamma_2$, $v$ is a proof of $A$ in the context
    $\Gamma_1,x:B$ and $w$ is a proof of $B$ in the context $\Gamma_2$. Thus, by
    \autoref{lem:substitutionLin}, $(v/x)w$ is a proof of $A$ in the
    context $\Gamma_1,\Gamma_2=\Gamma$.

		\item\label{caseeliminl} Let $t=\elimplus(\inl(v),\abstr{x}w_1,\abstr{y}w_2)$ and $u=(v/x)w_1$.
		Then $\Gamma=\Gamma_1,\Gamma_2$, $v$ is a proof of $B$ in the context
		$\Gamma_1$, and $w_1$ is a proof of $A$ in the context $\Gamma_2,x:B$. Thus,
		by \autoref{lem:substitutionLin}, $(v/x)w_1$ is a proof of $A$ in
		the context $\Gamma_1,\Gamma_2=\Gamma$.

		\item\label{caseeliminr} Let $t=\elimplus(\inr(v),\abstr{x}w_1,\abstr{y}w_2)$ and $u=(v/y)w_2$.
		This case is analogous to the previous one.

		\item Let $t=\elimplus(\inlr(v_1,v_2),\abstr{x}w_1,\abstr{y}w_2)$ and $u=(v_1/x)w_1 \plus (v_2/y)w_2$.
     Then $\Gamma=\Gamma_1,\Gamma_2$, $v_1$ is a proof of $B_1$ in the context $\Gamma_1$, $v_2$ is a proof of $B_2$ in the context $\Gamma_1$, $w_1$ is a proof of $A$ in the context $\Gamma_2,x:B_1$, and $w_2$ is a proof of $A$ in the context $\Gamma_2,y:B_2$. Thus, by \autoref{lem:substitutionLin}, $(v_1/x)w_1$ is a proof of $A$ in the context $\Gamma_1,\Gamma_2=\Gamma$ and $(v_2/y)w_2$ is a proof of $A$ in the context $\Gamma_1,\Gamma_2=\Gamma$. Therefore, $(v_1/x)w_1 \plus (v_2/y)w_2$ is a proof of $A$ in the context $\Gamma$.

		\item Let $t=\elimplus^{nd}(\inl(v),\abstr{x}w_1,\abstr{y}w_2)$ and $u=(v/x)w_1$. This case is analogous to the \autoref{caseeliminl}.

		\item Let $t=\elimplus^{nd}(\inr(v),\abstr{x}w_1,\abstr{y}w_2)$ and $u=(v/y)w_2$. This case is analogous to the \autoref{caseeliminr}.

		\item Let $t=\elimplus^{nd}(\inlr(v_1,v_2),\abstr{x}w_1,\abstr{y}w_2)$ and $u=(v_1/x)w_1$.
     Then $\Gamma=\Gamma_1,\Gamma_2$, $v_1$ is a proof of $B_1$ in the context $\Gamma_1$, and $w_1$ is a proof of $A$ in the context $\Gamma_2,x:B_1$. Thus, by \autoref{lem:substitutionLin}, $(v_1/x)w_1$ is a proof of $A$ in the context $\Gamma_1,\Gamma_2=\Gamma$.

		\item Let $t=\elimplus^{nd}(\inlr(v_1,v_2),\abstr{x}w_1,\abstr{y}w_2)$ and $u=(v_2/y)w_2$. This case is analogous to the previous one.

		\item Let $t={a.\star} \plus b.\star$ and $u=(a+b).\star$. Then $\Gamma$ is empty and $A=\one$, and we have that $(a+b).\star$ is a proof of $\one$ in empty context.

		\item Let $t=(\lambda \abstr{x}v) \plus (\lambda \abstr{x}w)$ and $u=\lambda \abstr{x}(v \plus w)$. Then $A=A_1\multimap A_2$, $v$ is a proof of $A_2$ in the context $\Gamma,x:A_1$, and $w$ is a proof of $A_2$ in the context $\Gamma,x:A_1$. Hence, $(v \plus w)$ is a proof of $A_2$ in the context $\Gamma,x:A_1$, and thus, $\lambda \abstr{x}(v \plus w)$ is a proof of $A_1\multimap A_2$ in the context $\Gamma$.

		\item\label{caseinlinl} Let $t=\inl(v_1) \plus \inl(v_2)$ and $u=\inl(v_1 \plus v_2)$. Then $A=B\oplus C$, $v_1$ is a proof of $B$ in the context $\Gamma$, and $v_2$ is a proof of $B$ in the context $\Gamma$. Hence, $v_1 \plus v_2$ is a proof of $B$ in the context $\Gamma$, and thus, $\inl(v_1 \plus v_2)$ is a proof of $B\oplus C$ in the context $\Gamma$.

		\item Let $t=\inl(v) \plus \inr(w)$ and $u=\inlr(v,w)$. Then $A=B\oplus C$, $v$ is a proof of $B$ in the context $\Gamma$, and $w$ is a proof of $C$ in the context $\Gamma$. Hence, $\inlr(v,w)$ is a proof of $B\oplus C$ in the context $\Gamma$.

		\item\label{caseinlinlr} Let $t=\inl(v_1) \plus \inlr(v_2,w)$ and $u=\inlr(v_1 \plus v_2,w)$. Then $A=B\oplus C$, $v_1$ is a proof of $B$ in the context $\Gamma$, $v_2$ is a proof of $B$ in the context $\Gamma$, and $w$ is a proof of $C$ in the context $\Gamma$. Hence, $v_1 \plus v_2$ is a proof of $B$ in the context $\Gamma$, and thus, $\inlr(v_1 \plus v_2,w)$ is a proof of $B\oplus C$ in the context $\Gamma$.

		\item Let $t=\inr(w) \plus \inl(v)$ and $u=\inlr(v,w)$. Then $A=B\oplus C$, $v$ is a proof of $B$ in the context $\Gamma$, and $w$ is a proof of $C$ in the context $\Gamma$. Hence, $\inlr(v,w)$ is a proof of $B\oplus C$ in the context $\Gamma$.

		\item Let $t=\inr(w_1) \plus \inr(w_2)$ and $u=\inr(w_1 \plus w_2)$. This case is analogous to the \autoref{caseinlinl}.

		\item Let $t=\inr(w_1) \plus \inlr(v,w_2)$ and $u=\inlr(v,w_1 \plus w_2)$. This case is analogous to the \autoref{caseinlinlr}.

		\item Let $t=\inlr(v_1,w) \plus \inl(v_2)$ and $u=\inlr(v_1\plus v_2,w)$. Thi case is analogous to the \autoref{caseinlinlr}.

		\item Let $t=\inlr(v,w_1) \plus \inr(w_2)$ and $u=\inlr(v,w_1 \plus w_2)$. This case is analogous to the \autoref{caseinlinlr}.

		\item Let $t=\inlr(v_1,w_1) \plus \inlr(v_2,w_2)$ and $u=\inlr(v_1 \plus v_2,w_1 \plus w_2)$. Then $A=B\oplus C$, $v_1$ is a proof of $B$ in the context $\Gamma$, $v_2$ is a proof of $B$ in the context $\Gamma$, $w_1$ is a proof of $C$ in the context $\Gamma$, and $w_2$ is a proof of $C$ in the context $\Gamma$. Hence, $v_1 \plus v_2$ is a proof of $B$ in the context $\Gamma$, $w_1 \plus w_2$ is a proof of $C$ in the context $\Gamma$, and thus, $\inlr(v_1 \plus v_2,w_1 \plus w_2)$ is a proof of $B\oplus C$ in the context $\Gamma$.

		\item Let $t=a \bullet b.\star$ and $u=(a \times b).\star$. Then $\Gamma$ is empty and $A=\one$, and we have that $(a \times b).\star$ is a proof of $\one$ in empty context.

		\item Let $t=a \bullet \lambda \abstr{x} v$ and $u=\lambda \abstr{x} a \bullet v$. Then $A=B\multimap C$, $v$ is a proof of $C$ in the context $\Gamma,x:B$. Hence, $a \bullet v$ is a proof of $C$ in the context $\Gamma,x:B$, and thus, $\lambda \abstr{x} a \bullet v$ is a proof of $B\multimap C$ in the context $\Gamma$.

		\item Let $t=a \bullet \inl(v)$ and $u=\inl(a \bullet v)$. Then $A=B\oplus C$, $v$ is a proof of $B$ in the context $\Gamma$. Hence, $a \bullet v$ is a proof of $B$ in the context $\Gamma$, and thus, $\inl(a \bullet v)$ is a proof of $B\oplus C$ in the context $\Gamma$.

		\item Let $t=a \bullet \inr(v)$ and $u=\inr(a \bullet v)$. This case is analogous to the previous one.

		\item Let $t=a \bullet \inlr(v,w) $ and $u=\inlr(a \bullet v,a \bullet w)$. Then $A=B\oplus C$, $v$ is a proof of $B$ in the context $\Gamma$, and $w$ is a proof of $C$ in the context $\Gamma$. Hence, $a \bullet v$ is a proof of $B$ in the context $\Gamma$, $a \bullet w$ is a proof of $C$ in the context $\Gamma$, and thus, $\inlr(a \bullet v,a \bullet w)$ is a proof of $B\oplus C$ in the context $\Gamma$.
    \qedhere
  \end{enumerate}
\end{proof}

\subsection{Proof of \autoref{introductionslinear}}

\begin{proof}[Proof of \autoref{introductionslinear}]
Let $t$ be a closed irreducible proof of some proposition $A$. We prove,
by induction on the structure of $t$ that $t$ is an introduction.

As the proof $t$ is closed, it is not a variable.

It cannot be a sum $u \plus v$, as if it were $u$ and $v$ would be
closed irreducible proofs of the same proposition, hence, by induction
hypothesis, they would either be both introductions of $\one$, both
introductions of $\multimap$, or both introductions of $\oplus$, and
the proof $t$ would be reducible.

It cannot be an elimination as if it were of the form $\elimone(u,v)$,
$u~v$, or
$\elimplus(u,\abstr{x}v,\abstr{y}w)$, then $u$ would a closed
irreducible proof, hence, by induction hypothesis, it would an
introduction and the proof $t$ would be reducible.  

Hence, it is an introduction. \qedhere
\end{proof}

\section{Omitted proofs in \autoref{seclinearity}}\label{app:linearity}
\subsection{Proof of \autoref{thm:converse}}
To prove this theorem, we start by proving some algebraic properties of proof-terms, in the following lemma.

\begin{lem} \label{vecstructure}
  If $A \in {\mathcal V}$ and $t$, $t_1$, $t_2$, and $t_3$ are closed proofs of
  $A$, then
    \begin{enumerate}
      \item $(t_1 \plus t_2) \plus t_3 \equiv t_1 \plus (t_2 \plus t_3)$
      \item $t_1 \plus t_2 \equiv t_2 \plus t_1$
      \item $a \bullet b \bullet t \equiv (a \times b) \bullet t$
      \item $a \bullet (t_1 \plus t_2) \equiv a \bullet t_1 \plus a \bullet t_2$
      \item $(a + b) \bullet t \equiv a \bullet t \plus b \bullet t$
    \end{enumerate}
\end{lem}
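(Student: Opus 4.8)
The plan is to reduce everything to the field axioms of $\langle {\mathbb C}, +, \times \rangle$ together with the normalisation and introduction properties already established; one cannot simply appeal to \autoref{parallelsum} and vector arithmetic, because the representation of a closed proof of a proposition in ${\mathcal V}$ by a vector is not injective (for instance $\inl(1.\star)$ and $\inlr(1.\star,0.\star)$ both denote $\left(\begin{smallmatrix}1\\0\end{smallmatrix}\right)$), so the identities really have to be proved at the level of normal forms. First I note that the fragment of the quantum in-left-right-calculus without $\elimplus^{nd}$ is strongly normalising (\autoref{thm:termination}) and confluent (it is left linear and has no critical pairs, see \autoref{subsec:properties}); hence every proof has a unique normal form, $\equiv$ is the relation ``having the same normal form'', it is a congruence, and it is stable under reduction. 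Consequently, in each of the five identities I may assume $t, t_1, t_2, t_3$ to be in normal form, and by \autoref{introductionslinear} such a closed normal proof of a proposition in ${\mathcal V}$ is $a.\star$ when the proposition is $\one$, and is of one of the forms $\inl(n_1)$, $\inr(n_2)$, $\inlr(n_1,n_2)$ when the proposition is $A_1 \oplus A_2$, where the $n_i$ are closed normal proofs of $A_i$.

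Then I prove the five identities by induction on the structure of $A \in {\mathcal V}$. In the base case $A = \one$ each $t_i$ is some $c_i.\star$, and a single step of rule \eqref{ll:rusumstar} or \eqref{ll:rubulletstar} on each side turns the identity into an instance of associativity of $+$ (item~1), commutativity of $+$ (item~2), associativity of $\times$ (item~3), or distributivity (items~4 and~5). In the inductive case $A = A_1 \oplus A_2$ I do a case analysis on which of the three $\oplus$-introduction forms each of $t, t_1, t_2, t_3$ has; in every case I normalise both sides using the commutation rules \eqref{ll:rusuminl} to \eqref{ll:rusuminlr} (for $\plus$) and \eqref{ll:rubulletinl} to \eqref{ll:rubulletinlr} (for $\bullet$), obtaining on both sides a proof of the same shape whose left and right components are built from the components of the $t_i$ by $\plus$ and $a\bullet$; identifying these components is exactly the induction hypothesis for $A_1$ and for $A_2$, and one concludes because $\equiv$ is a congruence. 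For example, for item~1 with $t_1 \lras \inl(n_1)$, $t_2 \lras \inr(n_2)$, $t_3 \lras \inlr(m_1,m_2)$, both sides reduce to $\inlr(n_1 \plus m_1, n_2 \plus m_2)$; with $t_1,t_2,t_3$ all of the form $\inl(-)$ the two sides reduce to $\inl((n_1 \plus m_1)\plus k_1)$ and $\inl(n_1 \plus (m_1 \plus k_1))$, equal by the induction hypothesis (item~1 for $A_1$).

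The only difficulty is the size of the case analysis for items~1 and~2 (up to $3^3$ and $3^2$ cases), many of which mix summands with differing ``supports'' --- an $\inl$-term added to an $\inr$-term, and so on. The efficient bookkeeping is to attach to each closed normal proof of $A_1 \oplus A_2$ its support ($\{1\}$, $\{2\}$, or $\{1,2\}$, according as it is an $\inl$, $\inr$, or $\inlr$) together with its present components; the $\plus$-commutation rules then say that the normal form of a sum has support the union of the supports, and each present component the $\plus$-sum, in whatever bracketing the rules produce, of the corresponding present components of the summands. Since union of supports is associative and commutative on the nose, and $\plus$-summation of components is associative and commutative up to $\equiv$ by the induction hypotheses (items~1 and~2 for $A_1$ and $A_2$), items~1 and~2 follow; items~3, 4 and~5 are dispatched the same way with only two or four cases each. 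Thus the sole obstacle is combinatorial bulk, with no conceptual subtlety beyond the induction on $A$.
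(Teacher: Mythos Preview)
Your proposal is correct and follows essentially the same approach as the paper: induction on $A\in{\mathcal V}$, base case via the field axioms through rules \eqref{ll:rusumstar} and \eqref{ll:rubulletstar}, and inductive case by case analysis on the $\oplus$-introduction forms using the commutation rules \eqref{ll:rusuminl}--\eqref{ll:rusuminlr} and \eqref{ll:rubulletinl}--\eqref{ll:rubulletinlr} together with the induction hypothesis. Your explicit appeal to confluence and the introduction property to justify working with normal forms, and your ``support'' bookkeeping to organise the $3^2$ and $3^3$ sub-cases, are welcome presentational improvements over the paper's proof, which simply exhibits representative cases, but the underlying argument is the same.
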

\begin{proof}
  ~
  \begin{enumerate}
    \item By induction on $A$. If $A = \one$, then $t_1$, $t_2$, and $t_3$
      reduce respectively to $a.\star$, $b.\star$, and $c.\star$. We have
      $$
	(t_1 \plus t_2) \plus t_3 \lras ((a + b) + c).\star = (a + (b +
	c)).\star \llas t_1 \plus (t_2 \plus t_3)
      $$
      If $A = A_1 \oplus  A_2$, then 
      $t_1$, $t_2$, and $t_3$ reduce either to a term of the form $\inl$, $\inr$, or $\inlr$, leading to twenty-seven cases.
      As all the cases are similar we only give two examples.
      \begin{itemize}
	\item 
	  If $t_1 \lra^* \inlr(u_1,v_1)$, $t_2 \lra^* \inlr(u_2,v_2)$, and
	  $t_3 \lra^* \inlr(u_3,v_3)$, using the induction hypothesis, we
	  have
	  \begin{align*}
	    (t_1 \plus t_2) \plus t_3 \lras &\ 
	    \inlr((u_1 \plus u_2) \plus u_3,(v_1 \plus v_2) \plus v_3)\\
	    \equiv &\ 
	    \inlr(u_1 \plus (u_2 \plus u_3),v_1 \plus (v_2 \plus v_3))
	    \llas t_1 \plus (t_2 \plus t_3)
	  \end{align*}
	\item
	  If $t_1 \lra^* \inl(u_1)$, $t_2 \lra^* \inr(v_2)$, and $t_ 3
	  \lra^* \inl(u_3)$, we have
	  \begin{align*}
	    (t_1 \plus t_2) \plus t_3 \lras &\ 
	    \inlr(u_1 \plus u_3,v_1)
	    \llas t_1 \plus (t_2 \plus t_3)
	  \end{align*}
      \end{itemize}

    \item By induction on $A$.  If $A = \one$, then $t_1$ and $t_2$ reduce
      respectively to $a.\star$ and $b.\star$. We have
      $$
	t_1 \plus t_2 \lras (a + b).\star = (b + a).\star \llas t_2 \plus t_1
      $$
      If $A = A_1 \oplus  A_2$, then 
      $t_1$ and $t_2$ reduce either to a term of the form $\inl$, $\inr$, or $\inlr$, leading to nine cases.  As all the cases are similar we only give two examples.
      \begin{itemize}
	\item 
	  If $t_1 \lra^* \inlr(u_1,v_1)$ and $t_2 \lra^*
	  \inlr(u_2,v_2)$, using the induction hypothesis, we have
	  $$
	    t_1 \plus t_2 \lras \inlr(u_1 \plus u_2,v_1 \plus v_2) \equiv
	    \inlr(u_2 \plus u_1,v_2 \plus v_1) \llas t_2 \plus t_1
	  $$

	\item
	  If $t_1 \lra^* \inl(u_1)$ and $t_2 \lra^* \inr(v_2)$, we have
	  $$
	    t_1 \plus t_2 \lras \inlr(u_1,v_2)  \llas t_2 \plus t_1
	  $$
      \end{itemize}

    \item By induction on $A$.  If $A = \one$, then $t$ reduces to
      $c.\star$. We have
      $$
	a \bullet b \bullet t \lras (a \times (b \times c)).\star
	= ((a \times b) \times c).\star \llas (a \times b) \bullet t
      $$
      If $A = A_1 \oplus  A_2$, then $t$ reduces to
      a term of the form $\inl$, $\inr$, or $\inlr$, leading to three cases.
      \begin{itemize}

	\item If $t \lra^* \inlr(u,v)$, using
	  the induction hypothesis, we have
	  $$
	    a \bullet b \bullet t \lras \inlr(a \bullet b \bullet u,a \bullet b
	    \bullet v) \equiv \inlr((a \times b) \bullet u,(a \times b) \bullet v)
	    \llas (a \times b) \bullet t
	  $$

	\item If $t \lra^* \inl(u)$, using
	  the induction hypothesis, we have
	  $$
	    a \bullet b \bullet t \lras \inl(a \bullet b \bullet u)
	    \equiv \inl((a \times b) \bullet u)
	    \llas (a \times b) \bullet t
	  $$

	\item If $t \lra^* \inr(v)$, using
	  the induction hypothesis, we have
	  $$
	    a \bullet b \bullet t \lras \inr(a \bullet b
	    \bullet v) \equiv \inr((a \times b) \bullet v)
	    \llas (a \times b) \bullet t
	  $$
      \end{itemize}

    \item By induction on $A$.  If $A = \one$, then $t_1$ and $t_2$ reduce
      respectively to $b.\star$ and $c.\star$. We have
      $$
	a \bullet (t_1 \plus t_2) \lras (a \times (b + c)).\star
	= (a \times b + a \times c).\star \llas  a \bullet t_1 \plus a \bullet t_2
      $$
      If $A = A_1 \oplus  A_2$, then $t_1$ and $t_2$
      reduce either to a term of the form $\inl$, $\inr$, or $\inlr$, leading to nine cases.  As all the cases are similar we only give two examples.
      \begin{itemize}

	\item If $t_1 \lra^*
	  \inlr(u_1,v_1)$ and 
	  $t_2 \lra^* \inlr(u_2,v_2)$,
	  using the induction
	  hypothesis, we have
	  \begin{align*}
	    a \bullet (t_1 \plus t_2) \lras &\ \inlr(a \bullet (u_1 \plus u_2),a \bullet (v_1 \plus v_2))\\
	    \equiv &\ \inlr(a \bullet u_1 \plus a \bullet u_2,a \bullet v_1 \plus a \bullet v_2) \llas a \bullet t_1 \plus a \bullet t_2
	  \end{align*}

	\item If $t_1 \lra^* \inl(u_1)$ and $t_2 \lra^* \inr(v_2)$, we have
	  \begin{align*}
	    a \bullet (t_1 \plus t_2) \lras &\ \inlr(a \bullet u_1,a
	    \bullet v_2) \llas a \bullet
	    t_1 \plus a \bullet t_2
	  \end{align*}

      \end{itemize}

    \item By induction on $A$.  If $A = \one$, then $t$ reduces to
      $c.\star$. We have
      $$
	(a + b) \bullet t \lras ((a + b) \times c).\star =
	(a \times c + b \times c).\star \llas a \bullet t \plus b \bullet t
      $$
      If $A = A_1 \oplus  A_2$, then $t$ reduces
      to
      a term of the form $\inl$, $\inr$, or $\inlr$, leading to three cases.
      \begin{itemize}

	\item If $t \lra^* \inlr(u,v)$.  Using
	  the induction hypothesis, we have
	  $$
	    (a + b) \bullet t  \lras \inlr((a + b) \bullet u,(a + b) \bullet v)
	    \equiv \inlr(a \bullet u \plus b \bullet u,a \bullet v \plus b \bullet v)
	    \llas a \bullet t \plus b \bullet t
	  $$
	\item If $t \lra^* \inl(u)$.  Using
	  the induction hypothesis, we have
	  $$
	    (a + b) \bullet t  \lras \inl((a + b) \bullet u)
	    \equiv \inl(a \bullet u \plus b \bullet u)
	    \llas a \bullet t \plus b \bullet t
	  $$

	\item If $t \lra^* \inr(v)$.  Using
	  the induction hypothesis, we have
	  \[
	    (a + b) \bullet t  \lras \inr((a + b) \bullet v)
	    \equiv \inr(a \bullet v \plus b \bullet v)
	    \llas a \bullet t \plus b \bullet t
	    \tag*{\qedhere}
	  \]
      \end{itemize}
  \end{enumerate}
\end{proof}
\begin{rem}
As the zero vector of, for example $\mathbb{C}^2$, has several representations:
$\inlr(0.\star, 0.\star)$, $\inl(0.\star)$, and $\inr(0.\star)$, the properties
of proofs of vector propositions (\autoref{vecstructure}) are those of a
commutative semi-module, rather than those of a vector space. If we represent
the vectors with a connective, such as conjunction or the ``sup'' connective
\cite{DiazcaroDowekTCS23}, where we have weaker introduction rules
(only one introduction rule, corresponding to $\inlr$, but no $\inl$
and $\inr$), we get a vector space structure~\cite[Lemma
  3.4]{DiazcaroDowekMSCS24}. Yet, the three misssing properties $t + 0
= t$, $t + (-t) = 0$, and $1 . t = t$ are not useful to represent
vectors and matrices.
\end{rem}

We also need to introduce a notion of {\em
  elimination context} that is a standard generalization of the notion
of head variable.  In the $\lambda$-calculus, we can decompose a term
$t$ as a sequence of applications $t = u~v_1~\ldots~v_n$, with terms
$v_1, \ldots, v_n$ and a term $u$, which is not an application.  Then
$u$ may either be a variable, in which case it is the head variable of
the term, or an abstraction.  Similarly, any proof in the quantum
in-left-right-calculus without $\elimplus^{nd}$ can be decomposed into
a sequence of elimination rules, forming an elimination context, and a
proof $u$ that is either a variable, an introduction, a sum, or a
product.

\begin{defi}[Elimination context]
  An elimination context is a proof with a single free variable, written
  $\_$, that is a proof in the language
  $$
    K = \_
    \mid \elimone(K,u)
    \mid K~u
    \mid \elimplus(K,\abstr{x}r,\abstr{y}s)
  $$
  where $u$ is a closed proof,
  $FV(r) \subseteq \{x\}$, and $FV(s) \subseteq \{y\}$.
\end{defi}

To prove \autoref{thm:converse}, we will first prove that for any $A$, and any $B \in {\mathcal V}$, if $t$ a
  closed proof of $A \multimap B$, that does not contain the symbol $\elimplus^{nd}$, and $u_1$ and $u_2$ and two closed proofs of
  $A$. Then
  \(
    t~(u_1 \plus u_2) \equiv t~u_1 \plus t~u_2
  \) and 
  \(
    t~(a\bullet u_1) \equiv a\bullet t~u_1
  \) (\autoref{cor:cor1a}).
But, it is more
convenient to
prove first the following equivalent statement (\autoref{linearity}): For a proof $t$ of $B$
such that $x:A \vdash t:B$
$$
  (u_1 \plus u_2/x)t \equiv (u_1/x)t \plus (u_2/x)t
  \quad\textrm{and}\quad (a \bullet u_1/x)t \equiv a \bullet (u_1/x)t
$$

We proceed by induction on the measure $\mu(t)$ of the proof $t$, but
the case analysis is non-trivial. Indeed, when $t$ is an elimination,
for example when $t = t_1~t_2$, the variable $x$ must occur in $t_1$,
and we would like to apply the induction hypothesis to this proof.
But we cannot because $t_1$ is a proof of an implication, that is not
in $\mathcal{V}$.  This leads us to first decompose the proof $t$ into
a proof of the form $(t'/\_)K$ where $K$ is an elimination context and
$t'$ is either the variable $x$, an introduction, a sum, or a product,
and analyse the different possibilities for $t'$. The cases where $t'$
is an introduction, a sum or a product are easy, but the case where it
is the variable $x$, that is where $t = (x/\_)K$, is more complex.
Indeed, in this case, we need to prove
$$
(u_1 \plus u_2/\_)K \equiv (u_1/\_)K \plus (u_2/\_)K
\qquad\textrm{and}\qquad
(a \bullet u_1/\_)K \equiv a \bullet (u_1/\_)K
$$
and this leads to a second case analysis where we consider the last elimination
rule of $K$ and how it interacts with $u_1$ and $u_2$.

For example, when $K = (\elimplus(\_,\abstr{y}r,\abstr{z}s)/\_)K_1$,
then $u_1$ and $u_2$ are closed proofs of an additive disjunction
$\oplus$, thus they reduce to two introductions, for example
$\inlr(u_{11},u_{12})$ and $\inlr(u_{21},u_{22})$, and $(u_1 \plus
u_2/\_) K$ reduces to $(u_{11} \plus u_{21}/y)(r/\_)K_1 \plus (u_{12}
\plus u_{23}/z)(s/\_)K_1$.  And we conclude by applying the induction
hypothesis to the irreducible form of $(r/\_)K_1$ and $(s/\_)K_1$.

In the $\lambda$-calculus, we can decompose a term $t$ as a sequence of
applications $t = u~v_1~\ldots~v_n$, with terms $v_1, \ldots, v_n$ and a term
$u$, which is a variable or an abstraction. Similarly, any proof in the quantum
in-left-right-calculus without $\elimplus^{nd}$ can be decomposed into a
sequence of elimination rules, forming an elimination context, and a proof $u$
that is either a variable, an introduction, a sum, or a product.

\begin{lem}[Decomposition of a proof]
  \label{elim}
  If $t$ is an irreducible proof such that $x:C \vdash t:A$, then there
  exist an elimination context $K$, a proof $u$, and a proposition $B$,
  such that $\_:B \vdash K:A$, $x:C \vdash u:B$, $u$ is either the
  variable $x$, an introduction, a sum, or a product, and $t = K\{u\}$.
\end{lem}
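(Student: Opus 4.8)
The plan is to proceed by induction on the size (equivalently, the structure) of the irreducible proof $t$, performing a case analysis on its outermost constructor. Throughout, we work inside the fragment of the quantum in-left-right-calculus without $\elimplus^{nd}$, so the only eliminations that can occur in $t$ are $\elimone$, application, and $\elimplus$, which are precisely the ones permitted by the grammar of elimination contexts.

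The base cases are immediate. If $t$ is a variable, then, since the typing context is the single hypothesis $x:C$, this variable must be $x$; and if $t$ is an introduction ($a.\star$, $\lambda\abstr{y}t'$, $\inl(t')$, $\inr(t')$, or $\inlr(t',t'')$), a sum $t_1\plus t_2$, or a product $a\bullet t_1$, there is nothing to peel off. In all of these cases we set $K = \_$, $u = t$, and $B = A$, and every required condition holds trivially.

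The heart of the argument is the three elimination cases $t = \elimone(t_1,t_2)$, $t = t_1\,t_2$, and $t = \elimplus(t_1,\abstr{y}r,\abstr{z}s)$. In each, the relevant typing rule splits the context, and since the whole context is just $x:C$, either $x$ occurs in the scrutinee $t_1$ (the other argument being then closed, resp. closed, resp. with $r$ and $s$ having only $y$, $z$ free), or $x$ occurs in an argument and $t_1$ is closed. In the latter situation $t_1$ is a closed irreducible proof, being a subproof of the irreducible $t$, hence by the introduction property (\autoref{introductionslinear}) it is an introduction: of $\one$ in the first case, of an implication in the second, of $\oplus$ in the third. But then $t$ itself would be a redex, by rule \eqref{ll:ruelimone}, by \eqref{ll:rubeta}, or by one of \eqref{ll:ruelimoplusinl}, \eqref{ll:ruelimoplusinr}, \eqref{ll:ruelimoplusinlr}, contradicting the irreducibility of $t$. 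So only the former situation can occur: $t_1$ is an irreducible proof in the context $x:C$, and the induction hypothesis supplies an elimination context $K_1$, a proof $u$, and a proposition $B$ with $K_1$ well typed with hypothesis $\_:B$ and conclusion the type of $t_1$, $x:C \vdash u:B$, $u$ a variable, introduction, sum, or product, and $t_1 = K_1\{u\}$. We then take $K = \elimone(K_1,t_2)$, $K = K_1\,t_2$, or $K = \elimplus(K_1,\abstr{y}r,\abstr{z}s)$ respectively; by the side conditions just noted this is a legitimate elimination context, it is well typed with hypothesis $\_:B$ and conclusion $A$ by reapplying the same typing rule (with $\_:B$ playing the role of the head context and the empty context that of the second one), and $t = K\{u\}$ since grafting $u$ into $\_$ affects only the $K_1$ component.

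The step I expect to carry the weight is exactly this dichotomy in the elimination cases: the appeal to the introduction property to rule out a closed scrutinee, together with the bookkeeping needed to check that prepending the outer elimination to $K_1$ respects the side conditions of the elimination-context grammar—closedness of the non-scrutinee argument of $\elimone$ and of application, and $FV(r)\subseteq\{y\}$, $FV(s)\subseteq\{z\}$ for $\elimplus$. Both of these are precisely what the linearity of the typing rules delivers once one knows that $x$ must reside in the scrutinee; everything else is routine.
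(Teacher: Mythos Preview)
Your proposal is correct and follows essentially the same approach as the paper: structural induction, trivial base cases with $K=\_$, and in each elimination case the argument that the scrutinee $t_1$ cannot be closed (else by \autoref{introductionslinear} it would be an introduction and $t$ would reduce), so $x:C \vdash t_1$ and the induction hypothesis applies to $t_1$, after which one prepends the outer elimination to $K_1$. Your write-up is in fact slightly more explicit than the paper's in spelling out the context-splitting dichotomy and the side conditions of the elimination-context grammar.
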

\begin{proof}
  By induction on the structure of $t$.

\begin{itemize}
    \item If $t$ is the variable $x$, an introduction, a sum, or a
     product, we take $K = \_$, $u = t$, and $B = A$.

    \item If $t = \elimone(t_1,t_2)$, then $t_1$ is not a closed proof as
      otherwise it would be a closed irreducible proof of $\one$, hence,
      by \autoref{introductionslinear}, it would be an introduction and $t$
      would not be irreducible. Thus, by the inversion property, $x:C
      \vdash t_1:\one$ and $\vdash t_2:A$.

      By induction hypothesis, there exist $K_1$, $u_1$ and $B_1$ such
      that $\_:B_1 \vdash K_1:\one$, $x:C \vdash u_1:B_1$, and $t_1 =
      K_1\{u_1\}$.  We take $u = u_1$, $K = \elimone(K_1,t_2)$, and $B =
      B_1$.  We have $\_:B \vdash K:A$, $x:C \vdash u:B$, and $K\{u\} =
      \elimone(K_1\{u_1\},t_2) = t$.
      
    \item If $t = t_1~t_2$,
          we apply the same method as for the case $t = \elimone(t_1,t_2)$.

    \item If $t = \elimplus(t_1,\abstr{y}t_2,\abstr{z}t_3)$, then
      $t_1$ is not a closed proof as otherwise it would be a closed
      irreducible proof of an additive disjunction $\oplus$, hence, by
      the introduction property, \autoref{introductionslinear}, it
      would be an introduction, and $t$ would not be
      irreducible. Thus, by the inversion property, $x:C \vdash
      t_1:D_1 \oplus D_2$, $y:D_1 \vdash t_2:A$, and $z:D_2 \vdash
      t_3:A$.

      By induction hypothesis, there exist $\_:B_1 \vdash K_1:D_1 \oplus
      D_2$, $x:C \vdash u_1:B_1$, and $t_1 = K_1\{u_1\}$.  We take $u =
      u_1$, $K = \elimplus(K_1,\abstr{y}t_2,\abstr{z} t_3)$, and
      $B = B_1$.  We have
      $\_:B \vdash K:A$, $x:C \vdash u:B$, and $K\{u\} =
      \elimplus(K_1\{u_1\},\abstr{y}t_2,\abstr{z}t_3) = t$.
      \qedhere
\end{itemize}
\end{proof}

A second lemma shows that we can always decompose an elimination
context $K$ different from~$\_$ into a smaller elimination context
$K_1$ and a last elimination rule $K_2$.  This is similar to the fact
that we can always decompose a non-empty list into a smaller list and
its last element.

\begin{lem}[Decomposition of an elimination context]
  \label{horrible}
  If $K$ is an elimination context such that $\_:A \vdash K:B$
  and $K \neq \_$, then $K$ has the form $K_1\{K_2\}$
  where $K_1$ is an elimination context and
  $K_2$ is an elimination context formed with a single
  elimination rule, that is the elimination rule of the top symbol of $A$.
\end{lem}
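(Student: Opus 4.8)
The plan is to prove this by structural induction on the elimination context $K$, following its grammar
\[
  K = \_ \mid \elimone(K,u) \mid K~u \mid \elimplus(K,\abstr{x}r,\abstr{y}s).
\]
The case $K = \_$ is ruled out by hypothesis, so $K$ is obtained by wrapping a smaller elimination context $K'$ in one elimination frame: either $\elimone(K',u)$, or $K'~u$, or $\elimplus(K',\abstr{x}r,\abstr{y}s)$, where $u$ is closed and $FV(r) \subseteq \{x\}$, $FV(s) \subseteq \{y\}$. First I would observe, by inversion on the typing judgment $\_:A \vdash K:B$ together with the fact that the extra arguments $u$, $r$, $s$ are closed (hence carry the empty context), that the hole keeps its type inside $K'$: one gets $\_:A \vdash K':B'$ for a suitable $B'$ (namely $\one$, an implication, or a sum, according to the frame). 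This is exactly what makes the induction hypothesis applicable to $K'$ with the \emph{same} $A$.

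Then I would split on whether $K' = \_$. If $K' = \_$, the context $K$ already consists of a single elimination rule applied directly to the hole, so I take $K_1 = \_$ and $K_2 = K$, whence $K_1\{K_2\} = K_2 = K$. A single further inversion on $\_:A \vdash K_2:B$ forces $A$ to have the top symbol matching the frame ($A = \one$ for $\elimone(\_,u)$, $A = C \multimap B$ for $\_~u$, $A = C \oplus D$ for $\elimplus(\_,\abstr{x}r,\abstr{y}s)$), so $K_2$ is precisely the elimination rule of the top symbol of $A$. If instead $K' \neq \_$, the induction hypothesis gives $K' = K'_1\{K'_2\}$ with $K'_1$ an elimination context and $K'_2$ a single elimination rule equal to the elimination rule of the top symbol of $A$. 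Because $u$, $r$, $s$ are closed, substitution for the hole commutes with the surrounding frame, so $K$ equals the frame applied to $K'_1\{K'_2\}$, which in turn equals $K_1\{K'_2\}$ where $K_1$ is the frame applied to $K'_1$ — still an elimination context, since $K'_1$ is. Taking $K_2 = K'_2$ finishes this case, as $K_2$ is untouched and hence still the elimination rule of the top symbol of $A$.

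There is no deep difficulty; the statement is essentially a ``last-element decomposition'' of the stack of eliminations recorded by $K$. The step requiring the most care is the typing bookkeeping: checking that each elimination frame places the hole in the principal premise of the added rule, so that the type $A$ of the hole, and therefore the identity of its top connective, is preserved all the way down the recursion and is pinned down exactly once, by inversion, at the innermost elimination rule.
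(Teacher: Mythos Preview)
Your proof is correct and follows essentially the same approach as the paper: structural induction on $K$, peeling off the outermost elimination frame, and recursing on the inner context while noting that the type $A$ of the hole is preserved by inversion. The paper's presentation is terser (writing the outer frame as $L_1$ and the inner context as $L_2$, so $K = L_1\{L_2\}$), but the case split and the associativity step $L_1\{K'_1\{K'_2\}\} = (L_1\{K'_1\})\{K'_2\}$ are identical to yours.
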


\begin{proof}
  As $K$ is not $\_$, it has the form $K = L_1\{L_2\}$.
  If $L_2 = \_$, we take $K_1 = \_$, $K_2 = L_1$ and, as the proof is
  well-formed, $K_2$ must be an elimination of the top symbol of $A$.
  Otherwise, by induction hypothesis, $L_2$ has the form $L_2 = K'_1\{K'_2\}$,
  and $K'_2$ is an elimination of the top symbol of $A$.
  Hence, $K = L_1\{K'_1\{K'_2\}\}$. We take $K_1 = L_1\{K'_1\}$, $K_2 = K'_2$.
  \qedhere
\end{proof}

We are now ready to prove that
if $t$ is a closed proof of $A \multimap B$, $B \in {\mathcal V}$,
then 
$$
  t~(u_1 \plus u_2) \equiv t~u_1 \plus t~u_2
  \qquad\qquad\textrm{and}\qquad\qquad
  t~(a \bullet u_1) \equiv a \bullet t~u_1
$$
In fact, it is more convenient to prove first the following equivalent
statement: For a proof $t$ of $B$ such that $x:A \vdash t:B$
$$
  t\{u_1 \plus u_2\} \equiv t\{u_1\} \plus t\{u_2\}
  \qquad\qquad\textrm{and}\qquad\qquad
  t\{a \bullet u_1\} \equiv a \bullet t\{u_1\}
$$

For example, when $K = K_1 \{\elimplus(\_,\abstr{y}r,\abstr{z}s)\}$,
then $u_1$ and $u_2$ are closed proofs of an additive disjunction
$\oplus$, thus they reduce to two introductions, for example
$\inlr(u_{11},u_{12})$ and $\inlr(u_{21},u_{22})$, and $K\{u_1 \plus
u_2\}$ reduces to $K_1\{r\}\{u_{11} \plus u_{21}\} \plus
K_1\{s\}\{u_{12} \plus u_{23}\}$. So, we need to apply the induction
hypothesis to the irreducible form of $K_1 \{r\}$ and $K_1\{s\}$. To
prove that these proofs are smaller than $t$, we need
Lemmas~\ref{lem:msubst}, \ref{lem:mured1}, and \ref{lem:mured2}.

In fact, 
the case of the elimination of the
connective $\one$ is simpler because no substitution occurs in $r$
in this case and the case of the elimination of the implication is
simpler because this rule is just the modus ponens and not the
generalized elimination rule of this connective. Thus, the only remaining
case is that of the elimination rule of the additive disjunction
$\oplus$.

\begin{thm}[Linearity]
  \label{linearity}
    Let $A$ and $B \in {\mathcal V}$ be propositions, $t$ a proof such that $x:A \vdash t:B$, and $u_1$ and $u_2$ and two closed proofs of $A$, then 
    $$
      t\{u_1 \plus u_2\} \equiv t\{u_1\} \plus t\{u_2\}
      \qquad\qquad\textrm{and}\qquad\qquad
      t\{a \bullet u_1\} \equiv a \bullet t\{u_1\}
    $$
\end{thm}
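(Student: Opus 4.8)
The plan is to prove both equations simultaneously by well-founded induction on the measure $\mu(t)$. Recall that $\equiv$ is convertibility; by confluence of the $\elimplus^{nd}$-free fragment it is a congruence stable under reduction, and since $t\{v\}$ inherits every reduction of $t$, it suffices to prove the statement for $t$ in normal form. This introduces no circularity, since the induction hypothesis will only ever be applied to proofs of strictly smaller measure, never to the normal form of $t$ itself. So I assume $t$ irreducible and decompose it, via \autoref{elim}, as $t = K\{u\}$ with $K$ an elimination context and $u$ equal to the variable $x$, an introduction, a sum, or a product.

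When $u$ is an introduction, irreducibility of $t$ together with $B\in\mathcal V$ forces $K = \_$ (any non-trivial elimination context would begin with the elimination of the introduced connective and thus create a redex, and an introduction of $\multimap$ could not have type in $\mathcal V$), so $t$ is $\inl(v)$, $\inr(v)$, or $\inlr(v,w)$ with components again of type in $\mathcal V$. Then $t\{u_1\plus u_2\}$ and $t\{u_1\}\plus t\{u_2\}$ reach the same head constructor using rules \eqref{ll:rusuminl}, \eqref{ll:rusuminr}, \eqref{ll:rusuminlr} (resp.\ \eqref{ll:rubulletinl}--\eqref{ll:rubulletinlr}), and I close the case by applying the induction hypothesis to the immediate subproofs $v$ (and $w$), which have strictly smaller $\mu$, and by congruence of $\equiv$. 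The cases where $u$ is a sum or a product are handled the same way, distributing the outer substitution and invoking the induction hypothesis on the summands or factors together with \autoref{vecstructure}.

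The substantial case is $u = x$, i.e.\ $t = K\{x\}$. If $K = \_$ both equations are immediate. Otherwise I write $K = K_1\{K_2\}$ by \autoref{horrible}, where $K_2$ is a single elimination of the top connective of $A$; since $A\in\mathcal V$, either $A = \one$ and $K_2 = \elimone(\_,v)$, or $A = A_1\oplus A_2$ and $K_2 = \elimplus(\_,\abstr yr,\abstr zs)$ with $y\!:\!A_1\vdash r\!:\!E$ and $z\!:\!A_2\vdash s\!:\!E$. The closed proofs $u_1,u_2$ of $A$ reduce to closed irreducible introductions, and firing the redex created when these meet $K_2$ (using \eqref{ll:ruelimone}, resp.\ one of \eqref{ll:ruelimoplusinl}--\eqref{ll:ruelimoplusinlr} after combining $u_1\plus u_2$ via the appropriate rule among \eqref{ll:rusuminl}--\eqref{ll:rusuminlr}) leaves me, in the $\oplus$ case, comparing $K_1\{(a_1\plus a_2/y)r\plus(b_1\plus b_2/z)s\}$ with $K_1\{(a_1/y)r\plus(b_1/z)s\}\plus K_1\{(a_2/y)r\plus(b_2/z)s\}$ (and, more simply, analogous comparisons for $A=\one$ and for the scalar equation). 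I then invoke the induction hypothesis on the normal forms of $K_1\{r\}$ and of $K_1\{s\}$ — proofs of $B\in\mathcal V$ in the single-variable contexts $y\!:\!A_1$ and $z\!:\!A_2$ with $A_1,A_2\in\mathcal V$ — which are strictly smaller for $\mu$: this follows from \autoref{lem:msubst} together with \autoref{lem:mured1} and \autoref{lem:mured2}, since the innermost $\elimplus$ contributes at least one more to $\mu(t)$ than $r$ resp.\ $s$ does, and $\mu$ is additive in the hole of an elimination context. Finally I recombine the four resulting subproofs, all of type $B\in\mathcal V$, using the associativity and commutativity of $\plus$ and the scalar laws from \autoref{vecstructure}, transporting the sum past the residual elimination context $K_1$; the nine shapes for the pair of introductions reached by $u_1$ and $u_2$ are all treated identically.

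The main obstacle is precisely this last subcase. Unlike the additive-sum calculus of \autoref{sec:inlrcalculus}, the innermost elimination here is $\elimplus$, so after reduction the context $K_1$ still wraps a sum, and the argument must both (i) certify that the proofs fed to the induction hypothesis — the normal forms of $K_1\{r\}$ and $K_1\{s\}$ — are strictly smaller for $\mu$, which is exactly why all three measure lemmas are needed, and (ii) push the sum and scalar past $K_1$, which is legitimate because $K_1$'s arguments in this situation are closed and $K_1$'s output type lies in $\mathcal V$, so that once the elimination context is consumed one is back in the $\mathcal V$-typed setting where \autoref{vecstructure} applies. Everything outside this case is routine bookkeeping.
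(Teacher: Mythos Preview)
Your proof follows the same route as the paper's: induction on $\mu(t)$ after normalising $t$, decomposition via \autoref{elim} into $K\{t'\}$, a case split on $t'$, and in the head-variable case peeling off the innermost eliminator $K_2$ via \autoref{horrible}, firing the created redex, and applying the induction hypothesis to the normal forms of $K_1\{r\}$ and $K_1\{s\}$ (with the measure argument you give). That is exactly the paper's scheme.

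There is one internal inconsistency to flag. You read the hypothesis as $A\in\mathcal V$ and use this to restrict $K_2$ to $\elimone(\_,v)$ or $\elimplus(\_,\abstr yr,\abstr zs)$, dropping the application case $\_~s$. But the step you call ``transporting the sum past the residual elimination context $K_1$'' is nothing other than an application of the induction hypothesis to $K_1$ itself, typed $\_{:}E\vdash K_1{:}B$ where $E$ is the result type of $K_2$; and $E$ is completely unconstrained (for instance when $K_2=\elimone(\_,v)$ it is the type of the closed proof $v$). If your inductive statement really carries the constraint that the source type lies in $\mathcal V$, the hypothesis is simply unavailable for $K_1$, and the transport step fails. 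The paper's proof does \emph{not} constrain the source type: despite the wording of the statement, it treats $A$ as arbitrary, explicitly handles the $\multimap$-elimination case for $K_2$, and applies the induction hypothesis to $K_1$ freely (this is also what the downstream \autoref{cor:cor1a}, which assumes only $B\in\mathcal V$, requires). Drop your use of $A\in\mathcal V$, add the easy third case $K_2=\_~s$ (one $\beta$-step plus the induction hypothesis on $K_1$), and your argument becomes complete and identical to the paper's.
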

\begin{proof}
Without loss of generality, we can assume that $t$ is irreducible.
  We proceed by induction on $\mu(t)$. 

  Using \autoref{elim}, the term $t$ can be decomposed as $K\{t'\}$ where $t'$ is either an introduction, a sum, a product, or the variable $x$.

  \begin{itemize}
    \item 
      If $t'$ is an introduction, as $t$ is irreducible, $K = \_$ and
      $t'$ is a proof of $B \in {\mathcal V}$, $t'$ is either
      $a.\star$, $\inl(t_1)$, $\inr(t_2)$, or $\inlr(t_1,t_2)$. However,
      since $a.\star$ is not a proof in $x:A$, it is
      $\inl(t_1)$, $\inr(t_2)$, or $\inlr(t_1,t_2)$. 
      Using the induction hypothesis with $t_1$ and with $t_2$
      ($\mu(t_1) < \mu(t')$, $\mu(t_2) < \mu(t')$),
      in the first case, we get 
      \begin{align*}
	t\{u_1 \plus u_2\}
	\equiv
	\inl(t_1\{u_1\} \plus t_1\{u_2\})
	\lla 
	t\{u_1\} \plus t\{u_2\}
      \end{align*}
      And
      \begin{align*}
	t\{a \bullet u_1\}
	\equiv \inl(a \bullet t_1\{u_1\})
	&
	\lla
	a \bullet t\{u_1\}
      \end{align*}

The second case is similar. In the third,      
      we get
      \begin{align*}
	t\{u_1 \plus u_2\}
	\equiv
	\inlr(t_1\{u_1\} \plus t_1\{u_2\},t_2\{u_1\} \plus t_2\{u_2\})
	\lla 
	t\{u_1\} \plus t\{u_2\}
      \end{align*}
      And
      \begin{align*}
	t\{a \bullet u_1\}
	\equiv \inlr(a \bullet t_1\{u_1\},a \bullet t_2\{u_1\})
	&
	\lla
	a \bullet t\{u_1\}
      \end{align*}

    \item If $t' = t_1 \plus t_2$, then using the induction hypothesis
      with $t_1$, $t_2$, and $K$ (notice that, by \autoref{lem:msubst}, $\mu(t_1) < \mu(t)$, $\mu(t_2) <
      \mu(t)$, and $\mu(K) < \mu(t)$) and \autoref{vecstructure} (1.,
      2., and 4.), we get
      \begin{align*}
	t\{u_1 \plus u_2\}
&	\equiv K\{(t_1\{u_1\} \plus t_1\{u_2\})
	\plus (t_2\{u_1\} \plus t_2\{u_2\})\}\\
	&\equiv 
	K\{( t_1\{u_1\} \plus t_2\{u_1\})
	\plus (t_1\{u_2\} \plus t_2\{u_2\})\}
	\equiv  
	t\{u_1\} \plus t\{u_2\}
      \end{align*}
      And 
      \begin{align*}
	t\{a \bullet u_1\}
&	\equiv K\{a \bullet t_1\{u_1\} \plus a \bullet t_2\{u_1\}\}
	\equiv K\{a \bullet (t_1\{u_1\} \plus t_2\{u_1\})\}
	\equiv 
	a \bullet t\{u_1\}
      \end{align*}

    \item
      If $t' = b \bullet t_1$, then using the induction hypothesis
      with $t_1$ and $K$ (notice that, by \autoref{lem:msubst}, $\mu(t_1) < \mu(t)$, $\mu(K) < \mu(t)$) and
      $K$ and \autoref{vecstructure} (4.~and 3.), we get
      \begin{align*}
	t\{u_1 \plus u_2\}
	\equiv K\{b \bullet (t_1 \{u_1\} \plus t_1\{u_2\})\}
	& \equiv K\{b \bullet t_1 \{u_1\} \plus b \bullet t_1\{u_2\}\}
	 \equiv 
	t\{u_1\} \plus t\{u_2\}
      \end{align*}
      And 
      \begin{align*}
	t\{a \bullet u_1\}
&	\equiv K\{b \bullet a \bullet t_1 \{u_1\}\}
\equiv K\{a \bullet b \bullet t_1 \{u_1\}\}
	\equiv 
	a \bullet t\{u_1\}
      \end{align*}

    \item If $t'$ is the variable $x$, we need to prove
      $$
	K\{u_1 \plus u_2\} \equiv K\{u_1\} \plus K\{u_2\}
	\qquad\textrm{and}\qquad
	K\{a \bullet u_1\} \equiv a \bullet K\{u_1\}
      $$
      By \autoref{horrible},
      $K$ has the form $K_1\{K_2\}$ and $K_2$ is an elimination of the top symbol of $A$.
      We consider the various cases for $K_2$. 
      \begin{itemize}
	\item If $K = K_1\{\elimone(\_,r)\}$, then $u_1$ and $u_2$ are closed
	  proofs of $\one$, thus $u_1 \lras b.\star$ and $u_2\lras c.\star$.
	  Using the induction hypothesis with the proof $K_1$
	  (notice that, by \autoref{lem:msubst}, $\mu(K_1) < \mu(K) = \mu(t)$) and \autoref{vecstructure} (5.~and 3.)
	  \begin{align*}
	    K\{u_1 \plus u_2\}
	    &\lras K_1 \{\elimone({b.\star} \plus c.\star,r)\}
	    \lras K_1 \{(b + c) \bullet r\}
	    \equiv (b + c) \bullet K_1 \{r\}\\
	    &\equiv b \bullet K_1 \{r\} \plus c \bullet K_1 \{r\}
	    \equiv K_1 \{b \bullet r\} \plus K_1 \{c \bullet r\}\\
	    & \llas K_1 \{\elimone(b.\star,r)\} \plus K_1 \{\elimone(c.\star,r)\}
	    \llas K\{u_1\} \plus K\{u_2\}
	  \end{align*}
	  And
	  \begin{align*}
	    K\{a \bullet u_1\}
	    &\lras K_1 \{\elimone(a \bullet b.\star,r)\}
	    \lras K_1 \{(a \times b) \bullet r\}
	    \equiv (a \times b) \bullet  K_1 \{r\}\\
	   & \equiv a \bullet b \bullet  K_1 \{r\}
	    \equiv a \bullet  K_1 \{b \bullet r\}
	     \llas a \bullet K_1 \{\elimone(b.\star,r)\}
	    \llas a \bullet K\{u_1\}
	  \end{align*}

	\item
	  If $K = K_1 \{\_~s\}$, then $u_1$ and $u_2$ are closed 
	  proofs of an implication, thus 
	  $u_1 \lras \lambda \abstr{y} u'_1$
	  and
	  $u_2 \lras \lambda \abstr{y} u'_2$.
	  Using the induction hypothesis with the proof $K_1 $
	  (notice that, by \autoref{lem:msubst}, $\mu(K_1 ) < \mu(K) = \mu(t)$), we get
	  \begin{align*}
	    K\{u_1 \plus u_2\}
	    &\lras K_1 \{(\lambda \abstr{y} u'_1 \plus \lambda \abstr{y} u'_2)~s\}
	    \lras K_1 \{u'_1\{s\} \plus u'_2\{s\}\}\\
	    &\equiv K_1 \{u'_1\{s\}\} \plus K_1 \{u'_2\{s\}\}
	     \llas K_1 \{(\lambda \abstr{y} u'_1)~s\} \plus K_1 \{(\lambda \abstr{y} u'_2)~s\}\\
	   & \llas K\{u_1\} \plus K\{u_2\}
	  \end{align*}
	  And
	  \begin{align*}
	    K\{a \bullet u_1\}
	    &\lras K_1 \{(a \bullet \lambda \abstr{y} u'_1)~s\}
	    \lras K_1 \{a \bullet u'_1\{s\}\}\\
	   & \equiv a \bullet  K_1 \{u'_1\{s\}\}
	     \lla
	    a \bullet K_1 \{(\lambda \abstr{y} u'_1)~s\}
	    \llas a \bullet K\{u_1\}
	  \end{align*}

	\item
	  If $K = K_1 \{\elimplus(\_,\abstr{y}r,\abstr{y}s)\}$, then
          let $r'$ be the irreducible form of $K_1\{r\}$ and $s'$ be
          the irreducible form of $K_1\{s\}$. We want to use the
          induction hypothesis with the proofs $K_1$, $r'$, and
          $s'$. To do so remark that by \autoref{lem:msubst},
          $\mu(K_1 ) < \mu(K_1) +
          \mu(\elimplus(\_,\abstr{y}r,\abstr{y}s)) = \mu(K) = \mu(t)$.
          Thus, $\mu(K_1) < \mu(t)$.
          By
	  Lemmas~\ref{lem:msubst}, \ref{lem:mured1}, and
          \ref{lem:mured2}, $\mu(r') \leq \mu(K_1 \{r\}) = \mu(K_1 ) +
          \mu(r) < \mu(K_1 ) + 1 + \mu(r) \leq \mu(K_1) + 1 +
          \max(\mu(r),\mu(s))) = \mu(K_1) +
          \mu(\elimplus(\_,\abstr{y}r,\abstr{y}s)) = \mu(K) =
          \mu(t)$. Thus, $\mu(r') < \mu(t)$. In the same way, $\mu(s')
          < \mu(t)$.

          Let us first prove that 
	  $$
	    K\{u_1 \plus u_2\} \equiv K\{u_1\} \plus K\{u_2\}
	  $$
          The proof $u_1$ is a closed proof of an additive disjunction
          $\oplus$, thus $u_1 \lras \inl(u_{11})$, $u_1 \lras
          \inr(u_{12})$, or $u_1 \lras \inlr(u_{11},u_{12})$.  In the
          same way, $u_2 \lras \inl(u_{21})$, $u_2 \lras
          \inr(u_{22})$, or $u_2 \lras \inlr(u_{21},u_{22})$, leading
          to nine cases.

          \begin{enumerate}
          \item\label{caseA} If $u_1 \lras \inl(u_{11})$ and $u_2 \lras \inl(u_{21})$,
          then using the induction hypothesis with the proof $r'$ 
\begin{align*}
     K\{u_1 \plus u_2\}
     &\lras K_1 \{\elimplus(\inl(u_{11}) \plus \inl(u_{21}),
        \abstr{y}r, \abstr{z}s)\}\\
     &\lras K_1 \{\elimplus(\inl(u_{11} \plus u_{21}),
        \abstr{y}r, \abstr{z}s)\}\\
     &\lras K_1 \{r\{u_{11} \plus u_{21}\}\}\\
     &\lras r'\{u_{11} \plus u_{21}\}\\
     &\equiv r'\{u_{11}\} \plus r'\{u_{21}\}\\
     &\llas  K_1\{r\{u_{11}\}\} \plus K_1\{r\{u_{21}\}\}\\
     &\llas K_1 \{\elimplus(\inl(u_{11}),\abstr{y}r,\abstr{z}s)\}
            \plus K_1 \{\elimplus(\inl(u_{21}),\abstr{y}r,\abstr{z}s)\}\\
     &\llas K\{u_1\} \plus K\{u_2\}
\end{align*}

       \item\label{caseB} If $u_1 \lras \inl(u_{11})$ and $u_2 \lras \inr(u_{22})$,
         then using the induction hypothesis with the proof $K_1$
\begin{align*}
     K\{u_1 \plus u_2\}
     &\lras K_1 \{\elimplus(\inl(u_{11}) \plus \inr(u_{22}),
        \abstr{y}r, \abstr{z}s)\}\\
     &\lras K_1 \{\elimplus(\inlr(u_{11},u_{22}),
        \abstr{y}r, \abstr{z}s)\}\\
     &\lras K_1 \{r\{u_{11}\} \plus s\{u_{22}\}\}\\
     &\equiv K_1\{r\{u_{11}\}\} \plus K_1\{s\{u_{22}\}\}\\
     &\llas K_1 \{\elimplus(\inl(u_{11}),\abstr{y}r,\abstr{z}s)\}
            \plus K_1 \{\elimplus(\inr(u_{22}),\abstr{y}r,\abstr{z}s)\}\\
     &\llas K\{u_1\} \plus K\{u_2\}
\end{align*}

       \item\label{caseC} If $u_1 \lras \inl(u_{11})$ and $u_2 \lras
         \inlr(u_{21},u_{22})$, then using the induction hypothesis
         with the proofs $K_1$ and $r'$ and
         \autoref{vecstructure} (1.)
\begin{align*}
     K\{u_1 \plus u_2\}
     &\lras K_1\{\elimplus(\inl(u_{11}) \plus \inlr(u_{21},u_{22}),
        \abstr{y}r, \abstr{z}s)\}\\
     &\lras K_1 \{\elimplus(\inlr(u_{11} \plus u_{21},u_{22}),
        \abstr{y}r, \abstr{z}s)\}\\
     &\lras K_1 \{r\{u_{11} \plus u_{21}\} \plus s\{u_{22}\}\}\\
        &\equiv K_1 \{r\{u_{11} \plus u_{21}\}\} \plus K_1\{s\{u_{22}\}\}\\
     &\lras r'\{u_{11} \plus u_{21}\} \plus K_1\{s\{u_{22}\}\}\\
     &\equiv (r'\{u_{11}\} \plus r'\{u_{21}\}) \plus K_1\{s\{u_{22}\}\}\\
     &\equiv r'\{u_{11}\} \plus (r'\{u_{21}\} \plus K_1\{s\{u_{22}\}\})\\
     &\llas K_1\{r\{u_{11}\}\} \plus
        (K_1 \{r\{u_{21}\}\} \plus K_1\{s\{u_{22}\}\})\\
     &\equiv K_1\{r\{u_{11}\}\} \plus
        (K_1\{r\{u_{21}\} \plus s\{u_{22}\}\})\\
     &\llas K_1 \{\elimplus(\inl(u_{11}),\abstr{y}r,\abstr{z}s)\}
            \plus K_1 \{\elimplus(\inlr(u_{21},u_{22}),\abstr{y}r,\abstr{z}s)\}\\
     &\llas K\{u_1\} \plus K\{u_2\}
\end{align*}

       \item If $u_1 \lras \inr(u_{12})$ and $u_2 \lras \inl(u_{21})$,
         the proof is similar to that of case~\eqref{caseB}, except that we use
         the induction hypothesis with the proof $K_1$ and 
         \autoref{vecstructure} (2.).

       \item If $u_1 \lras \inr(u_{12})$ and $u_2 \lras \inr(u_{22})$,
         the proof is similar to that of case~\eqref{caseA}, except that we use the
         induction hypothesis with the proof
         $s'$.
         
       \item If $u_1 \lras \inr(u_{12})$ and $u_2 \lras
         \inlr(u_{21},u_{22})$, the proof is similar to that of case~\eqref{caseC}, except that we use
         the induction hypothesis with the proofs $K_1$ and $s'$ and
         \autoref{vecstructure} (1.~and 2.).

       \item If $u_1 \lras \inlr(u_{11},u_{12})$ and $u_2 \lras
         \inl(u_{21})$, the proof is similar to that of case~\eqref{caseC},
         except that we use the induction hypothesis with the proofs
         $K_1$ and $r'$ and \autoref{vecstructure} (1.~and 2.)

       \item If $u_1 \lras \inlr(u_{11},u_{12})$ and $u_2 \lras
         \inr(u_{22})$, the proof is similar to that of case~\eqref{caseC},
         except that we use the induction hypothesis with the proofs $K_1$ and
         $s'$.

       \item If $u_1 \lras \inlr(u_{11},u_{12})$ and $u_2 \lras
         \inlr(u_{21},u_{22})$, then using the induction hypothesis
         with the proofs $K_1$, $r'$, and $s'$, and
         \autoref{vecstructure} (1. and 2.)
\begin{align*}
     K\{u_1 \plus u_2\}
     &\lras K_1 \{\elimplus(\inlr(u_{11},u_{12}) \plus \inlr(u_{21},u_{22}),
        \abstr{y}r, \abstr{z}s)\}\\
     &\lras K_1 \{\elimplus(\inlr(u_{11} \plus u_{21},u_{12} \plus u_{22}),
        \abstr{y}r, \abstr{z}s)\}\\
     &\lras K_1 \{r\{u_{11} \plus u_{21}\} \plus s\{u_{12} \plus u_{22}\}\}\\
     &\equiv K_1 \{r\{u_{11} \plus u_{21}\}\} \plus K_1\{s\{u_{12} \plus u_{22}\}\}\\
     &\lras r'\{u_{11} \plus u_{21}\} \plus s'\{u_{12} \plus u_{22}\}\\
     &\equiv (r'\{u_{11}\} \plus r'\{u_{21}\}) \plus (s'\{u_{12}\} \plus s'\{u_{22}\})\\
     &\equiv (r'\{u_{11}\} \plus s'\{u_{12}\}) \plus (r'\{u_{21}\} \plus s'\{u_{22}\})\\
     &\llas (K_1\{r\{u_{11}\}\} \plus K_1\{s\{u_{12}\}\}) \plus
        (K_1 \{r\{u_{21}\}\} \plus K_1\{s\{u_{22}\}\})\\
     &\equiv (K_1\{r\{u_{11}\} \plus s\{u_{12}\}\}) \plus
        (K_1\{r\{u_{21}\} \plus s\{u_{22}\}\})\\
     &\llas K_1 \{\elimplus(\inlr(u_{11},u_{12}),\abstr{y}r,\abstr{z}s)\}
            \plus K_1 \{\elimplus(\inlr(u_{21},u_{22}),\abstr{y}r,\abstr{z}s)\}\\
     &\llas K\{u_1\} \plus K\{u_2\}
\end{align*}
\end{enumerate}

          Let us then prove that 
	  $$
	    K\{a \bullet u_1\} \equiv a \bullet K\{u_1\}
	  $$
	    The proof
          $u_1$ is a closed proof of an additive disjunction $\oplus$,
          thus $u_1 \lras \inl(u_{11})$, $u_1 \lras \inr(u_{12})$, or
          $u_1 \lras \inlr(u_{11},u_{12})$, leading to three cases.

          \begin{enumerate}
          \item\label{caseAbullet} If $u_1 \lras \inl(u_{11})$, then using the
              induction hypothesis with the proof $r'$
	  \begin{align*}
	    K\{a \bullet u_1\}
	    &\lras K_1 \{\elimplus(a \bullet \inl(u_{11}),\abstr{y}r,\abstr{z}s)\}\\
	    &\lras K_1 \{\elimplus(\inl(a \bullet u_{11}), \abstr{y}r,\abstr{z}s)\}\\
	    &\lras K_1 \{r\{a \bullet u_{11}\}\}\\
            &\lras r'\{a \bullet u_{11}\}\\
            & \equiv a \bullet r'\{u_{11}\}\\
	    &\llas a \bullet  K_1 \{r\{u_{11}\}\}\\
            &\llas a \bullet K_1 \{\elimplus(\inl(u_{11}),\abstr{y}r,\abstr{z}s)\}\\
	   & \llas a \bullet K\{u_1\}
	  \end{align*}
          
          \item If $u_1 \lras \inr(u_{12})$, 
 the proof is similar to that of case~\eqref{caseAbullet}, except that we use the
 induction hypothesis with the proof $s'$.

       \item If $u_1 \lras \inlr(u_{11},u_{12})$, then using the
              induction hypothesis with the proofs $K_1$, $r'$, and
              $s'$, and \autoref{vecstructure} (4.)
	  \begin{align*}
	    K\{a \bullet u_1\}
	    &\lras K_1 \{\elimplus(a \bullet \inlr(u_{11},u_{12}),\abstr{y}r,\abstr{z}s)\}\\
	    &\lras K_1 \{\elimplus(\inlr(a \bullet u_{11},a \bullet u_{12}), \abstr{y}r,\abstr{z}s)\}\\
	    &\lras K_1 \{r\{a \bullet u_{11}\} \plus s\{a \bullet u_{12}\}\}\\
      	    &\equiv K_1 \{r\{a \bullet u_{11}\}\} \plus K_1\{s\{a \bullet u_{12}\}\}\\
            &\lras r'\{a \bullet u_{11}\} \plus s'\{a \bullet u_{12}\}\\
            & \equiv a \bullet r'\{u_{11}\} \plus a \bullet s'\{u_{12}\}\\
            & \equiv a \bullet (r'\{u_{11}\} \plus s'\{u_{12}\})\\
	    &\llas a \bullet (K_1 \{r\{u_{11}\}\} \plus K_1 \{s\{u_{12}\}\})\\
       	    &\equiv a \bullet K_1 \{r\{u_{11}\} \plus s\{u_{12}\}\}\\
            &\llas a \bullet K_1 \{\elimplus(\inlr(u_{11},u_{12}),\abstr{y}r,\abstr{z}s)\}\\
	    & \llas a \bullet K\{u_1\}
 \tag*{\qedhere}
\end{align*}
\end{enumerate}
\end{itemize}
\end{itemize}
\end{proof}

\begin{rem} The \autoref{linearity} and its
  corollaries hold for linear proofs, but not for non-linear ones. The
  linearity is used, in an essential way, in two places. First, in the first
  case of the proof of \autoref{linearity} when we
  remark that $a.\star$ is not a proof in the context $x:A$. Indeed, if $t$
  could be $a.\star$ then linearity would be violated as $1.\star\{4.\star
  \plus 5.\star\} = 1.\star$ and $1.\star\{4.\star\} \plus 1.\star\{5.\star\}
  \equiv 2.\star$.  Then, in the proof of
  \autoref{elim}, we remark that when
  $t_1~t_2$ is a proof in the context $x:A$ then $x$ must occur in $t_1$, and
  hence it does not occur in $t_2$, that is therefore closed. This way, the
  proof $t$ eventually has the form $K\{u\}$ and this would not be the case if
  $x$ could occur in $t_2$ as well.
\end{rem}

\begin{cor}
\label{cor:cor1a}
  Let $A$ and $B$ be propositions, with $B \in {\mathcal V}$, $t$ a
  closed proof of $A \multimap B$, that does not contain the symbol
  $\elimplus^{nd}$, and $u_1$ and $u_2$ and two closed proofs of
  $A$. Then
  $$
    t~(u_1 \plus u_2) \equiv t~u_1 \plus t~u_2
    \qquad\qquad\textrm{and}\qquad\qquad
    t~(a\bullet u_1) \equiv a\bullet t~u_1
  $$
\end{cor}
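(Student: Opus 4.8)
The plan is to obtain \autoref{cor:cor1a} as an immediate reformulation of the Linearity Theorem (\autoref{linearity}), which is where all the work has been done; the passage from the ``open'' formulation $x:A\vdash t:B$ to the ``closed'' formulation with a proof of $A\multimap B$ is exactly the equivalence announced in the text preceding \autoref{linearity}. Concretely, I would fix a closed proof $t$ of $A\multimap B$ not containing the symbol $\elimplus^{nd}$, closed proofs $u_1,u_2$ of $A$, and a scalar $a$. Choosing a variable $x$ fresh for $t$, $u_1$, and $u_2$, I form the proof term $t~x$. Since $\vdash t:A\multimap B$ (empty context, as $t$ is closed) and $x:A\vdash x:A$ by the ax rule, the $\multimap$-e rule yields $x:A\vdash t~x:B$; note that this uses no weakening, in accordance with the linear setting. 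Moreover $t~x$ contains no $\elimplus^{nd}$ symbol, since $t$ does not and the application introduces none, so $t~x$ lies in the fragment to which \autoref{linearity} applies.

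Next I would apply \autoref{linearity} to the proof $t~x$, with the given $A$, $B$, $u_1$, $u_2$, $a$, obtaining
\[
  (t~x)\{u_1 \plus u_2\} \equiv (t~x)\{u_1\} \plus (t~x)\{u_2\}
  \qquad\textrm{and}\qquad
  (t~x)\{a \bullet u_1\} \equiv a \bullet (t~x)\{u_1\}.
\]
It then remains to compute the substitutions: for any closed proof $v$ of $A$ one has $(t~x)\{v\} = (v/x)(t~x) = ((v/x)t)~((v/x)x) = t~v$, using that $t$ is closed, so $(v/x)t = t$, and that $(v/x)x = v$. Instantiating $v$ with $u_1\plus u_2$, $u_1$, $u_2$, and $a\bullet u_1$ — all closed proofs of $A$ — turns the two displayed equivalences into
\[
  t~(u_1 \plus u_2) \equiv t~u_1 \plus t~u_2
  \qquad\textrm{and}\qquad
  t~(a \bullet u_1) \equiv a \bullet t~u_1,
\]
which are precisely the two claims.

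I do not expect a genuine obstacle here, since the mathematical content is entirely in \autoref{linearity}; the only points that need a moment's care are that $x$ must be chosen fresh, so that $(t~x)\{v\}$ really equals $t~v$ with no variable capture, and that the hypothesis ``$t$ does not contain $\elimplus^{nd}$'' is genuinely required, because \autoref{linearity} relies, through the decomposition lemmas (\autoref{elim}, \autoref{horrible}), on an elimination-context grammar that omits $\elimplus^{nd}$. One should also record that $A$ may be an arbitrary proposition of the calculus — only $B\in{\mathcal V}$ is needed — which is consistent with \autoref{linearity}, whose proof already treats, in the head-variable case, all three possible shapes $\one$, $\multimap$, and $\oplus$ for the type of that variable.
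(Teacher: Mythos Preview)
Your proof is correct and follows the same overall plan as the paper: reduce the corollary to a direct application of \autoref{linearity}. The execution differs in one small but pleasant way. The paper first invokes the introduction property (\autoref{introductionslinear}) to reduce $t$ to some $\lambda\abstr{x}t'$, then applies \autoref{linearity} to $t'$ and uses the $\beta$-reductions $t~u_i \lras t'\{u'_i\}$ to bridge back. You instead form $t~x$ with a fresh $x$, observe $x{:}A\vdash t~x:B$, apply \autoref{linearity} to $t~x$, and compute $(t~x)\{v\}=t~v$ by closedness of $t$. Your route is marginally more economical, since it bypasses the appeal to \autoref{introductionslinear} and to irreducible forms; the paper's route, on the other hand, makes the $\beta$-step explicit and thereby shows concretely how the equivalence is witnessed by reductions on both sides. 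Either way, all the real content lies in \autoref{linearity}.
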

\begin{proof}
  As $t$ is a closed proof of $A \multimap B$, using
  \autoref{introductionslinear}, it reduces to an irreducible proof of the
  form $\lambda \abstr{x} t'$.  Let $u'_1$ be the irreducible form of
  $u_1$, and $u'_2$ that of $u_2$.

If $B \in {\mathcal V}$, using \autoref{linearity}, we have
$$
t~(u_1 \plus u_2) \lras t'\{u'_1 \plus u'_2\} \equiv t'\{u'_1\} \plus t'\{u'_2\} \llas (t~u_1) \plus (t~u_2)
$$
hence 
\(
t~(u_1 \plus u_2)  \equiv  (t~u_1) \plus (t~u_2)
\).
And
$$
 t~(a\bullet u_1) \lras t'\{a \bullet u'_1\} \equiv a \bullet t'\{u'_1\} \llas a\bullet (t~u_1)
 $$
 hence
\(
 t~(a\bullet u_1)  \equiv  a\bullet (t~u_1)
 \).
  \qedhere
\end{proof}

\begin{proof}[Proof of \autoref{thm:converse}]
  Using \autoref{cor:cor1a} and \autoref{parallelsum}, we have
  \begin{align*}
    F({\bf u} + {\bf v}) = \underline{t~\overline{\bf u + \bf v}^A}
    = \underline{t~(\overline{\bf u}^A \plus \overline{\bf v}^A)}
    & =
    \underline{t~\overline{\bf u}^A \plus t~\overline{\bf v}^A}
    =
    \underline{t~\overline{\bf u}^A} +
    \underline{t~\overline{\bf v}^A} =
    F({\bf u}) + F({\bf v})
    \\
    F(a {\bf u}) = \underline{t~\overline{a \bf u}^A}
    = \underline{t~(a \bullet \overline{\bf u}^A)} 
    & =
    \underline{a\bullet t~\overline{\bf u}^A}
    = a \underline{t~\overline{\bf u}^A} = a F({\bf u})
    &\tag*{\qedhere}
  \end{align*}
\end{proof}

\begin{rem}
The theorem cannot extend to the full calculus as the symbol
$\elimplus^{nd}$ is used to define quantum measurement, that is a
non-linear operation, as we will see in \autoref{sec:quantumcomputing}.
\end{rem}

\begin{rem}[No-cloning]
With respect to the in-left-right-+-calculus, the quantum
in-left-right-calculus introduces three differences: 
restricting the deduction rules to linear ones,
adding the rule prod, and
adding the scalars. 
If we had build an extension of the
in-left-right-+-calculus with a rule prod and with scalars, but had
kept the intuitionistic deduction rules, instead of the linear ones,
we could clone Qbits, which is a forbidden operation in quantum
physics~\cite{WoottersZurekNature82}. Indeed, on such a calculus,
consider the proof $c$
\begin{align*}
\lambda \abstr{x}~\elimor(x, 
    &\abstr{y_1} \elimtop(y_1,\elimor(x, \abstr{z_1} \elimtop(z_1,\ket{00}),\abstr{z_2} \elimtop(z_2,\ket{01}))), \\
    &\abstr{y_2} \elimtop(y_2,\elimor(x, \abstr{z_3} \elimtop(z_3,\ket{10}), \abstr{z_4} \elimtop(z_4,\ket{11})))
\end{align*}
where
$\ket{00}$ is a notation for
  $\inlr(\inlr(1.\star,0.\star),\inlr(0.\star,0.\star))$,
$\ket{01}$ for
  $\inlr(\inlr(0.\star,1.\star),\inlr(0.\star,0.\star))$,
$\ket{10}$ for
  $\inlr(\inlr(0.\star,0.\star),\inlr(1.\star,0.\star))$,
and $\ket{11}$ for
  $\inlr(\inlr(0.\star,0.\star),\inlr(0.\star,1.\star))$.

The proof $c~\inlr(a.\star,b.\star)$ would reduce to the proof
$\inlr(\inlr((a \times a).\star,(a\times b).\star),\inlr((b \times
a).\star, (b \times b).\star))$. Thus, the term $c$ would express the
cloning function from ${\mathbb C}^2$ to ${\mathbb C}^4$, mapping
$\left(\begin{smallmatrix} a\\b\end{smallmatrix}\right)$ to
$\left(\begin{smallmatrix} a^2\\ab\\ab\\b^2\end{smallmatrix}\right)$,
which is the tensor product of the vector $\left(\begin{smallmatrix}
  a\\b\end{smallmatrix}\right)$ with itself.

But, by \autoref{corollary2}, no proof of $(\one \oplus \one)
\multimap ((\one \oplus \one) \oplus (\one \oplus \one))$, in the
quantum in-left-right-calculus without $\elimplus^{nd}$, can express
this function, because it is not linear.
\end{rem}

\subsection{Proof of \autoref{corollary2}}
\begin{lem}
  \label{cor:cor1}
    Let $A$ and $B$ be propositions, $t$ a proof, and $u_1$ and $u_2$ and two closed proofs of $A$. Then

  If 
  $x:A \vdash t:B$ 
  then  
  $$
    t\{u_1 \plus u_2\} \sim t\{u_1\} \plus t\{u_2\}
    \qquad\qquad\textrm{and}\qquad\qquad
    t\{a \bullet u_1\} \sim a \bullet t\{u_1\}  
  $$
\end{lem}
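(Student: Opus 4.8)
The plan is to peel off the definition of $\sim$ and then reuse the machinery built for \autoref{linearity}. Fix a proposition $C\in\mathcal{V}$ and a proof $c$ with $\_{:}B\vdash c{:}C$; since $(t_1/\_)c$ and $(t_2/\_)c$ are convertible whenever $c$ is, I may assume $c$ is irreducible, and it remains to prove $(t\{u_1\plus u_2\}/\_)c\equiv((t\{u_1\}\plus t\{u_2\})/\_)c$ and the analogous identity with $a\bullet$. Because $x$ does not occur free in $c$ and $u_1,u_2$ are closed, substitutions commute, so writing $e:=(t/\_)c$ we get $x{:}A\vdash e{:}C$ with $C\in\mathcal{V}$ and $(t\{v\}/\_)c=e\{v\}$ for every closed proof $v$ of $A$; in particular the left-hand side equals $e\{u_1\plus u_2\}$. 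For the right-hand side I would first record that $\sim$ is a congruence extending conversion: if $s_1\sim s_2$ and $\bullet{:}B'\vdash d{:}B''$, then the composite $(d/\_)c'$ of $d$ with any observer $c'$ of $B''$-proofs is again an observer of $B'$-proofs, so $(s_1/\bullet)d\sim(s_2/\bullet)d$. Using this, together with the reductions $\inl(v)\plus\inr(w)\lra\inlr(v,w)$, $(\lambda\abstr{z}v)\plus(\lambda\abstr{z}w)\lra\lambda\abstr{z}(v\plus w)$, $\inlr\plus\inlr\lra\inlr$, etc., on the (closed, hence introduction) proofs $t\{u_1\}$ and $t\{u_2\}$, the right-hand side is reduced to $e\{u_1\}\plus e\{u_2\}$, and the scalar case to $a\bullet e\{u_1\}$.

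Thus everything comes down to the single statement $e\{u_1\plus u_2\}\equiv e\{u_1\}\plus e\{u_2\}$ and $e\{a\bullet u_1\}\equiv a\bullet e\{u_1\}$ for a proof $e$ with $x{:}A\vdash e{:}C$, $C\in\mathcal{V}$. This is proved exactly as \autoref{linearity}: by induction on $\mu(e)$, decomposing $e$ via \autoref{elim} as $K\{e'\}$ with $e'$ a variable, an introduction, a sum, or a product. The sum and product cases invoke \autoref{vecstructure}, the introduction cases apply the induction hypothesis to the subproofs (using \autoref{lem:msubst} to control the measure), and the case $e'=x$ uses \autoref{horrible} to isolate the last elimination $K_2$ of $K$ — an elimination of the top connective of $A$ — together with \autoref{introductionslinear} to reduce $u_1,u_2$ to introductions and Lemmas~\ref{lem:msubst}, \ref{lem:mured1}, \ref{lem:mured2} to justify the recursive calls on the residuals $K_1\{r\}$, which are proofs of the $\mathcal{V}$-proposition $C$.

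The only configurations beyond those in the proof of \autoref{linearity} arise because here $A$ and $B$ need not be in $\mathcal{V}$: $K$ may contain $\multimap$-eliminations, and the head connective of $A$ may be $\multimap$. But for $\multimap$ the relevant rewrite is $\beta$-reduction, $(\lambda\abstr{z}(m_1\plus m_2))\,s\lra(s/z)m_1\plus(s/z)m_2$, which distributes the sum with no appeal to vector-space structure; and the eliminations that are scalar-sensitive, namely $\one$- and $\oplus$-eliminations, still operate at subproofs whose type is the $\mathcal{V}$-proposition $C$ (or a $\mathcal{V}$-subproposition of it), so \autoref{vecstructure} stays applicable. This third step is the main obstacle: one must re-run the full case analysis of \autoref{linearity} while carrying the observer $c$ along, and check that the new non-$\mathcal{V}$ situations are harmless — in effect, that although $\equiv$ genuinely fails between $t\{u_1\plus u_2\}$ and $t\{u_1\}\plus t\{u_2\}$ in general (as the Remark after \autoref{linearity} shows), the hypothesis $C\in\mathcal{V}$ forces every intermediate non-$\mathcal{V}$-typed subproof to be consumed by an elimination sitting inside a context with $\mathcal{V}$-codomain, which is exactly what restores the conversions needed to close the induction.
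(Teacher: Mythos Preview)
Your reduction to the proof $e:=c\{t\}$ with $x{:}A\vdash e{:}C$ and $C\in\mathcal V$ is exactly right, but from that point on you do far too much work. The hypothesis of \autoref{linearity} only requires the \emph{codomain} to lie in~$\mathcal V$; the domain type is arbitrary. (Look at its proof: the case $K=K_1\{\_~s\}$ is there precisely so that the induction goes through when the domain of a recursive call has top connective~$\multimap$.) Hence your step ``This is proved exactly as \autoref{linearity}'' is not a re-proof with new cases---it \emph{is} an instance of \autoref{linearity} applied to $e$, and your third paragraph about ``configurations beyond those in the proof of \autoref{linearity}'' is addressing a non-problem.

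The same remark dissolves the murky treatment of the right-hand side. What you need there is $c\{t\{u_1\}\plus t\{u_2\}\}\equiv c\{t\{u_1\}\}\plus c\{t\{u_2\}\}$, and this is again a direct instance of \autoref{linearity}, this time applied to~$c$ (domain~$B$, codomain $C\in\mathcal V$, inputs the closed proofs $t\{u_1\},t\{u_2\}$ of~$B$). Your detour through ``$\sim$ is a congruence'' together with commuting reductions on $t\{u_1\}\plus t\{u_2\}$ does not establish this conversion by itself and is unnecessary. The paper's argument is therefore just: apply \autoref{linearity} once to $c\{t\}$ to get $c\{t\{u_1\plus u_2\}\}\equiv c\{t\{u_1\}\}\plus c\{t\{u_2\}\}$, apply it once to $c$ to get $c\{t\{u_1\}\plus t\{u_2\}\}\equiv c\{t\{u_1\}\}\plus c\{t\{u_2\}\}$, and conclude by transitivity; similarly for the scalar case.
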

\begin{proof}
  Let $C \in {\mathcal V}$ and $c$ be a proof such that
  $\_:B \vdash c:C$. Then applying \autoref{linearity} to
  the proof $c\{t\}$ we get 
  $$
    c\{t\{u_1 \plus u_2\}\} \equiv c\{t\{u_1\}\} \plus c\{t\{u_2\}\}
    \qquad\qquad\textrm{and}\qquad\qquad
    c\{t\{a \bullet u_1\}\} \equiv a \bullet c\{t\{u_1\}\}
  $$
  and applying it again to the proof $c$ we get
  $$
    c\{t\{u_1\} \plus t\{u_2\}\} \equiv c\{t\{u_1\}\} \plus c\{t\{u_2\}\}
    \qquad\qquad\textrm{and}\qquad\qquad
    c\{a \bullet t\{u_1\}\} \equiv a \bullet c\{t\{u_1\}\}
  $$
  Thus
  $$
    c\{t\{u_1 \plus u_2\}\} \equiv c\{t\{u_1\} \plus t\{u_2\}\}
    \qquad\qquad\textrm{and}\qquad\qquad
    c\{t\{a \bullet u_1\}\} \equiv c\{a \bullet t\{u_1\}\}
  $$
  that is 
  $$
    t\{u_1 \plus u_2\} \sim t\{u_1\} \plus t\{u_2\}
    \qquad\qquad\textrm{and}\qquad\qquad
    t\{a \bullet u_1\} \sim a \bullet t\{u_1\}
    \mbox{\qedhere}
    $$ 
\end{proof}

\begin{proof}[Proof of \autoref{corollary2}]
 Using \autoref{cor:cor1}, we have
  $$
    t~(u_1 \plus u_2) \lras t'\{u'_1 \plus u'_2\}
    \sim t'\{u'_1\} \plus t'\{u'_2\} \llas (t~u_1) \plus (t~u_2)
    $$
hence
  \(
    t~(u_1 \plus u_2) \sim (t~u_1) \plus (t~u_2)
    \).
And
  $$
    t~(a\bullet u_1) \lras t'\{a \bullet u'_1\}
    \sim a \bullet t'\{u'_1\} \llas a\bullet (t~u_1)
  $$
hence
  \(
    t~(a\bullet u_1) \sim a \bullet (t~u_1)
  \).
  \qedhere
\end{proof}

\end{document}